\documentclass[a4paper,openany,oneside]{memoir}
\usepackage{latexsym,amsmath,amssymb,mathrsfs,amsthm}
\usepackage{fixme}
\usepackage{todonotes}

\usepackage[pdfauthor={Pieter Naaijkens},
            pdftitle={Quantum spin systems on infinite lattices},
			bookmarksnumbered,
			colorlinks=true
			]{hyperref}

\usepackage[T1]{fontenc}
\usepackage{titlesec, blindtext, color}
\definecolor{gray75}{gray}{0.75}
\newcommand{\hsp}{\hspace{20pt}}
\titleformat{\chapter}[hang]{\Huge\bfseries}{\thechapter\hsp\textcolor{gray75}{|}\hsp}{0pt}{\Huge\bfseries}

\usepackage{makeidx}
\makeindex
\newcommand{\idx}[1]{\index{#1}\emph{#1}}
\newcommand{\idxr}[1]{\index{#1}#1}

\usepackage{tikz}
\usetikzlibrary{arrows}
\usetikzlibrary{positioning}

\renewcommand\printtoctitle[1]{}

\newcommand{\timeso}{\overset{\mathrm{out}}{\times}}


\numberwithin{equation}{section}

\theoremstyle{plain}
\newtheorem{theorem}{Theorem}

\numberwithin{theorem}{section}
\newtheorem{lemma}[theorem]{Lemma}

\newtheorem{proposition}[theorem]{Proposition}
\newtheorem{corollary}[theorem]{Corollary}

\newtheorem{definition}[theorem]{Definition}

\newtheorem{example}[theorem]{Example}
\newtheorem{assumption}[theorem]{Assumption}
\newtheorem{exercise}[theorem]{Exercise}

\theoremstyle{definition}
\newtheorem{remark}[theorem]{Remark}
\newcommand{\Aut}{\operatorname{Aut}}

\newcommand{\alg}[1]{\mathfrak{#1}}

\newcommand{\Tr}{\operatorname{Tr}}

\newcommand{\supp}{\operatorname{supp}}

\newcommand{\mc}[1]{\mathcal{#1}}

\newcommand{\ket}[1]{\ensuremath{\left|#1\right\rangle}}
\newcommand{\bra}[1]{\ensuremath{\left\langle#1\right|}}
\newcommand{\ketbra}[1]{\ensuremath{\left|#1\right\rangle\!\left\langle#1\right|}}
\newcommand{\braket}[2]{\ensuremath{\left\langle#1\middle|#2\right\rangle}}

\title{Quantum spin systems on infinite lattices}
\author{Pieter Naaijkens\\Leibniz Universit\"at Hannover}

\newcommand{\HRule}{\rule{\linewidth}{0.5mm}}

\begin{document}
\noindent\HRule\\
\begin{center}
{\huge Quantum Spin Systems on Infinite Lattices}\\\vspace{\baselineskip}
{\large Summer 2013\\
Extended and corrected Summer 2016}\\
\vspace{1cm}
{\Large Pieter Naaijkens\footnote{To be published as a volume in the \emph{Lecture Notes in Physics} series (Springer).}\\RWTH Aachen University and University of California, Davis}
\end{center}
\HRule\\
\vspace{-1cm}

\tableofcontents*
\thispagestyle{empty}

\chapter{Introduction}
Quantum mechanics arguably is one of the most successful physical theories of the past century. Not only has its development led to important results in physics, also many fields in mathematics have their origin in or progressed significantly alongside the development of quantum mechanics, and vice versa~\cite{CPA:CPA3160130102}. Of particular interest also from a mathematical point of view are systems with infinitely many degrees of freedom. This includes quantum field theory, but one can also take infinitely many copies of a \emph{finite} system, for example a spin-1/2 system. This is the type of systems that we will study in these lecture notes. That is, we will almost exclusively defined on a discrete space (a lattice), rather than a continuum theory.

There are many different ways of studying such systems, varying among other things on their level of mathematical rigour. Here we follow the so-called the operator-algebraic approach, and aim to introduce the necessary mathematical tools for this. In this approach the observables are modelled by elements of some operator algebra, in our case this is usually a $C^*$-algebra. This is essentially a generalisation of the usual setting in quantum mechanics, where one considers the bounded (and sometimes also unbounded) operators acting on a Hilbert space. It should become clear in due course why this generalisation is useful, and not just some inessential mathematical machinery.

We give a brief overview of the contents of these lectures. In the first part of the notes the necessary mathematics is introduced: we give a short introduction to operator algebras and $C^*$-algebras in particular. In addition, the relation between this abstract mathematics and quantum mechanics is explained. Then we move on to the main topic of interest: quantum systems with infinitely many sites. In the remainder, we will see how various concepts in physics, such as ground states, can be described in this setting. Some of this concepts will be illustrated with the toric code, a simple example that has surprisingly interesting features.

Another fundamental tool that will be discussed are the Lieb-Robinson bounds. Such bounds limit the speed at which information can propagate in the system, not unlike the speed of light in special relativity. This has many profound implications, some of which we will discuss here. For example, it allows us to translate many properties from relativistic systems to quantum spin systems, even though there are also essential differences between the two.

Chapters~\ref{ch:opalg} and~\ref{ch:infinite} discuss the essential mathematical background and the results there are used throughout the book. Sections~\ref{sec:lrfinite} to~\ref{sec:lrlocality} show how Lieb-Robinson bounds can be derived and give various estimates that will be used in the remainder of the book. The remaining chapters and sections can be read independently of each other.

\section{Prerequisites}
The aim of these lecture notes is to provide an introduction to a mathematically rigorous treatment of infinite quantum spin systems. Although we do not always provide full proofs, they do exist for almost all statements in these notes, in particular those labelled as a Lemma or Theorem. Naturally, this requires different mathematical tools. To keep the book accessible to both physics as well as mathematics students, some techniques that we will need often are introduced here in some detail. An example is the construction of the completion of a metric space. Many mathematics students will have seen this (for example in the construction of the real numbers $\mathbb{R}$ from $\mathbb{Q}$), but it is less common to see it in a physics curriculum. Conversely, the analysis of unbounded operators on Hilbert space makes an appearance (although sometimes somewhat disguised) in any introduction to quantum mechanics, but in a mathematics curriculum it typically is discussed in a course on functional analysis, which commonly is an elective course.

Although many of the mathematical constructs are discussed here in some level of detail, a certain level of mathematical maturity is assumed. The reader should be familiar with the basic concepts of analysis, topology and functional analysis. In particular, convergence in and completeness of metric spaces and the notion of Hilbert space are important. They are all discussed briefly here, but some familiarity with them would be helpful. At some points we also have to use measure theory and the theory of integration, although the heuristics are understandable without being familiar with measure theory. An excellent book on advanced analysis treating all these topics in detail (and more) is \emph{Analysis now} by G.K.\ Pedersen~\cite{MR971256}, or the even more comprehensive~\cite{MR751959}.

Finally, basic knowledge of quantum mechanics is assumed. Although not strictly necessary to understand the mathematical details, it provides motivation for many of the important definitions and concepts we will introduce. For example, we will not go into the details of \emph{why} observables in quantum mechanical systems can be described by self-adjoint operators acting on some Hilbert space, or why the time evolution is generated by a positive, self-adjoint operator, the Hamiltonian.

\section{Quantum spin systems}
A single spin-1/2 particle is one of the simplest quantum mechanical systems, especially if we forget about the spatial degrees of freedom and only think about the spin part. Nevertheless, such systems play an important part in quantum information. They are the quantum analogue, called a qubit in that context, of a bit in computer science. As such they are one of the fundamental building blocks of quantum computation. There is not much to say about a single qubit, but if we combine a number of qubits (even finitely many), there are a lot of interesting questions to be studied.

Slightly more general, simple quantum mechanical systems such as qubits can be the constituents of complex quantum systems. Recall that in statistical mechanics the goal is to describe systems made up of a large number of ``simple'' objects such as a gas which is made up of individual molecules. It then turns out that precise knowledge on each individual molecule (e.g.\ its position and momentum) is not so useful any more to say anything about the system as a whole. In quantum statistical mechanics the situation is similar. Here one considers systems made up of a large number of simple \emph{quantum} systems. In typical models one can think for example of a lattice, with at each site of the lattice an atom. The atoms are considered to be pinned down at the lattice site, and hence cannot move. However, the quantum degrees of freedom the atoms can interact with nearby atoms. Such models are often encountered in condensed matter, for example.

In dealing with such systems it can be convenient to consider the so-called \idx{thermodynamic limit}, which is the limit of infinite system size. This has a few advantages.\footnote{An extensive discussion can be found in the introduction of~\cite{MR887100}.} First of all, one can make a clear distinction between \emph{local} and \emph{global} properties, without having to keep track of the system size. Moreover one does not have to worry about boundary effects and boundary conditions, which may simplify life a lot. Besides that, considering the thermodynamic limit is even \emph{necessary} to consider, for example, phase transitions. Depending on your interests, you may also find the fact that it leads to interesting mathematics a compelling reason to study such systems.

There is more than one way to study the thermodynamic limit. One could for example only look at finite systems, and find the quantities of interest as a function of system size, and then see how they behave in the infinite volume limit. This however can become cumbersome, and it is not always clear if one is not missing some essential feature by only considering finite systems. Another approach would be to work directly in the thermodynamic limit. This is what we prefer here.

In these lectures we will develop the tools to deal with such systems in a mathematically rigorous way. Since the goal is to create familiarity with the tools and methods of the operator algebraic approach to quantum mechanics, some proofs are only sketched. Moreover, to simplify matters most results are often not stated in the most general way possible. Rather, the emphasis will be on quantum spin systems (rather than, say, continuous systems) on lattices. 

In the next sections we give a brief overview of the main physics questions that we want to address. All the mathematics behind it will be explained in detail in the next chapters, so those readers who have not encountered, for example, tensor products of Hilbert spaces before may choose to skip this section.

\subsection{Infinite systems and locality}
Consider a single spin-1/2 system. The spin degree of freedom is described by the Hilbert space $\mathbb{C}^2$. Furthermore, if we have two quantum mechanical systems, we can get the Hilbert space of the composite system by taking the tensor product. Hence, a reasonable assumption of what the Hilbert space of infinitely many copies of the spin system (say, a spin chain) would be is
\[	
	\mathcal{H} := \bigotimes_{n = -\infty}^{\infty} \mathbb{C}^2.
\]
This, unfortunately, is not well defined. In particular, the natural choice of an inner product on this space does not converge, and hence is ill-defined.

One way to get out of this is to look at what the \emph{observables}, or more generally \emph{operators} on the system would be. If we consider a single spin, the Hilbert space is $\mathbb{C}^2$, so all linear maps on this space are given by $2 \times 2$ matrices in $M_2(\mathbb{C})$. Such a linear map is also called an operator. Then, if we have a finite number of spins, we can take the tensor product of the single site operators. By also taking linear combinations, we get the algebra of all operators on $n$ spins, in this case it is isomorphic to $M_{2^n}(\mathbb{C})$. 

Note that there is a natural way to consider an operator acting on a single site, as an operator acting on $n$ sites, by postulating that it does not do anything on the added sites. Hence, we can take the union over $n$ of all operators acting on $n$ sides, with the proper identifications. This generates what is called a $C^*$-algebra. In this case it describes all operations on the spin chain that can be approximated \emph{arbitrarily well} by \emph{local} operators. An observable or operator is called \emph{local}\index{local!observable}\index{observable!local} if it acts only on finitely many sites.

Another observation that will be important is that operators that act on \emph{distinct} sites commute with each other. This is easy to see from the construction above, since we can regard them both as operators acting on $n$ sites, where $n$ is sufficiently large. But according to the description above, the operators act trivially on the added sites, from which it follows that the two operators commute. This is called \idx{locality}: observables acting on disjoint parts of the system commute with each other.

\subsection{Local interactions}
The construction above is very general. To define a quantum spin model, one should also define \idx{dynamics}, or in other words, the time evolution. It will be convenient to work in the \idx{Heisenberg picture}, where the states are regarded as fixed, and one looks at how the observables evolve over time. In any introductory course to quantum mechanics, one learns that the dynamics are generated by the \idx{Hamiltonian}. Since we are dealing with infinite systems, it is somewhat tricky to directly define it.

It turns out that in many interesting models, spins only directly interact with neighbouring spins, and not with spins far away.\footnote{At least, usually this gives a very good approximation of the ``true'' dynamics which may be much more complicated. One can compare this for example with gravity: if we are sufficiently far away from earth, the gravitational field due to the earth mass is so small that we can effectively pretend it does not exist.} We can then define a self-adjoint (= Hermitian) operator $\Phi(X)$, where $X$ is some finite number of sites, to describe the energy due to interactions between the spins in $X$. Typically, $\Phi(X) = 0$ if the spins in $X$ are not nearest-neighbours. Assigning $\Phi(X)$ for every finite site $X$ is then called giving a \emph{local interaction}\idx{local!interaction}.

The total energy due to \emph{all} the interactions in a finite set $\Lambda$ is then given by the \emph{local Hamiltonian}\index{local!Hamiltonian}, defined as
\[
	H_\Lambda := \sum_{X \subset \Lambda} \Phi(X).
\]
The \emph{local} time evolution, that is the time evolution if only the terms in $\Lambda$ are present, is then given by
\[
	\alpha_t^{\Lambda}(A) := e^{i t H_\Lambda} A e^{-i t H_\Lambda}.
\]
To get the dynamics of the entire system, one can take $\Lambda$ to be bigger and bigger, and hope that this converges. As we will see, this is indeed the case for a large class of local interactions.

Once dynamics have been introduced, one can talk about ground states and equilibrium states, and for example show that they exist (which is not immediately obvious).

\subsection{Superselection sectors}
Recall that a reasonable assumption of any physical system is that it has charge conservation. For example, it is not possible to create a single electron, but it is possible that (say, a photon) disintegrates into an electron and a positron. Since the charges are opposite, the total charge is still neutral.

There is another way to think about this: it means that with a \emph{local} (physical, i.e.\ gauge invariant) operation it is not possible to create a single charge. Or, for the same reasons, remove one. This means that with such local operations, we can never change a state with a single charge to a state without any charge. Another example is an infinite quantum spin chain: we cannot change a state with all spins in the down position to a state with all of them in the up position, if we are only allowed to act locally (that is, on finitely many spins only at a time). A consequence is that if we take the superposition of those two states with different charges, the relative phase between them cannot be observed. We say there is a \idx{superselection rule}, since we cannot coherently superpose arbitrary states. A superselection \emph{sector} is then an equivalence class of such states.

On the mathematical level, the description is much more natural in the thermodynamic limit than for finite systems: it is related to the existence of inequivalent representations of the observables. In finite systems, however, there are no inequivalent representations, and one has to put in additional constraints by hand if one wants to discuss superselection rules.

Sometimes there is much more structure than can be recovered: by studying the properties of the inequivalent representations associated with superselection sectors, it is possible to find all different ``types'' of charges there can exist in the system, and study what happens for example when we bring two of them close together. We will outline this analysis for the toric code, which is an important example for quantum information and quantum computation.

\subsection{Lieb-Robinson bounds and applications}
It is useful to know how fast correlations spread through the system. In other words, if I do something at a site $x$, how long will it take before I can see an effect at a distant site $y$? In relativistic theories this is bounded by the speed of light, and space-like separated points in space-time cannot influence each other. For quantum spin systems \emph{a priori} there is no such bounds. In many cases of interest, however, one \emph{can} prove that such a bound exist, and there is an upper limit on the speed with which correlations can propagate through the system. This velocity is called the \emph{Lieb-Robinson velocity}\index{Lieb-Robinson!velocity}, and the effect on observables are given by \emph{Lieb-Robinson bounds}.\index{Lieb-Robinson!bound}

It is useful to, at least heuristically, understand why one could expect such a bound to hold. Recall that we often deal with strictly local interactions. This means that for small times, the largest effect of an operation $A$ will be at nearby sites. More concretely, one can consider the time evolved observable $A$, which is given by (for the local dynamics)
\[
	e^{i t H_{\Lambda}} A e^{-it H_{\Lambda}} = A + i t [H_\Lambda, A] - \frac{t^2}{2}[ H_\Lambda, [H_\Lambda, A]] + \dots.
\]
Since $H_\Lambda$ is a sum of local interactions, the commutator will kill most of the interaction terms if $A$ is also local. That is, the support of $[H_\Lambda,A]$ is only slightly larger. Since for small $t$ the remainder terms are also small, the time evolved observable can be well approximated by an observable with slightly larger support. Lieb-Robinson bounds give a precise estimate of how good this approximation is.

This has many interesting applications, some of which we will discuss in these lecture notes. One example application is that they can be used to describe scattering of ``excitations'' of quantum spin systems. Lieb-Robinson bounds are used there as follows. The excitations can be created using (almost) local operators. Then, as time flows, the excitations will move. Lieb-Robinson bounds tell us that they will still be local. This has as a consequence that, if the excitations have different velocities, they will be far away from each other at large time scales, and one can show that they essentially behave as free particles. This is essential in the construction of scattering states. It can also be shown that the Lieb-Robinson velocity provides an upper bound on the velocity of the excitations, showing once again that it is the speed of light of quantum spin systems.

\section{Topics not covered}
With respect to the topics discussed in these lecture notes, there are many important subjects left virtually untouched. One of the reasons is to keep the material short enough to be able to cover most of it in a one-semester course. The selection itself is of course heavily influenced by the author's own interests and more importantly his expertise, and should not be taken as an indication of their importance. Since the field goes back at least fifty years, by now even an overview of major topics will necessarily leave things out, but we nevertheless give some examples of related topics that are not discussed in these lecture notes. These lecture notes should provide enough background material to dive into these topics head first.

On the mathematical side we only introduce the very basics. In particular, we will mainly discuss $C^*$-algebras. For physical applications, the von Neumann algebras, which form a special class of operator algebras, also play an important role. Indeed, the field essentially started with Murray and von Neumann~\cite{10.2307/1968693}, who was interested the mathematical description of quantum mechanics. And indeed, throughout the history of the theory of operator algebras, there have been many interactions with mathematical physics. One example of this are the KMS states. Originally introduced to describe thermal equilibrium states, as we will see, they have become very useful in the study of operator algebras. This is related the so-called Tomita-Takesaki theory, which in turn has applications in physics.

In physics it has become increasingly important to do computer simulations, and quantum many body systems form no exception. Simulating such systems is hard in general, essentially because the dimension of the Hilbert space grows exponentially in the number of sites. Hence it takes a huge amount of resources to simulate such systems, even for relatively small system sizes. The techniques we will discuss here are not of much help there. However, one can look for certain subsets of states for which this problem is easier. One important example are the \idx{finitely correlated states} or \index{matrix product state}\emph{matrix product states}~\cite{MR1158756}. Such states admit a compact description, and hence reducing the number of parameters that have to be stored, that could in principle be stored on a computer. This make them much more amenable to simulations that an arbitrary state on an infinite lattice. The review~\cite{mpsreview} discusses many applications, and since this review has appeared, new and improved applications have been found.

Our go-to example in these lectures will be the toric code. This is a relatively simple model that nevertheless has many interesting properties. It is however somewhat atypical, in the sense that many of its features are not shared with a general quantum spin system, which of course is one of the reason why it is interesting. There are many other interesting examples that do not make an appearance (or only briefly) in these lecture notes. Perhaps the best-known example is the quantum Heisenberg model, which models magnetism. A specific instance of this model is the 1D Ising spin chain, which models spins in the presence of a transverse magnetic field. A typical question that one is interested in when studying such models is if there exist phase transitions: abrupt changes in the value of physical quantities. For example, a system can become spontaneously magnetised as the temperature drops below a critical temperature~\cite{Dyson1978}.

Another example, which in a sense is an extension of the Heisenberg model, is the AKLT (after Affleck, Kennedy, Lieb and Tasaki) model~\cite{affleck1988}. The ground states of this model can be described using finitely correlated states methods~\cite{MR1158756}. There are many interesting questions one can ask about this model, for example what phases there are: depending on the choice of parameters in the model, distinct physical behaviour is observed. One question that one can ask is what the energy spectrum of the Hamiltonian is. In general this question is too difficult to answer explicitly, instead one considers the easier question if the system is gapped or not. That is, if we normalise the Hamiltonian such that the ground state of energy zero, if there is a $\gamma > 0$ such that there are no states with energy in $(0,\gamma)$, independent of the size of the system. Although Affleck, Kennedy, Lieb and Tasaki showed that there is a spectral gap, it was difficult to give good bounds on the gap. In later work, Nachtergaele introduced a more general method to proof the existence of a spectral gap (and give good bounds) in certain finitely correlated state models~\cite{nachtergaele1996}.

Leaving aside specific examples, there are many more structural results on the type of models that we consider. For example, we will only briefly discuss the role of symmetries. Symmetries are often a very helpful tool in studying physics. One way they can be used is to discuss order and disorder in large systems. This is related to symmetry breaking: it is possible that even though a Hamiltonian admits a certain symmetry, this symmetry is broken in the ground state~\cite{MR1919619,MR2374090}. This is the mechanism that is behind the Higgs boson, but it can also appear in quantum spin systems. Symmetries can also be used to decompose invariant states into \emph{extremal} invariant states, which cannot be written as a combination of distinct invariant states. This is called \emph{ergodic decomposition}. This helps in studying, for example, the structure of the set of translation invariant KMS states, which as we mentioned earlier, describe states at thermal equilibrium.

\section{Acknowledgements}
These lecture notes grew out a course given at the Leibniz Universit\"at Hannover in the \emph{Sommersemester} 2013. I thank the participants of the course for their helpful comments and feedback, and the university for giving me this opportunity to teach. In addition, I would like to thank Tobias Osborne and Reinhard Werner for useful advice. The many positive comments and feedback of my colleagues kept me motivated to correct and extend an earlier version of these lecture notes. Thanks also to Leopold Kellers and Alessandro Michelangeli for pointing out a gap in one of the proofs in an earlier version.

These lecture notes have evolved over the course of almost four years, during which the author has been supported first by the Dutch Organisation for Scientific Research (NWO) through a Rubicon grant and later has received funding from the European Union's Horizon 2020 research and innovation program under the Marie Sklodowska-Curie grant agreement No 657004. Partial support through the EU project QFTCMPS and the cluster of excellence EXC 201 Quantum Engineering and Space-Time Research is also gratefully acknowledged.

And finally, the author would like to thank Christian Caron of Springer for his editorial support.

\chapter{Operator algebras}\label{ch:opalg}
Linear maps acting on Hilbert spaces play in important role in quantum mechanics. Such a linear map is called an \idx{operator}. Well-known examples are the Pauli matrix $\sigma_z$, which measures the spin in the $z$ direction, and the position and momentum operators $P$ and $Q$ of a single particle on the line. The first acts on the Hilbert space $\mathbb{C}^2$, while the latter are defined on dense subspaces of $L^2(\mathbb{R})$. An \idx{operator algebra} is an algebra of such operators, usually with additional conditions such as closure in a certain topology. Here we introduce some of the basic concepts in the theory of operator algebras. The material here is standard, and by now there is a huge body of textbooks on the subject, most of which cover a substantially bigger part of the subject than these notes. Particularly recommended are the two volumes by Bratteli and Robinson~\cite{MR887100,MR1441540}, which contain many applications to physics. Many of the topics covered here are studied there \emph{in extenso}. The books by Kadison \& Ringrose provide a very thorough introduction to the field, with many exercises~\cite{MR1468230,MR719020}. The book by Takesaki~\cite{MR1873025} (and the subsequent volumes II and III~\cite{MR1943006,MR1943007}) is a classic, but is more technical. Volume I of Reed and Simon's \emph{Methods of Modern Mathematical Physics}~\cite{MR751959} or Pedersen's \emph{Analysis Now}~\cite{MR971256} cover the necessary tools of functional analysis (and much more), but do not cover most of the material on operator algebras we present here.

Before going into the mathematical details, let us take a step back, and try to motivate the study of operator algebras. Is it really necessary, or even useful, to develop this whole mathematical machinery? What is wrong with the usual picture of a Hilbert space of wave functions, with, say, position and momentum observables? There are a few reasons one can give.

The operator algebraic approach generally helps to obtain results at the level of mathematical rigour. Especially for systems with infinitely many degrees of freedom, the algebraic approach is often more suitable. These play an important role in quantum field theory and quantum statistical mechanics. In the latter, one tries to obtain properties from systems with very many microscopic degrees of freedom. The operator algebraic approach gives a natural playground for the study of structural questions, such as the (non)-existence of translationally invariant equilibrium states in certain classes of models. Finally, the subject is of interest from a purely mathematical point of view as well, and has led to interesting collaborations between (mathematical) physicists and mathematicians.

We first recall some basic notions of functional analysis, regarding Hilbert and Banach spaces, and linear maps on Hilbert spaces. Then the notion of a $C^*$-algebra is introduced. They provide a central role in modelling the quantum systems in which we are interested. It is possible to define the spectrum of an element of a $C^*$-algebra. This provides a generalisation of the notion of the eigenvalue of a matrix. Using the spectrum a powerful technique, the functional calculus, can be developed: it allows one to apply suitable continuous functions (for example, the square root) to certain classes of operators. We will only need most of the results in that section near the end of this book, so it can be safely skipped on first reading.

To make the connection between the abstract $C^*$-algebraic picture and physical reality it is necessary to have a way to describe expected values of measurement outcomes. This naturally leads to the notion of a \emph{state} on a $C^*$-algebras. Using such states there is then a canonical way, the \emph{GNS construction} to represent the system on a Hilbert space, bringing us back to the more familiar picture of quantum mechanics. This construction is outlined in Section~\ref{sec:gns}. Finally, at the end of this chapter we will see how the theory of quantum mechanics can be phrased in this mathematical framework.  

\section{Functional analysis: a quick review}\label{sec:funcan}
We first recall the definition of a \idx{Hilbert space}. Let $\mc{H}$ be a vector space over $\mathbb{C}$.\footnote{One can also define Hilbert spaces over $\mathbb{R}$, but these will play no role in these lectures.} An inner product on $\mc{H}$ is a function $\langle \cdot, \cdot\rangle: \mc{H} \times \mc{H} \to \mathbb{C}$ such that the following conditions hold:
\begin{enumerate}
	\item \emph{Conjugate symmetry}: $\langle \eta, \psi \rangle = \overline{\langle \psi, \eta\rangle}$ for all $\eta,\psi \in \mc{H}$,
	\item \emph{Linearity in the second variable}:\footnote{In the mathematics literature the inner product is often linear in the \emph{first} variable, and anti-linear in the second. We adopt the physics conventions.} $\langle \psi, \lambda \eta + \xi \rangle = \lambda \langle \psi, \varepsilon \rangle + \langle \psi, \xi \rangle$ for all $\psi, \eta, \xi \in \mc{H}$ and $\lambda \in \mathbb{C}$,
	\item \emph{Positive definiteness}: if $\langle \psi, \psi \rangle \geq 0$ for all $\psi \in \mc{H}$, and $\langle \psi, \psi \rangle = 0$ if and only if $\psi = 0$.
\end{enumerate}
If the condition that $\langle \psi, \psi \rangle = 0$ for $\psi = 0$ only is not satisfied, $\mc{H}$ is called a \emph{pre-inner product space}. One can always divide out the subspace of vectors with norm zero to obtain a proper inner product space from a pre-inner product space. Note that from the first and second condition we obtain $\langle \lambda \psi, \eta \rangle = \overline{\lambda} \langle \psi, \eta \rangle$.

With the help of an inner product a norm\index{norm!inner product space} $\| \cdot \|$ can be defined on $\mc{H}$, by $\|\psi\|^2 = \langle \psi, \psi \rangle$ for all $\psi \in \mc{H}$. That this indeed is a norm follows from the well-known Cauchy-Schwarz inequality, which states that $|\langle \psi,\xi \rangle|^2 \leq \langle \psi,\psi \rangle \langle \xi, \xi \rangle$, and the other properties of an inner product. Hence each inner product space is a normed vector space in a canonical way.

A normed space can be endowed with a \emph{topology}\index{norm topology}, since the norm defines a distance function (or \emph{metric}), defined by $d(\psi,\xi) := \|\psi - \xi\|$. In other words, we can talk about convergence of sequences and of continuity. Recall that a sequence ${(\psi_n)}_n$ in a metric space converges to some element $\psi$ if and only if $d(\psi_n, \psi)$ tends to zero as $n$ goes to infinity. A special role is played by the \emph{Cauchy sequences}\index{Cauchy sequence}. A sequence ${(\psi_n)}_n$ in a metric space is a Cauchy sequence if for each $\varepsilon > 0$, there is some integer $N$ such that $d(\psi_n, \psi_m) < \varepsilon$ for all $n,m \geq N$. This can be interpreted as follows: if we look far enough in the sequence, all elements will be close to each other. 

It is not so difficult to show that if a sequence converges in a metric space, it must necessarily be a Cauchy sequence. The converse is not true: there are metric spaces in which Cauchy sequences do not converge. For example, consider the space $X = (0,1)$ with the usual metric of $\mathbb{R}$. Then the sequence given by $x_n = 1/n$ is a Cauchy sequence, but it does not converge (since $0 \notin X$). A metric space in which all Cauchy sequences converge is called \emph{complete}\index{complete metric space}\index{metric space!complete}. The most familiar example is likely the real line $\mathbb{R}$, but note that a subspace of a complete space is not necessarily complete, as the example above shows. 

In these notes we will mainly be concerned with a special class of complete spaces: the Banach spaces. A \idx{Banach space} is a normed vector space that is complete with respect to the norm. For quantum mechanics the natural playground is the \emph{Hilbert space}, whose definition we recall here.
\begin{definition}
A \idxr{Hilbert space} $\mc{H}$ is an inner product space over $\mathbb{C}$ that is also a Banach space. That is, it is complete with respect to the norm induced by the inner product.
\end{definition}

\subsection{Examples of Hilbert spaces}
The easiest examples of Hilbert spaces are the finite dimensional ones, where $\mc{H} = \mathbb{C}^N$ for some integer $N$. The inner product is the usual one, that is if $\xi, \eta \in \mc{H}$ then $\langle \xi, \eta \rangle = \sum_{i=1}^N \overline{\xi}_i \eta_i$. It follows from basic linear algebra that this indeed is an inner product. That $\mc{H}$ is complete with respect to the metric induced by the inner product essentially follows from the completeness of $\mathbb{C}$. This can be seen by noting that a sequence $\xi_n \in \mc{H}$ is Cauchy if and only if the sequences of coordinates $n \mapsto (\xi_i)_n$, with $i = 1, \dots N$ is Cauchy. Hence $\mathbb{C}^N$ is a Hilbert space.

Next we consider infinite dimensional Hilbert spaces. Write $\ell^2(\mathbb{N})$\index{_l2N@$\ell^2(\mathbb{N})$} for the set of all sequences $x_n$, such that $\sum_{i=1}^\infty |x_i|^2 < \infty$. For two such sequences $x_n$ and $y_n$, define
\begin{equation}
	\label{eq:elltwo}
	\langle x_n, y_n \rangle_{\ell^2} = \sum_{i=1}^\infty \overline{x}_n y_n.
\end{equation}
The first thing to check is that this definition makes sense: a priori it is not clear that the sum on the right hand side converges. Note that since $(|x|-|y|)^2 \geq 0$ for all complex numbers $x$ and $y$, it follows that $|\overline{x}_n y_n| \leq \frac{1}{2}(|x_n|^2+|y_n|^2)$. Hence for $x_n$ and $y_n$ elements of $\ell^2(\mathbb{N})$, the right hand side of equation~\eqref{eq:elltwo} converges absolutely, and hence converges. It follows that $\ell^2(\mathbb{N})$ is a vector space over $\mathbb{C}$ and that equation~\eqref{eq:elltwo} defines an inner product.

\begin{exercise}
Show that $\ell^2(\mathbb{N})$ is complete with respect to the norm induced by the inner product (and hence a Hilbert space).
\end{exercise}

A familiar example from quantum mechanics is the Hilbert space $L^2(\mathbb{R})$\index{_l2R@$L^2(\mathbb{R})$}, consisting of square integrable functions $f: \mathbb{R} \to \mathbb{C}$ with the inner product
\[
\langle f, g \rangle_{L^2} = \int \overline{f(x)} g(x) dx.
\]
The definition of such integrals is a subtle matter. It is usually implicit that the integration is with respect to the Lebesgue measure on $\mathbb{R}$.\footnote{The appropriate setting for a mathematical theory of integration is \emph{measure theory}. We will briefly come back to this when we discuss the functional calculus, but for the most part of this book it will play no role. Details can be found in most graduate level texts on analysis or functional analysis.} In calculations, on the other hand, one usually has to deal with integrals of explicitly given functions and one does not have to worry about such issues. The proof that the vector space of all square integrable functions is indeed a Hilbert space requires some results of measure theory, and is outside of the scope of these notes. It can be found in most textbooks on functional analysis.

The definition of $L^2$ spaces can be generalised to any measure space $(X, \mu)$ by using the corresponding integral in the definition of the inner product. In the definition of the $L^2$ space one has to identify functions whose values only differ on a set of measure zero, since otherwise we would not obtain a proper inner product, because each function that is non-zero on a set of measure zero has norm zero, but is not equal to the zero function. So strictly speaking the $L^2$ spaces consist of \emph{equivalence classes} of functions. The resulting Hilbert space is then denoted by $L^2(X, d\mu)$.

\begin{remark}\label{rem:ell2}
	The Hilbert space $\ell^2(\mathbb{N})$ can be seen as a special case of an $L^2$ space. To see this, note that for each sequence $x_n$ we can find a function $f: \mathbb{N} \to \mathbb{C}$, defined by $f(n) = x_n$. Conversely, each such function defines a sequence via the same formula. Next we define a measure $\mu$ on $\mathbb{N}$ by $\mu(X) = |X|$, where $|X|$ is the number of elements for any $X \subset \mathbb{N}$. This measure is usually called the \emph{counting measure} for obvious reasons. For two functions $f$ and $g$ the $L^2$ inner product then becomes
\[
\langle f, g \rangle_{L^2} = \int \overline{f(n)} g(n) d\mu(n) = \sum_{i=1}^\infty \overline{f(n)} g(n) = \langle f, g \rangle_{\ell^2},
\]
where in the last equality we identified the functions $f$ and $g$ with their corresponding sequences, as above. It follows that $\ell^2(\mathbb{N}) \equiv L^2(\mathbb{N}, d\mu)$.
\end{remark}

\subsection{Linear maps}
Once one has defined a class of mathematical objects (in our case, Hilbert spaces), the next question to answer is what kind of maps between Hilbert spaces one considers.\footnote{In abstract nonsense language: one has to say in which category one is working.} For Hilbert spaces linear maps\index{linear map} $L$ are an obvious choice. That is, $L: \mc{H}_1 \to \mc{H}_2$ is a function such that for all $\lambda \in \mathbb{C}$ and $\xi, \eta \in \mc{H}_1$ we have
\[
L(\lambda \xi) = \lambda L(\xi) \quad \textrm{and}\quad L(\xi + \eta) = L(\xi) + L(\eta).
\]
We will usually omit the brackets and simply write $L\xi$ for $L(\xi)$.

Since Hilbert spaces carry a topology induced by the norm, it is natural to restrict to those linear maps that are continuous with respect to this topology. On the other hand, it is often convenient to work with bounded linear maps\index{linear map!bounded}, for which there is a constant $M > 0$ such that $\|L\xi\| \leq M \|\xi\|$, independent of the vector $\xi$. We should note that not all operators one encounters in quantum mechanics are bounded, for example the momentum operator of a particle on the line is not. The treatment of unbounded linear maps is a topic on its own, to which we will only return briefly when discussing Hamiltonians.

The next proposition says that bounded and continuous linear maps are actually the same. Note that this means that unbounded linear maps are not continuous. In general, they can only be defined on a \emph{subset} of a Hilbert space.
\begin{proposition}
\label{prop:lincont}
Let $L: V \to W$ be a linear map between two normed vector spaces. Then the following are equivalent:
\begin{enumerate}
	\item $L$ is continuous with respect to the norm topology,
	\item $L$ is bounded.
\end{enumerate}
\end{proposition}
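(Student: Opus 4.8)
The plan is to prove the two implications separately, the direction (2) $\Rightarrow$ (1) being the easy one and (1) $\Rightarrow$ (2) requiring a small trick with rescaling.

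First I would show that boundedness implies continuity. If $L$ is bounded, there is $M > 0$ with $\|L\xi\| \leq M \|\xi\|$ for all $\xi \in V$. Then for any $\xi, \eta \in V$, linearity gives $\|L\xi - L\eta\| = \|L(\xi - \eta)\| \leq M \|\xi - \eta\|$, so $L$ is Lipschitz and in particular continuous (indeed uniformly continuous) with respect to the norm topology. This step is essentially immediate once one invokes linearity to turn a difference of images into the image of a difference.

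For the converse, I would only use continuity of $L$ at the single point $0 \in V$, which is legitimate since continuity everywhere certainly implies it there, and note that $L 0 = 0$ by linearity. Applying the definition of continuity with $\varepsilon = 1$, there is $\delta > 0$ such that $\|\xi\| \leq \delta$ forces $\|L\xi\| \leq 1$. Given an arbitrary nonzero $\xi \in V$, I would apply this to the rescaled vector $\xi' = \frac{\delta}{\|\xi\|}\xi$, which has norm exactly $\delta$, obtaining $\|L\xi'\| \leq 1$; using linearity and the homogeneity of the norm, this rearranges to $\|L\xi\| \leq \frac{1}{\delta}\|\xi\|$. Since the inequality is trivial for $\xi = 0$, taking $M = 1/\delta$ shows $L$ is bounded.

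The only point that deserves a moment's care — and what I would flag as the ``obstacle,'' though it is a mild one — is recognising that continuity at just one point suffices, and that the rescaling argument needs $\xi \neq 0$ so that division by $\|\xi\|$ makes sense, with the zero case handled trivially and separately. Everything else is routine manipulation with the norm axioms and linearity.
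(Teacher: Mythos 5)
Your proof is correct and takes essentially the same route as the paper: the easy direction via the Lipschitz estimate, and the converse via continuity at the origin with $\varepsilon = 1$ followed by a rescaling to get a bound with $M = 1/\delta$. The only cosmetic difference is that you rescale to norm exactly $\delta$ (the paper uses $\delta v/(2\|v\|)$, landing strictly inside the $\delta$-ball), which is harmless provided one is consistent about whether the $\delta$ in the continuity definition gives an open or closed condition; halving $\delta$ sidesteps the issue entirely.
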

\begin{proof}
First suppose that $L$ is bounded, so that there is a constant $M > 0$ such that $\|Lv\| \leq M \|v\|$. Let $\varepsilon > 0$ and choose $\delta = \varepsilon/M$. Then for any $v_1, v_2 \in V$ such that $\| v_1 - v_2 \| < \delta$, we have (with the help of linearity)
\[
	\|L v_1-L v_2\| = \| L(v_1-v_2) \| \leq M \|v_1-v_2\| < \delta M = \varepsilon.
\]
This proves that $L$ is continuous.

Conversely, set $\varepsilon = 1$ and let $\delta > 0$ be the corresponding maximal distance as in the definition of continuity at the point $v_0 = 0$. Then if $\|v\| < \delta$ (hence the distance of $v$ to $0$ is smaller than $\delta$), it follows that $\|Lv\| \leq 1$. If $v$ is any non-zero vector, the norm of $\frac{\delta \cdot v}{2\|v\|}$ is smaller than $\delta$. By linearity it follows that
\[
\left\| \delta \frac{Lv}{2 \|v\|} \right\| \leq 1, \quad \Leftrightarrow \quad \| L v \| \leq \frac{2}{\delta} \|v\|.
\]
This concludes the proof, since for $v = 0$ the inequality is satisfied trivially.
\end{proof}
Hence continuous linear maps and bounded linear maps between Hilbert spaces are the same thing. This is often very convenient. For example, in many situations it is easy to construct a linear map that is bounded on some dense subset of a Hilbert space. Then this result allows us to extend the definition of the linear map to the whole Hilbert space, and obtain a bounded linear map again. We will see examples of this later.

\begin{definition}\index{_b(H)@$\alg{B}(\mc{H})$}
Let $\mc{H}_1$ and $\mc{H}_2$ be Hilbert spaces. We write $\alg{B}(\mc{H}_1, \mc{H}_2)$ for the set of bounded linear maps from $\mc{H}_1$ to $\mc{H}_2$. As a shorthand, $\alg{B}(\mc{H}) := \alg{B}(\mc{H}, \mc{H})$.
\end{definition}
An element of $\alg{B}(\mc{H}_1, \mc{H}_2)$ will also be called a \emph{(bounded) operator}\index{operator!bounded}. Usually we will restrict to maps in $\alg{B}(\mc{H})$, and call such elements simply \emph{operators on $\mc{H}$}. Note that $\alg{B}(\mc{H}_1, \mc{H}_2)$ is a vector space over $\mathbb{C}$. We will later study more properties of these spaces. Essentially, the theory of operator algebras amounts to studying these spaces (and subspaces thereof).

\label{page:adjoint}
If $A$ is a bounded linear map $A: \mc{H}_1 \to \mc{H}_2$, the \idx{adjoint} $A^*$ of $A$ is the unique linear map $A^*: \mc{H}_2 \to \mc{H}_1$\index{_astar@$A^*$} such that
\begin{equation}
	\label{eq:adjoint}
\langle \psi, A^* \eta \rangle_{\mc{H}_1} = \langle A \psi, \eta \rangle_{\mc{H}_2}, \quad \textrm{for all } \psi \in \mc{H}_1,\, \eta \in \mc{H}_2.
\end{equation}
Of course one has to show that such a linear map $A^*$ exists and indeed is unique. To do this rigorously requires some work,\footnote{\label{ftn:adjoint}The proof requires the \idx{Riesz representation theorem}. First note that the map $\xi \mapsto \overline{\langle A \xi, \psi \rangle}$ is a bounded linear functional (that is, a linear map to $\mathbb{C})$ on $\mc{H}$. By the Riesz representation theorem there is a $\eta \in \mc{H}$ such that $\langle \eta, \xi \rangle = \overline{\langle A \xi, \psi\rangle}$ for all $\xi \in \mc{H}$. Set $A^*\psi = \xi$. This uniquely defines a bounded operator $A^*$ with the right properties.} but a heuristic argument goes as follows. Choose orthonormal bases $e_i$ and $f_j$ of $\mc{H}_{1}$ and $\mc{H}_2$. Then we can represent $A^*$ as a matrix (an ``infinite'' matrix if either of the Hilbert spaces is infinite dimensional) with respect to these bases. The elements $\langle e_i, A^* f_j \rangle_{\mc{H}_1}$ give the matrix coefficients in this basis, and hence determine the linear map $A^*$. In the physics literature the adjoint is usually called the \emph{Hermitian conjugate}\index{Hermitian conjugate|see {adjoint}} and denoted by a $\dagger$. We will write $*$ for the adjoint, as is common in mathematics and mathematical physics. We will come back to the adjoint of linear maps on Hilbert spaces when we discuss $C^*$-algebras.

\begin{exercise}\label{ex:adjoint}
Verify that the adjoint has the following properties: 
\begin{enumerate}
	\item $(A^*)^* = A$ for any linear map $A$,
	\item the adjoint is conjugate linear, i.e. $(\lambda A + B)^* = \overline{\lambda}A^* + B^*$, with $\lambda \in \mathbb{C}$,
	\item and $(AB)^* = B^* A^*$.
\end{enumerate}
Here $AB$ stands for the composition $A \circ B$, that is, $ABv = A(B(v))$. In these notes we will always omit the composition sign of two linear maps.
\end{exercise}

An operator $A$ is called \emph{self-adjoint}\index{operator!self-adjoint} if $A= A^*$ and \emph{normal} if $A A^* = A^* A$. It is an \idx{isometry} if $V^* V = I$, where $I$ is the identity operator. The name ``isometry'' comes from the fact that this implies that $V$ is an isometric map, since
\[
	\|V \psi\|^2 = \langle V \psi, V \psi \rangle = \langle \psi, V^*V \psi \rangle = \langle \psi, \psi \rangle = \|\psi\|^2.
\]
Note that an isometry is always injective. A \emph{projection $P$}\index{projection} is an operator such that $P^2 = P^* = P$. Projectors project onto closed subspaces of the Hilbert space $\mc{H}$ on which they are defined. If $V$ is an isometry, it is easy to check that $VV^*$ is a projection.

Finally we come to the \emph{isomorphisms}\index{isomorphism!of Hilbert spaces} of Hilbert spaces. An isomorphism $U: \mc{H}_1 \to \mc{H}_2$ is a bounded linear map, such that its inverse $U^{-1}$ exists and is also a linear map. We also require that $U$ is an isometry, $\|U \psi \| = \|\psi\|$.
\begin{theorem}
	Let $U: \mc{H}_1 \to \mc{H}_2$ be a bounded linear map between two Hilbert spaces. Then the following are equivalent:
	\begin{enumerate}
		\item\label{it:isomorph} $U$ is an isomorphism of Hilbert spaces,
		\item\label{it:innerpd} $U$ is surjective and $\langle U \xi, U \eta \rangle = \langle \xi, \eta \rangle$ for all $\eta,\xi \in \mc{H}_1$,
		\item\label{it:unitary} $U^*U = U U^* = I$, where $I$ is the identity map. (Note that it would be more accurate to distinguish between the identity map of $\mc{H}_1$ and of $\mc{H}_2$).
	\end{enumerate}
\end{theorem}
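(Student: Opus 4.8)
The plan is to prove the equivalence by establishing the cycle of implications $\ref{it:isomorph} \Rightarrow \ref{it:innerpd} \Rightarrow \ref{it:unitary} \Rightarrow \ref{it:isomorph}$. Most of the content is a routine unwinding of definitions, so the main task is to keep track of which direction each adjoint identity is being used in.

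\medskip

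\textbf{Step 1: \ref{it:isomorph} $\Rightarrow$ \ref{it:innerpd}.} Assume $U$ is an isomorphism of Hilbert spaces, so $U$ is invertible (hence in particular surjective) and isometric, $\|U\psi\| = \|\psi\|$ for all $\psi$. Surjectivity is immediate. For the inner product, I would recover it from the norm via the polarisation identity: for a complex inner product space, $\langle \xi, \eta \rangle = \frac14 \sum_{k=0}^{3} i^k \| \xi + i^k \eta \|^2$. Applying the isometry of $U$ to each of the four norms on the right (using linearity of $U$ so that $U\xi + i^k U\eta = U(\xi + i^k \eta)$) gives $\langle U\xi, U\eta\rangle = \langle \xi, \eta\rangle$.

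\medskip

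\textbf{Step 2: \ref{it:innerpd} $\Rightarrow$ \ref{it:unitary}.} Assume $U$ is surjective and $\langle U\xi, U\eta\rangle = \langle \xi, \eta\rangle$ for all $\xi, \eta$. Rewriting the left side using the adjoint, $\langle \xi, U^*U\eta\rangle = \langle \xi, \eta\rangle$ for all $\xi, \eta \in \mc{H}_1$; since this holds for all $\xi$, positive-definiteness of the inner product forces $U^*U = I$ on $\mc{H}_1$. In particular $U$ is an isometry, hence injective (as noted after the definition of isometry in the excerpt). Combined with surjectivity, $U$ is a bijection, so it has a set-theoretic inverse $U^{-1}$. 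Now I want $UU^* = I$ on $\mc{H}_2$. From $U^*U = I$ we get, for any $\eta \in \mc{H}_1$, that $U^* (U\eta) = \eta$; since $U$ is onto, every element of $\mc{H}_2$ is of the form $U\eta$, so $U^* = U^{-1}$ as maps $\mc{H}_2 \to \mc{H}_1$. Therefore $UU^* = U U^{-1} = I$ on $\mc{H}_2$.

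\medskip

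\textbf{Step 3: \ref{it:unitary} $\Rightarrow$ \ref{it:isomorph}.} Assume $U^*U = UU^* = I$. Then $U^*$ is a two-sided inverse of $U$, so $U$ is bijective with bounded linear inverse $U^{-1} = U^*$ (the adjoint of a bounded operator is bounded and linear). Moreover $U$ is an isometry: $\|U\psi\|^2 = \langle U\psi, U\psi\rangle = \langle \psi, U^*U\psi\rangle = \langle \psi, \psi\rangle = \|\psi\|^2$, exactly the computation already recorded in the excerpt for isometries. Hence $U$ is a bounded linear bijection with linear inverse that preserves the norm, i.e. an isomorphism of Hilbert spaces.

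\medskip

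I do not expect a genuine obstacle here: the only mild subtlety is Step 1, where one must know that the complex inner product is determined by the norm, which is precisely the polarisation identity; everything else is bookkeeping with the adjoint relation $\langle U\xi, U\eta\rangle = \langle \xi, U^*U\eta\rangle$ and the fact (already established in Proposition~\ref{prop:lincont} and the surrounding discussion) that the adjoint of a bounded operator is again a bounded linear map. One should also be slightly careful to distinguish the identity map of $\mc{H}_1$ from that of $\mc{H}_2$, as the statement itself warns, but this is just a matter of phrasing and does not affect the argument.
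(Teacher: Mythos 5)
Your proof is correct and follows essentially the same cycle of implications as the paper, with the same key ingredients: polarisation identity for \ref{it:isomorph} $\Rightarrow$ \ref{it:innerpd}, the adjoint relation plus surjectivity for \ref{it:innerpd} $\Rightarrow$ \ref{it:unitary}, and a direct computation for \ref{it:unitary} $\Rightarrow$ \ref{it:isomorph}. The only cosmetic difference is that you spell out the intermediate step $\langle \xi, U^*U\eta\rangle = \langle \xi, \eta\rangle$ explicitly, which the paper leaves implicit.
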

\begin{proof}
\eqref{it:isomorph}~$\Rightarrow$~\eqref{it:innerpd} Since $U$ is invertible it must be surjective. Because $U$ is in addition an isometry, it follows that $\langle U\xi, U\xi\rangle = \|U\xi\|^2 = \langle \xi, \xi \rangle$ for all $\xi \in \mc{H}_1$. Now let $\xi,\eta \in \mc{H}_1$. Then the \emph{polarisation identity}, which can be easily verified, says that
\begin{equation}
	\label{eq:polarisation}
\langle U \xi, U \eta \rangle = \frac{1}{4} \sum_{k=0}^3 i^k \langle U(\xi + i^k \eta), U(\xi + i^k \eta)\rangle,
\end{equation}
and the claim follows.

\eqref{it:innerpd} $\Rightarrow$~\eqref{it:unitary} Since $U$ is an isometry it follows that $U^*U = I$ and that $U$ is injective. Hence $U$ is a bijection and therefore invertible, so $U^* U U^{-1} = U^* = U^{-1}$. It follows that $UU^* = I$.

\eqref{it:unitary} $\Rightarrow$~\eqref{it:isomorph} The assumptions say that $U^*$ is the inverse of $U$. Because $U^*U = I$ it is also clear that $U$ is isometric.
\end{proof}
A continuous linear map between Hilbert spaces satisfying assumption~\ref{it:unitary} of the Theorem will be called a \idx{unitary}.

The theorem can be used to show that, in a sense, there are not many different Hilbert spaces. To make this precise, recall that an orthonormal basis of a Hilbert space $\mc{H}$ is a set $\{ \xi_\alpha \}$ of vectors of norm one such that $\langle \xi_\alpha, \xi_\beta \rangle = 0$ if $\alpha \neq \beta$ and such that the linear span of these vectors is \emph{dense}\index{dense linear space} in $\mc{H}$. That is, for each vector $\xi$ and $\varepsilon > 0$, one can find a linear combination of \emph{finitely} many of the basis vectors that is close to $\xi$, i.e.\ there exist finitely many $\lambda_\alpha \neq 0$ such that
\[
	\left\| \sum_{i=1}^k \lambda_{\alpha_i} \xi_{\alpha_i} - \xi \right\| < \varepsilon.
\]
The cardinality of a set of orthonormal basis vectors is called the \idx{dimension} of a Hilbert space. One can show that this is independent of the choice of orthonormal basis. This leads to the following result.
\begin{proposition}
Two Hilbert spaces $\mc{H}_1$ and $\mc{H}_2$ are isomorphic if and only if they have the same dimension.
\end{proposition}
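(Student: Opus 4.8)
The plan is to prove both directions by constructing explicit unitaries from orthonormal bases, and to invoke the previous Theorem to identify ``unitary'' with ``isomorphism of Hilbert spaces''. The key preliminary fact I would use without proof (it is standard, and consistent with the level of these notes) is that every Hilbert space admits an orthonormal basis, together with the Parseval/Bessel machinery: if $\{\xi_\alpha\}_{\alpha \in A}$ is an orthonormal basis of $\mc{H}$, then every $\psi \in \mc{H}$ can be written as $\psi = \sum_{\alpha} \langle \xi_\alpha, \psi\rangle \xi_\alpha$ (with only countably many nonzero terms, the sum converging in norm) and $\|\psi\|^2 = \sum_\alpha |\langle \xi_\alpha, \psi\rangle|^2$.

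For the easy direction, suppose $U : \mc{H}_1 \to \mc{H}_2$ is an isomorphism. If $\{\xi_\alpha\}_{\alpha \in A}$ is an orthonormal basis of $\mc{H}_1$, I would check that $\{U\xi_\alpha\}_{\alpha \in A}$ is an orthonormal basis of $\mc{H}_2$: orthonormality is immediate from $\langle U\xi_\alpha, U\xi_\beta\rangle = \langle \xi_\alpha, \xi_\beta\rangle$ (property~\ref{it:innerpd} of the Theorem), and density of the span follows because $U$ is a surjective isometry, hence a homeomorphism, so it maps dense sets to dense sets. Since $U$ restricted to the index sets is a bijection $A \to A$ (it is injective on basis vectors, again by the inner-product-preserving property), the two bases have the same cardinality, so $\dim \mc{H}_1 = \dim \mc{H}_2$ (using that dimension is independent of the chosen basis).

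For the converse, suppose $\dim \mc{H}_1 = \dim \mc{H}_2$. Pick orthonormal bases $\{\xi_\alpha\}_{\alpha \in A}$ and $\{\eta_\beta\}_{\beta \in B}$ of $\mc{H}_1$ and $\mc{H}_2$; equal dimension means there is a bijection $\varphi : A \to B$. I would define $U$ on $\mc{H}_1$ by $U\psi = \sum_{\alpha \in A} \langle \xi_\alpha, \psi\rangle \, \eta_{\varphi(\alpha)}$. One checks this is well-defined and bounded because $\sum_\alpha |\langle \xi_\alpha, \psi\rangle|^2 = \|\psi\|^2 < \infty$ shows the defining series converges in $\mc{H}_2$ and $\|U\psi\| = \|\psi\|$; linearity is clear from linearity of each coefficient functional. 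The symmetric formula $V\eta_{\varphi(\alpha)} = \xi_\alpha$ (extended the same way) gives a two-sided inverse, so $U$ is an isometric bijection, hence by the equivalence \ref{it:unitary}$\Leftrightarrow$\ref{it:isomorph} of the Theorem an isomorphism of Hilbert spaces.

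The main obstacle is not any single computation but rather deciding how much of the orthonormal-basis theory to take for granted. The cleanest route is to state up front that the existence of an orthonormal basis and the expansion $\psi = \sum_\alpha \langle\xi_\alpha,\psi\rangle\xi_\alpha$ with $\|\psi\|^2 = \sum_\alpha|\langle\xi_\alpha,\psi\rangle|^2$ are assumed known (they are proved in any functional analysis text, and the notes already assume this background), and that the dimension is well-defined independent of basis. With those in hand the two constructions above are routine; the only mild subtlety worth a sentence is that in the infinite-dimensional case the sums defining $U\psi$ have only countably many nonzero terms and converge in norm, which is exactly what the Parseval identity guarantees.
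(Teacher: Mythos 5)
Your proof is correct and follows essentially the same route as the paper's sketch: map one orthonormal basis to the other and extend to a unitary, and conversely transport a basis under a unitary to conclude equal cardinality. You supply more detail (explicit Parseval-based formula for $U\psi$, the two-sided inverse), but the underlying argument is the same.
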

\begin{proof}[Proof (sketch)]
	If $U: \mc{H}_1 \to \mc{H}_2$ is a unitary and $\{ \xi_\alpha \}$ is an orthonormal basis, then $\{ U \xi_\alpha \}$ is again orthonormal. Moreover, because $U$ is surjective, it also is a basis of $\mc{H}_2$.

	Conversely, suppose that we have bases $\{\xi_\alpha\}$ and $\{\eta_\alpha\}$ of $\mc{H}_1$ and $\mc{H}_2$ respectively. Define a linear map $U$ by $U\xi_\alpha = \eta_\alpha$. This can be linearly extended to the span of $\xi_\alpha$, and is clearly bounded. It follows that $U$ can be extended to a map of $\mc{H}_1$ to $\mc{H}_2$. One can then verify that this map must be a unitary.
\end{proof}

\subsection{Completion}\index{completion|(}
There is no reason why a metric space should be complete. In many constructions we will encounter in these notes this will be the case. Nevertheless, it is possible to embed a (non-complete) metric space into a space that \emph{is} complete. A well-known example is the completion of the rational numbers $\mathbb{Q}$: this is precisely the set of real numbers $\mathbb{R}$.\footnote{This is essentially how the real numbers are usually \emph{defined}. Namely, they are limits of Cauchy sequences of rational numbers.} There is a certain ``minimal'' way to do this, in the sense that we \emph{only} add limits of Cauchy sequences, and nothing more. Essentially this means that the original metric space is \emph{dense} in the completion. Moreover, it makes sense to demand that the metric on the new space restricts to the metric on the original space (embedded in the completion).

In this section we outline this construction with an example: the construction of a Hilbert space from a (not necessarily complete) inner product space $H$. This construction is primarily of theoretical interest. It shows that a completion always exists, but the description of the completion is not very convenient in concrete calculations, and in practice it is usually sufficient to know that a completion exists.

So let $H$ be a vector space over the complex numbers, and let $\langle \cdot,\cdot\rangle_H$ be an inner product on $H$. We will define a Hilbert space $\mc{H}$ containing $H$, and show that it is complete. First, let $(\xi_n)_{n=1}^\infty$ and $(\psi_n)_{n=1}^\infty$ be two Cauchy sequences in $H$. We say that the sequences are \emph{equivalent}, and write $(\xi_n)_n \sim (\psi_n)_n$, if for each $\varepsilon > 0$ there is some $N > 0$ such that $\|\xi_n - \psi_n\| < \varepsilon$ for all $n > N$. Hence two Cauchy sequences are equivalent if they are arbitrarily close for sufficiently large $N$. 

It is possible to define a scalar multiplication and addition on the set of Cauchy sequences. Let $(\psi_n)_n$ and $(\xi_n)_n$ be two Cauchy sequences. If $\lambda \in \mathbb{C}$, then $(\lambda \psi_n)_{n=1}^{\infty}$ is again a Cauchy sequence. Similarly, the sequence $(\psi_n + \xi_n)_n$ is Cauchy (by the triangle inequality). These operations are well-defined with respect to the equivalence relation defined above, so that $\lambda (\psi_n)_n \sim \lambda (\xi_n)_n$ if $(\psi_n)_n \sim (\xi_n)_n$ (and similarly for addition). This means that we can take the vector space $V$ of Cauchy sequences, and divide out by the equivalence relation. We will denote the corresponding vector space by $\mc{H}$.

There is a different way to think about $\mc{H}$. First of all, note that $(\psi_n)_n \sim (\xi_n)_n$ if and only if $(\psi_n-\xi_n)_n \simeq (0)_n$, where $(0)_n$ is the sequence of all zeros. The set of sequences that are equivalent to the zero sequence forms a vector space $V_0$. Then $\mc{H}$ is equal to $V/V_0$, the quotient of $V$ by the vector space of ``null sequences''. Hence an element of $\mc{H}$ is an equivalence class $[(\psi_n)_n]$, and two representatives of the same class are related by $(\psi_n)_n = (\xi_n)_n + (\eta_n)_n$, where $(\eta_n)_n \in V_0$. Sometimes one writes $(\psi_n)_n + V_0$ for this equivalence class.

At this point $\mc{H}$ is a vector space over $\mathbb{C}$. The goal is to define an inner product on $\mc{H}$ that is compatible, in a sense to be made precise later, with the inner product of $H$. To this end, define
\[
\langle [(\psi_n)_n], [(\xi_n)_n] \rangle_{\mc{H}} := \lim_{n \to \infty} \langle \psi_n, \xi_n \rangle_{H}.
\]
There are a few things to check. First of all, the right hand side should converge for this definition to even make sense. Secondly, it should be well defined in the sense that it does not depend on the representative of the equivalence class. The first property follows with the Cauchy-Schwartz inequality: because $\psi_n$ and $\xi_n$ are Cauchy sequences, $\langle \psi_n, \xi_n\rangle$ will be a Cauchy sequence in $\mathbb{C}$, and hence converge by completeness of $\mathbb{C}$. As for the second part, consider for example $(\xi_n)_n \sim (\eta_n)_n$. Then
\[
|\langle \psi_n, \xi_n \rangle - \langle \psi_n, \eta_n\rangle| = |\langle \psi_n, (\xi_n-\eta_n)\rangle| \leq \| \psi_n \| \| \xi_n -\eta_n \|.
\]
Since $(\psi_n)_n$ is a Cauchy sequence, the sequence $\| \psi_n \|$ is uniformly bounded. The terms $\| \xi_n - \eta_n\|$ go to zero as $n$ goes to infinity (by assumption). Hence we conclude that $\langle [(\psi_n)_n], [(\xi_n)_n] \rangle_{\mc{H}} = \langle [(\psi_n)_n], [(\eta_n)_n] \rangle_{\mc{H}}$.

The original space $H$ can be embedded into $\mc{H}$ in the following way. Let $\psi \in \mc{H}$. Then $(\psi_n)_n$, where $\psi_n = \psi$ for each $n$, clearly is a Cauchy sequence. Then the embedding is given by a map $\iota: H \to \mc{H}$, with $\iota(\psi) = [(\psi_n)_n]$, the equivalence class of the constant sequence with values $\psi$. Note that $\langle \eta, \psi \rangle_H = \langle \iota(\eta), \iota(\psi) \rangle_{\mc{H}}$. In particular, the map $\iota$ is an isometry.

\begin{exercise}
Verify the details, for example show that $\iota$ is linear and that $\mc{H}$ indeed is a complete metric space.
\end{exercise}

The results in this section, together with the exercise above, can be summarised as follows.
\begin{theorem}
	Let $H$ be an inner product space. Then there is a Hilbert space $\mc{H}$ and a linear embedding $\iota: H \hookrightarrow \mc{H}$ such that $\iota(H)$ is dense in $\mc{H}$ and $\langle \psi, \xi \rangle_H = \langle \iota(\psi), \iota(\xi) \rangle_{\mc{H}}$. 
\end{theorem}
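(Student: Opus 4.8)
The plan is to assemble the pieces already built in the preceding discussion. Take $\mc{H}$ to be the quotient of the space $V$ of Cauchy sequences in $H$ by the subspace $V_0$ of null sequences, equipped with $\langle [(\psi_n)_n], [(\xi_n)_n] \rangle_{\mc{H}} := \lim_{n} \langle \psi_n, \xi_n \rangle_H$, and let $\iota(\psi)$ be the class of the constant sequence with value $\psi$. The text above already verifies that this limit exists, that the pairing is independent of representatives, that $\iota$ is linear, and that $\langle \psi, \xi \rangle_H = \langle \iota(\psi), \iota(\xi) \rangle_{\mc{H}}$. So what genuinely remains is: (a) that $\langle \cdot, \cdot\rangle_{\mc{H}}$ is positive definite, hence a true inner product; (b) that $\iota(H)$ is dense in $\mc{H}$; and (c) that $\mc{H}$ is complete.

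For (a), if $\langle [(\psi_n)_n], [(\psi_n)_n] \rangle_{\mc{H}} = \lim_n \|\psi_n\|^2 = 0$ then $\|\psi_n\| \to 0$, so $(\psi_n)_n \sim (0)_n$ and the class is zero; conjugate symmetry and linearity in the second variable pass to the limit, so $\mc{H}$ is an inner product space. For (b), given a class $x = [(\psi_n)_n]$ and $\varepsilon > 0$, use that $(\psi_n)_n$ is Cauchy to choose $N$ with $\|\psi_n - \psi_m\| < \varepsilon$ for all $n,m \geq N$; then $\|x - \iota(\psi_N)\|_{\mc{H}}^2 = \lim_n \|\psi_n - \psi_N\|^2 \leq \varepsilon^2$, so elements of $\iota(H)$ approximate $x$ arbitrarily well.

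Step (c) is where the real work lies, and it is the step I expect to be the main obstacle: one should not try to ``diagonalise'' a Cauchy sequence of equivalence classes directly, but instead route everything through the dense subspace $\iota(H)$. Let $(x^{(k)})_k$ be Cauchy in $\mc{H}$. By (b), pick $\eta_k \in H$ with $\|x^{(k)} - \iota(\eta_k)\|_{\mc{H}} < 1/k$. Since $\iota$ is isometric and $(\iota(\eta_k))_k$ is Cauchy (being close to the Cauchy sequence $x^{(k)}$), the sequence $(\eta_k)_k$ is Cauchy in $H$ and thus defines a class $x := [(\eta_k)_k] \in \mc{H}$. To see $x^{(k)} \to x$, estimate $\|x^{(k)} - x\|_{\mc{H}} \leq \|x^{(k)} - \iota(\eta_k)\|_{\mc{H}} + \|\iota(\eta_k) - x\|_{\mc{H}}$, where the first term is $< 1/k$ and the second equals $\lim_{m} \|\eta_k - \eta_m\|_H$, which is small for large $k$ by the Cauchy property of $(\eta_k)_k$. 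Hence $\mc{H}$ is complete.

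Finally I would note that $\iota$ is injective because it is isometric, so it is indeed an embedding, and collect (a)--(c) together with the already-established facts into the statement. I would not address uniqueness of the completion up to isomorphism, since the theorem as stated only asserts existence.
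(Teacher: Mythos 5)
Your proposal is correct and follows the paper's construction exactly: the paper outlines the quotient of Cauchy sequences, the inner product, and the embedding, and leaves linearity of $\iota$ and completeness of $\mc{H}$ to the reader as an exercise, which is precisely what you have filled in. In particular your completeness argument — approximating a Cauchy sequence of classes by elements of $\iota(H)$ and showing the resulting sequence in $H$ represents the limit — is the standard and intended route.
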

The completion is in fact unique in the following sense. Let $\mc{H}'$ be another completion, that is there is an isometric linear map $\iota' : H \hookrightarrow \mc{H}'$ whose image is dense in $\mc{H}'$. Then there is a unitary $U: \mc{H} \to \mc{H}'$.\index{completion!uniqueness}

Usually $H$ is identified with its image in $\mc{H}$, that is, we think of $H \subset \mc{H}$ and every element of $\mc{H}$ can be approximated arbitrarily well by an element in $H$. Later we will need the completion of algebras, with respect to some norm. These can be constructed in a similar manner.
\index{completion|)}

\subsection{Tensor products and direct sums}\label{sec:tensprod}
There are few basic constructions to obtain new Hilbert spaces from (pairs of) Hilbert spaces. The simplest construction is taking subspaces. Let $\mc{H}$ be a Hilbert space and let $\mc{K} \subset \mc{H}$ be a linear subspace. If $\mc{K}$ is \emph{closed} in norm (that is, the limit of any Cauchy sequence in $\mc{K}$ is also in $\mc{K}$), then $\mc{K}$ is a Hilbert space, by restricting the inner product of $\mc{H}$ to $\mc{K}$. Note that one can define a projection $P: \mc{H} \to \mc{H}$ onto the subspace $\mc{K}$. This projection is sometimes written as $[\mc{K}]$.

Next, suppose that we have two Hilbert spaces $\mc{H}_1$ and $\mc{H}_2$. The \idx{direct sum} $\mc{H}_1 \oplus \mc{H}_2$ consists of all tuples $(\xi, \eta)$ with $\xi \in \mc{H}_1$ and $\eta \in \mc{H}_2$. This is a Hilbert space, if we define the following addition rules and inner product:
\begin{align*}
	\lambda_1 (\xi_1, \eta_1) + \lambda_2 (\xi_2, \eta_2) &= (\lambda_1 \xi_1 + \lambda_2 \xi_2, \lambda_1 \eta_1 + \lambda_2 \eta_2), \\
	\langle (\xi_1, \eta_1), (\xi_2, \eta_2) \rangle_{\mc{H}_1 \oplus \mc{H}_2} &= \langle \xi_1, \xi_2 \rangle_{\mc{H}_1} + \langle \eta_1, \eta_2 \rangle_{\mc{H}_2},
\end{align*}
where the $\lambda_i$ are scalars. Note there is a natural isometry $V_1 : \mc{H}_1 \to \mc{H}_1 \oplus \mc{H}_2$ defined by $V_1 \xi = (\xi, 0)$, and similarly for $V_2$. If $L_i : \mc{H}_i \to \mc{K}_i$ for $i=1,2$ and $\mc{K}_i$ Hilbert spaces are linear maps, we can define a linear map $L_1 \oplus L_2: \mc{H}_1 \oplus \mc{H}_2 \to \mc{K}_1 \oplus \mc{K}_2$ by\index{linear map!direct sum}
\[
	(L_1 \oplus L_2)(\xi, \eta) = (L_1 \xi, L_2 \eta).
\]
If $L_1$ and $L_2$ are bounded, then so is $L_1 \oplus L_2$. Hence in that case it is continuous, by Proposition~\ref{prop:lincont}. One can show that $(L_1 \oplus L_2)^* = L_1^* \oplus L_2^*$. This construction readily generalises to finite direct sums. It is also possible to define \emph{infinite} direct sums, but this requires a bit more care to make sure all expressions converge.

\begin{exercise}\label{ex:infsum}
Define the direct sum of a countable collection of Hilbert spaces, and show that this indeed defines a Hilbert space.
\end{exercise}

Finally we discuss the tensor product of two Hilbert spaces.\index{tensor product!of Hilbert spaces} This has a direct physical interpretation: if two quantum systems are described by two Hilbert spaces, the tensor product describes the combination of these two systems as a whole. Its construction is a bit more involved than that for direct sums. One way to do it is as follows. First consider the vector space $V$ consisting of formal (finite) linear combinations of elements of the form $\xi \otimes \eta$ for some $\xi \in \mc{H}_1$ and $\eta \in \mc{H}_2$. That is, an element of $V$ is of the form $\sum_{i=1}^n \lambda_i \xi_i \otimes \eta_i$. Now this is a huge space, and we want to impose some relations. That is, we will identify elements of the following form:
\begin{equation}
	\label{eq:tensrel}
\begin{split}
		\lambda (\xi \otimes \eta) = (&\lambda \xi) \otimes \eta = \xi \otimes (\lambda \eta),\\
		(\xi_1 + \xi_2) \otimes \eta &= \xi_1 \otimes \eta + \xi_2 \otimes \eta,\\
		\xi \otimes (\eta_1 + \eta_2) &= \xi \otimes \eta_1 + \xi \otimes \eta_2.\\
\end{split}
\end{equation}
If we quotient $V$ by this relations we obtain a vector space $H$. The image of $\xi \otimes \eta$ in this quotient will again be written as $\xi \otimes \eta$. The vector space $H$ can be made into an pre-Hilbert space by setting
\[
	\langle \xi_1 \otimes \eta_1, \xi_2 \otimes \eta_2\rangle_H = \langle \xi_1, \eta_1 \rangle_{\mc{H}_1}\langle \xi_2, \eta_2 \rangle_{\mc{H}_2}
\]
and extending by linearity. This induces a norm on $H$, but in general $H$ is \emph{not} a Hilbert space, since it will not be complete with respect to this norm. However, by the results of the previous section we can complete $H$ to obtain a Hilbert space $\mc{H} = \mc{H}_1 \otimes \mc{H}_2$. Again, finite tensor products are defined analogously. We will discuss infinite tensor products in the next chapter.

Suppose again that $L_i : \mc{H}_i \to \mc{K}_i$ are bounded linear maps.\index{tensor product!of bounded linear maps} We can define a map $L_1 \otimes L_2 : H \to \mc{K}_1 \otimes \mc{K}_2$ by
\[
	(L_1 \otimes L_2)(\xi \otimes \eta) = L_1 \xi \otimes L_2 \eta.
\]
This is well-defined since the $L_i$ are linear, and hence this definition is compatible with the relations in~\eqref{eq:tensrel}. Since $L_1$ and $L_2$ are bounded, $L_1 \otimes L_2$ is bounded \emph{on the subspace} $H$. To define $L_1 \otimes L_2$ on all of $\mc{H}_1 \otimes \mc{H}_2$ we use that $H$ is dense in the tensor product, together with Proposition~\ref{prop:lincont}. Since $H$ is dense in $\mc{H} := \mc{H}_1 \otimes \mc{H}_2$, every $\xi \in \mc{H}$ is the norm limit of a sequence $\xi_n$ of elements in $H$. We then define
\[
(L_1 \otimes L_2)\xi := \lim_{n \to \infty} (L_1 \otimes L_2)\xi_n.
\]
Since $L_1 \otimes L_2$ is bounded, the image of a Cauchy sequence is again Cauchy. It follows that $(L_1 \otimes L_2)\xi_n$ converges in the Hilbert space $\mc{K}_1 \otimes \mc{K}_2$. This defines $L_1 \otimes L_2$ on all of $\mc{H}$. One can check that the definition of $L_1 \otimes L_2$ does not depend on the choice of sequence, and that it is indeed a linear map. This procedure is sometimes called \emph{extension by continuity}, where we extend a bounded linear map defined on a dense subset of a complete linear space, and taking values in a complete linear space, to the whole space. 

For adjoints of tensor products of linear maps we have $(L_1 \otimes L_2)^* = L_1^* \otimes L_2^*$.

\begin{exercise}
	Verify the details of the \emph{extension by continuity} construction. Also show that indeed the adjoint indeed satisfies $(L_1 \otimes L_2)^* = L_1^* \otimes L_2^*$.
\end{exercise}

\section{Banach and $C^*$-algebras}
The set $\alg{B}(\mc{H})$ of bounded operators on a Hilbert space has more structure than we have discussed above. For example, it is an \idx{algebra}. Recall that an algebra $\alg{A}$ is a vector space\footnote{We will deal almost exclusively with vector spaces over $\mathbb{C}$.} on which a multiplication operation is defined. This multiplication should be compatible with addition, in the sense that $A(B+C) = AB + AC$ for all $A,B,C \in \alg{A}$, and similarly for $(B+C)A$. Moreover, it should be compatible with multiplication by scalars in the obvious sense. We do not require our algebras to have a unit, although in most of our applications this will be the case. In this section we will try to abstract the properties of $\alg{B}(\mc{H})$ to obtain the notion of a $C^*$-algebra. 

A \emph{$*$-algebra}\index{_staralgebra@$*$-algebra} is an algebra on which an (anti-linear) involution $*$\index{involution} is defined. That is, there is a map $*: \alg{A} \to \alg{A}$ with the following properties:
\begin{enumerate}
	\item $(A^*)^* = A$ for all $A \in \alg{A}$,
	\item $(AB)^* = B^* A^*$ for $A,B \in \alg{A}$,
	\item $(\lambda A + B)^*  = \overline{\lambda} A^* + B^*$ for $A,B \in \alg{A}$ and $\lambda \in \mathbb{C}$.
\end{enumerate}
The first property says that the $*$-operation is involutive, and the last property describes anti-linearity. A familiar example of a $*$-algebra are the $n \times n$ matrices where the $*$-operation is given by taking the adjoint of the matrix. 

So far the discussion has been purely algebraic. In our prototypical example of $\alg{B}(\mc{H})$ topology enters in a natural way. As we will see later, it is for example possible to define a norm on $\alg{B}(\mc{H})$. This makes it an example of a Banach algebra:
\begin{definition}
	A \idx{Banach algebra} $\alg{A}$ is an algebra which is complete with respect to a norm $\|\cdot\|$. Moreover, the norm should satisfy $\|A B\| \leq \|A\| \|B\|$. A \emph{Banach $*$-algebra} is a Banach algebra which is also a $*$-algebra and for which $\|A\| = \|A^*\|$ for all $A \in \alg{A}$.
\end{definition}
As before, completeness means that any Cauchy sequence $A_n$ in $\alg{A}$ converges to some $A \in \alg{A}$. The inequality on the norm of a product $AB$ guarantees that the multiplication in the algebra is continuous with respect to the metric topology, and since the $*$-operation is isometric, it is continuous as well. Finally, if $\alg{A}$ has a unit $I$, then $\| I \| \geq 1$. A unit is an element $I$ such that $IA = AI = A$ for all $A \in \alg{A}$. A unit is necessarily unique (if it exists).

\begin{exercise}
Prove these claims.
\end{exercise}

A Banach algebra already has a rich structure. However, it turns out that if we demand one extra condition on the norm, we can do much more. This extra condition is what defines a $C^*$-algebra. Such algebras will be used to model the observables in quantum mechanical systems.
\begin{definition}
A \emph{$C^*$-algebra}\index{C*algebra@$C^*$-algebra} $\alg{A}$ is a Banach $*$-algebra such that the norm satisfies the $C^*$-property: $\|A^*A\| = \|A\|^2$ for all $A \in \alg{A}$. 
\end{definition}
Note that the condition that $\|A^*\| = \|A\|$ for a Banach algebra follows from the $C^*$-property and submultiplicativity ($\|AB\| \leq \|A\|\|B\|$) of the norm. The converse however is not true, so not any Banach $*$-algebra is a $C^*$-algebra. If a $C^*$-algebra has a unit $I$, it follows automatically that $\| I \| = 1$.\index{C*algebra@$C^*$-algebra!unital}

Most (if not all) of the $C^*$-algebras in these notes are unital. In cases where it makes a difference, we will only give proofs in the unital case. Nevertheless, let us briefly mention two useful techniques in dealing with non-unital algebras. First of all, let $\alg{A}$ be a non-unital $C^*$-algebra. Then $\alg{A}$ can always be embedded into a unital algebra $\widetilde{\alg{A}}$, defined as
\[
	\widetilde{\alg{A}} := \{ (A, \lambda) : A \in \alg{A}, \quad \lambda \in \mathbb{C} \}.
\]
Addition and the $*$-operation act componentwise, multiplication is defined by $(A, \lambda)(B, \mu) = (\mu A + \lambda B + AB, \lambda\mu)$. A norm is defined by
\[
\|(A, \lambda)\| = \sup_{B \in \alg{A}, \|B\| = 1} \| A + \lambda B\|. 
\]
It can be shown that $\widetilde{\alg{A}}$ is a unital $C^*$-algebra with this norm. The unit is the element $(0,1)$. Note that $\alg{A}$ clearly is a subalgebra of $\widetilde{\alg{A}}$. This construction is called \idx{adjoining a unit}. The other technique is that of an \idx{approximate identity}. This is a net $E_\lambda$ of positive elements (i.e., elements of the form $A^*A$ for some $A$, see also Section~\ref{sec:positive}), such that $\| E_\lambda \| \leq 1$ and $\lim_\lambda E_\lambda A = \lim_\lambda A E_\lambda = A$ for all $A \in \alg{A}$. An approximate identity always exists in a $C^*$-algebra.

\subsection{Example: the bounded operators on a Hilbert space}\index{C*algebra@$C^*$-algebra!B(H)@$\alg{B}(\mc{H})$}
Let $\mc{H}$ be a Hilbert space. An \idx{operator} $A$ on $\mc{H}$ is a linear map $A: \mc{H} \to \mc{H}$ (or sometimes only defined on a linear subspace of $\mc{H}$). It is called \emph{bounded}\index{operator!bounded} if there is some constant $C > 0$ such that $\| A \psi \| \leq C \| \psi \|$ for all $\psi \in \mc{H}$. If this is not the case, then the operator is called \emph{unbounded}. Note that the bounded operators on a Hilbert space are a special case of the linear maps we studied earlier (c.f. Proposition~\ref{prop:lincont}). We will show that $\alg{B}(\mc{H})$ is a $C^*$-algebra, where the $*$-operation is the adjoint of linear maps on Hilbert spaces defined earlier.

First, it is clear that $\alg{B}(\mc{H})$ is an algebra, since we can add (or multiply with a scalar) bounded linear maps to obtain new bounded maps. The product operation is the composition of linear maps. Moreover, the adjoint satisfies all the conditions of a $*$-operation by Exercise~\ref{ex:adjoint}. It remains to define a suitable norm and show that it has the right properties. The norm can be defined as follows:
\begin{equation}
	\| A \| := \sup_{\psi \in \mc{H}, \psi \neq 0} \frac{\|A \psi \|}{\|\psi\|} =  \sup_{\psi \in \mc{H}, \| \psi \| = 1} \| A \psi \|.
	\label{eq:opnorm}
\end{equation}
The last equality immediately follows from linearity of $A$. Note that we use the same notation for the norm on $\mathcal{H}$ and the norm on the algebra $\alg{B}(\mc{H})$. That equation~\eqref{eq:opnorm} is indeed a norm follows directly because $\| \cdot \|$ is a norm on the Hilbert space $\mc{H}$. Submultiplicativity follows because for any $\psi \in \mc{H}$, 
\[
\|AB\psi\| = \|A(B\psi)\| \leq \|A\| \|B\psi\| \leq \|A\| \|B\| \|\psi\|.
\]
To show that $\alg{B}(\mc{H})$ is complete with respect to this norm, let $A_n \in \alg{B}(\mc{H})$ be a Cauchy sequence of operators. Let $\psi \in \mc{H}$. Then $n \mapsto A_n \psi$ is a Cauchy sequence of vectors in the Hilbert space, hence this converges to a vector which we will denote by $A\psi$. That is, for every $\psi \in \mc{H}$ we define
\[
	A \psi := \lim_{n \to \infty} A_n \psi.
\]
Since each $A_n$ is linear, this defines a linear map $A$. The map $A$ is bounded and $\|A-A_n\| \to 0$ if $n \to \infty$. 
\begin{exercise}
Verify these claims.
\end{exercise}

It remains to show that the $C^*$-identity for the norm holds. Note that for $A \in \alg{B}(\mc{H})$ we have
\[
	\| A \psi \|^2 = \langle A \psi, A \psi \rangle = \langle \psi, A^*A \psi \rangle \leq \| A^*A \| \|\psi\|^2
\]
for all $\psi \in \mc{H}$. Hence $\|A\|^2 \leq \| A^*A \| \leq \|A\| \|A^*\|$ and $\|A\| \leq \|A^*\|$. Reversing the roles of $A$ and $A^*$ shows that $\|A\| = \|A^*\|$, from which the claim follows.

\begin{remark}
The completeness proof works for any Banach space. For the adjoint, however, one needs the Hilbert space structure to define it in the first place.
\end{remark}

\subsection{Example: commutative $C^*$-algebras}\index{C*algebra@$C^*$-algebra!commutative}
\label{subsec:commalg}
Let $X$ be a locally compact topological space. Write $C_0(X)$ for the space of all continuous functions $f: X \to \mathbb{C}$ that vanish at infinity. That is, $f \in C_0(X)$ if and only if for each $\varepsilon > 0$, there is a \emph{compact} set $K_\varepsilon \subset X$ such that $|f(x)| < \varepsilon$ for all $x \in X \setminus K_\varepsilon$. To get some feeling for this condition, consider the case $X = \mathbb{R}$. Then $f \in C_0(\mathbb{R})$ if and only if $f$ is continuous and $|f(x)| \to 0$ if $x \to \pm \infty$. Examples are
\[
f(x) = \frac{1}{1+x^2},\,\, \textrm{ and } f(x) = \exp(-x^2).
\]
Note that $C_0(X)$ is an algebra if we define multiplication and addition pointwise:
\[
	(f+g)(x) = f(x) + g(x), \quad (f\cdot g)(x) = f(x) g(x).
\]
An involution can be defined by setting $f^*(x) = \overline{f(x)}$. Hence $C_0(X)$ is a $*$-algebra. We can define a norm as follows:
\[
	\| f \| := \sup_{x \in X} |f(x)|.
\]
Because continuous functions on compact sets are bounded, and $f$ is small outside some compact set, it follows that $\| f \|$ is finite if $f \in C_0(X)$. Note that the $C^*$-property, $\| f^*f \| = \|f \|^2$ follows immediately from the definition. A standard result in topology says that if $\|f_n - f\| \to 0$ for some function $f: X \to \mathbb{C}$ and $f_n$ a sequence in $C_0(X)$, then $f$ is also continuous (and it follows that $f \in C_0(X)$). Hence $C_0(X)$ is a $C^*$-algebra.

Note that $C_0(X)$ is a commutative algebra. The unit would be the function $f(x) = 1$ for all $x$. But this is only in $C_0(X)$ if $X$ is compact, because otherwise it does not vanish at infinity. The commutative Gel'fand-Naimark theorem in $C^*$-algebras says that in fact any commutative $C^*$-algebra is of the form $C_0(X)$ for some locally compact space $X$, and $X$ is compact if and only if the $C^*$-algebra is unital.

\section{Spectrum and positive operators}\label{sec:positive}
Many problems in physics (and countless other fields) can be reduced to solving an eigenvalue problem. For example, the energy levels of a quantum system are given by the spectrum of the Hamiltonian. It is therefore no surprise that spectral theory is an important part of functional analysis and one of the most important tools in the toolbox of the mathematical physicist. 

Recall that $H$ if is a self-adjoint matrix acting on a finite dimensional Hilbert space, there is a complete set of eigenvalues. This set is called the \idx{spectrum} of $H$. In general, for any matrix $A$, we will call the set of eigenvalues of this matrix the spectrum of $A$. This is the notion that we want to generalise to $C^*$-algebras. We immediately run into problems: it might be that the operators are not given as linear maps on a vector space, so that the notion of an eigenvalue does not apply. Even if $A$ is an operator on an infinite-dimensional vector space, it may be the case that $A$ has no eigenvalues at all. This is undesirable, so that the notion of the spectrum has to be extended.

To this end, consider again a matrix $A \in M_n(\mathbb{C})$. If $\lambda$ is an eigenvalue of $A$, by definition there is a non-zero vector $\psi$ such that $(A-\lambda I)\psi = 0$. Hence the kernel of $A$ is non-trivial, and by the rank-nullity theorem it follows that $A-\lambda I$ is not surjective, and hence is not invertible. This motivates the following definition.
\begin{definition}
	Let $\alg{A}$ be a unital $C^*$-algebra and $A \in \alg{A}$. Then the \idx{spectrum}\index{_sigma(A)@$\sigma(A)$} of $A$ is defined as
	\[
		\sigma(A) := \{ \lambda \in \mathbb{C} :  A - \lambda I \,\,\textrm{is not invertible}\}.
	\]
	The complement of $\sigma$ is called the \idx{resolvent set}.
\end{definition}
Note that if $A$ acts on a Hilbert space and $\lambda$ is an eigenvalue of $A$, it follows that $\lambda \in \sigma(A)$, so this definition indeed extends the notion of an eigenvalue.

The rest of this section will be related to study of the spectrum of various classes of operators. Although these methods are very powerful, in most of the book they will only play a minor role. The reader therefore can safely skip the following subsections, which are of a more technical nature, and refer back when necessary.

\subsection{Functional calculus}\label{sec:funccalc}
It is often convenient to consider functions on the spectrum of an operator to obtain new operators. For simplicity we consider only the case of polynomials in detail. Suppose that $p(z) = \sum_{k=0}^n c_n z^k$ is a polynomial on $\mathbb{C}$ and let $A \in \alg{A}$ be an operator in a unital $C^*$-algebra. Then we can define a new operator
\[
	p(A) = \sum_{k=0}^n c_k A^k.
\]
What can we say about the spectrum of $p(A)$? Let $\lambda \in \mathbb{C}$. Then
\[
	p(A) - p(\lambda) I = \sum_{k=1}^n c_k (A^k - \lambda^k I).
\]
Recall that $x^k - y^k = (x-y) \sum_{m=0}^{k-1} x^m y^{k-m-1}$ if $k \geq 1$. Hence
\[
	p(A) - p(\lambda) I = (A-\lambda I) \left[ \sum_{k=1}^n \sum_{m=0}^{k-1} c_k A^k \lambda^{k-m-1} \right].
\]
Write $B$ for the operator in square brackets. Suppose that $p(\lambda) \notin \sigma(p(A))$. Then with the calculation above it follows that $B (p(A)-p(\lambda)I)^{-1}$ is the inverse of $A -\lambda I$. Hence $\lambda \notin \sigma(A)$. It follows that $p(\sigma(A)) \subset \sigma(p(A))$.

The reverse inclusion is also true. Let $\mu \in \sigma(p(A))$. By the fundamental theorem of algebra there are $\lambda_i$ such that
\[
	p(z) - \mu = (z-\lambda_1)\cdots(z-\lambda_n).
\]
Since $\mu$ is in the spectrum of $p(A)$, there must be a $\lambda_i$ such that $A - \lambda_i I$ is not invertible, or in other words, $\lambda_i \in \sigma(A)$. Setting $z = \lambda_i$ it follows that $p(\lambda_i) = \mu$, and it follows that $\sigma(p(A)) \subset p(\sigma(A))$.

The procedure above is an example of what is called \idx{functional calculus}. It can be extended to larger classes of functions (not just polynomials) on the spectrum. To study this it is useful to first better understand what the spectrum of bounded operators is like. To this end, define the \idx{spectral radius} as
\begin{equation}
	r(A) := \sup_{\lambda \in \sigma(A)} |\lambda|.
\end{equation}
For bounded operators the spectral radius is always finite:

\begin{proposition}\label{prop:specradius}
	Let $\alg{A}$ be a $C^*$-algebra and $A \in \alg{A}$. Then $r(A) \leq \|A\|$ and $\sigma(A)$ is compact. If $A = A^*$, this subset is contained in the real line.
\end{proposition}
\begin{proof}
Suppose that $|\lambda| > \|A\|$. Then the series $\sum_{n=0}^\infty \lambda^{-1} A^n$ converges to some element in $\alg{A}$, since $\| \lambda^{-1} A \| < 1$. Call this element $B$. Then clearly
\[
	\lambda^{-1} AB = \sum_{n=1}^\infty \lambda^{-1} A^n = B-I = \lambda^{-1} BA.
\]
Rearranging terms gives
\[
	\lambda^{-1} (\lambda I - A) B = \lambda^{-1} B (\lambda I - A) = I.
\]
Hence $(A-\lambda I)$ is invertible and $\lambda \notin \sigma(A)$, and it follows that $r(A) \leq \|A\|$.

From the first part of the proof it follows that the spectrum is bounded, so to show compactness it is enough to show that it is closed. Since by definition it is the complement of the resolvent set, it is enough to show that this set is open. Suppose that $\lambda \notin \sigma(A)$. Then $(A-\lambda I)$ is (by definition) invertible, hence $\varepsilon := \|(A-\lambda I)^{-1} \|^{-1} > 0$ (and finite). Suppose that $|\mu| < \varepsilon$. Then by a similar argument as above it follows that $I - \mu I (A-\lambda I)^{-1}$ is invertible. Multiplying with $(A-\lambda I)$ shows that $A - (\lambda + \mu) I$ is invertible. Hence $\lambda + \mu \notin \sigma(A)$ and it follows that the resolvent set of $A$ is open.

Finally, suppose that $A = A^*$ and that $\lambda = \lambda_1 + i \lambda_2 \in \sigma(A)$, with $\lambda_i \in \mathbb{R}$. Define $A_n = A - \lambda_1 I + i n \lambda_2 I$. From the spectral calculus above it follows that
\[
	\sigma(A_n) = \{ \lambda - \lambda_1 + i \lambda_2 : \lambda \in \sigma(A) \}.
\]
In particular it follows that $i(n+1) \lambda_2 \in \sigma(A_n)$. Since the spectral radius is bounded by $\|A_n\|$, with the $C^*$-property of the norm we have the estimate
\[
		|(n+1) \lambda_2|^2 \leq \| A_n^* A_n \| = \| (A-\lambda_1)^2 + n^2 \lambda_2^2 \| \leq \|A-\lambda_1\|^2 + n^2 \lambda_2^2.
\]
Since this must hold for any positive integer $n$, it follows that $\lambda_2 = 0$ and hence $\sigma(A) \subset \mathbb{R}$.
\end{proof}
Again this is a generalization of what is true for matrices in $M_n(\mathbb{C})$. For example, it is a basic fact in linear algebra that a self-adjoint (that is, hermitian) matrix $A$ has real eigenvalues. If we diagonalise $A$ it is also not that hard to see that the operator norm $\|A\|$, defined in equation~\eqref{eq:opnorm}, is the absolute value of the largest eigenvalue. That is, it is equal to the spectral radius.

\begin{remark}\label{rem:spectralradius}
One can prove a stronger result, namely that 
\begin{equation}
	\label{eq:specradformula}
	r(A) = \lim_{n \to \infty} \|A^n\|^{1/n}.
\end{equation}
In case $A$ is normal (i.e., $AA^* = A^*A$) this implies that $r(A) = \|A\|$. To see this, note that $\|A^2\|^2 = \| (A^*)^2 A^2\| = \|(A^*A)^2\| = \|A^*A\|^2 = (\|A\|^2)^2$, where we made use of the $C^*$-property of the norm. By induction it follows that $\|A^{2^n}\|^2 = \|A\|^{2^{n+1}}$. With the above limit expression it follows that $r(A) = \|A\|$. 
\end{remark}

\begin{exercise}\label{ex:specab}
	Let $A,B \in \alg{A}$ be operators in some $C^*$-algebra $\alg{A}$. Show that $\sigma(AB) \subset \sigma(BA) \cup \{0\}$. \emph{Hint:} Consider $\lambda \notin \sigma(AB)$ with $\lambda \neq 0$. It is sufficient to show that $(BA-\lambda I)$ is invertible. To this end, consider the operator $\lambda^{-1}(B(AB-\lambda I)^{-1}A-I)$.
\end{exercise}

The above results give us the tools to consider more general functions than just polynomials. For example, in the holomorphic functional calculus we consider a function $f$ that is analytic with radius of convergence at least $\|A\|$. Then we have a power expansion $f(z) = \sum_{n=0}^\infty c_n z^n$ and we can define
\[
	f(A) := \sum_{n=0}^\infty c_n A^n.
\]
This expression is well-defined: $\| f(A) \| \leq \sum_{n=0}^\infty |c_n| \|A\|^n$, and since $\|A\|$ is assumed to be smaller than the radius of convergence of $f$, the right hand side is finite and the series for $f(A)$ converges. Since $\alg{A}$ is a Banach space it follows that $f(A) \in \alg{A}$. One can again prove that $f(\sigma(A)) = \sigma(f(A))$.

Even more general one can consider just continuous functions on the spectrum that do not necessarily have a power series expansion. In the case that $A$ is self-adjoint, there is an elegant away to state all the properties of the continuous functional calculus.\index{functional calculus} Since the spectrum of $A \in \alg{A}$ is compact, the set of continuous functions on $\sigma(A)$, $C(\sigma(A))$, is a commutative $C^*$-algebra by the example on page~\pageref{subsec:commalg}. The functional calculus can then be understood as a $*$-homomorphism between these two algebras. We will discuss homomorphisms in more detail later, but the point is that it is a linear map that preserves the algebraic structure of the $C^*$-algebra.
\begin{theorem}
	\label{thm:funccalc}
	Let $A = A^*$ be an operator in a $C^*$-algebra $\alg{A}$. Then there is an isometric $*$-homomorphism $\Phi: C(\sigma(A)) \to \alg{A}$ such that $\Phi(f) = f(A)$. In particular, $f(A)^* = \overline{f}(A)$ and $f(A)g(A) = (fg)(A)$.
\end{theorem}
\begin{proof}[sketch]
	One way to prove this result is to first verify the identities for polynomials, as outlined above. Since $A$ is self-adjoint, its spectrum is contained in the real line. In fact, it is contained in a compact interval since the spectral radius is bounded by the norm of $\alg{A}$. But the Weierstrass approximation theorem says that each continuous function on a compact interval can be approximated uniformly by a sequence of polynomials. Hence we can define $\Phi$ on the dense subset of polynomials, and extend the results to all of $C(\sigma(A))$ by continuity if we can show that $\Phi$ is bounded on the polynomials. This follows from Remark~\ref{rem:spectralradius}: let $p$ be a polynomial, and note that  $\| \Phi(p) \| = r(\Phi(p)) = \sup_{\lambda \in \sigma(A)} |p(\lambda)|$, where we used that for polynomials $p(\sigma(A)) = \sigma(p(A))$. The right hand side is the sup-norm on $C(\sigma(A))$, so that $\Phi$ is indeed isometric on the polynomials.
\end{proof}

\subsection{Positive operators}
In physics or quantum mechanics in particular, it is often the case that observables of interest can only take positive values. For example, it is common to normalise the Hamiltonian of a system such that the ground state has zero energy, and all other states have positive energy. In general, consider any operator $A \in \alg{B}(\mathcal{H})$ acting on some Hilbert space. Then for each $\psi \in \mathcal{H}$, we have
\[
	\langle \psi, A^*A \psi \rangle = \langle A \psi, A \psi \rangle = \|A \psi\|^2 \geq 0.
\]
We say that $A^*A$ is a \idx{positive operator}. This can be straightforwardly generalised to $C^*$-algebras: we say that $A \in \alg{A}$ is a positive operator if there is some $B \in \alg{A}$ such that $A = B^*B$. It turns out that also in this case, the adjective ``positive'' is appropriate when we look at the spectrum of such operators. Before we discuss this result, we need two technical lemmas.

\begin{lemma}\label{lem:posnorm}
	If $A,B \in \alg{A}$ are self-adjoint and both their spectra are contained in $[0,\infty)$, then $\sigma(A+B) \subset [0,\infty)$.
\end{lemma}
\begin{proof}
	Let $S$ be a self-adjoint element of $\alg{A}$ and let $\lambda \geq \|A\|$. Then we claim that $\|S- \lambda I\| \leq \lambda$ if and only if $\sigma(S) \subset [0, \infty)$. By Proposition~\ref{prop:specradius}, $\sigma(S) \subset [-\lambda, \lambda]$. Since $S - \lambda I$ is self-adjoint, $\|S-\lambda I\| = r(S-\lambda I)$ by the remark following that proposition. Hence
\[
	\| S- \lambda I\| = \sup_{\mu \in \sigma(S)} |\mu - \lambda| = \sup_{\mu \in \sigma(S)} (\lambda-\mu),
\]
where the first equality follows from the spectral calculus and the second because $\sigma(S) \subset [-\lambda, \lambda]$. The term on the right is smaller than or equal to $\lambda$ if and only if $\sigma(S) \subset [0,\infty)$.

Applying this to $A$ and $B$, it follows that $\|A- \|A\| I\| \leq \|A\|$ and similar for $B$. But then
\[
	\|A+B-(\|A\|+\|B\|) I\| \leq \|A \| + \|B\|.
\]
Because $\|A + B\| \leq \|A\| + \|B\|$, the result follows from the first paragraph.
\end{proof}

\begin{lemma}\label{lem:negspec}
	Let $A \in \alg{A}$ be an element of some $C^*$-algebra such that for the spectrum of $-A^*A$ we have $\sigma(-A^*A) \subset [0, \infty)$. It follows that $A = 0$.
\end{lemma}
\begin{proof}
It is easy to see that $A = H + i K$ for some self-adjoint $H$ and $K$. Then we can calculate
\[
	A^*A = 2H^2 + 2 K^2 - A A^*.
\]
By Exercise~\ref{ex:specab}, $\sigma(-AA^*) \subset \sigma(-A^*A) \cup \{0\} \subset [0,\infty)$. Because $H$ and $K$ are self-adjoint, their spectrum is real and the spectrum of $H^2$ and $K^2$ is also contained in the positive real line. By the previous Lemma, it follows that $\sigma(A^*A) \subset [0,\infty)$. Since $\sigma(-A^*A) = \{ -\lambda : \lambda \in \sigma(A^*A) \}$, it follows from the premises that $\sigma(A^*A) = \{0\}$. By the spectral radius formula and the $C^*$-property of the norm, $0=r(A^*A) = \|A^*A\| = \|A\|^2$, so that $A = 0$.
\end{proof}

We are now in a position to prove that an operator is positive if and only if it is self-adjoint and its spectrum is contained in the positive real line. This characterisation is sometimes more convenient than the more algebraic condition $A = B^*B$. In addition, as we will see, the spectrum of self-adjoint operators has a clear physical meaning in measurements in quantum mechanics. 
\begin{theorem}
	\label{thm:positiveop}
	Let $\alg{A}$ be a $C^*$-algebra and $A \in \alg{A}$. Then the following are equivalent:
	\begin{enumerate}
		\item\label{it:posbb} $A$ is positive, i.e. there is $B \in \alg{A}$ with $A = B^*B$;
		\item\label{it:posspec} $\sigma(A) \subset [0, \infty)$ and $A$ is self-adjoint;
		\item\label{it:possq} $A = H^2$ for some $H \in \alg{A}$ with $H = H^*$ and $\sigma(H) \subset [0, \infty)$.
	\end{enumerate}
\end{theorem}
\begin{proof}
	Assume that $A$ is positive, then $A$ is clearly self-adjoint. We want to show that its spectrum is contained in the positive real line. To this end, define the functions $f_{\pm}(\lambda) = \pm \frac{1}{2}(\lambda \pm |\lambda|)$. Then we can define
\[
	A_+ = f_+(A), \quad A_{-} = f_{-}(A).
\]
By the functional calculus, Theorem~\ref{thm:funccalc}, it follows that $A = A_{+} - A_{-}$ and that $\sigma(A_{\pm}) \subset [0, \infty)$. In addition, $A_+A_{-} = A_{-}A_{+} = f_+(A) f_{-}(A) = (f_+ f_-)(A) = 0$, since $f_+ f_- = 0$. Note that by construction, $\sigma(A_+) \subset [0, \infty)$, so it is enough to show that $A_{-} = 0$. Define $C = BA_{-}$. Then a simple calculation shows that $C^*C = -(A_{-})^3$. By construction the spectrum of $A_{-}$ is contained in the positive real line, hence $\sigma(-A_{-}^3) \subset -[0,\infty)$. With the Lemma above it follows that $C = 0$, or $A_{-}^3 = 0$. Let $f(t) = t^{1/3}$. Then $f(A_{-}^3) = A_{-}$, which can be seen from Theorem~\ref{thm:funccalc}. Hence $A = A_{+}$ and we have shown that~\ref{it:posbb} $\Rightarrow$~\ref{it:posspec}.

If the spectrum is positive, we can define $H := \sqrt{A}$ as discussed above. Since the square root is a real function, $\sqrt{A^*} = (\sqrt{A})^*$ and hence $H$ is self-adjoint. Moreover, $\sigma(\sqrt{A}) = \sqrt{\sigma(A)}$ proving that \ref{it:posspec} implies~\ref{it:possq}. The implication~\ref{it:possq} $\Rightarrow$ \ref{it:posbb} is trivial.
\end{proof}

\begin{definition}
	Operators in a $C^*$-algebra $\alg{A}$ satisfying any of the equivalent conditions in Theorem~\ref{thm:positiveop} are called \emph{positive}\index{positive operator}\index{operator!positive}. The set of all positive operators in $\alg{A}$ is denoted $\alg{A}_+$\index{_aplus@$\alg{A}_+$}.
\end{definition}

Positive operators have some nice properties. For example, consider a positive operator $A$. Because the spectrum is contained in the positive real line, $f(t)= \sqrt{t}$ is continuous on the spectrum. Hence we can apply the spectral calculus and set $\sqrt{A} := f(A)$. Since $f(A) g(A) = (fg)(A)$ for $f,g \in C(\sigma(A))$, it follows that $(\sqrt{A})^2 = A$ as expected. In addition, $\sigma(\sqrt{A}) \subset [0, \infty)$. They can also be seen as the building blocks of a $C^*$-algebra, in the sense that each operator can be written as a linear combination of at most four positive operators. This can be seen by first writing $A = H + i K$, with both $H$ and $K$ self-adjoint, and then writing $H = H_{-} - H_{+}$ (and similarly for $K$) as in the proof of Theorem~\ref{thm:positiveop}.

	It is easy to see that $\lambda A \in \alg{A}_+$ if $\lambda \geq 0$ and $A \in \alg{A}_+$. From Lemma~\ref{lem:posnorm} and the Theorem above it follows that $A+B \in \alg{A}_+$ if both $A$ and $B$ are in $\alg{A}_+$. What is not immediately clear, but nonetheless true, is that $AB$ is positive if $A$ and $B$ commute and are both positive. One can also show (using the proof of Lemma~\ref{lem:posnorm}) that $\alg{A}_+$ is norm-closed in $\alg{A}$. From Lemma~\ref{lem:negspec} in addition it follows that $(-\alg{A}_+) \cap \alg{A}_+ = \{ 0 \}$. We say that $\alg{A}_+$ is a \emph{positive cone} in $\alg{A}$.

\subsection{Projection valued measures}
Suppose that  $A \in M_n(\mathbb{C})$. Then the spectrum $\sigma(A)$ is the set of eigenvalues of $A$. This is a special feature of the spectrum of linear operators of finite dimensional spaces: for linear operators on infinite spaces it is still true that eigenvalues are in the spectrum, but the converse is \emph{not} always true! Suppose that in addition to acting on a finite dimensional Hilbert space, $A$ is also self-adjoint. Then we know from linear algebra that $A$ can be diagonalised. This makes it easy to do the functional calculus for $A$. 
\begin{exercise}\label{ex:pvm}
	Let $A$ be as in the previous paragraph and let $f: \mathbb{R} \to \mathbb{C}$ be any function. Show that $f(A) = \sum_{\lambda \in \sigma(A)} f(\lambda) P_\lambda$, where $P_\lambda$ is the projection on the eigenspace with eigenvalue $\lambda$.
\end{exercise}
Since the spectrum of an operator on a finite dimensional vector space is discrete, it is not necessary to demand any continuity properties of $f$.

This cannot directly be generalised to arbitrary $C^*$-algebras. One of the reasons is that a $C^*$-algebra need not contain any (non-trivial) projections. However, this changes if we consider operators acting on a Hilbert space $\mathcal{H}$. In that case it is natural to try to find a natural decomposition of self-adjoint operators in terms of projections on closed subspaces of $\mathcal{H}$. This allows us to generalise the example above. To this end it is instructive to look at the previous example in a different light. For now assume again that $A$ is a self-adjoint matrix. We can consider the spectrum $\sigma(A)$ as a \emph{measurable space}, where each subset is measurable. To each $\Lambda \subset \sigma(A)$ we associate the operator
\[
	\mu(\Lambda) := \sum_{\lambda \in \Lambda} P_\lambda.
\]
Note that $\mu(\sigma(A)) = I$ and $\mu(\Lambda_1 \cup \Lambda_2) = \mu(\Lambda_1) + \mu(\Lambda_2)$ if $\Lambda_1 \cap \Lambda_2 = \emptyset$. In addition, each $\mu(\Lambda)$ is easily seen to be a projection. Hence we can interpret $\mu$ as a measure on $\sigma(A)$ taking values into the projections: it is an example of a \idx{projection valued measure}.

This is the statement that we want to generalise to (self-adjoint) bounded operators on a Hilbert space. A complete, rigorous treatment requires a good understanding of measure theory. The necessary background can be found in many functional analysis or measure theory textbooks. Here we will try to give the main ideas behind the construction and refer the reader to other textbooks for the technical details..

Suppose that $A$ is a self-adjoint bounded operator acting on a Hilbert space $\mathcal{H}$. Let $\psi \in \mathcal{H}$ be a unit vector. Then using the functional calculus we can define a map $\omega_\psi : C(\sigma(A)) \to \mathbb{C}$ via $f \mapsto \langle \psi, f(A) \psi \rangle$. Note that this is a linear map. Suppose that $f \geq 0$. Then $f(A)$ is a positive operator by Theorem~\ref{thm:positiveop}(\ref{it:posspec}) and the functional calculus. Hence there is $B \in \alg{B}(\mathcal{H})$ such that $f(A) = B^*B$. But then
\[
	\omega_\psi(f) = \langle \psi, f(A) \psi \rangle = \langle B \psi, B \psi \rangle \geq 0.
\]
We say that $\omega_\psi$ is a positive linear map of $C(\sigma(A))$. We will study positive linear maps in more detail later on, but for now we will only need the \emph{Riesz-Markov theorem}. This theorem says that if $\omega_\psi$ is such a positive linear functional of the (compactly supported) continuous functions on a locally compact Hausdorff space, there is a Borel measure\footnote{If $X$ is a topological space, the \emph{Borel measurable sets} $\mathcal{B}(X)$ is the smallest collection of subsets of $X$ such that $O \in \mathcal{B}(X)$ for all open sets $O$ and $\mathcal{B}(X)$ is closed under the complement operation and under taking countable unions. The elements of $\mathcal{B}(X)$ are called \emph{Borel sets}. A \emph{Borel measure} assigns a positive real number to each of the Borel sets in a way compatible with the structure of the Borel sets. For example, $\mu(X_1 \cup X_2) = \mu(X_1) + \mu(X_2)$ for two disjoint Borel sets $X_1$ and $X_2$. Intuitively speaking, $\mu(X)$ tells us how ``big'' the set $X$ is. Once one has a measure, it is possible to defined integration with respect to that measure.} $\mu_\psi$ such that 
\[
	\mu_\psi(f) := \int_{\sigma(A)} f(\lambda) d\mu_\psi(\lambda) = \omega_\psi(f).
\]
Hence for every (non-zero) vector in the Hilbert space, we obtain a measure on the spectrum of $A$. This measure depends on $A$ and on the choice of vector $\psi$.

In obtaining the measures $\mu_\psi$ we used the spectral calculus for \emph{continuous} functions. The idea is now to use standard measure theory to extend the spectral calculus to bounded Borel measurable functions, that is, bounded functions $f$ such that$f^{-1}(X)$ is a Borel set for every Borel set $X$. Then we can extend the integral $\int_{\sigma(A)} f(\lambda) d\mu_\psi(\lambda)$ to such functions $f$. Note that by construction, $\mu_\psi(1) = \|\psi\|^2$, where $1$ is the constant function equal to one. It follows that $\mu_\psi(f) < \infty$ if $f$ is bounded.

This allows us to define $f(A)$ for any bounded Borel measurable function $f$ on the spectrum $\sigma(A)$. Recall that with the polarisation identity, compare with equation~\eqref{eq:polarisation}, the inner product between two vectors on a Hilbert can be written as the sum of four inner products of the form $\langle \psi, \psi \rangle$. With this in mind, for $\psi, \eta \in \mathcal{H}$ and $f$ a bounded Borel function on the spectrum, we define
\[
	\langle \psi, f(A) \eta \rangle := \frac{1}{4} \sum_{k=0}^{3} i^k \mu_{\psi + i^k \eta}(f),
\]
where $\mu_{\psi + i^k \eta}(f)$ is as defined above. But this gives us the matrix elements of an operator $f(A)$, hence by the Riesz representation theorem, this defines an operator on $\alg{B}(\mathcal{H})$, which we suggestively denote by $f(A)$. One can show that the Borel functional calculus obeys similar properties as the continuous functional calculus. For example, $\overline{f}(A) = f(A)^*$ and $f(A)g(A) = (fg)(A)$.

The Borel functional calculus gives us the tools to define \emph{spectral projections}\index{spectral projection}. These are generalisations of the projections $P_\lambda$ on the eigenspaces of finite dimensional operators that we discussed before. To define them, choose a Borel subset $\Lambda \subset \sigma(A)$. Then the indicator function $\chi_\Lambda$ is Borel measurable. This means we can define $P_\Lambda := \chi_\Lambda(A)$. This is a projection, since the indicator function is real valued and squares to itself. We then define \idx{projection valued measure} by $\mu(\Lambda) = P_\Lambda$ for each Borel measurable subset $\Lambda$ of the spectrum. The following Proposition shows that its properties are very similar to that of real-valued measures, hence the name is a sensible one.

\begin{proposition}
Let $\mu$ be the projection valued measure associated to a self-adjoint operator $A$ acting on a Hilbert space. Then the following properties hold for all Borel subsets $\Lambda_i$:
\begin{enumerate}
		\item $\mu_\Lambda$ is an orthogonal projection;
		\item $\mu(\emptyset) = 0$ and there is some $\lambda \in \mathbb{R}$ such that $\mu( (-a,a) ) = I$;
		\item If $\Lambda = \bigcup_{i=1}^\infty \Lambda_i$ with $\Lambda_i \cap \Lambda_j = \emptyset$ for $i \neq j$, then $\operatorname{s-lim}_{n \to \infty} \left( \sum_{i=1}^n \mu(\Lambda_i) \right) = \mu(\Lambda)$.
\end{enumerate}
\end{proposition}
The second statement in the second property follows because the spectrum of $A$ is compact. Hence, if we make the interval big enough, the spectral projection projects onto the whole Hilbert space. In the third property, $\operatorname{s-lim}$ denotes the limit in the \idx{strong operator topology}. It means that for each $\psi \in \mathcal{H}$, $\sum_{i=1}^n \mu(\Lambda_i) \psi$ converges to $\mu(\Lambda) \psi$ as $n$ goes to infinity.\footnote{One might hope that the operators actually converge in norm, but in general this is too much to ask for.} This property corresponds to what is called $\sigma$-additivity in measure theory.

With a measure we can talk about integration: the goal is to define an (operator valued!) integral with respect to this measure. This is not much different from how one defines scalar valued integrals in measure theory, for example the Lebesgue integral. We first consider a \emph{step function} $f$: choose $n$ Borel subsets $\Lambda_i$ of the real line with $\Lambda_i \cap \Lambda_j = \emptyset$ if $i \neq j$ and such that $\sigma(A) \subset \cup_{i=1}^n \Lambda_i$. Then $f(\lambda) = \sum_{i=1}^n c_i \chi_{\Lambda_i}(\lambda)$ is a step function. For such step functions we already know what the integral should be:
\[
	\int_{\sigma(A)} f(\lambda) d\mu(\lambda) := \sum_{i=1}^n c_i \mu(\Lambda_i).
\]
Note that this defines an operator in $\alg{B}(\mathcal{H})$. The general case is then obtained by approximating arbitrary bounded measurable functions $f$ by step functions, and verifying that the corresponding integrals converge to an operator. To summarise, we have outlined how to define the integral $\int_{\sigma(A)} f(\lambda) d\mu(\lambda)$. It should be no surprise that this is related to the spectral measures associated to vectors $\psi \in \mathcal{H}$: if we compute $\langle \psi, \int_{\sigma(A)} f(\lambda) d\mu(\lambda) \psi \rangle$, we recover the integrals with respect to the spectral measures defined above. 

The results of this section can be summarised in the following variant of the spectral theorem:
\begin{theorem}[\emph{Spectral theorem}]\index{spectral theorem!projection valued measure form}\label{thm:spectral}
	Let $A \in \alg{B}(\mathcal{H})$ be a self-adjoint operator acting on some Hilbert space $\mathcal{H}$. Then there is a projection valued measure $\mu$ such that $\mu(\Omega) = \chi_{\Omega}(A)$ such that $A = \int \lambda d\mu(\lambda)$. Moreover, for every bounded Borel measurable function $f$ of the spectrum, there is a unique operator $\int_{\sigma(A)} f(\lambda) d\mu(\lambda)$ such that 
	\[
	\langle \psi, \int_{\sigma(A)} f(\lambda) d\mu(\lambda) \psi \rangle = \mu_\psi(f) = \int_{\sigma(A)} f(\lambda) d\mu_\psi(\lambda).
	\]
	for all $\psi \in \mathcal{H}$.
\end{theorem}
The measure is in fact unique, and giving such a measure defines a self-adjoint operator $A$ by the integral formula. Hence there is a one-one correspondence between the two.

\subsection{Unbounded operators}\index{operator!unbounded|(}\label{sec:unbounded}
So far we have only discussed \emph{bounded} operators. In the applications we are interested in in this book, we will always work with bounded observables. Nevertheless, unbounded operators still occur in a natural way. In particular, we are interested in systems consisting of infinitely many degrees of freedom. In such systems the energy usually is only bounded from below. Hence the Hamiltonian, the generator of the dynamics, is unbounded. Although one can avoid talking about the unbounded Hamiltonian by considering the time evolution of the bounded operators, it is sometimes convenient to have a Hamiltonian at our disposal.

\begin{definition}
	An \emph{unbounded operator} $(H, D(H))$ on a Hilbert space $\mathcal{H}$ is a linear map $H$ from the linear space $D(H) \subset \mathcal{H}$ to $\mathcal{H}$ that is not bounded. In other words, $\sup_{0 \neq \psi \in D(H)} \| H \psi \| / \|\psi\| = \infty$.
\end{definition}

The choice of the domain is important and an essential part of the definition of an unbounded operator. In applications there often is a natural choice of a dense domain, but even then one has to be careful. For example, it need not be true that if $\psi \in D(H)$ for some unbounded operator $H$, that $H\psi \in D(H)$, so that even innocent looking expressions such as $H^2$ can be problematic.

\begin{example}
	Let $\mathcal{H} = L^2(\mathbb{R})$. Consider the position operator $Q$, defined by $Q\psi(x) = x \psi(x)$ for $\psi \in D(Q)$. Note that this does not define a bounded operator on $L^2(\mathbb{R})$: it is possible to find $\psi \in L^2(\mathbb{R})$ with $x \psi(x) \notin L^2(\mathbb{R})$. Instead one can take as domain $D(Q) = C_0^\infty(\mathbb{R})$, which is dense in $L^2(\mathbb{R})$. It should be noted however that there are bigger choices of domain possible. In fact, there is a natural way in which the domain can be enlarged so that $(Q, D(Q))$ is a self-adjoint operator (see below for the definition).
\end{example}

\begin{exercise}
	Give an example of a $\psi \in L^2(\mathbb{R})$ that cannot be in the domain of $Q$.
\end{exercise}

Even though superficially the theories for bounded and unbounded operators are very similar, there are also fundamental differences, for example related to the domain issues mentioned above. Recall that on page~\pageref{page:adjoint} we defined the adjoint of a bounded linear map $A$ in terms of the inner product on $\mathcal{H}$. We cannot just replace the bounded linear map in equation~\eqref{eq:adjoint} with $H$, since $H \psi$ may not be defined. Instead, we define the adjoint of an unbounded operator as follows.

\begin{definition}
	Let $(H, D(H))$ be an unbounded operator with dense domain. The \idx{adjoint} $(H, D(H^*))$ is the unbounded operator with as domain the set of $\psi \in \mathcal{H}$ for which there is $\eta \in \mathcal{H}$ such that
\[
	\langle H \varphi, \psi \rangle = \langle \varphi, \eta\rangle \quad\textrm{ for all }\quad \varphi \in D(H).
\]
We then set $H^* \psi = \eta$.
\end{definition}
It should be noted that the domain of $H^*$ need not be dense (and can even be trivial). Note that the assumption on the denseness of the domain of $H$ is necessary to uniquely determine the vector $\eta$ from the inner products. It should be noted that not all properties of Exercise~\ref{ex:adjoint} hold true for unbounded operators. Indeed, it is only possible to define $(H^*)^*$ if $D(H^*)$ is dense. Even then it need not be true that $H = H^{**}$, but in that case $H^{**}$ is always an \idx{extension} of $H$, in the sense that $D(H) \subset D(H^{**})$ and both operators agree $D(H)$.

Again it is often interesting to study the spectrum of an unbounded operator, for example to find the energy levels of a quantum system. The definition of the spectrum is as follows:
\begin{definition}\label{def:specunbounded}
	Let $(H, D(H))$ be an unbounded operator on a Hilbert space $\mathcal{H}$. The \idx{resolvent} of $H$ is the set of $\lambda \in \mathbb{C}$ such that $H-\lambda I$ is a bijection from $D(H)$ onto $\mathcal{H}$ and the inverse $(H-\lambda I)^{-1}$ is a bounded operator. The \emph{spectrum}\index{spectrum!of unbounded operator}\index{_spec(H)@$\operatorname{spec}(H)$} is defined to be the complement of the resolvent. We also denote the spectrum by $\operatorname{spec}(H)$.
\end{definition}
\begin{remark}
The definition clearly is very similar to the definition of the spectrum of bounded operators, $\sigma(A)$, that we defined before. To avoid confusion, we use different notations to distinguish the spectra of bounded and unbounded operators.
\end{remark}

In the case of bounded operators, the spectral analysis had a particular nice form for self-adjoint operators. The same is true for unbounded operators. Just as in the bounded case, an unbounded operator $(H, D(H))$ is called self-adjoint if $H=H^*$, in particular $D(H) = D(H^*)$. Note that this implies that the domain of $D(H^*)$ is also dense. There is a weaker notion of a \emph{symmetric operator}.\index{operator!symmetric} The operator $H$ is symmetric if
\[
	\langle \psi, H \eta \rangle = \langle H \psi, \eta \rangle
\]
for all $\psi,\eta \in D(H)$. This is clearly true for self-adjoint operators, but a symmetric operator need not be self adjoint This is because the domain issues mentioned above: it implies that $D(H) \subset D(H^*)$, because of the definition of the adjoint given above, but the right hand side may in fact be larger than $D(H)$.

This means that a symmetric operator $S$ may have multiple self-adjoint extensions. That is, there can be distinct self-adjoint operators that agree with $S$ on the domain $D(S)$. If there is a \emph{unique} self-adjoint extension of $S$ we say that $S$ is essentially self-adjoint.\index{operator!essentially self-adjoint} In that case it is common to identify $S$ with its self-adjoint extension. In mathematical physics, in particular quantum mechanics, it is often important to study if, say, a Hamiltonian is essentially self-adjoint or not, but for us these questions do not play a role. We therefore end our discussion of unbounded operators (at least for now) by stating a variant of the spectral theorem for self-adjoint operators. Again there are different formulations possible, but we will only state the projection valued measure form:

\begin{theorem}[Spectral theorem for unbounded operators, measure form]\index{spectral theorem!unbounded operators}
	Let $\mathcal{H}$ be a Hilbert space $\mathcal{H}$ and $(A, D(A))$ an unbounded self-adjoint operator acting on $\mathcal{H}$. Then $\operatorname{spec}(A) \subset \mathbb{R}$ and there is a projection valued measure $\mu$ on the Borel subsets of $\mathbb{R}$ such that
	\[
		g(A) := \int g(\lambda) d\mu(\lambda)
	\]
defines a self-adjoint operator for any real-valued Borel function $g$. If $g(\lambda) = \lambda$, then $g(A) = A$.
\end{theorem}

The basic idea behind the proof is the same as for the bounded case, but one has to be careful to take into account the domains of the unbounded operator. The measure $\mu$ is also slightly different: because the spectrum of unbounded operators is no longer compact, it is no longer true that there is some interval $I = (-a, a)$ such that $\mu(I)$ is the identity. This has to be replaced by the weaker condition that $\operatorname{s-lim}_{a \to \infty} \mu( (-a, a) ) = I$. Or, alternatively, $\mu(\mathbb{R}) = I$.

\begin{remark}\label{rem:1paramexp}
The functional calculus is not restricted to real-valued Borel functions (but only in this case, $g(A)$ is self-adjoint), but can be extended to arbitrary complex-valued Borel functions. Moreover, if $g$ is a \emph{bounded} Borel function, then $g(A)$ is a bounded operator. In this case the functional calculus again leads to a $*$-homomorphism, from the bounded Borel functions to the bounded operators.

In the study of quantum mechanics this is very important: consider a (unbounded) self-adjoint $H$ and for $t \in \mathbb{R}$, the function $g_t(\lambda) = e^{i t \lambda}$. Then $g_t$ is a bounded Borel function on $\mathbb{R}$. Hence $U(t) := g_t(H)$ is a bounded operator, which we suggestively denote by $e^{it H}$. Note that $U(t)^* = \overline{g_t}(H) = g_{-t}(H)$. Moreover, $U(t)^* U(t) = g_t(H) g_{-t}(H) = g_0(H) = I$, so that $U(t)$ is unitary. In fact, it is easy to verify that $U(t+s) = U(t) U(s)$. If $H$ is the Hamiltonian of a quantum system, this gives the one-parameter group of unitaries implementing the time evolution.
\end{remark}
\index{operator!unbounded|)}

\section{Linear functionals and states}
Recall that in quantum mechanics, a (pure) state is often represented by a wave function, representing a vector in the Hilbert space of the system. Suppose that we have such a vector $\ket{\psi}$. Then the expectation value of an observable is given by
\begin{equation}
	\label{eq:state}
	A \mapsto \bra{\psi} A \ket{\psi}.
\end{equation}
This definition clearly also works for operators $A$ that are not observables in the usual sense, because they are not self-adjoint. If we allow such operators, it is clear that equation~\eqref{eq:state} is linear as a function of $A$. Moreover, it is positive: for any $A$ we have 
\[
	\bra{\psi} A^*A \ket{\psi} =  \braket{A \psi}{A \psi} = \| A \psi \|^2 \geq 0.
\]
This can be abstracted to the setting of Banach $*$-algebras.
\begin{definition}
A \idx{linear functional} on a Banach $*$-algebra $\alg{A}$ is a linear map $\omega: \alg{A} \to \mathbb{C}$. It is called \emph{positive} if $\omega(A^*A) \geq 0$ for all $A \in \alg{A}$, i.e., if it takes positive values on positive operators.
\end{definition}
Note that this is a special case of the continuous linear maps of Proposition~\ref{prop:lincont}. In particular a linear functional is continuous if and only if $\omega$ is bounded. We can also define the norm of a linear functional by
\[
	\| \omega \| := \sup_{A \in \alg{A}, \|A\| = 1} |\omega(A)|.
\]
This is just a special case of the norm on bounded linear maps: the linear functionals on a Banach $*$-algebra $\alg{A}$ can be identified with $\alg{B}(\alg{A}, \mathbb{C})$.

\begin{lemma}[Cauchy-Schwarz]\index{Cauchy-Schwarz inequality}
	\label{lem:csineq}
	Let $\omega$ be a positive linear functional on a $C^*$-algebra $\alg{A}$. Then for all $A,B \in \alg{A}$ we have the inequality
\[
	|\omega(B^*A)|^2 \leq \omega(B^*B) \omega(A^*A).
\]
This is a variant of the well-known Cauchy-Schwarz inequality. Moreover, $\overline{\omega(A^*B)} = \omega(B^*A)$.
\end{lemma}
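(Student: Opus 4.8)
The plan is to mimic the standard proof of the Cauchy--Schwarz inequality for inner products, using the fact that $(A,B) \mapsto \omega(B^*A)$ behaves like a (possibly degenerate) positive semi-definite sesquilinear form. The key input is positivity: for \emph{any} $C \in \alg{A}$ we have $\omega(C^*C) \geq 0$. So first I would fix $A, B \in \alg{A}$ and a scalar $\lambda \in \mathbb{C}$, set $C = A - \lambda B$, and expand $\omega(C^*C) = \omega((A-\lambda B)^*(A-\lambda B))$ using anti-linearity of $*$ (Exercise~\ref{ex:adjoint}) and linearity of $\omega$. This gives
\[
	0 \leq \omega(A^*A) - \lambda \omega(A^*B) - \overline{\lambda}\,\omega(B^*A) + |\lambda|^2 \omega(B^*B).
\]

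Before optimizing over $\lambda$, I need the auxiliary \emph{self-adjointness} statement $\overline{\omega(A^*B)} = \omega(B^*A)$; in fact it is cleanest to prove this first, since it lets me combine the two cross terms above into $-2\,\mathrm{Re}(\lambda\,\omega(A^*B))$. To get it, I would apply positivity to $C = A + B$ and to $C = A + iB$: positivity of $\omega(C^*C)$ forces $\omega(C^*C)$ to be real and nonnegative, and expanding these two expressions and taking real/imaginary parts pins down the relation between $\omega(A^*B)$ and $\omega(B^*A)$. (Equivalently: $\omega$ restricted to self-adjoint elements is real-valued because any self-adjoint element is a difference of two positive elements of the form $C^*C$ — but the direct expansion is quicker here and doesn't require that structural fact.) This step is the one requiring a little care, since one must be slightly attentive about which variable the sesquilinear form is linear in, given the physics convention adopted in the notes.

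With $\overline{\omega(A^*B)} = \omega(B^*A)$ in hand, the inequality becomes the classical one-variable optimization. If $\omega(B^*B) = 0$, then the quadratic-in-$t$ inequality $0 \leq \omega(A^*A) - 2t\,\mathrm{Re}(e^{i\theta}\omega(A^*B))$ for all real $t$ forces $\omega(A^*B) = 0$, and the claimed inequality holds trivially (both sides zero on the right when $\omega(B^*B)=0$, and the left side is $0$). Otherwise, write $\omega(A^*B) = |\omega(A^*B)| e^{i\theta}$, choose $\lambda = t e^{-i\theta}$ with $t \in \mathbb{R}$, reduce to $0 \leq \omega(A^*A) - 2t|\omega(A^*B)| + t^2 \omega(B^*B)$, and minimize over $t$ (or invoke the nonnegative-discriminant condition for this real quadratic). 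Taking $t = |\omega(A^*B)|/\omega(B^*B)$ yields $|\omega(A^*B)|^2 \leq \omega(A^*A)\,\omega(B^*B)$, which is the stated bound after swapping the roles of $A$ and $B$ (or just relabelling, using the self-adjointness relation to see $|\omega(B^*A)| = |\omega(A^*B)|$).

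The main obstacle is really just the bookkeeping in the expansion and the handling of the degenerate case $\omega(B^*B) = 0$; there is no deep point, and notably the $C^*$-norm and completeness are not needed — only the $*$-algebra structure and positivity of $\omega$. One subtlety worth flagging: positive linear functionals on $C^*$-algebras are automatically bounded (hence continuous), but this proof does not use that, so I would not bring it in. I would present the computation with a single scalar parameter and write ``expanding and using Exercise~\ref{ex:adjoint}'' rather than grinding each term.
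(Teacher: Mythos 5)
Your proof is correct and follows essentially the same route as the paper: expand $\omega(C^*C)$ for a one-parameter linear combination $C$ of $A$ and $B$, read off the conjugate-symmetry relation $\overline{\omega(A^*B)}=\omega(B^*A)$ from positivity, and then treat the resulting real quadratic via the discriminant. You are in fact slightly more careful than the notes, which specialise to real $\lambda$ and implicitly omit both the phase rotation needed to upgrade $\operatorname{Re}\,\omega(B^*A)$ to $|\omega(B^*A)|$ and the degenerate case $\omega(B^*B)=0$.
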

\begin{proof}
Let $\lambda \in \mathbb{C}$. Since $\omega$ is positive, it follows that
\[
	\omega( (\lambda A + B)^*(\lambda A + B)) = |\lambda|^2 \omega(A^*A) + \overline{\lambda} \omega(A^*B) + \lambda \omega(B^*A) + \omega(B^*B) \geq 0.
\]
Because this expression has to be real for all $\lambda$, it follows that $\overline{\omega(A^*B)} = \omega(B^*A)$. If we then specialise to $\lambda$ being real, we get a quadratic function of $\lambda$. Demanding that the graph of this function lies above (or on) the real line leads to the desired inequality.
\end{proof}

Note that we have not required linear functionals to be continuous. It turns out, however, that positive linear functionals are automatically continuous, and their norm can be obtained by evaluating the linear functional on the identity of the algebra. We will not prove this result here, but it can be found in most textbooks on operator algebras (for example,~\cite[Prop. 2.3.11]{MR887100}).
\begin{theorem}\label{thm:stateineq}
Let $\omega$ be a linear functional on a unital $C^*$-algebra $\alg{A}$. Then the following are equivalent:
\begin{enumerate}
	\item $\omega$ is a positive linear map;
	\item $\omega$ is continuous and $\|\omega\| = |\omega(I)|$.
\end{enumerate}
If any of these are satisfied, we have the following properties:
\begin{enumerate}[(a)]
	\item $|\omega(A)|^2 \leq \omega(A^*A) \|\omega\|$;
	\item $|\omega(A^*BA)| \leq \omega(A^*A) \|B\|$\label{it:stateopineq}
\end{enumerate}
for all $A,B \in \alg{A}$.
\end{theorem}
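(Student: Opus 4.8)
The plan is to prove the two implications of the equivalence, and then deduce the two inequalities (a) and (b) from the Cauchy--Schwarz lemma (Lemma~\ref{lem:csineq}) and the $C^*$-structure. The cited reference handles the hard analytic direction, so I will indicate which step leans on it.

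For the direction $(1) \Rightarrow (2)$, the first task is to show a positive linear functional $\omega$ is automatically bounded, with $\|\omega\| = \omega(I)$. The idea is: if $0 \le A$ and $\|A\| \le 1$, then $I - A$ is positive (it is of the form $B^*B$, since a positive element has a positive square root), so $\omega(A) \le \omega(I)$; thus $\omega$ is bounded on positive elements of norm $\le 1$ by $\omega(I)$. For a general self-adjoint $A$ one writes $A$ as a difference of two positive elements, and for arbitrary $A$ one decomposes into real and imaginary parts; combining these with the Cauchy--Schwarz inequality of Lemma~\ref{lem:csineq} (taking $B = I$) gives $|\omega(A)|^2 \le \omega(I)\,\omega(A^*A) \le \omega(I)^2 \|A\|^2$, hence $\|\omega\| \le \omega(I)$. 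Since evaluating at $I$ shows $\|\omega\| \ge |\omega(I)| = \omega(I)$, equality follows. (This is the step that genuinely uses the $C^*$-machinery — existence of square roots / the functional calculus for positive elements — and is the one I would, like the authors, be content to cite from~\cite[Prop. 2.3.11]{MR887100} rather than reprove from scratch.)

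For $(2) \Rightarrow (1)$, I would argue that a continuous functional with $\|\omega\| = |\omega(I)|$ — after normalising so that $\omega(I) = 1$ — must be positive. The key observation is that for a self-adjoint $A$ with spectrum in $[m,M]$, the element $A - \tfrac{m+M}{2}I$ has norm at most $\tfrac{M-m}{2}$, so $|\omega(A) - \tfrac{m+M}{2}|\le \tfrac{M-m}{2}$, forcing $\omega(A)\in[m,M]\subset\mathbb{R}$ and in particular $\omega(A)\ge 0$ when $A\ge 0$ (i.e. $m \ge 0$). Since every element of the form $A^*A$ is positive, $\omega(A^*A)\ge 0$, which is positivity.

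Finally, the two stated inequalities follow quickly. For (a): apply Lemma~\ref{lem:csineq} with $B = I$ to get $|\omega(A)|^2 \le \omega(I)\,\omega(A^*A) = \|\omega\|\,\omega(A^*A)$. For (b): the functional $C \mapsto \omega(A^* C A)$ is again positive (since $\omega((CA)^*(CA)) \ge 0$), hence by the just-proved part (2)(b)/continuity it is bounded with norm equal to its value at $I$, namely $\omega(A^*A)$; therefore $|\omega(A^*BA)| \le \omega(A^*A)\,\|B\|$. The main obstacle, as noted, is the automatic-continuity step in $(1)\Rightarrow(2)$, which is exactly where one needs the positive square root in a $C^*$-algebra; everything else is bookkeeping with the Cauchy--Schwarz lemma and the decomposition of general elements into self-adjoint and positive pieces.
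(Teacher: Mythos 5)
The paper itself does not prove this theorem --- it defers to~\cite[Prop.\ 2.3.11]{MR887100} --- so there is no in-paper argument to compare with; your proposal is the standard textbook proof, and most of it is sound: the chain $|\omega(A)|^2 \le \omega(I)\,\omega(A^*A) \le \omega(I)^2\|A\|^2$ in $(1)\Rightarrow(2)$ is correct (granting the functional-calculus fact that $0\le A$, $\|A\|\le 1$ implies $I-A\ge 0$, which you rightly flag), inequality (a) is exactly Lemma~\ref{lem:csineq} with $B=I$, and your derivation of (b) by applying the equivalence to the positive functional $C \mapsto \omega(A^*CA)$, whose norm is its value at $I$, namely $\omega(A^*A)$, is clean and correct.

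There is, however, one genuine gap, in $(2)\Rightarrow(1)$. From $\bigl\|A - \tfrac{m+M}{2}I\bigr\| \le \tfrac{M-m}{2}$ and $\|\omega\| = \omega(I) = 1$ you may conclude only that $\omega(A)$ lies in the \emph{complex disc} of radius $\tfrac{M-m}{2}$ centred at $\tfrac{m+M}{2}$; this does not force $\omega(A) \in [m,M]$, because nothing so far shows $\omega(A)$ is real. You must first prove that $\omega$ is hermitian on self-adjoint elements. The standard repair: write $\omega(A) = \alpha + i\beta$ for $A = A^*$ and test against $A + itI$ with $t \in \mathbb{R}$; since $\|A + itI\|^2 = \|A\|^2 + t^2$ for self-adjoint $A$ (norm equals spectral radius for normal elements), the bound $|\omega(A+itI)|^2 \le \|A\|^2 + t^2$ gives $\alpha^2 + (\beta+t)^2 \le \|A\|^2 + t^2$, i.e.\ $2\beta t \le \|A\|^2 - \alpha^2 - \beta^2$ for all real $t$, forcing $\beta = 0$. (Equivalently, run your disc argument with centres $\tfrac{m+M}{2} + is$, $s \in \mathbb{R}$, and let $s \to \pm\infty$.) Once reality is in hand, your disc argument does place $\omega(A)$ in $[m,M]$, so $\omega(A)\ge 0$ whenever $A \ge 0$, and since every $A^*A$ is positive the positivity of $\omega$ follows; the remainder of your proposal then stands as written.
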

Again, a similar statement is true for non-unital algebras, if one replaces the unit by an approximate unit.

\subsection{The state space of a $C^*$-algebra}
Let $\alg{A}$ be a $C^*$-algebra. Then a \idx{state} on $\alg{A}$ is a positive linear functional $\omega : \alg{A} \to \mathbb{C}$ of norm one (hence $\omega(I) = 1$ if $\alg{A}$ is unital). We will write $\mc{S}(\alg{A})$ for the set of states on $\alg{A}$.\index{_S(A)@$\mc{S}(\alg{A})$}\index{state space} It turns out that this notion of a state is the correct abstraction of states in Hilbert space quantum mechanics. We will briefly come back to this later, but for the moment note that if $\alg{A} = \alg{B}(\mc{H})$ and $\psi \in \mc{H}$, the map $A \mapsto \langle \psi, A \psi \rangle$ is a state (if $\psi$ is a unit vector). This is an example of a \idx{vector state}. Recall that if $A$ is a quantum mechanical observable, this expression gives the expectation value of the measurement outcome. This is how we can think of states on $C^*$-algebras as well.

Note that $\mc{S}(\alg{A})$ is a metric space, where the metric is induced by the norm on linear functionals. There are also other natural notions of convergence of a sequence of states. We will need the following two later.
\begin{definition}
	Let $\varphi_n$ be a sequence of states on some $C^*$-algebra $\alg{A}$. We say that $\varphi_n$ \emph{converges in norm} to a state $\varphi \in \mc{S}(\alg{A})$ if $\| \varphi_n - \varphi \| \to 0$, where the norm is the usual norm on linear functions, as defined above. We say that $\varphi_n$ converges to $\varphi$ in the \emph{weak-$*$ topology}\index{weak-* topology@weak $*$-topology} if for each $A \in \alg{A}$, one has that $|\varphi_n(A) - \varphi(A)| \to 0$. 
\end{definition}

\begin{exercise}
	Let $\varphi \in \mc{S}(\alg{A})$ and suppose that we have a sequence $\varphi_n \in \mc{S}(\alg{A})$. Show that $\varphi_n \to \varphi$ in the norm topology implies that $\varphi_n \to \varphi$ in the weak-$*$ topology. Is the converse also true? Prove your claim or give a counterexample.
\end{exercise}

The state space of a $C^*$-algebra $\alg{A}$ in general has a very rich structure, which strongly depends on the algebra itself. Nevertheless, there are some general properties that are always true. An important property is that $\mc{S}(\alg{A})$ is \emph{convex}: if $0 \leq \lambda \leq 1$ and $\omega_1$ and $\omega_2$ are states,
\[
	\omega(A) := \lambda \omega_1(A) + (1-\lambda) \omega_2(A)
\]
is a state again (since $\omega(1) = \lambda + (1-\lambda) = 1$, and $\omega$ is positive). A pure state is a state that cannot be written as a convex combination of two distinct states.
\begin{definition}
	Let $\omega$ be a state on a $C^*$-algebra $\alg{A}$. Then $\omega$ is called \emph{pure}\index{state!pure} if $\omega = \lambda \omega_1 + (1-\lambda) \omega_2$ for $\lambda \in (0,1)$ and some states $\omega_1, \omega_2$ implies that $\omega_1 = \omega_2 = \omega$. If $\omega$ is not pure, it is called \emph{mixed}\index{state!mixed}.
\end{definition}
We would like to stress that the notion of pure and mixed states depends on the algebra $\alg{A}$. If $\alg{B} \subset \alg{A}$ is a $C^*$-subalgebra (that is, $\alg{B}$ is closed in norm in $\alg{A}$ and a $C^*$-algebra) containing the unit of $\alg{A}$, we can restrict a state $\omega$ on $\alg{A}$ to a state $\omega|_{\alg{B}}$ on $\alg{B}$. Even if $\omega$ is pure, it is not necessarily so that $\omega|_{\alg{B}}$ is.

\begin{remark}\label{rem:pureequiv}
	A pure state can equivalently be defined as follows. A state $\omega$ is pure if for every positive linear functional $\varphi$ that is majorised by $\omega$, that is $0 \leq \varphi(A^*A) \leq \omega(A^*A)$ for all $A \in \alg{A}$, then $\varphi = \lambda \omega$ for some $0 \leq \lambda \leq 1$. It requires some work to show that these are indeed equivalent.
\end{remark}

\begin{exercise}
	\label{exc:density} Let $\alg{A} = M_d(\mathbb{C})$ for some positive integer $d$. Recall that a density matrix\index{density matrix} $\rho$ is a matrix such that $\rho^* = \rho$, $\rho$ is positive (i.e., all the eigenvalues are bigger than or equal to zero) and $\Tr(\rho)=1$. Show that states $\omega$ on $\alg{A}$ are in one-to-one correspondence with density matrices $\rho$ such that $\omega(A) = \Tr(\rho A)$ for all $A$. Also show that $\omega$ is pure if and only if the corresponding $\rho$ satisfies $\rho^2 = \rho$ and if and only if $\rho$ is a projection onto a one-dimensional subspace.
\end{exercise}

As a special case of this exercise, consider $d=2$. Such a system is called a \idx{qubit} in quantum information theory. A convenient basis of the two-by-two matrices is given by the \idx{Pauli matrices}\footnote{Some authors label the matrices by $\sigma_1, \sigma_2, \sigma_3$ and write $\sigma_0$ for the identity matrix. Later on we will also use a superscript instead of a subscript, since we will also need to index the site on which the matrix acts.}
\begin{equation}
	\label{eq:paulixyz}
	\sigma_x = \begin{pmatrix} 0 & 1 \\ 1 & 0 \end{pmatrix},\quad
	\sigma_y = \begin{pmatrix} 0 & -i \\ i & 0 \end{pmatrix},\quad
	\sigma_z = \begin{pmatrix} 1 & 0 \\ 0 & -1 \end{pmatrix},
\end{equation}
taken together with the identity matrix. Taking linear combinations of these matrices we see that each density matrix $\rho$ can be written as
\[
	\rho = \frac{1}{2} \begin{pmatrix}
		1 +z & x + i y \\
		x - i y	& 1-z
	\end{pmatrix}.
\]
Because $A \mapsto \Tr(\rho A)$ should be a state, it follows that $x,y$ and $z$ must be real, and that $x^2 + y^2 + z^2 \leq 1$. Hence any state on $M_2(\mathbb{C})$ can be characterized by giving a point in the unit ball of $\mathbb{R}^3$. In this context it is called the \emph{Bloch sphere}. As an exercise, one can show that the state is pure if and only if the corresponding point $(x,y,z)$ lies on the surface of the sphere, so that $x^2 + y^2 + z^2 = 1$. This example also shows that if we have a \emph{mixed} state, it generally can be written in many different ways as a convex combination of pure states. Indeed, take any two antipodal points on the sphere, such that the line connecting them goes trough the point $(x,y,z)$. Then the state $\rho$ can be written as a convex linear combination of the two pure states corresponding to the points on the sphere.

To finish this section, we note the following useful theorem (without proof).
\begin{theorem}\label{thm:statecompact}\index{state space!compactness}
	Let $\alg{A}$ be a unital $C^*$-algebra. Then the set of states $\alg{A}$ is a closed convex and compact subset (with respect to weak-$*$ topology) of the set of all linear functionals.
\end{theorem}
This implies, for example, that if $\varphi_n$ is a sequence of states converging to some $\varphi$ in the weak-$*$ topology, then $\varphi$ is also a state. The compactness property can also be used to show that any sequence of states has a converging subsequence.

\subsection{Example: classical states}\index{state!classical}
A nice thing about the algebraic approach to quantum mechanics is that \emph{classical} mechanics can be described in the same framework. We illustrate this with an example. Consider a configuration space $M$ of a classical system. To avoid technical details as much as possible, we assume that $M$ is compact. A \emph{classical observable} is a continuous function $f: M \to \mathbb{R}$. Slightly more general, we may consider \emph{complex} valued continuous functions. As we have seen before, these functions form a $C^*$-algebra $C_0(M)$. The classical observables correspond to the self-adjoint elements of this algebra.

Now let $\mu$ be a probability measure on the configuration space $M$. This measure can be interpreted as our knowledge of the system: it tells us what the state of the system is and with which probability. If we want to measure an observable $f$, the expectation value of the outcome is given by
\[
	\omega(f) := \int f(x) d\mu(x).
\]
Note that $\omega(1) = 1$ since $\mu$ is a probability measure and that $\omega(\overline{f} f) \geq 0$. Hence $\omega$ is a state on the observable algebra. Conversely, one can show that \emph{any} state on $C_0(M)$ comes from a probability measure $\mu$ (see for example~\cite[Thm. 6.3.4]{MR971256}). We already used this fact in the discussion of the spectral calculus. The pure states of $C_0(M)$ correspond to Dirac (or point) measures. If $\mu = \delta_x$, the Dirac measure in the point $x \in M$, then $\omega(f) = f(x)$ for all $f$. This is the situation in we know in which configuration $x \in M$ the system is, and hence we know the outcome of each observable.

\section{The Gel'fand-Naimark-Segal construction}\label{sec:gns}
In the usual Hilbert space setting of quantum mechanics, the observables of a system are represented as bounded or unbounded operators acting on some Hilbert space. It turns out that in the abstract setting we are considering here, where observables are elements of some $C^*$-algebra $\alg{A}$, Hilbert space is actually just around the corner. That is, given a state $\omega$ on $\alg{A}$, one can construct a Hilbert space $\mc{H}_\omega$ and a representation $\pi$ of $\alg{A}$ into $\alg{B}(\mc{H}_\omega)$. This is the content of the \emph{Gel'fand-Naimark-Segal theorem}. Before going into the details of this construction, we first give the basic definitions.

\begin{definition}
	Let $\alg{A}$ be a $C^*$-algebra and let $\mc{H}$ be a Hilbert space. A \idx{representation} of $\alg{A}$ on $\mc{H}$ is a $*$-homomorphism $\pi: \alg{A} \to \alg{B}(\mc{H})$. That is, a linear map such that $\pi(AB) = \pi(A)\pi(B)$ for all $A,B \in \alg{A}$. For a $*$-representation one has in addition that $\pi(A)^* = \pi(A^*)$. In these notes we only consider $*$-representations (without explicitly mentioning so).
\end{definition}
Thus a representation represents the elements of an abstract $C^*$-algebra as bounded operators on a Hilbert space, in such a way that the algebraic relations among the elements are preserved.

\begin{remark}\label{rem:morcont}
	A basic result in operator theory is that for representations $\pi$ of $C^*$-algebras, one automatically has $\|\pi(A)\| \leq \|A\|$ for all $A \in \alg{A}$. It follows that a representation is automatically continuous with respect to the norm topology. This can be shown using the spectral calculus. If $A \in \alg{A}$ and $\lambda \notin \sigma(A)$, then $A - \lambda I$ is invertible in $\alg{B}(\mathcal{H})$, where $\mathcal{H}$ is the space on which $\pi$ acts. Since $\pi(I) = I$ it follows that $\pi(A)-\lambda I$ is invertible, hence $\lambda \notin \sigma(\pi(A))$, or $\sigma(\pi(A)) \subset \sigma(A)$. The result then follows by noting that for the spectral radius, $\|A\|^2 = r(A^*A)$, and similarly for $\pi(A)$.
	
If $\pi$ is injective then it is in fact an isometry, meaning that $\|\pi(A)\| = \|A\|$.
\end{remark}

To avoid trivialities we will always restrict to \emph{non-degenerate} representations.\index{representation!non-degenerate} A representation $\pi : \alg{A} \to \alg{B}(\mc{H})$ is non-degenerate if the set $\pi(\alg{A}) \mc{H}$ is dense in $\mc{H}$. If a representation is degenerate, one can always obtain a non-degenerate representation by restricting to Hilbert space to $[\pi(\alg{A}) \mc{H}]$. A representation is called \emph{cyclic} if there is some vector $\Omega \in \mc{H}$ such that $\pi(\alg{A}) \Omega$ is a dense subset of $\mc{H}$. Such an $\Omega$ is called a \idx{cyclic vector}. We are now in a position to state and prove one of the fundamental tools in operator algebra, the \emph{GNS construction}\index{GNS representation}\index{representation!Gel'fand-Naimark-Segal}, after Gel'fand, Naimark and Segal.

\begin{theorem}
	\label{thm:gnsrep}
	Let $\alg{A}$ be a unital $C^*$-algebra and $\omega$ a state on $\alg{A}$. Then there is a triple $(\pi_\omega, \mc{H}_\omega, \Omega)$, where $\mc{H}_{\omega}$ is a Hilbert space and $\pi_\omega$ a representation of $\alg{A}$ on $\mc{H}_\omega$, such that $\Omega$ is cyclic for $\pi_\omega$ and in addition we have
\[
	\omega(A) = \langle \Omega, \pi_\omega(A) \Omega \rangle, \quad A \in \alg{A}.
\]
This triple is unique in the sense that if $(\pi, \mc{H}, \Psi)$ is another such triple, there is a unitary $U : \mc{H}_\omega \to \mc{H}$ such that $U \Omega = \Psi$ and $\pi(A) = U \pi_\omega(A) U^*$ for all $A \in \alg{A}$.
\end{theorem}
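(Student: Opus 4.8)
The plan is to build the Hilbert space directly from the algebra, using the state $\omega$ to define an inner product. First I would consider $\alg{A}$ as a vector space over $\mathbb{C}$ and define a sesquilinear form by $\langle A, B \rangle_\omega := \omega(A^* B)$. By positivity of $\omega$ this is positive semi-definite, and by Lemma~\ref{lem:csineq} (the Cauchy-Schwarz inequality for positive functionals) it is conjugate-symmetric. The obstruction to this being a genuine inner product is the possible presence of a nontrivial null space $\mc{N}_\omega := \{ A \in \alg{A} : \omega(A^*A) = 0 \}$. The key observation is that $\mc{N}_\omega$ is in fact a \emph{left ideal}: it is clearly a subspace, and for $A \in \mc{N}_\omega$ and $B \in \alg{A}$, the property~\eqref{it:stateopineq} from Theorem~\ref{thm:stateineq} gives $\omega((BA)^*(BA)) = \omega(A^* B^* B A) \leq \|B^*B\| \, \omega(A^*A) = 0$, so $BA \in \mc{N}_\omega$. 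This lets me pass to the quotient vector space $\alg{A}/\mc{N}_\omega$, on which $\langle [A], [B] \rangle_\omega := \omega(A^*B)$ is a well-defined inner product (well-definedness follows again from Cauchy-Schwarz, bounding the cross terms by the norms of the null vectors). I then let $\mc{H}_\omega$ be the completion of $\alg{A}/\mc{N}_\omega$ in the induced norm, which is a Hilbert space by the completion theorem proved earlier.

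Next I would define the representation $\pi_\omega$ by left multiplication: set $\pi_\omega(B)[A] := [BA]$ on the dense subspace $\alg{A}/\mc{N}_\omega$. This is well-defined precisely because $\mc{N}_\omega$ is a left ideal, and it is clearly linear in $B$ and multiplicative, $\pi_\omega(B_1 B_2) = \pi_\omega(B_1)\pi_\omega(B_2)$. To see that each $\pi_\omega(B)$ extends to a bounded operator on $\mc{H}_\omega$, I use property~\eqref{it:stateopineq} again: $\|\pi_\omega(B)[A]\|^2 = \omega(A^* B^* B A) \leq \|B\|^2 \omega(A^*A) = \|B\|^2 \|[A]\|^2$, so $\pi_\omega(B)$ is bounded with norm at most $\|B\|$ on the dense subspace, and extends by continuity (Proposition~\ref{prop:lincont}) to all of $\mc{H}_\omega$. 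The $*$-property $\pi_\omega(B)^* = \pi_\omega(B^*)$ follows by checking $\langle [A_1], \pi_\omega(B)[A_2] \rangle = \omega(A_1^* B A_2) = \omega((B^* A_1)^* A_2) = \langle \pi_\omega(B^*)[A_1], [A_2] \rangle$ on the dense subspace. For the cyclic vector I take $\Omega := [I]$, the class of the unit (here is where unitality is used). Then $\pi_\omega(B)\Omega = [B]$, so $\pi_\omega(\alg{A})\Omega = \alg{A}/\mc{N}_\omega$ is dense by construction, and $\langle \Omega, \pi_\omega(A)\Omega \rangle = \omega(I^* A) = \omega(A)$, as required.

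For the uniqueness statement, suppose $(\pi, \mc{H}, \Psi)$ is another triple with the stated properties. I would define $U$ on the dense subspace $\pi_\omega(\alg{A})\Omega$ by $U \pi_\omega(A)\Omega := \pi(A)\Psi$. The crucial point — and what makes this well-defined and isometric simultaneously — is the computation $\|\pi(A)\Psi\|^2 = \langle \Psi, \pi(A^*A)\Psi \rangle = \omega(A^*A) = \langle \Omega, \pi_\omega(A^*A)\Omega \rangle = \|\pi_\omega(A)\Omega\|^2$, using that both triples reproduce $\omega$. Hence $U$ is a well-defined isometry from a dense subspace of $\mc{H}_\omega$ onto a dense subspace of $\mc{H}$ (dense because $\Psi$ is cyclic for $\pi$), so it extends to a unitary $U: \mc{H}_\omega \to \mc{H}$. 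Taking $A = I$ gives $U\Omega = \Psi$, and the intertwining relation $U \pi_\omega(A) = \pi(A) U$ follows by evaluating both sides on vectors $\pi_\omega(B)\Omega$ and using multiplicativity, then extending by continuity; this rearranges to $\pi(A) = U \pi_\omega(A) U^*$.

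I expect the main obstacle — or at least the step most easily skimmed over — to be verifying that passing to the quotient and completing genuinely produces a consistent structure: that $\mc{N}_\omega$ is a left ideal (which rests on the operator inequality~\eqref{it:stateopineq}, itself a nontrivial consequence of Theorem~\ref{thm:stateineq} quoted without proof), and that $\pi_\omega(B)$ is bounded on the quotient so that extension by continuity applies. Everything else is bookkeeping with the inner product $\omega(A^*B)$.
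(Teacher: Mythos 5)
Your proof is correct and follows essentially the same route as the paper: quotient $\alg{A}$ by the null left ideal $\mc{N}_\omega$, complete to get $\mc{H}_\omega$, represent by left multiplication, take $\Omega = [I]$, and establish uniqueness via the isometry $U\pi_\omega(A)\Omega := \pi(A)\Psi$ on dense subspaces. The only cosmetic difference is that you invoke the operator inequality of Theorem~\ref{thm:stateineq}\eqref{it:stateopineq} to show $\mc{N}_\omega$ is a left ideal, whereas the paper derives that step from the Cauchy--Schwarz inequality (Lemma~\ref{lem:csineq}); since both sources then use Theorem~\ref{thm:stateineq}\eqref{it:stateopineq} for the boundedness of $\pi_\omega(B)$, the logical dependencies end up identical.
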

\begin{proof}
	Define the set $N_\omega = \{ A \in \alg{A} : \omega(A^*A) = 0 \}$. Suppose that $A \in \alg{A}$ and $B \in N_\omega$. Then the Cauchy-Schwarz inequality, Lemma~\ref{lem:csineq}, implies that $A^*B \in N_\omega$. But this means that $N_\omega$ is a left ideal of $\alg{A}$. We can then form the quotient vector space $H_\omega = \alg{A}/N_\omega$ and denote $[A]$ for the equivalence class of $A \in \alg{A}$ in this quotient. That is, $[A] = [B]$ if and only if $A = B + N_0$ for some $N_0 \in N_\omega$. We can define an inner product on $H_\omega$ by 
\[
	\langle [A], [B] \rangle := \omega(A^*B).
\]
This is clearly sesqui-linear, and by Lemma~\ref{lem:csineq} it follows that this definition is independent of the choice of representatives $A$ and $B$. Because $\omega$ is a state this linear functional is positive, and it is non-degenerate precisely because we divide out the null space of $\omega$. Indeed, if $\langle [A], [A] \rangle = 0$ it follows that $A \in N_\omega$. By taking the completion with respect to this inner product we obtain our Hilbert space $\mc{H}_\omega$.

Next we define the representation $\pi_\omega$ by defining the action of $\pi_\omega(A)$ on the dense subset $H_\omega$ of $\mc{H}_\omega$. Let $[B] \in H_\omega$, then we define $\pi_\omega(A) [B] := [AB]$. It is easy to check that this is well-defined. Note that for all $B \in \alg{A}$ we have
\[
\| \pi_\omega(A) [B] \|_{\mc{H}_\omega}^2 = \langle [AB], [AB] \rangle = \omega(B^*A^*AB) \leq \|A\|^2 \omega(B^*B) = \|A\|^2 \| [B] \|_{\mc{H}_\omega}^2,
\]
where $\|\cdot\|$ is the norm of the $C^*$-algebra. For the inequality we used Theorem~\ref{thm:stateineq}\eqref{it:stateopineq}. Hence $\pi_\omega(A)$ is bounded on a dense subset of $\mc{H}_\omega$. It is also not difficult to check that $\pi_\omega(A)$ is linear on this dense subset, hence it can be extended to a bounded operator on $\mc{H}_\omega$, which we again denote by $\pi_\omega(A)$. From the definition of $\pi_\omega$ it is apparent that $\pi_\omega(AB) = \pi_\omega(A) \pi_\omega(B)$ and $\pi_\omega(A^*) = \pi_\omega(A)^*$ is an easy check. If we define $\Omega := [I]$ it is clear that $\Omega$ is cyclic for $\pi_\omega$, by definition of $\mc{H}_\omega$. Moreover, for each $A \in \alg{A}$,
\[
	\langle \Omega, \pi_\omega(A) \Omega \rangle = \langle [I], [A] \rangle = \omega(A),
\]
which completes the construction of the GNS representation.

It remains to be shown that the construction is essentially unique. Suppose that $(\pi, \mc{H}, \Psi)$ is another such triple. Define a map $U: \mc{H}_\omega \to \mc{H}$ by setting $U \pi_\omega(A) \Omega = \pi(A) \Psi$ for all $A \in \alg{A}$. Note that this is a linear map of a dense subspace of $\mc{H}_\omega$ onto a dense subspace of $\mc{H}$. Moreover, for each $A,B$ we have
\[
	\langle U \pi_\omega(A) \Omega, U \pi_\omega(B) \Omega \rangle_\omega = \langle \pi(A) \Psi, \pi(B) \Psi \rangle = \omega(A^*B) = \langle \pi_\omega(A) \Omega, \pi_\omega(B) \Omega \rangle.
\]
It follows that $U$ can be extended to a unitary operator from $\mc{H}_\omega$ onto $\mc{H}$. The property that $U \pi_\omega(A) = \pi(A) U$ can be easily verified on a dense subset of the Hilbert space.
\end{proof}

\begin{remark}
	The theorem is still true if one drops the condition that $\alg{A}$ is unital. The proof however becomes more involved and involves so-called \emph{approximate units}. 
\end{remark}

\begin{exercise}
	Let $\alg{A}$ be a $C^*$-algebra and $\omega$ a faithful state, meaning that $\omega(A^*A) > 0$ if $A \neq 0$.\index{state!faithful} Show that the corresponding GNS representation is faithful (in the sense that the kernel is trivial).
\end{exercise}
Faithfulness of the state is not a \emph{necessary} condition for the representation to be faithful, only a sufficient one.

A representation is a special case of a \idx{morphism} between $C^*$-algebras. A ($*$-)morphism $\rho:\alg{A} \to \alg{B}$ between two $C^*$-algebras is a linear map such that $\rho(AB) = \rho(A)\rho(B)$ and $\rho(A^*) = \rho(A)^*$ for all $A,B \in \alg{A}$. Hence a representation is a morphism where $\alg{B} = \alg{B}(\mc{H})$ for some Hilbert space $\mc{H}$. Morphisms are also always continuous: $\|\rho(A)\| \leq \|A\|$.

\begin{definition}
	An \emph{automorphism}\index{automorphism!of a Cstar-algebra@of a $C^*$-algebra} $\alpha$ of a $C^*$-algebra $\alg{A}$ is a morphism $\alpha: \alg{A} \to \alg{A}$ that is invertible, such that $\alpha^{-1}$ is a morphism as well. The set of all automorphisms of a $C^*$-algebra forms a group, which is denoted by $\operatorname{Aut}(\alg{A})$.
\end{definition}
\begin{exercise}
	Show that an automorphism of a $C^*$-algebra is isometric. Also verify that if $U \in \alg{A}$ is unitary, the map $\operatorname{Ad} U$ defined by $A \mapsto UAU^*$ is an automorphism. 
\end{exercise}
An automorphism that is given by conjugation with a unitary, as in the exercise, is called \emph{inner}.\index{inner automorphism} In general, \emph{not} every automorphism of a $C^*$-algebra is inner (except when $\alg{A} = \alg{B}(\mc{H})$). Note that automorphisms preserve all the algebraic relations of the algebra. Hence they are a natural tool to model symmetries. This is the context in which we will encounter most automorphisms later in the course.

There is an important corollary that follows from the uniqueness of the GNS representation. If $\alpha$ is an automorphism of $\alg{A}$ and $\omega$ is invariant under the action of this automorphism (for example, a ground state of a physical system is invariant under some symmetry), then $\alpha$ is implemented by a unitary in the GNS representation. The precise statement is as follows:
\begin{corollary}
	Let $\alg{A}$ be a $C^*$-algebra and $\alpha$ an automorphism of $\alg{A}$. Suppose that $\omega$ is a state on $\alg{A}$ such that $\omega \circ \alpha = \omega$. Then there is a cyclic representation $(\pi, \mc{H}, \Omega)$ such that $\omega(A) = \langle \Omega, \pi(A) \Omega\rangle$ and a unitary $U \in \alg{B}(\mc{H})$ such that $\pi \circ \alpha(A) = U \pi(A) U^*$ and $U\Omega= \Omega$.
\end{corollary}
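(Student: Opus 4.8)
The plan is to apply the uniqueness part of the GNS theorem (Theorem~\ref{thm:gnsrep}) to the state $\omega$ and the state $\omega \circ \alpha$, exploiting the hypothesis $\omega \circ \alpha = \omega$ to produce the desired unitary. First I would invoke Theorem~\ref{thm:gnsrep} to obtain the GNS triple $(\pi, \mc{H}, \Omega)$ associated to $\omega$, so that $\omega(A) = \langle \Omega, \pi(A) \Omega \rangle$ and $\Omega$ is cyclic for $\pi$. This gives the cyclic representation claimed in the corollary.

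Next I would consider the triple $(\pi \circ \alpha, \mc{H}, \Omega)$. The key observations are: (i) $\pi \circ \alpha$ is again a $*$-representation of $\alg{A}$ on $\mc{H}$, since it is a composition of a $*$-homomorphism with a $*$-automorphism; (ii) $\Omega$ is still cyclic for $\pi \circ \alpha$, because $\alpha$ is surjective, so $(\pi \circ \alpha)(\alg{A})\Omega = \pi(\alpha(\alg{A}))\Omega = \pi(\alg{A})\Omega$, which is dense in $\mc{H}$; and (iii) for every $A \in \alg{A}$ we have
\[
	\langle \Omega, (\pi \circ \alpha)(A) \Omega \rangle = \langle \Omega, \pi(\alpha(A)) \Omega \rangle = \omega(\alpha(A)) = (\omega \circ \alpha)(A) = \omega(A).
\]
Thus $(\pi \circ \alpha, \mc{H}, \Omega)$ is a GNS triple for the \emph{same} state $\omega$.

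By the uniqueness clause of Theorem~\ref{thm:gnsrep} applied to the two triples $(\pi, \mc{H}, \Omega)$ and $(\pi \circ \alpha, \mc{H}, \Omega)$, there is a unitary $U : \mc{H} \to \mc{H}$ with $U \Omega = \Omega$ and $(\pi \circ \alpha)(A) = U \pi(A) U^*$ for all $A \in \alg{A}$. This is exactly the assertion of the corollary. I do not expect any serious obstacle here: the whole argument is essentially a bookkeeping exercise in matching up the data of the two triples, and the only point that requires a moment's thought is checking that $\Omega$ remains cyclic for $\pi \circ \alpha$, which follows immediately from surjectivity of $\alpha$. One should also remark that $U$ is unique given these properties (again by the uniqueness in the GNS theorem), and that $A \mapsto U$ respects composition of automorphisms, so this construction yields a unitary representation of the stabiliser subgroup of $\omega$ in $\Aut(\alg{A})$ — but that extra remark is not needed for the statement itself.
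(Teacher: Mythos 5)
Your proof is correct and follows exactly the same route as the paper: construct the GNS triple $(\pi,\mc{H},\Omega)$ for $\omega$, observe that $(\pi\circ\alpha,\mc{H},\Omega)$ is another GNS triple for the same state, and invoke the uniqueness clause of Theorem~\ref{thm:gnsrep} to obtain the unitary $U$ with $U\Omega=\Omega$ and $\pi\circ\alpha(A)=U\pi(A)U^*$. The extra remark about the assignment respecting composition of automorphisms is a nice observation (and is in fact used implicitly in the paper's treatment of one-parameter groups), but, as you note, is not needed for the corollary itself.
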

\begin{proof}
	Let $(\pi, \mc{H}, \Omega)$ be the GNS representation for the state $\omega$. Note that $(\pi \circ \alpha, \mc{H}, \Omega)$ is another GNS triple: $\Omega$ is again cyclic since $\alpha$ is an automorphism, and we have
\[
\langle \Omega, \pi(\alpha(A)) \Omega \rangle = \omega(\alpha(A)) = \omega(A).
\]
By the uniqueness statement in Theorem~\ref{thm:gnsrep} there is a unitary operator $U$ such that $\pi\circ\alpha(A) = U \pi(A) U^*$. This is the unitary we were looking for.
\end{proof}

\begin{remark}
	The automorphism $\alpha$ does not have to be inner. At first sight it may seem like a contradiction that a non-inner automorphism is implemented by a unitary operator in the GNS representation. This confusion can be resolved by noting that $U$ does \emph{not} need to be in $\pi(\alg{A})$, but only in $\alg{B}(\mc{H})$, the latter being bigger in general.
\end{remark}

Essentially the same proof can be given for groups of automorphisms. For example, let $t \mapsto \alpha_t$, $t \in \mathbb{R}$ be a group of automorphisms, that is, we have $\alpha_{t+s} = \alpha_t \circ \alpha_s$ and each $\alpha_t$ is an automorphism. Suppose that $\omega$ is an invariant state for each $\alpha_t$. Then there is a group of unitaries $t \mapsto U_t$ implementing these automorphisms in the GNS representation. As an example, one can think of $\alpha_t$ being the time evolution of a quantum mechanical system, defined on the observables (the Heisenberg picture). Then $U_t$ gives us the unitary evolution on the Hilbert space. We will study this later in more detail.

\begin{exercise}\label{ex:invstategns}
Consider a group of automorphisms $\alpha_t$ as above, together with an invariant state $\omega$.
\begin{enumerate}
	\item Show that the map $t \mapsto U_t$ indeed defines a unitary representation (so that $U(t+s) = U(t)U(s)$ and $U(t)^* = U(-t)$).
	\item Suppose that $t \mapsto \alpha_t$ is strongly continuous. This means that for each $A \in \alpha_t$, the map $t \mapsto \alpha_t(A)$ is continuous. Show that for each $\psi \in \mc{H}_\omega$ the map $t \mapsto U(t) \psi$ is continuous (with respect to the norm topology on $\mc{H}_\omega$). \emph{Hint:} first show that it is enough to show continuity as $t \to 0$.\index{group of automorphisms!strongly continuous}
\end{enumerate}
These results are actually true for \emph{any} topological group $G$.
\end{exercise}

Another application is the Gel'fand-Naimark theorem, which states that in fact \emph{every} $C^*$-algebra can be seen as some algebra of bounded operators acting on a Hilbert space. In practice this is however of limited value. The resulting representation is hard to describe explicitly, and does not give much insight. It is therefore often easier to work with the abstract $C^*$-algebra $\alg{A}$ itself.

Before we prove this result we first return to direct sums of Hilbert spaces. At the end of Section~\ref{sec:funcan} direct sums of a finite number of Hilbert spaces were defined. This can be generalised to arbitrary direct sums as follows (c.f. Exercise~\ref{ex:infsum}). Let $I$ be an index set (infinite or not) and suppose that $\mc{H}_i$ is a Hilbert space for each $i \in I$. The direct sum $\bigoplus_{i \in I} \mc{H}_i$ is defined as all those $\psi = (\psi_i \in \mc{H}_i)_{i \in I}$ such that
\[
	\sum_{i \in I} \| \psi_i \|^2_{\mc{H}_i} < \infty.
\]
An inner product can then be defined by $\langle \psi, \eta \rangle = \sum_{i \in I} \langle \psi_i, \eta_i \rangle_{i \in I}$, which converges precisely because of the condition on the norms. One can show that this indeed is an inner product and that it makes the direct sum into a Hilbert space. We will also write $\oplus_{i \in I} \psi_i$ for a vector $\psi$ as above.

If for each $i \in I$ we have a representation $\pi_i$ of a $C^*$-algebra $\alg{A}$ on $\mc{H}_i$, the direct sum representation $\pi$ can be defined by
\[
\pi(A) \psi = \bigoplus_{i \in I} \pi_i(A) \psi_i.
\]
Note that this is a generalisation of the direct sums of linear maps that were discussed earlier. An easy check shows that $\pi$ is indeed a representation. We will also write $\oplus_{i \in I} \pi_i$ for this representation.

\begin{theorem}[Gel'fand-Naimark]\label{thm:gelfandnaimark}\index{Gel'fand-Naimark theorem}
	Every $C^*$-algebra $\alg{A}$ can be isometrically represented on some Hilbert space $\mc{H}$. That is, there is a faithful isometric representation $\pi: \alg{A} \to \alg{B}(\mc{H})$.
\end{theorem}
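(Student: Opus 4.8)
The plan is to build the desired Hilbert space as a direct sum over \emph{all} states of the corresponding GNS Hilbert spaces, and to check that the resulting representation separates points. Concretely, I would proceed as follows. First, for each state $\omega \in \mc{S}(\alg{A})$, invoke the GNS construction (Theorem~\ref{thm:gnsrep}) to obtain a cyclic representation $(\pi_\omega, \mc{H}_\omega, \Omega_\omega)$ with $\omega(A) = \langle \Omega_\omega, \pi_\omega(A)\Omega_\omega\rangle$. Then form the Hilbert space $\mc{H} = \bigoplus_{\omega \in \mc{S}(\alg{A})} \mc{H}_\omega$ and the direct sum representation $\pi = \bigoplus_{\omega} \pi_\omega$, using the construction of arbitrary direct sums of representations discussed just above the theorem statement. (If $\alg{A}$ is non-unital, one works with the unitisation $\widetilde{\alg{A}}$ and restricts, or uses approximate units; I would mention this but give details only in the unital case, consistent with the conventions of these notes.) This $\pi$ is automatically a $*$-representation, and by Remark~\ref{rem:morcont} it satisfies $\|\pi(A)\| \leq \|A\|$.

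The key point is then to show $\pi$ is isometric, equivalently faithful (again by Remark~\ref{rem:morcont}, an injective representation of a $C^*$-algebra is automatically an isometry). Fix $A \in \alg{A}$ with $A \neq 0$; I must produce a state $\omega$ with $\pi_\omega(A) \neq 0$, and in fact I want to do better and show $\|\pi(A)\| = \|A\|$ directly. Since $\|\pi(A)\| \geq \|\pi_\omega(A)\|$ for every $\omega$, it suffices to find, for each $A$, a state $\omega$ with $\|\pi_\omega(A)\|$ arbitrarily close to $\|A\|$. For this I would use the $C^*$-identity: $\|A\|^2 = \|A^*A\|$, and $A^*A$ is a self-adjoint (indeed positive) element, so its norm equals its spectral radius, $\|A^*A\| = \sup\{|\lambda| : \lambda \in \operatorname{sp}(A^*A)\}$. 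The crucial input — which I would cite from the standard theory of $C^*$-algebras, as the notes permit citing unproven background results — is that for any self-adjoint element $B$ there exists a state $\omega$ with $|\omega(B)| = \|B\|$; equivalently, that states separate points and attain the norm on self-adjoints. Granting this, pick a state $\omega$ with $\omega(A^*A) = \|A^*A\| = \|A\|^2$. Then in the GNS representation, using the cyclic vector $\Omega_\omega$ of norm $1$,
\[
\|\pi_\omega(A)\Omega_\omega\|^2 = \langle \Omega_\omega, \pi_\omega(A^*A)\Omega_\omega \rangle = \omega(A^*A) = \|A\|^2,
\]
so $\|\pi_\omega(A)\| \geq \|A\|$, hence $\|\pi(A)\| \geq \|A\|$. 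Combined with $\|\pi(A)\| \leq \|A\|$ this gives $\|\pi(A)\| = \|A\|$, so $\pi$ is isometric, in particular faithful.

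The main obstacle is the existence of a state attaining the norm on a given self-adjoint element; everything else is bookkeeping (checking the direct sum of representations is a representation, and that an isometric $*$-homomorphism has all the advertised properties). That existence statement is itself a Hahn–Banach / Krein-type argument: one extends the obvious linear functional on the one-dimensional subspace spanned by $B$ (or rather works on the commutative $C^*$-subalgebra generated by $B$ and the unit, where by Gel'fand duality states correspond to probability measures on the spectrum and a point mass at an extreme point of $\operatorname{sp}(B)$ does the job) to a state on all of $\alg{A}$, using that positive linear functionals extend preserving norm. Since these notes explicitly allow citing such foundational facts without proof (as is done for Theorem~\ref{thm:stateineq} and Theorem~\ref{thm:statecompact}), I would state this as a lemma with a one-line justification and a reference to Bratteli–Robinson~\cite{MR887100}, and otherwise keep the proof self-contained. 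One should also remark, as the notes do elsewhere, that the representation produced this way is enormous and of little computational value — its virtue is purely existential.
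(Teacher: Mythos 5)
Your proposal is correct and follows essentially the same route as the paper's proof: take the direct sum $\pi = \bigoplus_{\omega \in \mc{S}(\alg{A})} \pi_\omega$ of GNS representations, use the automatic contractivity $\|\pi(A)\| \leq \|A\|$ from Remark~\ref{rem:morcont}, and then obtain the reverse inequality by invoking the existence of a state with $\omega(A^*A) = \|A^*A\|$ and computing $\|\pi_\omega(A)\Omega_\omega\|^2 = \omega(A^*A) = \|A\|^2$. The only cosmetic difference is that you sketch a Gel'fand-duality/Hahn--Banach justification for the norm-attaining state, whereas the paper relegates that fact to a footnote citing Kadison--Ringrose; the paper also notes in that footnote that pure states suffice.
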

\begin{proof}
	It is a fact that for a representation $\pi$ of $\alg{A}$, for each $A \in \alg{A}$ the inequality $\| \pi(A) \| \leq \| A \|$ holds (see Remark~\ref{rem:morcont}). It therefore remains to show that one can find a representation such that $\|\pi(A)\| \geq \|A\|$. To this end we will need the following result. Let $A \in \alg{A}$ be positive (that is, $A = B^*B$ for some $B \in \alg{A}$). Then there is a state $\omega$ on $\alg{A}$ such that $\omega(A) = \|A\|$.\footnote{\label{fn:purestate}The proof of this, which is not too difficult, relies on some facts that we have not discussed here. See for example~\cite[Thm. 4.3.4(iv)]{MR719020}. One can even show that $\omega$ can be chosen to be a \emph{pure} state.}

Now consider $B \in \alg{A}$ and let $\omega$ be a state such that $\omega(B^*B) = \|B^*B\| (=\|B\|^2)$. Construct the corresponding GNS representation $(\pi_\omega, \mc{H}_\omega, \Omega)$. Then
\[
	\|\pi_\omega(B) \Omega\|^2 = \langle \Omega, \pi_\omega(B^*B) \rangle = \omega(B^*B) = \|B\|^2.
\]
Hence $\|\pi_\omega(B)\| \geq \|B\|$ and hence $\|\pi_\omega(B)\| = \|B\|$. To conclude the proof of the theorem, consider the representation $\pi$ defined by
\[
\pi = \bigoplus_{\omega \in \mc{S}(\alg{A})} \pi_\omega,
\]
where $\pi_\omega$ is the corresponding GNS representation. In fact it is enough to only consider the pure states (c.f. Footnote~\ref{fn:purestate}). It follows that $\|\pi(A)\| = \|A\|$ for all $A \in \alg{A}$.
\end{proof}

In the theorem we constructed a direct sum of cyclic representations. It is true in general that \emph{any} non-degenerate representation $\pi : \alg{A} \to \alg{B}(\mc{H})$ can be obtained as a direct sum of cyclic representations. To illustrate this, choose a non-zero vector $\psi \in \mc{H}$. Then we can consider the subspace $\mc{K} = [\pi(\alg{A})\psi]$, the closure of the set $\pi(\alg{A})\psi$. Note that $\mc{K}$ is a Hilbert space and that we have a projection $P_{\mc{K}}$ that projects $\mc{H}$ onto this subspace. Since $\mc{K}$ is invariant, $\pi(A) \psi \in \mc{K}$ if $\psi \in \mc{K}$ and it follows that $P_{\mc{K}} \pi(A) = \pi(A) P_{\mc{K}}$ for all $A \in \alg{A}$. The same is true for the projection on the orthogonal complement, $I-P_{\mc{K}}$. This allows us to write $\pi$ is a direct sum of representations,
\[
	\pi(A) = P_{\mc{K}} \pi(A) P_{\mc{K}} \oplus (I-P_{\mc{K}}) \pi(A) (I-P_{\mc{K}})
\]
It is straightforward to check this equality if we identify $\mc{H}$ with $\mc{K} \oplus \mc{K}^{\perp}$. By construction, $P_{\mc{K}} \pi(A) P_{\mc{K}}$ is a cyclic representation. We can then start over again with the representation in the second summand, and so on, to write $\pi$ as a direct sum of cyclic representations. This argument can be made fully rigorous by an application of Zorn's Lemma.

A natural question is when it is not possible to further decompose a representation into a direct sum of representations. Such representations are called \emph{irreducible}.\index{representation!irreducible} There are in fact two natural notions of irreducibility. Let $\alg{A}$ be a $C^*$-algebra acting non-degenerately on some Hilbert space $\mc{H}$. The non-degeneracy condition means that $\alg{A} \mc{H}$ is dense in $\mc{H}$. This amounts to saying that there is no non-zero vector $\psi$ such that $A \psi = 0$ for all $A \in \alg{A}$. If $\alg{A}$ acts degenerately, one can always restrict $\alg{A}$ to (the closure of) the subspace $\alg{A} \mc{H}$. The algebra $\alg{A}$ is said to act \emph{topologically irreducible} if the only closed subspaces left globally invariant under the action of $\alg{A}$ are $\{0\}$ and $\mc{H}$ itself. It is said to be \emph{irreducible} if $\alg{A}' = \mathbb{C}I$. Here the prime denotes the \idx{commutant} of $\alg{A}$, that is
\begin{equation}
	\label{eq:commutant}
	\alg{A}' = \{ B \in \alg{B}(\mc{H}) : AB = BA \textrm{ for all } A \in \alg{A} \}.
\end{equation}
A perhaps surprising result is that for $C^*$-algebras both notions of irreducibility coincide.

\begin{proposition}\label{prop:irrep}
	Let $\pi : \alg{A} \to \alg{B}(\mc{H})$ be a representation of a $C^*$-algebra $\alg{A}$. Then the following are equivalent:
\begin{enumerate}
	\item \label{it:topirred}$\pi$ is topologically irreducible.
	\item \label{it:commutant}The commutant of $\pi(\alg{A})$ is equal to $\mathbb{C} I$.
	\item \label{it:cyclic}Every non-zero vector $\psi \in \mc{H}$ is cyclic for $\pi$.
\end{enumerate}
\end{proposition}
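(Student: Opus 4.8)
The plan is to prove the cycle of implications $(1)\Rightarrow(2)\Rightarrow(3)\Rightarrow(1)$, built on one elementary observation: for a closed subspace $\mc{K}\subseteq\mc{H}$, the space $\mc{K}$ is invariant under all the operators $\pi(A)$ if and only if the orthogonal projection $P_{\mc{K}}$ onto $\mc{K}$ lies in the commutant $\pi(\alg{A})'$. The ``if'' direction is immediate. For the ``only if'' direction, using that $\pi(\alg{A})$ is closed under taking adjoints, if $\mc{K}$ is invariant then for $A\in\alg{A}$, $\psi\in\mc{K}$ and $\eta\in\mc{K}^\perp$ one has $\langle \pi(A)\eta,\psi\rangle=\langle\eta,\pi(A^*)\psi\rangle=0$, so $\mc{K}^\perp$ is invariant too; reading off $\pi(A)\mc{K}\subseteq\mc{K}$ and $\pi(A)\mc{K}^\perp\subseteq\mc{K}^\perp$ simultaneously gives $P_{\mc{K}}\pi(A)=\pi(A)P_{\mc{K}}$.

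For $(1)\Rightarrow(2)$ I would take $T\in\pi(\alg{A})'$. Since the commutant of a $*$-closed set is a $*$-subalgebra, write $T=T_1+iT_2$ with $T_1,T_2$ self-adjoint elements of $\pi(\alg{A})'$, so it suffices to treat a self-adjoint $S\in\pi(\alg{A})'$. The commutant $\pi(\alg{A})'$ is closed in the weak operator topology, hence a von Neumann algebra, so every spectral projection $E$ of $S$ belongs to $\pi(\alg{A})'$; by the observation above, the range of $E$ is a closed $\pi(\alg{A})$-invariant subspace, hence $\{0\}$ or $\mc{H}$ by topological irreducibility, i.e.\ $E\in\{0,I\}$. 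Therefore the spectrum of $S$ is a single point $\lambda$ and $S=\lambda I$. This is the step that draws on the most external machinery — the spectral theorem for bounded self-adjoint operators together with the fact that spectral projections lie in any von Neumann algebra containing the operator — and I expect it to be the main obstacle to keeping the argument self-contained.

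For $(2)\Rightarrow(3)$ I would fix $\psi\in\mc{H}$ with $\psi\neq 0$ and set $\mc{K}=\overline{\pi(\alg{A})\psi}$, a closed $\pi(\alg{A})$-invariant subspace, so $P_{\mc{K}}\in\pi(\alg{A})'=\mathbb{C}I$ and hence $P_{\mc{K}}\in\{0,I\}$. Since $\alg{A}$ is unital (or, in general, has an approximate identity), $\pi(I)\psi=\psi$ (resp.\ $\pi(E_\lambda)\psi\to\psi$), so $\psi\in\mc{K}$ and $\mc{K}\neq\{0\}$; hence $P_{\mc{K}}=I$, i.e.\ $\mc{K}=\mc{H}$, so $\psi$ is cyclic. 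Finally, for $(3)\Rightarrow(1)$, let $\mc{K}$ be a closed $\pi(\alg{A})$-invariant subspace with $\mc{K}\neq\{0\}$ and pick $\psi\in\mc{K}$, $\psi\neq0$; then $\pi(\alg{A})\psi\subseteq\mc{K}$ by invariance, and since $\mc{K}$ is closed and $\psi$ is cyclic, $\mc{K}\supseteq\overline{\pi(\alg{A})\psi}=\mc{H}$, so $\mc{K}=\mc{H}$, which closes the loop.
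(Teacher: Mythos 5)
Your proof is correct. The mathematical ingredients are the same as the paper's — the bijection between closed invariant subspaces and projections in the commutant, together with spectral theory to manufacture a nontrivial projection from a nontrivial commutant — but you organize them differently and more economically. The paper proves four implications, $(1)\Leftrightarrow(3)$ directly plus $(2)\Rightarrow(1)$ and $(1)\Rightarrow(2)$, whereas you close a single cycle $(1)\Rightarrow(2)\Rightarrow(3)\Rightarrow(1)$, which costs one fewer step. You also carry the spectral argument further than the paper does: the paper merely invokes ``spectral theory'' to extract a nontrivial projection and stops at a contradiction, while you decompose $T$ into self-adjoint parts, note the commutant is weakly closed so contains the spectral projections of each, observe all such projections are trivial, and conclude positively that $T=\lambda I$ — this is the honest content behind the paper's hand-wave. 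One small point worth keeping explicit in your $(2)\Rightarrow(3)$ step: showing $\mc{K}\neq\{0\}$ (i.e.\ $\psi\in\overline{\pi(\alg{A})\psi}$) uses non-degeneracy of $\pi$, which the paper imposes as a standing assumption a few paragraphs earlier; your parenthetical remark about unitality or approximate identities covers exactly this, and the paper's $(1)\Leftrightarrow(3)$ argument silently relies on the same thing.
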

\begin{proof}
First note that if $\psi \in \mc{H}$ is any vector, then $\pi(\alg{A}) \psi$ is an invariant subspace under the action of $\pi(\alg{A})$ and hence so is its closure. Hence if $\pi$ is topologically irreducible and $\psi \neq 0$ it follows that this closure must be equal to $\mc{H}$, so that $\psi$ is cyclic. Conversely, suppose that $\pi$ is not topologically irreducible. Consider a non-trivial closed subspace $\mc{K}$ and $\psi \in \mc{K}$ non-zero. Since $\psi$ is cyclic, the closure of $\pi(\alg{A})\psi$ is equal to $\mc{H}$ by assumption. But by the invariance assumption, $\pi(\alg{A}) \psi \subset \mc{K}$, a contradiction. This proves the equivalence of~\ref{it:topirred} and~\ref{it:cyclic}.

Next suppose that the commutant is trivial. Let $\mc{K}$ be a closed subspace of $\mc{H}$ that is invariant under $\pi(\alg{A})$. Then the projection $P_{\mc{K}}$ onto $\mc{K}$ commutes with $\pi(\alg{A})$, as we have seen above. Hence $P_{\mc{K}} = \lambda I$ for some $\lambda \in \mathbb{C}$. But $P_{\mc{K}}^2 = P_{\mc{K}}$, from witch it follows that $\lambda = 0$ or $\lambda = 1$ and therefore $\mc{K}$ is equal to $\mc{H}$ or the zero subspace, proving that~\ref{it:commutant} implies~\ref{it:topirred}.

Finally, suppose that $\pi$ is topologically irreducible. For the sake of contradiction, let us assume that the commutant is non-trivial. Then it must contain a non-trivial self-adjoint element $T$, since if some operator $A$ is in the commutant, so is $A^*$. Hence we can apply spectral theory to $T$, to obtain a non-zero projection $P$ by choosing a non-zero spectral projection. It can be shown that this spectral projection necessarily is in the commutant $\pi(\alg{A})'$ as well. But then the range of this projection is invariant under $\pi(\alg{A})$, a contradiction.
\end{proof}

\begin{exercise}
	\label{ex:gnsrep}
	Let $\alg{A} = M_2(\mathbb{C})$. For $A \in \alg{A}$, write $A_{ij}$ for the corresponding matrix elements. Define $\rho(A) = A_{11}$ and $\sigma(A) = \lambda A_{11} + (1-\lambda) A_{22}$, where $0 < \lambda < 1$. Show that $\rho$ and $\sigma$ are states and find the corresponding GNS representations. Also show that in the first case the representation is irreducible, while in the second case the GNS representation can be written as the direct sum of two irreducible representations.
\end{exercise}

In Exercise~\ref{ex:gnsrep} we obtained GNS representations for different states of $M_2(\mathbb{C})$. Note that in the notation of that exercise, $\rho$ is a pure state, while $\sigma$ is mixed. The corresponding GNS representation of $\rho$ is irreducible, while that of $\sigma$ is reducible. This is in fact a general and very useful result:
\begin{theorem}
	Let $\alg{A}$ be a $C^*$-algebra and $\omega$ a state on $\alg{A}$. Then the corresponding GNS representation $(\pi, \mc{H}, \Omega)$ is irreducible if and only if $\omega$ is pure.\index{state!pure}\index{representation!irreducible}
\end{theorem}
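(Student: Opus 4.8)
The plan is to recast both \emph{purity of $\omega$} (via the majorisation characterisation of Remark~\ref{rem:pureequiv}) and \emph{irreducibility of $\pi$} (via triviality of the commutant, Proposition~\ref{prop:irrep}) as statements about positive operators in $\pi(\alg{A})'$ bounded above by $I$, and then observe that these two statements are literally the same. Throughout I will work in the unital case, as the paper does elsewhere.

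\textbf{Step 1: the correspondence.} The heart of the argument is the claim that $\varphi \mapsto T_\varphi$ is a bijection between positive linear functionals $\varphi$ with $0 \le \varphi(A^*A) \le \omega(A^*A)$ for all $A$, and operators $T \in \pi(\alg{A})'$ with $0 \le T \le I$, characterised by $\varphi(A) = \langle \Omega, T_\varphi\, \pi(A)\Omega\rangle$. To build $T_\varphi$ from $\varphi$, define a sesquilinear form on the dense subspace $\pi(\alg{A})\Omega \subset \mc{H}$ by $s(\pi(A)\Omega, \pi(B)\Omega) := \varphi(A^*B)$. The Cauchy--Schwarz inequality for $\varphi$ (Lemma~\ref{lem:csineq}) together with $\varphi \le \omega$ gives $|\varphi(A^*B)|^2 \le \varphi(A^*A)\varphi(B^*B) \le \omega(A^*A)\omega(B^*B) = \|\pi(A)\Omega\|^2 \|\pi(B)\Omega\|^2$; the same estimate applied to a difference $A-A'$ shows that $s$ does not depend on the chosen representatives and is bounded by $1$. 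Hence $s$ extends to a bounded sesquilinear form on all of $\mc{H}$, and the representation theorem for bounded sesquilinear forms yields a unique bounded operator $T_\varphi$ with $\langle \pi(A)\Omega, T_\varphi \pi(B)\Omega\rangle = \varphi(A^*B)$. Positivity $T_\varphi \ge 0$ and the bound $T_\varphi \le I$ come from $0 \le \varphi(A^*A) \le \omega(A^*A) = \|\pi(A)\Omega\|^2$ on the dense set, and the identity $\varphi(A^* C B) = \varphi((C^*A)^* B)$ shows $T_\varphi$ commutes with every $\pi(C)$, i.e. $T_\varphi \in \pi(\alg{A})'$. Conversely, any $T \in \pi(\alg{A})'$ with $0 \le T \le I$ yields such a $\varphi$ by the same formula (positivity and domination are immediate from $T \ge 0$, $T \le I$, and $T \in \pi(\alg{A})'$), and setting $A = I$ recovers $\varphi$ from $T_\varphi$; so the correspondence is bijective.

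\textbf{Step 2: both implications.} If $\omega$ is pure, Remark~\ref{rem:pureequiv} says every majorised $\varphi$ equals $\lambda\omega$ with $\lambda \in [0,1]$, so by Step 1 every positive $T \le I$ in $\pi(\alg{A})'$ equals $\lambda I$; since $\pi(\alg{A})'$ is a $C^*$-subalgebra of $\alg{B}(\mc{H})$ and hence spanned by its positive contractions, this forces $\pi(\alg{A})' = \mathbb{C}I$, and $\pi$ is irreducible by Proposition~\ref{prop:irrep}. Conversely, if $\pi$ is irreducible then $\pi(\alg{A})' = \mathbb{C}I$, so by Step 1 every majorised $\varphi$ is of the form $\langle \Omega, \lambda I\, \pi(\cdot)\Omega\rangle = \lambda\omega$, which is exactly the statement (Remark~\ref{rem:pureequiv}) that $\omega$ is pure. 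If one prefers the convex-decomposition form of purity directly: from $\omega = \lambda\omega_1 + (1-\lambda)\omega_2$ with $\lambda \in (0,1)$, the functional $\lambda\omega_1$ is positive and majorised by $\omega$, hence equals $\mu\omega$; evaluating at $I$ gives $\mu = \lambda$, so $\omega_1 = \omega$, and likewise $\omega_2 = \omega$.

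\textbf{Main obstacle.} The only substantial work is in Step 1: checking that $s$ descends to the quotient and is bounded, and invoking the representation of bounded sesquilinear forms to produce $T_\varphi$. A secondary point, which these notes have uniformly deferred to ``spectral theory'', is the fact used in Step 2 that a $C^*$-subalgebra of $\alg{B}(\mc{H})$ all of whose positive contractions are scalars must equal $\mathbb{C}I$; this is the continuous functional calculus producing spectral projections, exactly as in the final paragraph of the proof of Proposition~\ref{prop:irrep}.
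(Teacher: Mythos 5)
Your proof is correct. It proves the same theorem by essentially the same engine as the paper---the Riesz/bounded-sesquilinear-form trick to manufacture an operator $T_\varphi$ in the commutant from a majorised functional $\varphi$---but you package it more symmetrically, as a single bijection between $\{\varphi : 0 \le \varphi \le \omega\}$ and $\{T \in \pi(\alg{A})' : 0 \le T \le I\}$, from which both implications drop out at once. The paper instead uses two asymmetric arguments: the backward direction ($\pi$ irreducible $\Rightarrow \omega$ pure) is the half of your bijection that goes from $\varphi$ to $T_\varphi$, whereas the forward direction is argued by contradiction, extracting from a nonzero $T \in \pi(\alg{A})'$ a spectral \emph{projection} $P$ and then exhibiting the specific majorised functional $A \mapsto \langle P\Omega, \pi(A)P\Omega\rangle$ that is not a multiple of $\omega$. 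Your version bypasses the production of a projection entirely, which is a genuine simplification: the fact you rely on in its place---that $\pi(\alg{A})'$ is spanned by its positive contractions---is actually more elementary than you suggest in your ``main obstacle'' remark. You do not need spectral projections for it: given a self-adjoint $A \in \pi(\alg{A})'$ with $A \neq 0$, the element $(A + \|A\|I)/(2\|A\|)$ is already a positive contraction (since $\sigma(A) \subset [-\|A\|, \|A\|]$), and if it is a scalar then so is $A$; decomposing a general element into self-adjoint real and imaginary parts finishes the job. So your route has the pleasant feature of removing spectral theory from both directions of the argument, where the paper invokes it (in a footnote) for the forward direction.
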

\begin{proof}
	Suppose that $\omega$ is pure but $(\pi, \mc{H}, \Omega)$ is not irreducible. Then there is some non-zero $T \in \pi(\alg{A})'$. It follows that $T^*$ also commutes with every $\pi(A)$ and hence $T+T^*$ is in the commutant. Because there is a non-trivial self-adjoint element in the commutant, it follows that there must be a non-trivial projection $P$ there as well, by the spectral theorem. Consider a linear functional $\varphi$ defined by
\[
	\varphi(A) = \langle P \Omega, \pi_\omega(A) P \Omega \rangle. 
\]
This functional is non-zero because $P$ commutes with $\pi(\alg{A})$ and $\Omega$ is cyclic for $\pi_\omega$. Moreover, it is clearly positive and we have
\[
	\varphi(A^*A) = \langle P \Omega, \pi_\omega(A^*A) P \Omega \rangle = \langle \Omega, \pi_\omega(A^*) P \pi_\omega(A) \Omega \rangle \leq \|P\| \omega(A^*A),
\]
with Theorem~\ref{thm:stateineq}. Hence $\varphi$ is majorised by $\omega$. On the other hand, it is easy to check that $\varphi$ is not a multiple of $\omega$. This is in contradiction with the purity of $\omega$ (see also Remark~\ref{rem:pureequiv}).

Conversely, suppose that $\pi_\omega$ is irreducible. We sketch how to show that $\omega$ must be pure. Let $\varphi$ be a positive linear functional majorised by $\omega$. We want to show that $\varphi = \lambda \omega$ for some $0 \leq \lambda \leq 1$. Consider the dense subspace $\pi_\omega(A) \Omega$ of $\mc{H}_\omega$. We can define a new inner product on this space by
\[
	\langle \pi_\omega(A) \Omega, \pi_\omega(B) \Omega\rangle_\varphi = \varphi(A^*B).
\]
By assumption the new inner product is bounded by the old inner product. By the Riesz representation theorem there exists a bounded operator $T$ such that $\langle \pi_\omega(A) \Omega, \pi_\omega(B) \Omega \rangle_{\varphi} = \langle \pi_\omega(A) \Omega, T \pi_\omega(B) \Omega \rangle$ and $\|T\| \leq 1$ (compare with Footnote~\ref{ftn:adjoint} on page~\pageref{ftn:adjoint}). By checking on a dense subset one can show that $T$ commutes with any $\pi_\omega(A)$ and hence is of the form $\lambda I$ for some $\lambda$. It follows that $\varphi = \lambda \omega$. Since $\|T\| \leq 1$, also $\lambda \leq 1$ and the proof is complete.
\end{proof}

These results can be used to study the structure of abelian $C^*$-algebras. If $\alg{A}$ is an abelian $C^*$-algebra, a \idx{character} is defined to be a non-zero linear map $\omega:\alg{A} \to \mathbb{C}$ such that $\omega(AB) = \omega(A)\omega(B)$ for all $A,B \in \alg{A}$. One can prove that a character is automatically continuous and positive. In other words, a character is a one-dimensional representation of $\alg{A}$. It turns out that the characters are just the pure states of $\alg{A}$.

\begin{proposition}\index{C*algebra@$C^*$-algebra!commutative}\label{prop:characters}
Let $\alg{A}$ be an abelian $C^*$-algebra. Then $\omega$ is a character of $\alg{A}$ if and only if $\omega$ is a pure state.
\end{proposition}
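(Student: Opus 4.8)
The plan is to reduce both directions to the GNS construction (Theorem~\ref{thm:gnsrep}), the theorem just proved identifying purity of a state with irreducibility of its GNS representation, and the characterisation of irreducibility in Proposition~\ref{prop:irrep}. Throughout I assume $\alg{A}$ unital, as is our convention.

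For the direction ``character $\Rightarrow$ pure state'', I would first confirm that a character $\omega$ is a state: it is positive (as noted above), and since $\omega$ is multiplicative and non-zero, $\omega(I) = \omega(I)^2$ forces $\omega(I) = 1$ (the value $0$ would make $\omega$ vanish identically, since $\omega(A) = \omega(A)\omega(I)$), whence $\|\omega\| = |\omega(I)| = 1$ by Theorem~\ref{thm:stateineq}. Now observe that in the GNS construction the left ideal $N_\omega = \{A : \omega(A^*A) = 0\}$ coincides with $\ker\omega$, because $\omega(A^*A) = \overline{\omega(A)}\,\omega(A) = |\omega(A)|^2$ (using $\omega(A^*) = \overline{\omega(A)}$, which follows from Lemma~\ref{lem:csineq} with $B = I$). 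Since $\omega$ is a non-zero linear functional into $\mathbb{C}$, the quotient $\alg{A}/\ker\omega$ is one-dimensional, so $\mc{H}_\omega$ is one-dimensional; a representation on a one-dimensional space is trivially topologically irreducible, and hence $\omega$ is pure by the theorem above.

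For the converse, let $\omega$ be a pure state on the abelian algebra $\alg{A}$ with GNS triple $(\pi, \mc{H}, \Omega)$. By the theorem above $\pi$ is irreducible, so $\pi(\alg{A})' = \mathbb{C}I$ by Proposition~\ref{prop:irrep}. Since $\alg{A}$ is abelian, $\pi(\alg{A}) \subseteq \pi(\alg{A})' = \mathbb{C}I$, so for every $A$ there is a scalar $\chi(A)$ with $\pi(A) = \chi(A) I$. As $\pi$ is a $*$-homomorphism, so is $\chi : \alg{A} \to \mathbb{C}$, and it is non-zero because $\chi(I) = 1$. Finally $\|\Omega\|^2 = \langle \Omega, \pi(I)\Omega\rangle = \omega(I) = 1$, so $\omega(A) = \langle \Omega, \pi(A)\Omega\rangle = \chi(A)$; that is, $\omega = \chi$ is a character.

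I do not anticipate a genuine obstacle: the only point requiring a little care is, in the first direction, checking $\omega(I) = 1$ and identifying $N_\omega$ with $\ker\omega$ so that $\mc{H}_\omega$ is visibly one-dimensional; the rest is a direct application of the results already established, the key structural fact being that commutativity of $\alg{A}$ forces $\pi(\alg{A}) \subseteq \pi(\alg{A})'$, so that irreducibility collapses the representation to dimension one.
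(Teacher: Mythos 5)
Your proof is correct and takes essentially the same route as the paper's: both directions go through the GNS construction, the identification of purity with irreducibility, and the observation that commutativity of $\alg{A}$ together with a trivial commutant forces the representation to be scalar (equivalently, $\mc{H}_\omega$ one-dimensional), which is exactly multiplicativity. The only cosmetic difference is that for the character-implies-pure direction you compute $N_\omega = \ker\omega$ directly to see that $\mc{H}_\omega$ is one-dimensional, whereas the paper invokes the uniqueness of the GNS triple, viewing the character itself as a one-dimensional representation; both arguments are fine.
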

\begin{proof}
	We first show that a state $\omega$ on $\alg{A}$ is pure if and only if $\omega(AB) = \omega(A) \omega(B)$ (that is, it is also a character). To this end, note that by the theorem above $\omega$ is pure if and only if the GNS representation $(\pi_\omega, \mc{H}_\omega, \Omega)$ is irreducible. But because $\alg{A}$ is abelian, we always have that $\pi_\omega(\alg{A}) \subset \pi_\omega(\alg{A})'$. This can only be true if $\mc{H}_\omega$ is one-dimensional. Hence we have
\[
	\omega(A) = \langle \Omega, \pi_\omega(A) \Omega \rangle = \pi_\omega(A) \langle \Omega , \Omega \rangle.
\]
Since $\pi_\omega$ is a representation, it follows that $\omega(AB) = \omega(A)\omega(B)$. Conversely, since $\omega$ is a state it follows by Theorem~\ref{thm:stateineq} that $\omega(A^*) = \overline{\omega(A)}$. Hence if $\omega(AB) = \omega(A)\omega(B)$, $\omega$ is a representation and hence by the uniqueness of the GNS construction, $\mc{H}_\omega$ is one-dimensional.

It remains to be shown that a character of $\alg{A}$ is in fact a state. We assume that $\alg{A}$ is unital, the non-unital case can be shown with the help of approximate units. Since $\omega$ is a character, it is positive and continuous by the comments before the proposition. Moreover, for all $A \in \alg{A}$ it holds that
\[
	\omega(A) = \omega(AI) = \omega(A) \omega(I),
\]
hence $\omega(I) = 1$ and $\|\omega\| = 1$.
\end{proof}

\begin{remark}
	The set of characters of an abelian $C^*$-algebra $\alg{A}$ is called the \emph{spectrum}\index{spectrum!of an algebra} of the algebra. Let us write $X$ for this set. One can endow $X$ with a locally compact topology. The abelian Gel'fand-Naimark theorem mentioned earlier says that $\alg{A}$ is isomorphic to $C_0(X)$ for $X$ the spectrum of $\alg{A}$.
\end{remark}

\section{Quantum mechanics in the operator algebraic approach}
Quantum mechanics can be formulated in terms of $C^*$-algebras and states. This is often dubbed the algebraic approach to quantum mechanics. It can be argued that this is indeed the right framework for quantum mechanics. One of the fathers of algebraic quantum theory was von Neumann~\cite{MR0223138}, who also made many important contributions to operator algebra~\cite{MR0157873}. Indeed, he can be regarded as one of the founders of the field. Later important contributions to algebraic quantum theory were made by I.E. Segal~\cite{MR0022652}, who tried to give an abstract characterization of the postulates of quantum mechanics, and Haag and Kastler~\cite{MR0165864}, who apply the ideas of algebraic quantum mechanics to quantum field theory. As mentioned before, one of the driving forces behind the push for an abstract description of quantum mechanics comes from the wish to understand better systems with infinitely many degrees of freedom. Here we give an overview of the main points in the algebraic approach.\footnote{A proper treatment would require discussion of so-called \emph{normal states} and of von Neumann algebras. Since we will not need these concepts later on, we will not go into the details here.} A much more thorough analysis can be found in~\cite{MR2542202,MR0245275} or~\cite[Ch. 14]{morettiQM}.

Recall that in quantum mechanics, the observables are self-adjoint operators $A$ (which can in principle be unbounded). The eigenvalues or the spectrum of $A$ determine the possible outcomes after measuring the observable. A (pure) state is represented by a vector (``wave function'') $\psi$ in the Hilbert space $\mc{H}$. The expectation value of the measurement outcome is then given, according to the usual probabilistic interpretation of quantum mechanics, by the quantity $\langle \psi, A\psi \rangle$. After the measurement outcome has been determined, the state ``collapses'' in accordance with the outcome.

In the abstract setting all these properties can be defined without any reference to a Hilbert space. The observables\index{observable} are modelled by the self-adjoint elements of a $C^*$-algebra $\alg{A}$.\footnote{This excludes unbounded operators. One can argue that this is no severe objection, since in actual experiments there will always be a bounded range in which a measurement apparatus can operate. If one considers an unbounded operator on a Hilbert space, one can show that its spectral projections are contained in a von Neumann algebra (a subclass of the $C^*$-algebras). These spectral projections essentially project on the subspace of states with possible measurement outcomes in a bounded range. The advantage of working with bounded operators is that they are technically much easier to handle. Nevertheless, one can incorporate unbounded operators in the framework (by passing to the GNS representation of a state, for example). We will encounter some examples of this later on.} By extension we will sometimes call \emph{non}-self-adjoint operators observables as well (and call $\alg{A}$ the algebra of observables),\index{observable algebra} even though strictly speaking they are not. Also note that the product of two observables is not necessarily an observable again (since $A^*B^* \neq B^*A^*$ in general). Alternatively one can work with so-called \emph{Jordan algebras} where this problem does not occur. Jordan algebras are less well understood, however, and do not have many of the nice properties of $C^*$-algebras. The interested reader can learn about them in~\cite{MR887100}, for example.

The set of possible outcomes of a measurement is given by the spectrum of an observable, as defined in Section~\ref{sec:positive}. Assume for the moment that our observables are acting on some Hilbert space $\mathcal{H}$ (which can always be achieved by applying the GNS construction to a suitable state). In that case, if $A$ is a self-adjoint observable, the spectral theorem, Theorem~\ref{thm:spectral} gives us a spectral projection\index{spectral projection} $E(A;I)$ for each Borel subset $I \subset \mathbb{R}$.\footnote{We changed the notation a bit to emphasize the dependence on $A$.} and an associated spectral measure $dE(A; \lambda)$ such that
\[
		A = \int \lambda dE(A; \lambda).
\]
Now assume that the system is in a state $\omega$ and that we want to measure some (self-adjoint) observable $S$. Let $I$ be some interval. The probability that the measurement outcome lies in this interval is given by the \idx{Born rule}: it is equal to $\omega(E(S;I)^*E(S;I))$. Suppose that after measurement it is found that the value lies somewhere in an interval $I$. Let $P = E(S;I)$ be the corresponding spectral projection. Then the new state after the measurement is given by
\[
	\omega'(A) = \frac{\omega(PAP^*)}{\omega(P^*P)}
\]
provided $\omega(P^*P) \neq 0$. The latter condition can be interpreted physically as well. Note that projections correspond to ``yes/no'' experiments; they have outcome $0$ or $1$ (since $\sigma(P) = \{0,1\}$). If $\omega(P) = 0$, it means that the expectation value of this experiment is zero. But this means that the event occurs with probability zero. Hence the condition that $\omega(P^*P) \neq 0$ is very natural.

It is also possible to define \emph{transition probabilities}~\cite{MR0245275}.\index{transition probability} Recall that in quantum mechanics the probability of a transition from a state $\psi$ (which we see as a vector in some Hilbert space) to a state $\xi$ is given by $|\langle \psi, \xi \rangle|^2$. We can define a similar quantity in the algebraic setting as well. If $\omega_1$ and $\omega_2$ are two pure states on some $C^*$-algebra, the transition probability is defined as
\begin{equation}
	\label{eq:trans}
	P(\omega_1, \omega_2) = 1-\frac{1}{4} \|\omega_1 - \omega_2 \|^2.
\end{equation}
One can show that for two vector states $\psi, \xi$ this reduces to the usual quantity in quantum mechanics~\cite{MR0245275}. 

There are some aspects that we have not mentioned so far. One of them is the dynamics of the system. In Hilbert space quantum mechanics the dynamics are induced (through the Schr\"odinger equation) by a (possibly unbounded) operator $H$, the Hamiltonian. In the algebraic approach a natural approach is to look at the time evolution of the observables, $t \mapsto A(t)$. This is essentially the \idxr{Heisenberg picture}. The time evolution induces a one-parameter group of automorphisms $t \mapsto \alpha_t$ of the observable algebra. This brings us to another advantage of the algebraic approach. Symmetries of the system can be described in a natural way by groups of automorphisms acting on the observable. Symmetries have many applications in quantum mechanics, but let us mention one in particular: the algebraic approach turns out to be a convenient framework to describe spontaneous symmetry breaking in a mathematically rigorous way~\cite{MR2374090}. We will come back to symmetries in more detail later.

\subsection{Inequivalent representations}\index{representation!inequivalent|(}
As mentioned, one of the main reasons behind the algebraic approach to quantum mechanics is to study systems with infinitely many degrees of freedom. One of the phenomena that occur in such systems is that there are many inequivalent ways to represent the observables on a Hilbert space, which means that one has to make choices. Here we will discuss some consequences of the existence of such inequivalent representations. In the next Chapter we will see that they appear naturally in physical systems, so that this is not some pathological example.

We will say that two representations $\pi$ and $\rho$ of a $C^*$-algebra are \emph{unitary equivalent}\index{representation!equivalent} (or simply equivalent) if there is a unitary operator $U : \mc{H}_\pi \to \mc{H}_{\rho}$ between the corresponding Hilbert spaces and in addition we have that $U \pi(A) U^* = \rho(A)$ for all $A \in \alg{A}$.\footnote{Note that unitary equivalence already appeared in the uniqueness statement of the GNS representation.} If this is the case, we also write $\pi \cong \rho$.\index{_picongrho@$\pi \cong \rho$} In general, a $C^*$-algebra has many inequivalent representations. In quantum mechanics, this happens only for systems with infinitely many degrees of freedom. For finite systems the situation is different. For example, von Neumann~\cite{vnirred} showed that there is only one irreducible representation of the canonical commutation relations $[P,Q] = i \hbar$ (up to unitary equivalence). The same is true for a spin-1/2 system. If we consider the algebra generated by $S_x, S_y, S_z$, satisfying $[S_i, S_j] =  i \varepsilon_{ijk} S_k$ and $S_x^2 + S_y^2 + S_z^2 = \frac{3}{4} I$, each irreducible representation of these relations is unitary equivalent to the representation generated by the Pauli matrices~\cite{MR0106711}. A similar results is true for a finite number of copies of such systems. We will see that the existence of inequivalent representations has consequences for the superposition principle.

If $\pi:\alg{A} \to \alg{B}(\mc{H})$ is a representation of a $C^*$-algebra, there is an easy way to obtain different states on $\alg{A}$. Take any vector $\psi \in \mc{H}$ of norm one. Then the assignment $A \mapsto \langle \psi, \pi(A) \psi \rangle$ defines a state. Such states are called \emph{vector states}\index{vector state} for the representation $\pi$. Note that by the GNS construction it is clear that \emph{any} state can be realised as a vector state in some representation. Consider two states $\omega_1$ and $\omega_2$ that are both vector states for the \emph{same} representation $\pi$. Hence there are vectors $\psi_1$ and $\psi_2$ such that $\omega_i(A) = \langle \psi_i, \pi(A) \psi_i \rangle$ for all $A \in \alg{A}$. Consider now $\psi = \alpha \psi_1 + \beta \psi_2$ with $\alpha,\beta \in \mathbb{C}$ such that $|\alpha|^2 + |\beta|^2 = 1$ and both $\alpha$ and $\beta$ are non-zero. Then $\omega(A) = \langle \psi, \pi(A) \psi \rangle$ again is a state. However, it may be the case that the resulting state is not pure (even if the $\omega_i$ are) and we have a mixture
\begin{equation}
	\omega(A) = |\alpha|^2 \omega_1(A) + |\beta|^2 \omega_2(A).
	\label{eq:incoherent}
\end{equation}
If this is the case for any representation $\pi$ in which $\omega_1$ and $\omega_2$ can be obtained as vector states, we say that the two states are not \emph{superposable}\index{state!superposable}\index{state!coherent} or not \emph{coherent}. This situation was first analysed by Wick, Wightman and Wigner~\cite{Wick:1952p9186}. The existence of such vector states is a consequence of the existence of inequivalent representations:

\begin{theorem}[{\cite[Thm 6.1]{MR2542202}}]
Let $\omega_1$ and $\omega_2$ be pure states. Then they are not superposable if and only if the corresponding GNS representations $\pi_{\omega_1}$ and $\pi_{\omega_2}$ are inequivalent.
\end{theorem}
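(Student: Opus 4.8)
The plan is to prove the two implications separately, in each case passing to the irreducible representation associated with a pure state.

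\emph{Inequivalent $\Rightarrow$ not superposable.} Assume $\pi_{\omega_1}\not\cong\pi_{\omega_2}$ and let $\pi$ be \emph{any} representation of $\alg{A}$ on $\mc{H}$ carrying both states as vector states, say $\omega_i(A)=\langle\psi_i,\pi(A)\psi_i\rangle$ with $\|\psi_i\|=1$. Put $\mc{K}_i:=[\pi(\alg{A})\psi_i]$; then $\psi_i$ is cyclic for $\pi|_{\mc{K}_i}$, so by the uniqueness part of Theorem~\ref{thm:gnsrep} we have $\pi|_{\mc{K}_i}\cong\pi_{\omega_i}$, and since $\omega_i$ is pure each $\pi|_{\mc{K}_i}$ is irreducible (by the theorem identifying purity of a state with irreducibility of its GNS representation). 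I would next establish the key claim $\mc{K}_1\perp\mc{K}_2$. The orthogonal projections $P_i$ onto $\mc{K}_i$ lie in $\pi(\alg{A})'$, so $T:=P_1P_2$, read as a map $\mc{K}_2\to\mc{K}_1$, intertwines $\pi|_{\mc{K}_2}$ with $\pi|_{\mc{K}_1}$; its adjoint is $P_2P_1$, and $T^*T=P_2P_1P_2|_{\mc{K}_2}$ is a positive operator commuting with the irreducible $\pi|_{\mc{K}_2}$, hence $T^*T=c\,I_{\mc{K}_2}$ for some $c\geq 0$ by Proposition~\ref{prop:irrep}. If $c>0$, then $c^{-1/2}T$ is an isometric intertwiner whose range is a non-zero closed invariant subspace of $\mc{K}_1$, hence all of $\mc{K}_1$, making $c^{-1/2}T$ a unitary equivalence $\pi|_{\mc{K}_2}\cong\pi|_{\mc{K}_1}\cong\pi_{\omega_2}$, contradicting $\pi_{\omega_1}\not\cong\pi_{\omega_2}$. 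So $c=0$, $T=0$, i.e. $P_1P_2=0$ and $\mc{K}_1\perp\mc{K}_2$. In particular $\psi_1\perp\psi_2$ and $\pi(A)\psi_2\in\mc{K}_2\perp\mc{K}_1\ni\psi_1$ for all $A$, so the cross terms in $\langle\alpha\psi_1+\beta\psi_2,\pi(A)(\alpha\psi_1+\beta\psi_2)\rangle$ vanish and we recover exactly the incoherent mixture~\eqref{eq:incoherent}. Since $\pi$ was arbitrary, $\omega_1$ and $\omega_2$ are not superposable.

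\emph{Equivalent $\Rightarrow$ superposable} (contrapositive). Assume $\pi_{\omega_1}\cong\pi_{\omega_2}$; we may assume $\omega_1\neq\omega_2$, the case $\omega_1=\omega_2$ being trivial. Let $U:\mc{H}_{\omega_2}\to\mc{H}_{\omega_1}$ be unitary with $U\pi_{\omega_2}(A)U^*=\pi_{\omega_1}(A)$, set $\pi:=\pi_{\omega_1}$, $\psi_1:=\Omega$ (the GNS cyclic vector) and $\psi_2:=U\Omega_2$; then $\omega_i=\langle\psi_i,\pi(\cdot)\psi_i\rangle$, both $\psi_i$ have norm one, $\pi$ is irreducible because $\omega_1$ is pure, and $\psi_1,\psi_2$ are linearly independent (otherwise $\omega_2=\omega_1$). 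Choose $\alpha,\beta\in\mathbb{C}\setminus\{0\}$ with $\psi:=\alpha\psi_1+\beta\psi_2$ of norm one (possible since $\|\alpha\psi_1+\beta\psi_2\|$ is not constant on the span of the independent vectors $\psi_1,\psi_2$). Then $\psi\neq 0$, so by Proposition~\ref{prop:irrep} it is cyclic for $\pi$, whence the GNS representation of the vector state $\omega_\psi:=\langle\psi,\pi(\cdot)\psi\rangle$ is, by uniqueness, the cyclic subrepresentation on $[\pi(\alg{A})\psi]=\mc{H}$, i.e. $\pi$ itself, which is irreducible; hence $\omega_\psi$ is pure. If $\omega_\psi$ coincided with the incoherent mixture~\eqref{eq:incoherent}, which is a proper convex combination (both weights lie in $(0,1)$) of the \emph{distinct} states $\omega_1\neq\omega_2$, it would be mixed — contradicting purity of $\omega_\psi$. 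Therefore the superposition is coherent and $\omega_1,\omega_2$ are superposable.

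I expect the main obstacle to be the orthogonality claim in the first part: that two inequivalent irreducible subrepresentations sitting inside one representation act on orthogonal subspaces. This is the Schur-type step and the only place where one really needs operator theory beyond bookkeeping — namely that the commutant of an irreducible representation is $\mathbb{C}I$ (Proposition~\ref{prop:irrep}) together with the isometry/polar-decomposition argument turning a non-zero intertwiner into a unitary equivalence. The remaining ingredients are routine: uniqueness of GNS triples to identify cyclic subrepresentations, and the characterisation ``GNS representation irreducible $\iff$ state pure'' in both directions. A minor point deserving care is the precise reading of ``not superposable'' as a statement quantified over \emph{all} representations carrying $\omega_1$ and $\omega_2$ as vector states, and the normalisation of the superposition vector in the converse when $\psi_1$ and $\psi_2$ are not orthogonal.
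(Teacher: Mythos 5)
Your argument is correct in substance, and the two directions split as follows. The forward direction (inequivalent $\Rightarrow$ not superposable) is essentially the paper's own argument: identify the cyclic subspaces $\mc{K}_i = [\pi(\alg{A})\psi_i]$ with the GNS spaces, note the projections $P_i$ lie in $\pi(\alg{A})'$, and use a Schur-type rigidity to show $P_1P_2=0$; the paper organises this slightly differently, directly asserting $P_2\pi(\alg{A})'P_1=\{0\}$ and specialising to $I\in\pi(\alg{A})'$, whereas you extract the scalar $c$ from $T^*T$ and rule out $c>0$ by polar decomposition, but the operator-theoretic content is identical.

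The converse direction is where you genuinely diverge. The paper's proof is constructive: it uses density of $\pi(\alg{A})\psi_i$ to produce concrete elements $A_1, A_2\in\alg{A}$ with $\langle\pi(A_1)\psi_2,U\pi(A_2)\psi_1\rangle\neq 0$, and then shows the cross term in~\eqref{eq:incoherent2} can be made non-zero by a suitable rescaling of $A=A_2^*A_1$. You instead pass to $\pi=\pi_{\omega_1}$, observe that any unit vector $\psi$ in an irreducible representation generates a pure vector state (since $\psi$ is cyclic and the GNS triple is unique), and conclude that $\omega_\psi$ cannot equal the proper mixture of the two distinct states $\omega_1\neq\omega_2$. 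This is cleaner and avoids having to exhibit the witnessing observable. What the paper's route buys in exchange is robustness to the normalisation ambiguity that you flag at the end: the paper's definition of superposability normalises by $|\alpha|^2+|\beta|^2=1$, not $\|\psi\|=1$, and these differ when $\psi_1\not\perp\psi_2$. Your normalisation choice $\|\psi\|=1$ silently changes the weights, so the parenthetical ``both weights lie in $(0,1)$'' does not automatically hold. It can be repaired — either note that if~\eqref{eq:incoherent} held then evaluating at $I$ would force $|\alpha|^2+|\beta|^2=1$, or choose the relative phase of $\alpha,\beta$ so that $\operatorname{Re}(\overline{\alpha}\beta\langle\psi_1,\psi_2\rangle)=0$, which makes both normalisations coincide — but as written the argument needs this one-sentence patch, which you correctly anticipate as ``a minor point deserving care.''
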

\begin{proof}
Consider a representation $\pi$ such that $\omega_1$ and $\omega_2$ are vector states in this representation. Write $\psi_i \in \mc{H}$ for the corresponding vectors. Then we can consider the subspaces $\mc{H}_i$ of $\mc{H}$, defined as the closure of $\pi(\alg{A}) \psi_i$. The projections on these subspaces will be denoted by $P_i$.

Note that $\psi_i$ is, by definition, cyclic for the representation $\pi(\alg{A})$ restricted to $\mc{H}_i$. Let us write $\pi_i$ for these restricted representations. But since the vectors $\psi_i$ implement the state, it follows that the representation $\pi_i$ must be (unitary equivalent to) the GNS representations $\pi_{\omega_i}$. Let $U: \mc{H}_1 \to \mc{H}_2$ be a bounded linear map such that $U \pi_1(A) = \pi_2(A) U$ for all $A \in \alg{A}$. By first taking adjoints, we then see that $U^*U \pi_1(A) = \pi_1(A) U^*U$. By irreducibility of $\pi_1$ it follows that $U^*U = \lambda I$ for some $\lambda \in \mathbb{C}$. In fact, $\lambda$ must be real since $U^*U$ is self-adjoint. A similar argument holds for $U U^*$. Hence by rescaling we can choose $U$ to be unitary, unless $U^*U = 0$. Hence non-zero maps $U$ intertwining the representations only exist if $\pi_1$ and $\pi_2$ are unitarily equivalent. 

To get back to the original setting, let $T \in \pi(\alg{A})'$. Then $P_2 T P_1$ can be identified with a map $U: \mc{H}_1 \to \mc{H}_2$ such that $U\pi_1(A) = \pi_2(A) U$. Conversely, any such map can be extended to an operator in $P_2 \pi(\alg{A})' P_1$. Hence
\[
P_2 \pi(\alg{A})' P_1 = \{ 0 \}
\]
if and only if $\pi_1$ and $\pi_2$ are not unitarily equivalent.

Since $I \in \pi(\alg{A})'$ it is clear that $P_1 P_2 = P_2 P_1 = 0$ if $\pi_1$ and $\pi_2$ are not equivalent. This implies that $\mc{H}_1$ and $\mc{H}_2$ are orthogonal subspaces of $\mc{H}$. Hence
\[
	\langle \psi_2, \pi(A) \psi_1 \rangle = 0 = \langle \psi_1, \pi(A) \rangle \psi_2\rangle.
\]
Consequently, if $\psi = \alpha \psi_1 + \beta \psi_2$ with $|\alpha|^2 + |\beta|^2 = 1$, then
\[
	\omega(A) := \langle \psi, \pi(A) \psi \rangle = |\alpha|^2 \omega_1(A) + |\beta|^2 \omega_2(A).
\]
Hence $\omega_1$ and $\omega_2$ are not superposable.

Conversely, suppose that $\pi_{\omega_1}$ and $\pi_{\omega_2}$ are unitarily equivalent. Then there must be some unitary $U$ in $\pi(\alg{A})'$ such that $P_2 U P_1 \neq 0$. This is only possible if there are vectors $\varphi_i \in \mc{H}_i$ such that $\langle \varphi_2, U \varphi_1 \rangle \neq 0$. Since $\pi(\alg{A}) \psi_1$ is dense in $\mc{H}_1$ (and similarly for $\mc{H}_2$), there must be $A_1, A_2 \in \alg{A}$ such that
\begin{equation}
	\label{eq:nonortho}
	\langle \pi(A_1) \psi_2, U \pi(A_2) \psi_1 \rangle \neq 0.
\end{equation}
Set $\varphi = U \psi_1$. Since $U$ commutes with $\pi(A)$ for every $A$, it follows that
\[
	\langle \varphi, \pi(A) \varphi \rangle = \omega_1(A).
\]
Now consider the vector $\psi = \alpha \varphi + \beta \psi_2$. This induces a state $\omega$, and we find
\begin{equation}
	\label{eq:incoherent2}
\omega(A) - |\alpha|^2 \omega_1(A) - |\beta|^2 \omega_2(A) = \overline{\alpha}\beta \langle U \psi_1, \pi(A) \psi_2 \rangle + \alpha\overline{\beta} \langle \psi_2, \pi(A) U \psi_1\rangle.
\end{equation}
Consider then $A = A_2^*A_1$, where $A_1$ and $A_2$ are as above. Then the right hand side of equation~\eqref{eq:incoherent2} becomes
\[
2 \operatorname{Re}(\alpha\overline{\beta} \langle \pi(A_1) \psi_2, U \pi(A_2) \psi_1\rangle).
\]
By choosing a suitable multiple if $A$ we can make the right hand side non-zero, because of equation~\eqref{eq:nonortho}. It follows that equation~\eqref{eq:incoherent} does not hold.
\end{proof}
This result shows that as soon as a $C^*$-algebra has inequivalent representations, there are states that are not coherent. That is, there are pure states $\omega_1$ and $\omega_2$ such that a superposition of those states is never pure. The proof also makes clear that if we have vector states corresponding to inequivalent representations, there can never be a transition from one state to the other, not even by applying any operation available in $\alg{A}$, because $\langle \psi_1, \pi(A) \psi_2 \rangle$ is zero. Such a rule that forbids such transitions is called a \idx{superselection rule}. There are many different (but strongly related) notions of a superselection rule around, see for example~\cite{MR2453623} for a discussion.

We can also compare this result with the definition of a transition probability between pure states, equation~\eqref{eq:trans}. Note that if $P(\omega,\sigma) \neq 0$, we must have that $\|\omega-\sigma\| < 2$. But one can show that if $\|\omega-\sigma\| < 2$, then necessarily the GNS representations $\pi_\omega$ and $\pi_\sigma$ must be unitarily equivalent~\cite[Cor. 10.3.8]{MR1468230}. Hence these results are consistent: there can be no transitions between inequivalent pure states.\index{transition probability}

\begin{exercise}
Let $\alg{A} = M_n(\mathbb{C})$. Show that there are no non-coherent states.
\end{exercise}
\begin{remark}
The Theorem is still true for non-pure states, only then the inequivalence condition has to be replaced with the condition that the representations are \emph{disjoint}. Two representations $\pi_1$ and $\pi_2$ are disjoint if and only if no subrepresentation of $\pi_1$ is unitarily equivalent to a subrepresentation of $\pi_2$.
\end{remark}
\index{representation!inequivalent|)}

\chapter{Infinite systems}\label{ch:infinite}
With the preparations from the previous chapter we are now in a position to discuss the main topic of interest: quantum spin systems with infinitely many sites. We will first consider an explicit example, showing that inequivalent representations appear in a natural way. Then we discuss the appropriate mathematical framework to discuss spin systems with infinitely many sites. After that we discuss how to describe dynamics in this setting, and how to define equilibrium and ground states. At the end we consider another example: Kitaev's toric code in the thermodynamic limit.

Although we only consider discrete spin systems here, most of the techniques generalise to more complicated systems. For example, instead of spin systems we can consider systems with at each site a separable Hilbert space. Or we can consider continuous systems and define, for example, creation and annihilation operators to obtain Fock space. This is the appropriate setting to discuss many-body quantum systems. Furthermore, one can consider the algebraic approach to quantum field theory~\cite{MR1405610}. The goal there is to give an axiomatic and mathematically fully rigorous description of quantum field theories. The main points are reviewed in Chapter~\ref{ch:aqft}. We will also see how ideas from algebraic quantum field theory (or \emph{local quantum physics}, as which it also is known) can be applied to spin systems with infinitely many sites.

\section{(In)equivalence of representations}\label{sec:inequiv}\index{representation!inequivalent|(}
As mentioned before, there are fundamental differences between systems with finitely and infinitely many degrees of freedom. For example, the existence of inequivalent representations does not occur for finite systems, as was already mentioned at the end of Chapter~\ref{ch:opalg}. To see that this is drastically different for infinite systems, we will now consider a simple example of a system with infinitely many sites, namely that of an infinite spin chain.\footnote{This example is adapted from Chapter 2.3 of~\cite{MR1919619}.}\index{spin chain} Later we will approach such systems more systematically, but this example already shows that going to infinitely many sites gives rise to phenomena not encountered in finite systems. 

More concretely, we imagine that the sites of our system are labelled by the integers, and that at each site there is a Pauli spin system. One can think for example of the Heisenberg model. First we define the observables. At each site of the system there is a copy of the algebra generated by the usual Pauli matrices. That is, we have operators $\sigma_n^x, \sigma_n^y$ and $\sigma_n^z$ for each $n \in \mathbb{Z}$.\index{Pauli matrices} These operators fulfil the following commutation relations:
\begin{equation}
	\label{eq:paulicom}
[\sigma_n^i, \sigma_m^j] = 2 \delta_{n,m} i \sum_{k=1}^3 \varepsilon_{ijk} \sigma_n^k,
\end{equation}
where $i,j = x,y,z$ and $\varepsilon_{ijk}$ is the completely anti-symmetric (Levi-Civita) symbol, together with the condition that $(\sigma_n^k)^2 = I$. Note in particular that observables acting on different sites commute. These operators and relations generate an algebra of observables. Later we will see how we can obtain a $C^*$-algebra from this, but for now we disregard the topological structure.

The first step in defining a representation of this algebra is to give the Hilbert space on which it acts. To this end, consider the spin in the $z$-direction at each site. The spin can be either up or down after a measurement in this basis. If we do this for each site, we get a sequence $s_n$, with $n \in \mathbb{Z}$ and $s_n = \pm 1$. We denote $S$ for the set of such sequences. To define the Hilbert space, a naive first attempt would be to let each such $s \in S$ correspond to a basis vector $\ket{s}$. But this is a \emph{huge} Hilbert space, since the set $S$ is uncountable.\footnote{This can be seen by using a Cantor diagonal argument.} This makes things much more complicated, and arguably isn't very physical. For example, the algebra generated by the operators $\sigma_n$ would act far from irreducibly on this Hilbert space. Therefore, instead we look at subsets of $S$:
\[
	S^+ := \{ s_n \in S : s_n \neq 1\textrm{ for finitely many } n \in \mathbb{Z} \}.
\]
That is, the set of all sequences $s_n$ for which only finitely many spins are not pointing upwards. Note that this set \emph{is} countable. The set $S^-$ is defined analogously. The corresponding Hilbert spaces are then $\mc{H}^{\pm} = \ell^2(S^\pm)$. It turns out that it is more convenient to work with an alternative description of these Hilbert spaces. Namely, consider the set of all functions $f: S^+ \to \mathbb{C}$ such that
\[
	\sum_{s_n \in S^+} |f(s_n)|^2 < \infty,
\]
together with the usual inner product. We have seen this alternative description before in Remark~\ref{rem:ell2}. We can identify the vector $\ket{s} \in \mc{H}^+$, given by a specific configuration $s \in S^+$, with the function $f_s(s') = \delta_{s,s'}$, where $\delta_{s,s'}$ is one if $s_n = s_n'$ for all $n$, and zero otherwise. These functions form an orthonormal basis for the Hilbert space. We can think of $f \in \mc{H}^+$ as defining a vector $\ket{\psi} = \sum_{s \in S^+} f(s) \ket{s}$.

Next we define a ``spin flip'' map on $S$. Let $n \in \mathbb{Z}$. Then $\theta_n: S \to S$ is defined by 
\[
	(\theta_n(s))_k = \begin{cases} 
		-s_n &\mbox{if } n = k \\ 
		s_k &\mbox{otherwise}. 
	\end{cases} 
\]
Note that this flips the $n$-th spin, while leaving all the other spins invariant. Clearly it can be restricted to a map $\theta_n : S^+ \to S^+$. Next we define a representation of the operators $\sigma^k_n$ as follows. Let $f \in \mc{H}^+$. Then we define
\begin{equation}
	\label{eq:spinrep}
	\begin{split}
		(\pi^+(\sigma_n^x) f)(s) &= f(\theta_n(s)), \\
		(\pi^+(\sigma_n^y) f)(s) &= i s_n f(\theta_n(s)), \\
		(\pi^+(\sigma_n^z) f)(s) &= s_n f(s), \\
		(\pi^+(I)f)(s) &= f(s).
	\end{split}
\end{equation}
Note that these operators act as one would expect from the Pauli matrices: $\pi^+(\sigma^x_n)$ flips the spin at the $n$-th site and $\pi^+(\sigma^z_n)$ multiplies with the corresponding eigenvalue of the spin at the $n$-th site. Moreover, they satisfy the same relations as in equation~\eqref{eq:paulicom}. So $\pi^+$ defines a representation of the algebra generated by the $\sigma_n^k$.

We first show that the representation is irreducible.\index{representation!irreducible} Note that by Proposition~\ref{prop:irrep} there are a number of equivalent criteria. Here we will show that any vector is cyclic for the representation. Recall that the functions $f_s(s') = \delta_{s,s'}$ form a basis of the Hilbert space. Since any $s_1,s_2 \in S^+$ differ only in a \emph{finite} number of places $n_1, \dots, n_k$, we can transform $f_{s_1}$ into $f_{s_2}$ by acting with the operator
\[
\pi^+(\sigma^x_{n_1}) \cdots \pi^+(\sigma^x_{n_k})
\]
on it. Hence if we act with a finite number of operators we can transform one basis state into another. This implies that any basis vector is cyclic. This is not quite enough yet to conclude that \emph{any} vector is cyclic. Consider a vector $\psi = \sum_{i=1}^k \lambda_k f_{s_k}$ for certain $s_k \in S^+$ and $\lambda_k \in \mathbb{C}$. Such vectors are dense in $\mc{H}^+$. Define for $n \in \mathbb{Z}$ the following operators:
\[
P_n^{\pm} = \frac{\pi^+(I) \pm \pi^+(\sigma_n^z)}{2}.
\]
Note that $P_n^+$ projects onto the subspace of vectors where the $n$-th spin is up, and similarly for $P_n^{-}$. Hence if $\psi$ is above, we can construct an operator $P$ that projects onto one of the basis vectors in the sum. If we act with this operator on $\psi$, and then subsequently with other operators as above, we can span a dense subspace. It follows that $\psi$ is cyclic and therefore that $\pi^+$ is irreducible.

\begin{exercise}
Show that $\pi^+$ is irreducible by checking that only multiples of the identity commute with the representation.
\end{exercise}

In defining the representation $\pi^+$ we started with the subset $S^+$ of $S$ where all but finitely many spins are $+1$. We might as well have started with the set $S^-$, defined in the obvious way: all but finitely many $s_n$ are equal to $-1$. Similarly as above we then define the Hilbert space $\mc{H}^-$ and a representation $\pi^-$. The representation $\pi^{-}$ can be defined as in equation~\eqref{eq:spinrep}, but note that it acts on a \emph{different} Hilbert space. Even though the representation is defined in essentially the same way, we will see that $\pi^+$ and $\pi^-$ are \emph{not} unitarily equivalent.

To see this, we will look at the \idx{polarisation} of the system. For each $N \in \mathbb{N}$, define
\[
	m_N = \frac{1}{2N+1} \sum_{n=-N}^N \sigma^z_n.
\]
It measures the average spin in the $z$-direction of the $2N+1$ sites centred around the origin. Using the definitions we see that for any $s,s' \in S^+$ we have
\[
\langle f_{s'}, \pi^+(m_N) f_s \rangle = \frac{1}{2N+1} \sum_{n=-N}^N s_n \langle f_{s'}, f_s \rangle = \frac{\delta_{s,s'}}{2N+1} \sum_{n=-N}^N s_n.
\]
Since all but finitely many $s_n$ are $+1$, this converges to one as $N \to \infty$ (or zero if $s \neq s'$). By linearity this is also true for vectors that are finite linear combinations of such base vectors. To show that it is true in general, first note that $\|\pi^+(\sigma_n^z)\| = 1$. By the triangle inequality it follows that $\|\pi^+(m_N)\| \leq 1$ for all $N$. Because of this uniform bound it follows that
\[
	\lim_{N \to \infty} \langle \psi, \pi^+(m_N) \xi\rangle = \langle \psi,\xi\rangle
\]
for all $\psi,\xi \in \mc{H}^+$. We can do a similar thing for the representation $\pi^-$ acting in $\mc{H}^-$. Here we find
\[
	\lim_{N \to \infty} \langle \psi, \pi^-(m_N) \xi\rangle = -\langle \psi,\xi\rangle.
\]
Note that we obtain a minus sign (since most of the spins are pointing downwards!).

Now suppose that the representations $\pi^+$ and $\pi^-$ are unitarily equivalent. Then by definition there is a unitary map $U: \mc{H}^+ \to \mc{H}^-$ such that $U \pi^+(A) U^* = \pi^-(A)$ for all observables $\alg{A}$. In particular, choose some $f_s \in \mc{H}^+$. It follows that
\[
	\langle f_s, \pi^+(m_N) f_s \rangle = \langle f_s, U^* \pi^-(m_N) U f_s \rangle = \langle U f_s, \pi^-(m_N) U f_s \rangle.
\]
However, the left hand side of this expression tends to $+1$, while the right hand tends to $-1$ as $N$ goes to infinity. Hence such a unitary cannot exist.

This result can be understood intuitively as follows. To go from a vector in $\mc{H}^+$ to a vector in $\mc{H}^-$, we have to flip infinitely many spins around. This does not correspond to an operator in the algebra $\alg{A}$ generated by the Pauli spin operators. That is, there is no ``physical'' operation with which we can transform one Hilbert space to the other, where a physical operation is here understood as something that can be approximated by operations on a \emph{finite} number of sites.

\begin{exercise}
Argue that there are infinitely many inequivalent representations.
\end{exercise}
\index{representation!inequivalent|)}

\section{Infinite tensor products}\index{tensor product}\index{tensor product!infinite|(}
Suppose that we have a quantum system consisting of $n$ copies of, say, a spin-1/2 system. The single system is described by the Hilbert space $\mc{H}_x = \mathbb{C}^2$. For the composite system we then have according to the rules of quantum mechanics $\mc{H} = \bigotimes_{i=1}^n \mathbb{C}^2$, the tensor product of $n$ copies of the single site space. The observables at a single site are spanned by the Pauli matrices, and hence equal to $M_2(\mathbb{C})$. For the whole system we then have $\alg{A} = \bigotimes_{i=1}^n M_2(\mathbb{C}) \cong \alg{B}(\bigotimes_{i=1}^n \mathbb{C}^2)$. 

For an abstract $C^*$-algebra it is not immediately clear what the tensor product of this algebra with another $C^*$-algebra should be, and this can be a subtle issue: there are different natural choices of a norm on the algebraic tensor product, which do not always lead to the same $C^*$-algebra. Other problems arise when we do not look at systems with finitely many sites any more, but infinitely many, for example in quantum spin chains. The example of inequivalent representations above already shows a fundamental difference between systems with finitely many and infinitely many sites. In this section we will introduce the proper framework to describe such systems.

\subsection{Infinite tensor product of Hilbert spaces}\index{tensor product!of Hilbert spaces}
One way to consider quantum spin systems with infinitely many sites is to first define a Hilbert space. Heuristically speaking this should be the tensor product of infinitely many copies of the one-site Hilbert space, say $\mathbb{C}^d$ for spin systems. Let us write $I$ for the set indexing the different sites in the system. The goal is to define $\bigotimes_{i \in I} \mc{H}_i$, where in our case $\mc{H}_i \cong \mathbb{C}^d$. We can try to generalise the tensor product construction of Section~\ref{sec:tensprod}. That is, we first consider $\psi = (\psi_i)$ and $\xi = (\xi_i)$, where $\psi_i, \xi_i \in \mc{H}_i$. The inner product between these vectors should be given by
\begin{equation}
	\label{eq:infhilbert}
\langle \psi, \xi \rangle = \prod_{i \in I} \langle \psi_i, \xi_i \rangle_{\mc{H}_i}.
\end{equation}
The problem is that the product in the right hand side in general does not converge! This can already be seen if we choose $\xi_i = -\psi_i$ and $\| \psi_i \|_{\mathcal{H}_i} = 1$.

To solve this, choose a ``reference'' vector $\Omega_i \in \mc{H}_i$ of norm one, for every $i \in I$. Then consider only those sequences $\psi = (\psi_n)$ such that $\psi_i \neq \Omega_i$ for only finitely many $i \in I$. For such vectors the right hand side of equation~\eqref{eq:infhilbert} converges since there are only finitely many factors not equal to one. We can then consider the linear space consisting of finite linear combinations of such vectors. The inner product can be extended to this space by (sesqui-)linearity. Finally, we can take the completion with respect to the norm induced by this inner product to obtain a Hilbert space $\mc{H}$.

\begin{exercise}
	Consider a one-dimensional chain of spin-1/2 systems.\index{spin chain} Choose as a reference vector $\ket{\Omega_n}$ at each site the spin-up vector in the $z$-basis. Define a unitary map $U: \bigotimes_{n \in \mathbb{Z}} \mathbb{C}^2 \to \mc{H}^+$, where the tensor product is the infinite tensor product as described in this section, and $\mc{H}^+$ is the Hilbert space defined in Section~\ref{sec:inequiv}.
\end{exercise}

Once we have defined the Hilbert space we can consider the observables of the system. One could for example take $\alg{A} = \alg{B}(\mc{H})$ as the observable algebra.\index{observable algebra} This has a downside, however: locality is lost in this approach. It is reasonable to assume that an experimenter can only perform measurements on a finite number of sites, and not on the whole (infinite) system at once. Hence the experimenter only has access to observables that can be approximated by strictly \emph{local} observables.\index{local!observable} We will later consider $C^*$-algebras whose operators can be approximated by such strictly local observables. But as said, by taking $\alg{B}(\mathcal{H})$ this locality is lost.

Alternatively, one could at a similar construction of the observable algebra as was used for the tensor product. That is, consider observables of the form $A = \bigotimes A_i$, where $A_i \in \alg{B}(\mc{H}_i)$ and $A_i = I$ for all but finitely many $i \in I$. Such observables act on $\bigotimes_{i \in I} \mc{H}_i$ in the obvious way, hence they are elements of $\alg{B}(\mathcal{H})$.\footnote{Note that the operator is well defined since only finitely many $A_i$ act non-trivial.} By taking linear combinations of such operators we obtain a $*$-algebra $\alg{A}$. To get a $C^*$-algebra, we can take the double commutant $\alg{M} = \alg{A}''$.\footnote{Here $\alg{A}'$ is the set of all $X \in \alg{B}(\mathbb{H})$ such that $XA = AX$ for all $A \in \alg{A}$.} This is a so-called \idx{von Neumann algebra}.\label{p:vna} A von Neumann algebra is in particular a $C^*$-algebra, but the converse is not true. A rather surprising result is that the resulting algebra $\alg{M}$ strongly depends on the choice of reference vectors $\Omega_i$ in the construction of the tensor product~\cite{MR0244773,Powers:1967tx}! Hence in this approach one is faced with the decision of which sequence of vectors to take. The right choice is not at all obvious in general.

\subsection{Tensor products of $C^*$-algebras}\index{tensor product!of C*-algebras@of $C^*$-algebras}
The (algebraic) tensor product of two $*$-algebras is defined as usual. We recall the construction here. Suppose that $\alg{A}$ and $\alg{B}$ are $*$-algebras. Then the \idx{algebraic tensor product}\index{tensor product!algebraic} $\alg{A} \odot \alg{B}$ consists of linear combinations of elements $A \otimes B$ with $A \in \alg{A}$ and $B \in \alg{B}$, where the following identifications are made:
\[
\begin{split}
	(\lambda A) \otimes B = A \otimes & (\lambda B) = \lambda (A \otimes B), \quad \lambda \in \mathbb{C} \\
	(A_1 + A_2) \otimes B = A_1 \otimes B + A_2 &\otimes B, \quad A \otimes (B_1+B_2) = A \otimes B_1 + A \otimes B_2.
\end{split}
\]
These conditions say that there is a bilinear map $L: \alg{A} \times \alg{B} \to \alg{A} \odot \alg{B}$, defined by $L(A,B) = A \otimes B$. The tensor product becomes a $*$-algebra by setting
\begin{align*}
	&(A_1 \otimes B_1)(A_2 \otimes B_2) := A_1 A_2 \otimes B_1 B_2,\\
	&(A \otimes B)^* := A^* \otimes B^*.
\end{align*}
To obtain a $C^*$-algebra we have to define a norm on the algebraic tensor product, satisfying the $C^*$-property. The completion of the algebraic tensor with respect to this norm is a $C^*$-algebra, which is called the tensor product of $\alg{A}$ and $\alg{B}$ (with respect to the chosen norm).\footnote{\label{fn:tensnorm}Defining the tensor product $\alg{A} \otimes \alg{B}$ of two $C^*$-algebras is a delicate matter in general. The point is that there are, in general, different natural $C^*$-norms on the algebraic tensor product $\alg{A} \odot \alg{B}$. These different norms lead to different completions, and hence different tensor products. There is a special class of $C^*$-algebras for which there is a unique norm on the tensor product, the \emph{nuclear} algebras. If $\alg{A}$ is nuclear, then there is a unique $C^*$-tensor product $\alg{A} \otimes \alg{B}$ for any $C^*$-algebra $\alg{B}$. Examples of nuclear algebras are $M_n(\mathbb{C})$, the algebra of bounded operators on a Hilbert space, and abelian algebras. (Almost) all $C^*$-algebras we encounter will be nuclear.\index{C*algebra@$C^*$-algebra!nuclear}}

In general there are different norms that one can choose for the algebraic tensor product, which lead to different completions (and hence distinct $C^*$-algebras). There are some natural conditions, however. Analogous to the tensor product of bounded linear maps, defined before, it is desirable that the norm on the algebraic tensor product is a \idx{cross norm}. That is, for $A \in \alg{A}$ and $B \in \alg{B}$, we want that
\[
	\| A \otimes B \| = \|A\| \|B\|,
\]
where the norms on the right hand side are the norms of $\alg{A}$ and $\alg{B}$ respectively. One way to obtain such a norm is to take faithful representations $\pi$ and $\rho$ of $\alg{A}$ and $\alg{B}$ respectively. Then we define
\[
\left\| \sum_{i=1}^n A_i \otimes B_i \right\| = \left\| \sum_{i=1}^n \pi(A_i) \otimes \rho(B_i) \right\|.
\]
The norm on the right hand side is to be understood as the norm of the bounded linear operator on $\mc{H}_\pi \otimes \mc{H}_\rho$. This defines a norm which one can show is independent of the choice of faithful representation. The completion of $\alg{A} \odot \alg{B}$ with respect to this norm is called the \emph{minimal} or \emph{spatial} tensor product.\index{tensor product!spatial}\index{tensor product!minimal} As mentioned in Footnote~\ref{fn:tensnorm}, there may be other cross norms on $\alg{A} \odot \alg{B}$. For the applications we consider, however, this will not be the case and we can safely write $\alg{A} \otimes \alg{B}$ without creating confusion about which norm to use.
\index{tensor product!infinite|)}

\subsection{Quasi-local algebras}\label{sec:quasilocal}\index{quasi-local algebra|(}
With the help of the tensor product defined above we can now define the observable algebra for the systems we are interested in. The observable algebra generated by the $\sigma_n^i$ defined in the first section of this chapter is an example. Here we will give a general construction of such algebras and show how we can define a norm to obtain a $C^*$-algebra. Again, we restrict to the case of quantum spin systems. The theory is developed in full generality in, e.g.,~\cite{MR887100,MR1441540}.

The starting point is a set $\Gamma$ labelling the sites of the system. We will always assume that $\Gamma$ is countable. In many cases $\Gamma$ will have additional structure, for example if we take $\Gamma = \mathbb{Z}^d$. In that case there is a natural action of the translation group on the set of sites. This will induce an action of the  group of translations on the observable algebra, and allows us to talk about translation invariant systems. An important theme will always be \idx{locality}. This is concerned with operators acting only on a subset of the sites.

\begin{definition}\index{_Pf(Gamma)@$\mc{P}_f(\Gamma)$}
	Let $\Gamma$ be as above. We will write $\mc{P}(\Gamma)$ for the set of all subsets of $\Gamma$. Similarly, $\mc{P}_f(\Gamma)$ is the subset of all \emph{finite} subsets of $\Gamma$.
\end{definition}

For simplicity we will assume that at each site $x \in \Gamma$ there is a $d$-dimensional quantum spin system with observable algebra $\alg{A}(\{x\}) := M_d(\mathbb{C})$, where $d$ is independent of the site $x$. If $\Lambda \in \mc{P}_f(\Gamma)$ is a finite collection of sites, the corresponding algebra of observables is given by
\[
\alg{A}(\Lambda) = \bigotimes_{x \in \Lambda} \alg{A}(\{x\}) = \bigotimes_{x \in \Lambda} M_d(\mathbb{C}).
\]
Note that this is the usual construction if one takes $n$ copies of a quantum mechanical system. The $C^*$-algebra $\alg{A}(\Lambda)$ is understood as the algebra generated by all observables acting only on the sites in $\Lambda$ (and trivially elsewhere on the system).

In this way we obtain in a natural way a \emph{local net}\idx{local!net}. The net structure is as follows. If $\Lambda_1 \subset \Lambda_2$ with $\Lambda_2 \in \mc{P}_f(\Gamma)$, then there is a inclusion of the corresponding algebras. If $A_1 \otimes \cdots \otimes A_n \in \alg{A}(\Lambda_1)$ then $A_1 \otimes \cdots \otimes A_n \otimes I \cdots \otimes I$ is in $\alg{A}(\Lambda_2)$, where we inserted copies of the identity operator acting on the sites of $\Lambda_2\setminus\Lambda_1$. The assignment $\Lambda \mapsto \alg{A}(\Lambda)$ is \emph{local} in the following sense. If $\Lambda_1,\Lambda_2 \in \mc{P}_f(\Gamma)$ and $\Lambda_1 \cap \Lambda_2 = \emptyset$, then we have
\[
	[\alg{A}(\Lambda_1), \alg{A}(\Lambda_2)] = \{0\},
\]
where we understand $\alg{A}(\Lambda_i)$ to be embedded into a sufficiently large $\alg{A}(\Lambda)$ containing both $\Lambda_1$ and $\Lambda_2$, so that it makes sense to talk about the commutator.

The set of \emph{local} or \emph{strictly local}\index{observable!local} observables is then defined by
\[
\alg{A}_{loc} = \bigcup_{\Lambda \in \mc{P}_f(\Gamma)} \alg{A}(\Lambda).
\]
Note that $\alg{A}_{loc}$ is a $*$-algebra. There is a $C^*$-norm on this algebra, given by the norms on each $\alg{A}(\Lambda)$. The completion of $\alg{A}_{loc}$ with respect to this norm is the \idx{quasi-local algebra} $\alg{A}$.\footnote{In this construction we glossed over some mathematical technicalities. Essentially, one constructs the \emph{inductive limit} of a net of algebras. The inductive limit construction also works for non-trivial (but compatible with the net structure) inclusions of algebras. See e.g.~\cite{MR1743202,MR0074800} for more details.} This algebra can be interpreted as consisting of all those observables that can be approximated arbitrarily well (in norm) by observables in \emph{finite} regions of space. As we have mentioned before, the physical idea is that an experimenter has only access to a bounded region of the system.

If $A \in \alg{A}(\Lambda)$ for some $\Lambda \in \mc{P}_f(\Gamma)$ we say that $A$ is \emph{localised} in $\Lambda$. The smallest such $\Lambda$ such that $A \in \alg{A}(\Lambda)$ is called the \idx{support} of $A$ and denoted by $\operatorname{supp}(A)$.\index{_supp(A)@$\operatorname{supp}(A)$} For convenience we will set $\alg{A}(\emptyset) = \mathbb{C} I$, since multiples of the identity are contained in $\alg{A}(\Lambda)$ for all $\Lambda \in \mc{P}_f(\Gamma)$. Analogously we can talk about observables localised in \emph{infinite} regions. Let $\Lambda \subset \Gamma$, where $\Lambda$ is not necessarily finite. Then we can define
\[
\alg{A}(\Lambda) = \overline{\bigcup_{\Lambda_f \in \mc{P}_f(\Lambda)} \alg{A}(\Lambda_f) }^{\|\cdot\|}.
\]\index{_A(Lambda)@$\mathfrak{A}(\Lambda)$}
The bar means that we have to take the closure with respect to the norm, to obtain a $C^*$-algebra. Note that $\alg{A}(\Lambda)$ can be embedded into $\alg{A}$ in a natural way. We will always regard $\alg{A}(\Lambda)$ as a subalgebra of $\alg{A}$. Again, the interpretation of operators in $\alg{A}(\Lambda)$ is that they can be approximated arbitrarily well by strictly local operators localised in $\Lambda$. Particularly important with respect to duality are \emph{complements} of $\Lambda \in \mc{P}(\Gamma)$. We will denote the complement by $\Lambda^c$. The locality condition can be extended to the corresponding algebras, by continuity. That is, we have
\[
	[\alg{A}(\Lambda), \alg{A}(\Lambda^c)] = \{0\}
\]
for any $\Lambda \in \mc{P}(\Gamma)$.

\begin{exercise}\label{ex:alghaagdual}
	Consider a quantum spin system defined on $\mathbb{Z}^d$. Suppose $\Lambda \subset \mathbb{Z}^d$ is finite. In this exercise we will prove that $\alg{A}(\Lambda)^c = \alg{A}(\Lambda^c)$, where $\alg{A}(\Lambda)^c := \alg{A}(\Lambda)' \cap A = \alg{A}(\Lambda')$, and
\[
	\alg{A}(\Lambda^c)=\overline{\bigcup_{\Lambda_1\subset \mathbb{Z}^d\backslash \Lambda \textrm{ finite }} \alg{A}(\Lambda_1)}.
\] 
The bar means that we take the norm-completion of the algebra. Recall that $\alg{A}(\Lambda)$ is isomorphic to $M_n(\mathbb{C})$ for a suitable $n$. Hence there are $E_{ij} \in \alg{A}(\Lambda)$ such that $E_{ij}$ corresponds to the matrix with a one in the $(i,j)$ position and zero elsewhere. Define the following map, for $A \in \alg{A}$:
		\[
			\Tr_{\Lambda}(A) = \sum_{i,j=1}^n E_{ij} A E_{ji}.
		\]
This is the \idx{partial trace} over $\Lambda$.
\begin{enumerate}
	\item Let $\Lambda_f \supset \Lambda$ be finite. Show that
		\[
			\Tr_{\Lambda}(A \otimes B) = \Tr(A)B
		\]
		for all $A \in \alg{A}(\Lambda)$ and $B \in \alg{A}(\Lambda_f \setminus \Lambda)$. Here $\Tr(A)$ is the usual trace.
	\item Prove that $\operatorname{Tr}_\Lambda(\alg{A}) \subset \alg{A}(\Lambda^c)$.
	\item Use the map $\operatorname{Tr}_\Lambda$ to prove the claim.
\end{enumerate}
\end{exercise}

\begin{remark}
	The statement in the theorem can be extended to infinite regions, i.e. to the case where $\Lambda$ is not necessarily finite (cf. for example~\cite[Sect. 2.2]{MR1463825}). We not however that this is a result on the level of $C^*$-algebras. For example, if we consider an irreducible representation $\pi$ of $\alg{A}$, we might expect that $\pi(\alg{A}(\Lambda))'' = \pi(\alg{A}(\Lambda^c))'$, where we take the double commutant to obtain a von Neumann algebra (since the right-hand side always is one). However, this does \emph{not} need to be true if the region $\Lambda$ is infinite. If it is true, we say that \idx{Haag duality} holds, where usually it is required to hold only for regions $\Lambda$ of a certain shape, for example halflines. Haag duality plays an important role in the study of superselection sectors in local quantum physics, as we will discuss in the next chapter. Some results on Haag duality in specific models can be found in~\cite{haagdouble,MR2605849,MR2281418}.
\end{remark}

The construction so far is very general. In fact we see that it only depends on the dimensions of the local spin systems. Hence the algebra of observables alone does not contain much information. Rather, we need to consider additional concepts such as dynamics and states for these algebras. We will address this shortly.
\index{quasi-local algebra|)}

\subsection{Simplicity of the quasi-local algebra}
A closed two-sided ideal $\alg{I}$ (or simply an ideal\index{ideal}) of a $C^*$-algebra $\alg{A}$ is a closed subspace $\alg{I} \subset \alg{A}$ such that $AB$ and $BA$ are in $\alg{I}$ for all $A \in \alg{A}$ and $B \in \alg{I}$. A $C^*$-algebra is called \emph{simple}\index{C*algebra@$C^*$-algebra!simple} if its only closed two-sided ideals are $\alg{I} = \{0\}$ and $\alg{I} = \alg{A}$.

\begin{exercise}
	\label{ex:matsimple}
Let $\alg{A} = M_d(\mathbb{C})$. Show that $\alg{A}$ is simple. \emph{Hint:} show first that if $\alg{I}$ is a non-trivial ideal, then it contains matrices $E_{ii}$ which are zero everywhere except on the $i$-th place on the diagonal, where it is one.
\end{exercise}

The goal is to show that the quasi-local algebra defined above is simple. To this end, we first need to introduce the \idx{quotient algebra} $\alg{A}/\alg{J}$, where $\alg{J}$ is a closed two-sided ideal. As a vector space, it is the quotient of $\alg{A}$ by the vector space $\alg{J}$, hence it consists of equivalence classes $[A]$, $A \in \alg{A}$ with $[A] = [B]$ if and only if $A = B+J$ for some $J \in \alg{J}$. This can be turned into a Banach space by defining the norm as
\[
	\| [A] \| := \inf_{J \in \alg{J}} \| A + J \|.
\]
That this really defines a Banach space requires some work. The details can be found in, e.g.~\cite{MR719020}. The space can be made into a $*$-algebra by setting
\[
	[A]\cdot[B] = [AB], \quad [A]^* = [A^*].
\]
Because $\alg{J}$ is a closed two-sided ideal (in particular this implies that $\alg{J}^* = \alg{J}$), this is well defined. It turns out that $\alg{A}/\alg{J}$ is a $C^*$-algebra. That is, the $C^*$-identity holds for this $*$-operation and the norm defined above.

There is a natural $*$-homomorphism $\varphi$ from $\alg{A}$ into $\alg{A}/\alg{J}$: define $\varphi(A) = [A]$. Since it is a $*$-homomorphism between $C^*$-algebras it is automatically continuous by Remark~\ref{rem:morcont}. Note that if $\alg{J}$ is a closed two-sided ideal, we can always find a representation of $\alg{A}$ which has this ideal as its kernel. To this end, take any faithful representation $\pi$ of $\alg{A}/\alg{J}$, which always exists by Theorem~\ref{thm:gelfandnaimark}. Then $\pi \circ \varphi$ is a representation of $\alg{A}$ with kernel $\alg{J}$.

\begin{proposition}
	The quasi-local algebra $\alg{A}(\mathbb{Z}^d)$ for a quantum spin system is simple.\index{quasi-local algebra!simplicity}
\end{proposition}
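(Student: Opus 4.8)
The plan is to show that any nonzero closed two-sided ideal $\alg{J} \subseteq \alg{A} := \alg{A}(\mathbb{Z}^d)$ must equal $\alg{A}$; since $\alg{A}$ is unital, it is enough to produce the unit $I$ inside $\alg{J}$. The guiding idea is \emph{localisation}: by Exercise~\ref{ex:matsimple} each finite-region algebra $\alg{A}(\Lambda) \cong M_{d^{|\Lambda|}}(\mathbb{C})$ is simple, so it suffices to exhibit a single finite $\Lambda$ with $\alg{J} \cap \alg{A}(\Lambda) \neq \{0\}$. Indeed, $\alg{J}\cap\alg{A}(\Lambda)$ is then a nonzero closed two-sided ideal of the simple algebra $\alg{A}(\Lambda)$, hence equal to $\alg{A}(\Lambda)$, and in particular $I \in \alg{J}$.

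To find such a $\Lambda$ I would start from an arbitrary nonzero $a \in \alg{J}$ and pass to the positive element $b := a^*a \in \alg{J}$, which is nonzero with $\beta := \|b\| = \|a\|^2 > 0$. Since $\alg{A}_{loc}$ is dense in $\alg{A}$, choose a finite $\Lambda$ and $c \in \alg{A}(\Lambda)$ with $\|b - c\| < \beta/4$; replacing $c$ by $\tfrac12(c+c^*)$ we may take $c = c^*$ without increasing the error, as $b$ is self-adjoint. Because $b \geq 0$ the bound $\|b-c\| < \beta/4$ forces $c \geq -\tfrac{\beta}{4}I$ while $\|c\| > \tfrac{3\beta}{4}$, so $\operatorname{spec}(c) \subseteq [-\tfrac{\beta}{4},\|c\|]$ with $\|c\| \in \operatorname{spec}(c)\cap(\tfrac{\beta}{2},\infty)$, and hence the spectral projection $e := \mathbf{1}_{(\beta/2,\infty)}(c)$ — which lies in $\alg{A}(\Lambda)$, this being a finite-dimensional $C^*$-algebra — is nonzero. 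On the range of $e$ one has $c \geq \tfrac{\beta}{2}$, so in the corner $C^*$-algebra $e\,\alg{A}(\Lambda)\,e$ (unital with unit $e$),
\[
 ebe \;=\; ece + e(b-c)e \;\geq\; \tfrac{\beta}{2}e - \tfrac{\beta}{4}e \;=\; \tfrac{\beta}{4}e \;>\; 0 ,
\]
so $ebe$ is invertible there. Writing $r := (ebe)^{-1} \in e\,\alg{A}(\Lambda)\,e$ and using $re = r$ we get $e = r(ebe) = rbe$, and since $r,e \in \alg{A}$, $b \in \alg{J}$, and $\alg{J}$ is a two-sided ideal, $e = rbe \in \alg{J}$. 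Thus $0 \neq e \in \alg{J}\cap\alg{A}(\Lambda)$, and the localisation argument above concludes the proof. (The same reasoning works verbatim for any countable $\Gamma$ with a fixed finite local dimension.)

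The step I expect to need the most care is exactly this passage from the non-local ideal element $b$ to a genuinely \emph{local} element of $\alg{J}$. A naive ``approximate $b$ by a local $c$ and use simplicity of $\alg{A}(\Lambda)$'' fails, because the approximant $c$ itself need not lie in $\alg{J}$, and sharpening the approximation forces a larger matrix block $\alg{A}(\Lambda)$, so the relevant constants chase their tail. The spectral-cutoff trick circumvents this: compressing by the projection $e$ isolates a finite-dimensional corner on which $b$ is uniformly bounded below, and it is precisely the ideal property $\alg{A}\cdot\alg{J}\cdot\alg{A}\subseteq\alg{J}$ combined with invertibility in that corner that converts $b$ into the manifestly local element $e\in\alg{J}$. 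An alternative route — which I would mention but not pursue — constructs, for each finite $\Lambda$, the conditional expectation $E_\Lambda:\alg{A}\to\alg{A}(\Lambda)$ induced by the normalised trace on the complement, observes that $E_\Lambda(A)\to A$ in norm and (via the Weyl clock-and-shift averaging presentation of the matrix trace) that $E_\Lambda$ maps $\alg{J}$ into $\alg{J}$, and then runs the same endgame; this is conceptually cleaner but requires the extra machinery of conditional expectations.
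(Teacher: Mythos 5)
Your proof is correct, and it takes a genuinely different route from the paper's, although both share the same opening observation: $\alg{J}\cap\alg{A}(\Lambda)$ is an ideal of the simple matrix algebra $\alg{A}(\Lambda)$ (Exercise~\ref{ex:matsimple}), so it is either $\{0\}$ or all of $\alg{A}(\Lambda)$. The two arguments diverge on how to rule out the case $\alg{J}\cap\alg{A}(\Lambda)=\{0\}$ for every finite $\Lambda$. The paper passes to the quotient $\alg{A}/\alg{J}$ and a faithful representation thereof; composing gives a representation of $\alg{A}$ with kernel $\alg{J}$ which is injective, hence isometric (Remark~\ref{rem:morcont}), on each $\alg{A}(\Lambda)$ and therefore on $\alg{A}_{loc}$. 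This yields $\|A-J\|\geq\|A\|$ for local $A$ and all $J\in\alg{J}$, and density of $\alg{A}_{loc}$ then forces $\alg{J}=\{0\}$. You run the contrapositive directly: starting from any nonzero $b=a^*a\in\alg{J}$, a local self-adjoint approximant $c$ of $b$ has a spectral cutoff projection $e\in\alg{A}(\Lambda)$ for which $ebe$ is bounded below in the corner, and the ideal property then turns the inverse identity $e=rbe$ into membership $e\in\alg{J}\cap\alg{A}(\Lambda)$. The paper's argument leans on the quotient-$C^*$-algebra construction and the automatic isometry of injective $*$-morphisms; yours leans on functional calculus in a finite-dimensional corner. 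The trade is roughly even; your version has the small advantage of exhibiting a concrete projection in the ideal rather than proceeding by contradiction through a norm estimate, and the paper's has the advantage of avoiding the $\beta/4$ bookkeeping entirely.

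One small slip: you write $r=(ebe)^{-1}\in e\,\alg{A}(\Lambda)\,e$, but $b$ is not local, so $ebe$ and its inverse $r$ live in the corner $e\alg{A}e$ of the full quasi-local algebra, not in $e\alg{A}(\Lambda)e$. This is harmless, since $r\in\alg{A}$ is all you need to conclude $e=rbe\in\alg{J}$. Apart from that, the estimates (the self-adjoint symmetrisation of $c$, the lower bound $c\geq-\tfrac{\beta}{4}I$, the fact that $\|c\|>\tfrac{3\beta}{4}$ forces spectrum above $\tfrac{\beta}{2}$, and the resulting $ebe\geq\tfrac{\beta}{4}e$) all check out.
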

\begin{proof}
	Suppose that $\alg{I}$ is an ideal of $\alg{A}(\mathbb{Z}^d)$. If $\Lambda \in \mc{P}_f(\mathbb{Z}^d)$, it is easy to check that $\alg{I} \cap \alg{A}(\Lambda)$ is an ideal of $\alg{A}(\Lambda)$. Hence the intersection is either $\{0\}$ or $\alg{A}(\Lambda)$, by Exercise~\ref{ex:matsimple} and because $\bigotimes_{i=1}^n M_d(\mathbb{C}) \cong M_{d^n}(\mathbb{C})$. In the latter case, it contains the identity and hence $\alg{I} = \alg{A}(\mathbb{Z}^d)$ and we are done. We may therefore assume that $\alg{I} \cap \alg{A}(\Lambda) = \{0\}$ for all $\Lambda \in \mc{P}_f(\mathbb{Z}^d)$.

	Let $\pi \circ \varphi$ be a representation of $\alg{A}$ with kernel $\alg{J}$ as constructed above. Note that $\alg{A}(\Lambda) \cap \alg{J} = \{0\}$ by the argument above, for all $\Lambda \in \mc{P}_f(\Lambda)$. This implies that $\pi$ restricted to $\alg{A}_{loc}$ is injective. Hence $\|\pi(\varphi(A))\| = \|A\|$ for all $A \in \alg{A}_{loc}$. In particular, $\|\varphi(A)\| = \|A\|$. Hence, by definition of the norm on $\alg{A}/\alg{J}$, for each non-zero $A \in \alg{A}_{loc}$ it follows that
\[
	\|A-J\| \geq \|A\|
\]
for all $J \in \alg{J}$. But if $J \in \alg{J}$, there is a sequence $A_n \in \alg{A}_{loc}$ such that $A_n \to J$ in norm. Since $\|A_n - J\| \geq \|A_n\|$ as above and the left-hand side converges to zero, it follows that $\|A_n\|$ must converge to zero. Hence it follows that $J$ must be zero. 
\end{proof}

Note that the kernel of a representation of a $C^*$-algebra is always a closed. We thus have the following very useful corollary, where the isometric property follows from a general result on injective $*$-homomorphisms between $C^*$-algebras.
\begin{corollary}
	Every non-zero representation of the quasi-local algebra is faithful (and hence isometric).\index{representation!faithful}
\end{corollary}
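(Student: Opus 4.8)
The plan is to deduce this immediately from the simplicity of $\alg{A}(\mathbb{Z}^d)$ established in the preceding proposition, together with the elementary observation that the kernel of a $*$-homomorphism between $C^*$-algebras is a closed two-sided ideal.

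First I would let $\pi : \alg{A}(\mathbb{Z}^d) \to \alg{B}(\mc{H})$ be a non-zero representation and consider $\ker(\pi) = \{ A \in \alg{A}(\mathbb{Z}^d) : \pi(A) = 0 \}$. I would check that this is a two-sided ideal: it is a linear subspace because $\pi$ is linear, and if $A \in \alg{A}(\mathbb{Z}^d)$ and $B \in \ker(\pi)$ then $\pi(AB) = \pi(A)\pi(B) = 0$ and similarly $\pi(BA) = 0$, so $AB, BA \in \ker(\pi)$. It is norm-closed because $\pi$ is continuous (recall from Remark~\ref{rem:morcont} that $\|\pi(A)\| \leq \|A\|$), so it is the preimage of the closed set $\{0\}$ under a continuous map. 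Hence $\ker(\pi)$ is a closed two-sided ideal of the simple algebra $\alg{A}(\mathbb{Z}^d)$, so $\ker(\pi) = \{0\}$ or $\ker(\pi) = \alg{A}(\mathbb{Z}^d)$. Since $\pi$ is non-zero, the latter is impossible, so $\ker(\pi) = \{0\}$ and $\pi$ is faithful.

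For the parenthetical isometry claim I would simply invoke the general fact, already cited in the statement preceding the corollary and used implicitly in Remark~\ref{rem:morcont}, that an injective $*$-homomorphism between $C^*$-algebras is automatically isometric; applying this to the faithful $\pi$ gives $\|\pi(A)\| = \|A\|$ for all $A$.

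There is essentially no obstacle here: the only mildly delicate point is making sure one is entitled to say that $\ker(\pi)$ is closed, which needs continuity of $\pi$, and that simplicity of $\alg{A}(\mathbb{Z}^d)$ is exactly the statement established in the proposition just proved. Both are available, so the corollary is a short formal consequence.
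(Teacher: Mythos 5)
Your proof is correct and is exactly the argument the paper has in mind: the text immediately preceding the corollary notes that the kernel of a representation is always a closed ideal, so simplicity of the quasi-local algebra forces the kernel to be trivial, and the isometry claim follows from the general fact about injective $*$-homomorphisms. You have simply spelled out the details that the paper leaves implicit.
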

This says that for any representation $\pi$ of the quasi-local algebra $\alg{A}$ we take, we can identify $\alg{A}$ with its image $\pi(\alg{A})$. In this case it is not necessary to do the whole Gel'fand-Naimark construction as in the proof of Theorem~\ref{thm:gelfandnaimark}: it suffices to take any state of $\alg{A}$ and do the GNS representation, to identify $\alg{A}$ with a closed subalgebra of $\alg{B}(\mc{H})$ for some Hilbert space $\mc{H}$ as the image of $\alg{A}$ under this representation.

\subsection{Inequivalent representations}
As we have seen, there are many inequivalent representations of the observable algebra of quantum spin systems. Since representations are often obtained by applying the GNS construction to some state on an algebra, it is useful to have an criterion for unitary equivalence of the corresponding representations only in terms of the states. For states on a quasi-local algebra there is such an explicit criterion.
\begin{proposition}\label{prop:stateinequiv}
	Let $\alg{A} := \alg{A}(\Gamma)$ be the quasi-local observable algebra of some spin system and suppose that $\omega_1$ and $\omega_2$ are pure states on $\alg{A}$. Then the following criteria are equivalent:\index{representation!inequivalent}\index{state!pure}
\begin{enumerate}
	\item The corresponding GNS representations $\pi_1$ and $\pi_2$ are equivalent.
	\item For each $\varepsilon > 0$, there is a $\Lambda_\varepsilon \in \mc{P}_f(\Gamma)$ such that
		\[
			| \omega_1(A) - \omega_2(A) | < \varepsilon \| A \|,
		\]
		for all $A \in \alg{A}(\Lambda)$ with $\Lambda \in \mc{P}_f(\Lambda_\varepsilon^c)$.
\end{enumerate}
Here $\Lambda^c_\varepsilon$ is the complement of $\Lambda_{\varepsilon}$ in $\Gamma$.
\end{proposition}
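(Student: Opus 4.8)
The plan is to prove the two implications separately: $(1)\Rightarrow(2)$ via Kadison's transitivity theorem, and $(2)\Rightarrow(1)$ via the tensor factorisation of $\alg{A}$ over a finite region together with the dichotomy ``equivalent or disjoint'' for irreducible representations.

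\medskip
\textbf{$(1)\Rightarrow(2)$.} Since $\omega_2$ is pure, its GNS representation $\pi_2$ is irreducible (by the theorem relating purity and irreducibility above). Using the intertwining unitary $U:\mc{H}_1\to\mc{H}_2$ I would first realise $\omega_1$ as a vector state in $\pi_2$, namely $\omega_1(A)=\langle\Psi,\pi_2(A)\Psi\rangle$ with $\Psi:=U\Omega_1$ a unit vector. Because $\pi_2$ is irreducible, Kadison's transitivity theorem yields a \emph{unitary} $u\in\alg{A}$ with $\pi_2(u)\Omega_2=\Psi$, so that $\omega_1=\omega_2\circ\Ad u^{*}$, i.e.\ $\omega_1(A)=\omega_2(u^{*}Au)$ for all $A\in\alg{A}$. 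Now given $\varepsilon>0$ I pick $\delta>0$ with $2\delta(2+\delta)<\varepsilon$ and, using density of $\alg{A}_{loc}$, a $v\in\alg{A}(\Lambda_\varepsilon)$ for some $\Lambda_\varepsilon\in\mc{P}_f(\Gamma)$ with $\|u-v\|<\delta$. If $A\in\alg{A}(\Lambda)$ with $\Lambda\in\mc{P}_f(\Lambda_\varepsilon^{c})$, then $v$ commutes with $A$, so $v^{*}Av=(v^{*}v)A$ and
\[
\omega_1(A)-\omega_2(A)=\omega_2\bigl(u^{*}Au-v^{*}Av\bigr)+\omega_2\bigl((v^{*}v-I)A\bigr).
\]
Estimating $\|u^{*}Au-v^{*}Av\|$ and $\|v^{*}v-I\|=\|v^{*}v-u^{*}u\|$ directly and using $|\omega_2(X)|\le\|X\|$ bounds each summand by $\delta(2+\delta)\|A\|$, whence $|\omega_1(A)-\omega_2(A)|<\varepsilon\|A\|$, which is (2).

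\medskip
\textbf{$(2)\Rightarrow(1)$.} Here I would fix some $\varepsilon<2$, apply (2) to get a finite $\Lambda:=\Lambda_\varepsilon$, and note that $\omega_1-\omega_2$ is continuous, so the inequality extends from the local algebras inside $\Lambda^{c}$ to all of $\alg{A}(\Lambda^{c})$: $\|(\omega_1-\omega_2)|_{\alg{A}(\Lambda^{c})}\|\le\varepsilon<2$. Write $\alg{A}=\alg{A}(\Lambda)\otimes\alg{A}(\Lambda^{c})$ with $\alg{A}(\Lambda)\cong M_N(\mathbb{C})$, $N=d^{|\Lambda|}$. Since every unital representation of $M_N(\mathbb{C})$ is a multiple of the identity one, I identify $\mc{H}_i\cong\mathbb{C}^{N}\otimes\mc{K}_i$ so that $\pi_i=\operatorname{id}_{M_N}\otimes\sigma_i$ for a representation $\sigma_i$ of $\alg{A}(\Lambda^{c})$ on $\mc{K}_i$; a short commutant computation gives that $\pi_i$ is irreducible iff $\sigma_i$ is, and $\pi_1\cong\pi_2$ iff $\sigma_1\cong\sigma_2$. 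The restriction $\omega_i|_{\alg{A}(\Lambda^{c})}$ is the normal state $b\mapsto\Tr(\tau_i\,\sigma_i(b))$, where $\tau_i$ is the reduced density operator of $\Omega_i$ (the partial trace of $\ketbra{\Omega_i}$ over the $\mathbb{C}^{N}$ factor). Suppose $\sigma_1\not\cong\sigma_2$; being irreducible they are then disjoint, so $(\sigma_1\oplus\sigma_2)(\alg{A}(\Lambda^{c}))''=\alg{B}(\mc{K}_1)\oplus\alg{B}(\mc{K}_2)$ and the functionals $\Tr(\tau_1\sigma_1(\cdot))$, $\Tr(\tau_2\sigma_2(\cdot))$ are carried by the orthogonal central projections $1\oplus0$ and $0\oplus1$; their difference therefore has norm $2$ on the bicommutant, and by Kaplansky density together with normality (and simplicity of $\alg{A}(\Lambda^{c})$, so that $\sigma_i$ is isometric) also $\|(\omega_1-\omega_2)|_{\alg{A}(\Lambda^{c})}\|=2$, contradicting $\varepsilon<2$. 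Hence $\sigma_1\cong\sigma_2$ and $\pi_1\cong\pi_2$.

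\medskip
The step I expect to be the real work is the closing estimate of $(2)\Rightarrow(1)$: converting ``$\sigma_1$ and $\sigma_2$ disjoint'' into the sharp statement $\|(\omega_1-\omega_2)|_{\alg{A}(\Lambda^{c})}\|=2$. This needs the structure of the bicommutant of a direct sum of disjoint irreducibles and the fact that a normal functional attains its norm already on the (Kaplansky-dense) unit ball of the $C^{*}$-subalgebra. The other ingredients -- uniqueness of the irreducible representation of $M_N(\mathbb{C})$, the commutant bookkeeping for the tensor splitting, and the elementary norm estimates in $(1)\Rightarrow(2)$ -- are routine; the only external black box is Kadison's transitivity theorem, and in the degenerate case $\Lambda_\varepsilon=\emptyset$ one may instead simply invoke the known equivalence ``$\|\omega_1-\omega_2\|<2$ iff $\pi_{\omega_1}\cong\pi_{\omega_2}$'' for pure states (Kadison--Ringrose).
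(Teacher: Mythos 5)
The paper does not prove this proposition itself --- it cites \cite[Cor.~2.6.11]{MR887100} and explicitly omits the argument --- so there is no in-text proof to compare against; judged on its own merits, your proof is correct. It is also different in flavour from the cited Bratteli--Robinson treatment, which is formulated at the level of quasi-equivalence of general (locally normal, factor) states; you instead give a direct, elementary argument that exploits the strict tensor factorisation $\alg{A}\cong\alg{A}(\Lambda_\varepsilon)\otimes\alg{A}(\Lambda_\varepsilon^c)$ with $\alg{A}(\Lambda_\varepsilon)\cong M_N(\mathbb{C})$ available in the spin-system setting. For $(1)\Rightarrow(2)$: producing a unitary $u\in\alg{A}$ with $\omega_1=\omega_2\circ\Ad u^*$ via Kadison transitivity, approximating it by a local $v$ with $\|u-v\|<\delta$, and exploiting $v^*Av=(v^*v)A$ when $A$ is supported outside $\supp(v)$ yields exactly your bound $|\omega_1(A)-\omega_2(A)|\le 2\delta(2+\delta)\|A\|$, which checks out. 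For $(2)\Rightarrow(1)$: the commutant bookkeeping giving $\pi_i\cong\operatorname{id}_{M_N}\otimes\sigma_i$, the equivalence $\pi_1\cong\pi_2\Leftrightarrow\sigma_1\cong\sigma_2$, and the passage from disjointness of $\sigma_1,\sigma_2$ to $\|(\omega_1-\omega_2)|_{\alg{A}(\Lambda_\varepsilon^c)}\|=2$ are all correct. The step you flag as the crux can also be phrased without an explicit appeal to Kaplansky density: since $\sigma_i$ is irreducible, every eigenvector of the reduced density matrix $\tau_i$ is cyclic for $\sigma_i$, so the GNS representation of $\omega_i|_{\alg{A}(\Lambda_\varepsilon^c)}$ is a direct sum of copies of $\sigma_i$; the two restricted states are therefore disjoint, and disjoint states of norm one always satisfy $\|\omega-\omega'\|=2$. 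Two minor remarks. First, you implicitly use that $\alg{A}(\Lambda_\varepsilon^c)$ is simple so that $\sigma_1\oplus\sigma_2$ is isometric; the paper proves simplicity only for $\alg{A}(\mathbb{Z}^d)$, but the identical UHF argument covers $\alg{A}(\Lambda_\varepsilon^c)$, so this is harmless. Second, the separate fallback for $\Lambda_\varepsilon=\emptyset$ is unnecessary: your main argument already handles it with $N=1$, at which point it reduces precisely to the Kadison--Ringrose statement you quote.
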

The second condition says that for every $\varepsilon > 0$, the states $\omega_1$ and $\omega_2$ are close to each other, as long as we restrict to observables outside some (fixed!) region depending on $\varepsilon$. In other words, if we restrict to observables ``far away'', the states look the same. We will omit the proof. The statement is a special case of~\cite[Cor. 2.6.11]{MR887100}.\footnote{The statement there is more general, but for pure states, quasi-equivalence is the same as unitary equivalence. Moreover, states of the quasi-local algebra of quantum spin systems are always locally normal (a term we will not define here), because the observable algebras of finite regions are finite dimensional in that case.}

This result can be used to give an alternative proof of the inequivalence of the representations $\pi^+$ and $\pi^-$ defined at the beginning of this chapter. To see this, first note that the generators $\sigma_n^i$ satisfy the relations of the Pauli matrices. Hence the operators acting on a single site $n$ generate the observable algebra $\alg{A}(\{n\}) = M_2(\mathbb{C})$. We can then define the corresponding quasi-local algebra $\alg{A}(\mathbb{Z})$ as above. We can then define states $\omega_+$ and $\omega_-$ on $\alg{A}$ and show that the corresponding GNS representations are $\pi^+$ and $\pi^-$. To define these states, it is enough to first define them for \emph{strictly} local observables $A \in \alg{A}(\Lambda)$, and show that this can be done in a consistent way. That is, if $A \in \alg{A}(\Lambda)$ with $\Lambda$ finite, then $A$ can be identified with a matrix in $A \in M_2(\mathbb{C}) \otimes \dots \otimes M_2(\mathbb{C})$. We can then define a state on $\alg{A}(\Lambda)$ by setting $\omega_\Lambda(A) = \Tr(\rho_\Lambda A)$ for some density operator $A$. The consistency conditions means that if $\Lambda \subset \Lambda'$, then we should have $\Tr(\rho_\Lambda A) = \Tr(\rho_{\Lambda'} A \otimes I \otimes \cdots \otimes I)$. That is, if we include $\alg{A}(\Lambda)$ into a bigger algebra, the value of the state should not change. Since $\alg{A}_{loc}$ is dense in $\alg{A}$, we can extend the state defined in this way on the local algebra to a state $\omega_+$ on $\alg{A}$. Since we already know that these representations are irreducible, it follows that the states must be pure. We can then use the proposition to prove inequivalence of the representations.
\begin{exercise}
Find suitable states $\omega_+$ and $\omega_-$ and use the proposition to verify that the representations $\pi^+$ and $\pi^-$ are indeed not equivalent.
\end{exercise}

\subsection{Translation symmetry}\index{translation symmetry}\index{symmetry!translation}
In many cases there is a natural translation symmetry acting on the system. The example that will be most relevant for us is when $\Gamma = \mathbb{Z}^d$ for some positive integer $d$. Note that $\mathbb{Z}^d$ acts on itself by addition (or translation). If $\Lambda \subset \Gamma$ we write $\Lambda + x$ for the same subset, translated by $x \in \mathbb{Z}^d$. This induces an \idx{action} on the quasi-local algebra $\alg{A}(\mathbb{Z}^d)$. Assume first for simplicity that $A(x) \in \alg{A}(\{x\}) \cong M_n(\mathbb{C})$ for some $x \in \mathbb{Z}^d$. Hence the support of $A(x)$ is only the site $x$. Since by construction the observable algebra is the same at any site, there is a corresponding $A(x+y) \in \alg{A}(\{x+y\})$ for every $y \in \mathbb{Z}^d$. We define $\tau_y(A(x)) = A(x+y)$. This construction works for any local operator $A \in \alg{A}(\Lambda)$. Hence for every $y \in \mathbb{Z}^d$ we can define a map $\tau_y : \alg{A}_{loc} \to \alg{A}_{loc}$. Clearly $\tau_y$ is an automorphism of the local algebra, hence $\|\tau_y(A)\| = \|A\|$ for all $A \in \alg{A}_{loc}$. We can then extend $\tau_y$ to an automorphism of $\alg{A}(\mathbb{Z}^d)$, by continuity and since $\alg{A}_{loc}$ is dense in $\alg{A}(\mathbb{Z}^d)$. The automorphism has the property that $\tau_y(\alg{A}(\Lambda)) = \alg{A}(\Lambda+y)$ for all $\Lambda \in \mc{P}(\mathbb{Z}^d)$.

This construction can be done for any $y \in \mathbb{Z}^d$. Moreover, it is easy to check that $\tau_{x+y}(A) = \tau_x(\tau_y(A))$ and that $\tau_0(A) = A$ for all $A \in \alg{A}$. Hence $\tau: \mathbb{Z}^d \to \Aut(\alg{A}(\mathbb{Z}^d))$ is a group homomorphism into the group of automorphisms. A translation invariant state on $\alg{A}(\mathbb{Z}^d)$ can then be defined as a state $\omega$ such that $\omega(\tau_x(A)) = \omega(A)$ for all $x \in \mathbb{Z}^d$ and $A \in \alg{A}$.\index{state!invariant} As an example we can consider the infinite spin chain discussed at the beginning of this chapter. If we choose all spins in the up direction, this gives a vector in $\mc{H}^+$ and hence a state on the observable algebra. Clearly this state is translation invariant. The uniqueness of the GNS construction then implies that there is a unitary representation $x \mapsto U(x)$ implementing the translations in the GNS representations. That is, we have $\pi(\alpha_x(A)) = U(x)\pi(A)U(x)^*$ for all $x \in \mathbb{Z}^d$ and $A \in \alg{A}$, cf. Exercise~\ref{ex:invstategns}.

Another useful property is that if we move any local operator away far enough, it will commute with any other local operator. In fact we can say a bit more:
\begin{theorem}
Consider the quasi-local algebra $\alg{A}(\mathbb{Z}^d)$ with $x \mapsto \tau_x$ the natural action of the translation group. Then for each $A,B \in \alg{A}(\mathbb{Z}^d)$ we have
\begin{equation}
	\label{eq:asympab}
	\lim_{|x| \to \infty} \| [\tau_x(A), B] \| = 0.
\end{equation}
Here $|x| \to \infty$ means that $x$ goes to infinity in any direction.
\end{theorem}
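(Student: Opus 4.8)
The plan is to first prove the statement for strictly local $A$ and $B$, where it is essentially immediate from the locality of the net, and then extend to arbitrary elements of $\alg{A}(\mathbb{Z}^d)$ by a density/$\varepsilon$-argument using that each $\tau_x$ is an isometry.

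First I would treat the case $A \in \alg{A}(\Lambda_A)$ and $B \in \alg{A}(\Lambda_B)$ with $\Lambda_A, \Lambda_B \in \mc{P}_f(\mathbb{Z}^d)$. Since $\tau_x$ maps $\alg{A}(\Lambda_A)$ onto $\alg{A}(\Lambda_A + x)$, the operator $\tau_x(A)$ is localised in $\Lambda_A + x$. Because $\Lambda_A$ and $\Lambda_B$ are finite, there is an $R > 0$ such that $(\Lambda_A + x) \cap \Lambda_B = \emptyset$ whenever $|x| > R$ (here I use $|x| \to \infty$ in the sense of the statement, i.e. $x$ leaving every bounded set). For such $x$, locality of the net gives $[\alg{A}(\Lambda_A + x), \alg{A}(\Lambda_B)] = \{0\}$, hence $[\tau_x(A), B] = 0$. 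So the limit is trivially zero for local $A, B$.

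For the general case, fix $\varepsilon > 0$ and use that $\alg{A}_{loc}$ is dense in $\alg{A}(\mathbb{Z}^d)$ to choose $A', B' \in \alg{A}_{loc}$ with $\|A - A'\| < \varepsilon$ and $\|B - B'\| < \varepsilon$. Writing
\[
[\tau_x(A), B] = [\tau_x(A - A'), B] + [\tau_x(A'), B - B'] + [\tau_x(A'), B'],
\]
and using submultiplicativity of the norm together with $\|\tau_x(C)\| = \|C\|$ for all $C$ (each $\tau_x$ is an automorphism, hence an isometry), I get
\[
\| [\tau_x(A), B] \| \leq 2 \|A - A'\| \, \|B\| + 2 \|A'\| \, \|B - B'\| + \| [\tau_x(A'), B'] \|.
\]
Since $A', B'$ are local, the last term vanishes once $|x|$ is large enough, so $\limsup_{|x|\to\infty} \|[\tau_x(A), B]\| \leq 2\varepsilon \|B\| + 2(\|A\| + \varepsilon)\varepsilon$. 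As $\varepsilon > 0$ was arbitrary, the right-hand side can be made as small as desired, which proves $\lim_{|x|\to\infty}\|[\tau_x(A), B]\| = 0$.

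I do not expect any real obstacle here: the only points requiring care are that the approximants $A', B'$ may be localised in large regions (so $R$ depends on them, which is fine since $x$ is sent to infinity after $A', B'$ are fixed) and that the coefficient $\|A'\|$ in the estimate stays controlled, which it does via $\|A'\| \leq \|A\| + \varepsilon$. The argument is the standard extension-by-continuity of an identity known on a dense subalgebra, combined with the isometry property of $\tau_x$.
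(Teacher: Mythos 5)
Your proof is correct and follows essentially the same approach as the paper: reduce to strictly local $A,B$ where the commutator vanishes outright by locality once $|x|$ is large enough, then extend by a density argument using the isometry $\|\tau_x(C)\| = \|C\|$ and submultiplicativity, controlling $\|A'\|$ via the reverse triangle inequality. The paper's proof organizes the $\varepsilon$-estimate slightly differently (splitting the commutator in two stages rather than three terms at once), but the idea and the resulting bound are the same.
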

\begin{proof}
We first suppose that $A$ and $B$ are strictly local. In that case, there are $\Lambda_A, \Lambda_B \in \mc{P}_f(\mathbb{Z}^d)$ such that $A \in \alg{A}(\Lambda_A)$ and $B \in \alg{A}(\Lambda_B)$. Note that $\tau_x(A) \in \alg{A}(\Lambda_A+x)$. Since both $\Lambda_A$ and $\Lambda_B$ are finite, there is some integer $N$ such that $(\Lambda_A+x) \cap \Lambda_B = \emptyset$ for all $|x| > N$. Hence by locality we have $[\tau_x(A), B] = 0$ for all $|x| > N$ and hence equation~\eqref{eq:asympab} holds.

As for the general case, let $A,B \in \alg{A}$ and let $\varepsilon > 0$. Since the local algebra is dense in $\alg{A}$, there are $A_\varepsilon$ and $B_\varepsilon$ that are strictly local
 and such that $\|A-A_\varepsilon\| < \varepsilon$ (and similarly for $B$). By the argument above, $[\tau_x(A_\varepsilon), B_\varepsilon] = 0$ for $x$ large enough. In addition we have that $\|\tau_x(A)\| = \|A\|$ since $\tau_x$ is an automorphism for each $x$. Hence we have
\[
\begin{split}
	\lim_{|x| \to \infty} \|[\tau_x(A), B]\| &= \lim_{|x|\to \infty} \| [\tau_x(A-A_\varepsilon), B] + [\tau_x(A_\varepsilon),B] \| \\
		& \leq 2 \varepsilon \|B\| + \lim_{|x|\to \infty} \| [\tau_x(A_\epsilon), B-B_\varepsilon] + [\tau_x(A_\varepsilon), B_\varepsilon] \|  \\
		& \leq 2 \varepsilon (\|A_\varepsilon\| + \|B\|).
\end{split}
\]
To obtain the last line we used that $[\tau_x(A_\varepsilon), B_\varepsilon] = 0$ for $x$ sufficiently large. The claim follows by noting that $\|A_\varepsilon\| \leq \|A\| + \varepsilon$ by the reverse triangle inequality.
\end{proof}
The property in the proof is called \idx{asymptotic abelianness}. The first applications of asymptotic abelianness were studied in~\cite{MR0205096,MR0203510}. One useful consequence is that the set of translation invariant states is actually a \emph{simplex}. This means that any translational invariant states can be obtained \emph{uniquely} as a combination of extremal translational invariant states.

\section{Dynamics}
So far the discussion has been completely general, in that we have only treated kinematical aspects up to now. For example, the quasi-local algebra essentially depends only on the number of degrees of freedom at each site. What differentiates different systems of, say, spin-1/2 particles are their \idx{dynamics}. In other words, how the observables evolve over time. One way to specify the dynamics is to specify Hamiltonian.

Recall that a Hamiltonian\index{Hamiltonian} is a (possibly unbounded) self-adjoint operator acting on some Hilbert space $\mc{H}$.\footnote{We gloss over details such as the domain of definition here.} If $\psi \in \mc{H}$ is a wave function, its time evolution is governed by the Schr\"odinger equation (in units of $\hbar = 1$)
\[
	i \partial_t \psi = H \psi.
\]
The solution to this equation is given by $\psi_t = \exp(-i t H) \psi$, where $\exp(-i t H)$ is given by the usual power series expansion
\[
\exp(-i t H) = \sum_{k=0}^{\infty} \frac{(-i t H)^k}{k!}.
\]
This converges (under suitable conditions on $H$) when acting on so-called \emph{analytic} vectors $\psi$, which span a dense set. It might not always be easy to identify a set of analytic vectors for which the power expansion converges (for example, one needs to make sense of $H^2 \psi$ for such $\psi$), so alternatively one can apply the spectral calculus for unbounded operators as explained in Remark~\ref{rem:1paramexp}.

This allows us to define a unitary $U(t) = \exp(i t H)$ in $\alg{B}(\mc{H})$.\footnote{Note the \emph{positive} sign in the exponent. The choice of sign is just a matter of convention. We choose the sign in such a way that the time evolution of an observable $A$ will be $\alpha_t(A) = U(t) A U(t)^*$. For the opposite choice this would lead to $U(t)^* A U(t)$.} Hence for the time evolved vector we have $\psi_t = U(t)^* \psi$. From the power series expansion and the self-adjointness of $H$ it follows that $U(t)^* = U(-t)$, $U(0) = I$ and $U(t+s) = U(t)U(s)$. The latter condition says that evolving over a time $t+s$ is the same as first evolving for a time $s$ and then for a time $t$. In short, we have a one-parameter group $t \mapsto U(t)$ of unitaries. Moreover, one can show that this group is strongly continuous, in the sense that\index{strongly continuous group}
\[
	\lim_{t \to 0} \| U(t) \psi - \psi \| = 0
\]
for any vector $\psi \in \mc{H}$. Hence the Hamiltonian induces a group of unitaries that implement the time evolution on the vectors in the Hilbert space. A famous result by Stone shows that in fact \emph{any} such evolution can be obtained from a ``Hamiltonian''. The result can be stated as follows:

\begin{theorem}[Stone]\label{thm:stone}\index{Stone's theorem}
Let $t \mapsto U(t)$ by a strongly continuous one-parameter group of unitaries acting on a Hilbert space $\mc{H}$. Then there is a densely defined self-adjoint operator $H$ acting on $\mc{H}$ such that $U(t) = \exp(i t H)$ for  all $t \in \mathbb{R}$.
\end{theorem}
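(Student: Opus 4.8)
The plan is to construct $H$ as the \emph{infinitesimal generator} of the group and then verify it has the required properties. First I would define
\[
D(H) = \Big\{ \psi \in \mc{H} : \lim_{t \to 0} \tfrac{1}{t}\big(U(t)\psi - \psi\big) \text{ exists}\Big\},
\]
and for $\psi \in D(H)$ set $H\psi = -i \lim_{t\to 0} \tfrac{1}{t}(U(t)\psi - \psi)$, so that formally $\tfrac{d}{dt}U(t)\psi = iHU(t)\psi$ and the identity $U(t) = \exp(itH)$ is at least plausible. The first substantive step is to show that $D(H)$ is dense. For this I would use a smoothing argument: for $\psi \in \mc{H}$ and $f \in C_c^\infty(\mathbb{R})$ define $\psi_f = \int_{\mathbb{R}} f(s)\, U(s)\psi\, ds$, where the integral is the Riemann integral of the (by strong continuity) continuous $\mc{H}$-valued function $s \mapsto f(s) U(s)\psi$. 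The change of variables $U(t)\psi_f = \int f(s-t)U(s)\psi\, ds$ together with differentiation under the integral sign (legitimate since $f$ is smooth with compact support, so the difference quotients of $f$ converge uniformly) shows $\psi_f \in D(H)$ with $H\psi_f = i\int f'(s) U(s)\psi\, ds$. Taking for $f$ an approximate identity concentrating at $0$ and using strong continuity once more gives $\psi_f \to \psi$, so $D(H)$ is dense.

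Next I would record that $U(t)$ leaves $D(H)$ invariant and that $\tfrac{d}{dt}U(t)\psi = iHU(t)\psi = iU(t)H\psi$ for $\psi \in D(H)$; this is immediate from the group law $U(t+h) = U(h)U(t) = U(t)U(h)$. Symmetry of $H$ then follows from unitarity: for $\psi,\phi \in D(H)$ we have $\langle U(t)\psi,\phi\rangle = \langle \psi, U(-t)\phi\rangle$, and differentiating at $t=0$ gives $\langle iH\psi,\phi\rangle = \langle \psi, -iH\phi\rangle$, i.e. $\langle H\psi,\phi\rangle = \langle \psi, H\phi\rangle$. The heart of the proof, and the step I expect to be the main obstacle, is upgrading symmetry to \emph{self-adjointness}. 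I would invoke the standard criterion that a closed symmetric operator with $\operatorname{Ran}(H \pm iI)$ dense is self-adjoint (closedness of the generator being a routine consequence of $U(t)\psi - \psi = i\int_0^t U(s)H\psi\,ds$ for $\psi\in D(H)$). Suppose $\phi \perp \operatorname{Ran}(H - iI)$; then for every $\psi \in D(H)$, and hence for $U(t)\psi \in D(H)$ as well, we get $\langle (H - iI)U(t)\psi, \phi\rangle = 0$. Writing $g(t) = \langle U(t)\psi,\phi\rangle$, this says $\langle HU(t)\psi,\phi\rangle = i\,g(t)$, while $g'(t) = \langle iHU(t)\psi,\phi\rangle = -g(t)$. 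Thus $g(t) = g(0)e^{-t}$; but $|g(t)| \le \|\psi\|\,\|\phi\|$ is bounded, forcing $g(0) = \langle\psi,\phi\rangle = 0$, whence $\phi = 0$ by density of $D(H)$. An identical computation with $e^{+t}$ handles $\operatorname{Ran}(H + iI)$, so $H$ is self-adjoint.

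Finally I would identify $U(t)$ with $\exp(itH)$. By the spectral theorem for the self-adjoint operator $H$, the functional calculus furnishes a strongly continuous one-parameter unitary group $V(t) := \exp(itH)$ whose generator is again $H$. To see $U = V$, fix $\psi \in D(H)$ and differentiate $t \mapsto V(-t)U(t)\psi$: using that $U(t)\psi \in D(H)$, that $V(-t)$ commutes with $H$ on $D(H)$, and the product rule, the derivative is $-iHV(-t)U(t)\psi + V(-t)\,iHU(t)\psi = 0$, so $V(-t)U(t)\psi = \psi$ for all $t$, i.e. $U(t)\psi = V(t)\psi$. Since $D(H)$ is dense and $U(t),V(t)$ are bounded, $U(t) = V(t) = \exp(itH)$ for all $t \in \mathbb{R}$. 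The points that will need the most care are the differentiation under the integral sign in the density argument and the closedness of the generator needed to apply the self-adjointness criterion; both are standard but a little technical.
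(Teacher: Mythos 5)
The paper does not prove this theorem; it simply cites~\cite[Thm.\ 5.6.36]{MR719020}, so there is no in-paper argument to compare against. Your proposal is the standard textbook proof (generator via the strong derivative, density of the domain by mollification, symmetry from unitarity, self-adjointness via the basic criterion $\overline{\operatorname{Ran}(H\pm iI)}=\mc{H}$ using the ODE argument $g'=\mp g$ with bounded $g$, and uniqueness by differentiating $t\mapsto V(-t)U(t)\psi$), and it is correct. The only step you under-flag is the product rule in the final identification: since $H$ is unbounded, the derivative of $V(-t)U(t)\psi$ should be computed by the splitting $V(-t-h)[U(t+h)-U(t)]\psi + [V(-t-h)-V(-t)]U(t)\psi$, using uniform boundedness of the unitaries in the first term and $U(t)\psi\in D(H)$ in the second; the version you wrote, bootstrapped from ``the product rule,'' glosses over exactly the point where $H$'s unboundedness must be tamed. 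An alternative that avoids this entirely is to differentiate the scalar function $t\mapsto\langle V(t)\phi,U(t)\psi\rangle$ for $\phi,\psi\in D(H)$, which only needs symmetry of $H$ and the invariance $V(t)D(H)\subset D(H)$.
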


We will not give a proof here. It can be found, for example, in~\cite[Thm. 5.6.36]{MR719020}.

Instead of looking at how the states evolve, we can also look at how the observables change (the \idx{Heisenberg picture}). Let $t \mapsto U(t)$ be a one-parameter group of unitaries as above. Let $\alg{A} = \alg{B}(\mc{H})$. Then we obtain a one-parameter group of automorphism $t \mapsto \alpha_t \in \Aut(\alg{B}(\mc{H}))$ by setting
\[
	\alpha_t(A) = U(t) A U(t)^*.
\]
Both viewpoints give the same physics. Indeed, for the expectation values of an observable $A$ we have at a time $t$
\[
\langle A \rangle_t = \langle \psi_t, A \psi_t \rangle = \langle U(t)^* \psi, A U(t)^* \psi \rangle = \langle \psi, \alpha_t(A) \psi \rangle.
\]
Hence the Schr\"odinger and the Heisenberg picture give the same results.\index{Heisenberg picture}

Ultimately we want to find a description of dynamics in the abstract setting of $C^*$-algebras. If $\alg{A} = \alg{B}(\mc{H})$ this will not lead to much new. Note in particular that $U(t) \in \alg{A}$. Automorphisms of this form are called \emph{inner}.\index{automorphism!inner} An automorphism of the algebra of bounded operators on a Hilbert space is always inner, as the next exercise shows, but in general this is \emph{not} true.
\begin{exercise}
Let $\mc{H}$ be a Hilbert space and consider the $C^*$-algebra $\alg{B}(\mc{H})$. Suppose that $\psi \in \mc{H}$ is a unit vector. We define the projection on $\psi$, which we write as $\ket{\psi}\bra{\psi}$, by
\[
	\ket{\psi}\bra{\psi} \xi := \langle \psi, \xi \rangle \psi,
\]
for $\psi \in \mc{H}$. Note that any one-dimensional projection is of this form.
\begin{enumerate}
	\item Show that a projection $P \in \alg{B}(\mc{H})$ has one-dimensional range if and only if $PAP = c_A P$ for any $A \in \alg{B}(\mc{H})$ and $c_A \in \mathbb{C}$ may depend on $A$.
	\item Let $\alpha$ be an automorphism of $\alg{A}$. Show that $\alpha(\ket{\psi}\bra{\psi})$ is a one-dimensional projection.
	\item Define a map $U : \alg{A} \psi \to \mc{H}$ by $U A \psi = \alpha(A) \psi'$, where $\psi'$ is such that $\alpha(\ket{\psi}\bra{\psi}) = \ket{\psi'}\bra{\psi'}$. Show that $U$ is well-defined and extends to a unitary in $\alg{B}(\mc{H})$.
	\item Show that $\alpha(A) = UAU^*$ for all $A \in \alg{B}(\mc{H})$.
\end{enumerate}
\end{exercise}
Suppose that $\alpha_t$ is a strongly continuous one-parameter group of automorphism on $\alg{A} = \alg{B}(\mathcal{H})$. Then the exercise above shows that for each $t \in \mathbb{R}$, there is unitary $U_t$ such that $\alpha_t(A) = U_t A U_t^*$ for all $A \in \alg{B}(\mathcal{H})$.

\begin{exercise}
	Show that we can choose $U_t$ in such a way that $U_0 = I$ and for all $t,s \in \mathbb{R}$, we have
	\begin{equation}\label{eq:projective}
		U_{s+t} = c(s,t) U_s U_t,
	\end{equation}
	where $c(t,s)$ is a \emph{2-cocycle} of $\mathbb{R}$. That is, it is a map $c: \mathbb{R} \times \mathbb{R} \to \mathbb{T}$ which satisfies
	\[
		c(r, s+t) c(s,t) = c(r,s) c(r+s, t).
	\]
	\emph{Hint:} recall that if $UA = AU$ for all $A \in \alg{B}(\mathcal{H})$, then $U = \lambda I$ for some $\lambda \in \mathbb{C}$.
\end{exercise}
A map $t \mapsto U_t$ as in the exercise is called a \emph{projective representation} of $\mathbb{R}$. If we want to apply Stone's theorem to show that $\alpha_t$ is generated by a Hamiltonian, we would have to show that we can choose $c(s,t) = 
1$ for all $s$ and $t$ (so that $U_{s+t} = U_s U_t$ and $U_t$ is a proper representation) and that $t \mapsto U_t$ is strongly continuous. It is not easy to see that this is the case, but it follows from a result of Kadison~\cite[Thm. 4.13]{MR0169073}. Hence Stone's theorem and this result tell us that we can always find a Hamiltonian description of a time evolution on $\alg{B}(\mathcal{H})$. Conversely, defining a Hamiltonian gives the time evolution on the observables via $\alpha_t(A) = e^{i t H} A e^{-i t H}$.

If $\alg{A}$ is \emph{not} of the form $\alg{B}(\mc{H})$, the matter is more complicated. For one, since we do not have a natural Hilbert space any more, it is not clear how to define the Hamiltonian at all. Of course, one can take the GNS representation of a state to obtain a Hilbert space, but the result will depend on the state. In general the Hamiltonian is not bounded, hence it cannot be an element of the $C^*$-algebra. Moreover, it is \emph{not} true any more that every automorphism of $\alg{A}$ is inner. That is, there may not exist unitaries \emph{that are contained in $\alg{A}$} such that $\alpha(A) = UAU^*$ for all $A \in \alg{A}$. Therefore we take the one-parameter group $t \mapsto \alpha_t$ as one of the fundamental objects.\index{_alphat@$\alpha_t$}\index{one-parameter group} In the next sections we will see how we can describe the dynamics in a purely $C^*$-algebraic framework, and how this relates to the picture above.

\subsection{Derivations}
The Hamiltonian can be interpreted as the generator of time translations. It turns out that in the context of $C^*$-algebras, generators of one-parameter groups (or even semi-groups) of automorphisms can be conveniently expressed by derivations.\index{derivation}\index{_delta(A)@$\delta(A)$}
\begin{definition} A (symmetric) \emph{derivation} of a $C^*$-algebra $\alg{A}$ is a linear map from a $*$-subalgebra $D(\delta)$ of $\alg{A}$ into $\alg{A}$ such that
	\begin{enumerate}
		\item $\delta(A^*) = \delta(A)^*$ for $A \in D(\delta)$,
		\item $\delta(AB) = \delta(A)B + A \delta(B)$ for $A,B \in D(\delta)$.
	\end{enumerate}
The algebra $D(\delta)$ is called the \emph{domain} of $\delta$.
\end{definition}
For the systems that we are interested in, one usually has that the domain of the derivation is given by the strictly local operators of the quasi-local algebra, i.e. $D(\delta) = \alg{A}_{loc}$. Note that $\alg{A}_{loc}$ is dense in $\alg{A}$. This is also necessary, since in the end we want to be able to define a map that acts on all of $\alg{A}$. In general $\delta$ is an unbounded map, in the sense that there is no $C > 0$ such that $\delta(A) < C \|A\|$ for all $A \in D(\delta)$, hence one encounters similar concerns as in the case of unbounded operators on Hilbert spaces, discussed in Section~\ref{sec:unbounded}. In particular, it is generally not possible to extend $\delta$ to all of $\alg{A}$.

Derivations can be obtained as the \idx{generator} of one-parameter groups of automorphisms of a $C^*$-algebra $\alg{A}$. More precisely, let $t \mapsto \alpha_t$ be a one-parameter group of automorphisms of $\alg{A}$. Moreover, assume that the action is strongly continuous, in that the map $t \mapsto \alpha_t(A)$ is continuous in the norm topology for each $A \in \alg{A}$. Define $\delta(A)$ by
\begin{equation}
	\delta(A) = \lim_{t \to 0} \left( \frac{\alpha_t(A) - A}{t} \right)
	\label{eq:derivation}
\end{equation}
if this limit exists (in the norm topology). The set of all $A \in \alg{A}$ for which this limit exists is denoted by $D(\delta)$. Note that the limit is just the derivative of $\alpha_t(A)$ at $t = 0$.

\begin{lemma}\label{lem:derivation}
The map $\delta(A)$ defined above is a symmetric derivation. Moreover, we have that $\alpha_t(D(\delta)) \subset D(\delta)$.
\end{lemma}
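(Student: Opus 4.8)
The plan is to verify the three defining conditions in turn --- that $D(\delta)$ is a $*$-subalgebra, that $\delta$ is symmetric, and that $\delta$ satisfies the Leibniz rule --- and then to handle the invariance $\alpha_t(D(\delta)) \subset D(\delta)$ separately. Throughout I would lean on two facts: each $\alpha_t$ is an isometry, since it is an automorphism of a $C^*$-algebra (so $\|\alpha_t(A)\| = \|A\|$), and strong continuity gives $\alpha_t(A) \to A$ as $t \to 0$ for every fixed $A$.

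First I would record that $D(\delta)$ is a linear subspace closed under the involution. Linearity of each $\alpha_t$ and of limits gives the subspace property. For the involution, note that $(\alpha_t(A^*) - A^*)/t = \bigl((\alpha_t(A) - A)/t\bigr)^*$ because $\alpha_t$ commutes with $*$; since $B \mapsto B^*$ is norm-continuous ($\|B^*\| = \|B\|$), it follows that $A \in D(\delta)$ implies $A^* \in D(\delta)$ and $\delta(A^*) = \delta(A)^*$. This is the symmetry condition.

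Next I would establish the Leibniz rule, the one step where a little care is needed because a product of two difference quotients is involved. For $A, B \in D(\delta)$, inserting $\pm A\alpha_t(B)$ and using $\alpha_t(AB) = \alpha_t(A)\alpha_t(B)$ gives
\[
\frac{\alpha_t(AB) - AB}{t} = \frac{\alpha_t(A) - A}{t}\,\alpha_t(B) + A\,\frac{\alpha_t(B) - B}{t}.
\]
As $t \to 0$ the two difference quotients converge to $\delta(A)$ and $\delta(B)$ respectively, while $\alpha_t(B) \to B$ with $\|\alpha_t(B)\| = \|B\|$ uniformly bounded. Since multiplication is jointly continuous on norm-bounded sets --- from $\|XY - X'Y'\| \leq \|X - X'\|\|Y\| + \|X'\|\|Y - Y'\|$ --- the right-hand side converges to $\delta(A)B + A\delta(B)$. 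Hence $AB \in D(\delta)$ and $\delta(AB) = \delta(A)B + A\delta(B)$, which also shows $D(\delta)$ is a subalgebra, so $\delta$ is a symmetric derivation.

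Finally I would prove invariance of the domain. Fix $A \in D(\delta)$ and $s \in \mathbb{R}$. Using the group law $\alpha_{t+s} = \alpha_t \circ \alpha_s = \alpha_s \circ \alpha_t$,
\[
\frac{\alpha_t(\alpha_s(A)) - \alpha_s(A)}{t} = \alpha_s\!\left(\frac{\alpha_t(A) - A}{t}\right),
\]
and since $\alpha_s$ is bounded, hence continuous, the right-hand side converges to $\alpha_s(\delta(A))$ as $t \to 0$. So the defining limit for $\alpha_s(A)$ exists, i.e. $\alpha_s(A) \in D(\delta)$, with the bonus identity $\delta \circ \alpha_s = \alpha_s \circ \delta$ on $D(\delta)$. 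The only mild obstacle in the whole argument is the product limit in the Leibniz step, and it is dispatched by the uniform bound $\|\alpha_t(B)\| = \|B\|$; the rest is direct manipulation of difference quotients.
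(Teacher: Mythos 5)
Your proof is correct and follows essentially the same route as the paper's: checking that $D(\delta)$ is a $*$-subalgebra, deriving the Leibniz rule from $\alpha_t(AB)=\alpha_t(A)\alpha_t(B)$, and using the group law with continuity of $\alpha_s$ for domain invariance. The only difference is that you spell out the telescoping decomposition and the uniform bound $\|\alpha_t(B)\|=\|B\|$ in the Leibniz step, where the paper simply invokes ``the product rule for derivatives''; the underlying argument is the same.
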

\begin{proof}
	We first show that $D(\delta)$ is a $*$-subalgebra. It is clear that it is a vector space, since the expression~\eqref{eq:derivation} is linear in $A$. Note that it follows that $\delta$ is a linear map. Suppose that $A \in \delta(A)$. Since $\alpha_t(A^*) = \alpha_t(A)^*$ and $\|A^*\| = \|A\|$ it follows that if $A \in D(\delta)$, then $A^* \in D(\delta)$ and $\delta(A^*) = \delta(A)^*$. To show the ``Leibniz rule'', suppose that $A,B \in D(\delta)$. Note that $\alpha_t(AB) = \alpha_t(A) \alpha_t(B)$, hence we can use the product rule for derivatives. This leads to $\delta(AB) = A\delta(B) + \delta(A) B$. It follows that $D(\delta)$ is a $*$-subalgebra of $\alg{A}$ and that $\delta$ is a derivation. 

To show the last property, suppose that $A \in D(\delta)$ and let $s \in \mathbb{R}$. Then
\[
\begin{split}
	0 &= \lim_{t \to 0} \left\| \frac{\alpha_t(A) - A}{t} - \delta(A) \right\|  \\
		& = \lim_{t \to 0} \left\| \alpha_s\left(\frac{\alpha_t(A) - A}{t}\right) - \alpha_s(\delta(A)) \right\| \\
		& = \lim_{t \to 0} \left\| \left(\frac{\alpha_t(\alpha_s(A)) - \alpha_s(A)}{t}\right) - \alpha_s(\delta(A)) \right\|, 
\end{split} 
\]
where it is used that $\alpha_s$ preserves the norm. Hence $\alpha_t(D(\delta)) \subset D(\delta)$ for any $t$. The argument shows moreover that $\alpha_s(\delta(A)) = \delta(\alpha_s(A))$.
\end{proof}

The next exercise shows how derivations are related to the Hamiltonian description discussed earlier.
\begin{exercise}
	\label{ex:derivation}
	Suppose that $\alg{A} = \alg{B}(\mc{H})$ and let $\alpha_t$ be a one-parameter group of automorphisms given by $\alpha_t(A) = e^{i t H} A e^{-i t H}$ for some self-adjoint $H$. Show that the corresponding derivation is
\[
	\delta(A) = i [H, A].
\]
Verify explicitly that it is indeed a symmetric derivation.
\emph{Note:} If $H$ is unbounded, $\delta(A)$ will not be defined for each $A$ (since the limit will not always exist). In this exercise you can ignore these convergence issues.
\end{exercise}

If we are given a derivation $\delta$, the question is if (and how) it generates a one-parameter group of automorphisms. For $A \in D(\delta)$, we can set
\begin{equation}
	\label{eq:genauto}
\alpha_t(A) := \exp(t \delta)(A) = A + t \delta(A) + \frac{t^2 \delta^2(A)}{2} + \dots,
\end{equation}
where we set $\delta^0(A) = A$. Note that this definition makes sense only if we can interpret $\delta^2(A)$. That is, $\delta(A)$ should be in the domain of $D(\delta)$ again. Even if this is the case, it is not at all clear if the expression even converges (because $\delta$ need not be bounded), and if it allows us to define an automorphism on all of $\alg{A}$. Moreover, one can ask oneself in which topology this should converge, and what the continuity properties of $t \mapsto \alpha_t$ are. Such questions are studied in great detail in~\cite[Ch. 3]{MR887100} and~\cite[Ch. 6]{MR1441540}. Here we will content ourselves with answering these questions for a relatively simple (but important) class of derivations, associated to spin systems with local interactions.

First, however, consider the case of Exercise~\ref{ex:derivation}, where $\delta(A) = i[H,A]$. We will argue, at least heuristically, why equation~\eqref{eq:genauto} is the right definition. To this end, we first collect the first few terms of the expansion of the exponential, to obtain (with $A \in D(\delta)$)
\[
\exp(t \delta)(A) = A + i t [H, A] - \frac{t^2}{2} [H, [H,A]] + \mc{O}(t^3).
\]
On the other hand, if we expand $e^{i t H} A e^{-i t H}$ up to second order in $t$, we get
\[
\begin{split}
e^{i t H} A e^{-i t H} &= (I + i t H - \frac{t^2}{2} H^2 + \dots) A (I - i t H - \frac{t^2}{2} H^2 + \dots) + \mc{O}(t^3) \\
&= A - i t A H + i t H A - \frac{t^2}{2} H^2 A + t^2 H A H  - \frac{t^2}{2} A H^2 + \mc{O}(t^3) \\
&= A + i t [H,A] - \frac{t^2}{2} [H,[H,A]] + \mc{O}(t^3),
\end{split}
\]
hence we recover the automorphism from which the derivation $\delta$ was obtained.

\subsection{Finite range interactions}
The question one has to answer is which derivation is the relevant one when studying a particular system. In case of a lattice system, $\delta(A) = i [H,A]$ for some Hamiltonian $H$ is usually not really adequate. If there are infinitely many sites, $H$ will be unbounded in general and the commutator does not make any sense \emph{a priori}. For example, consider a 1D infinite spin chain. Suppose that at each site the energy is given by the value of the spin in the $z$ direction. Hence in this case, $H = \sum_{n = -\infty}^\infty \sigma^z_n$. This is clearly not an element of the quasi-local algebra $\alg{A}$, since it cannot be approximated in norm by local operators. Another approach might be to first define a Hilbert space and a representation of $\alg{A}$ on this Hilbert space, and then define $H$ on this Hilbert space. But this defeats the purpose of the algebraic approach. 

The way out is that we can look at the interactions in a \emph{finite} part of the system. From the model one wants to study it is usually clear what the interactions should be \emph{within} this finite region. If the size of the region is taken to be bigger and bigger, one could hope that this converges in a suitable sense. For example, consider again the spin chain example above. Let $A$ be a local observable. Then there is some integer $N$ such that $[\sigma_n^z,A] = 0$ for all $n \geq N$, by locality. Hence the following definition makes sense for all local observables $A \in \alg{A}_{loc}$: 
\[
	\delta(A) = \lim_{n \to \infty} i \left[ \sum_{k=-n}^n \sigma_k^z, A \right].
\]
This defines a derivation with domain $D(\delta) = \alg{A}_{loc}$. To consider more general examples we first need the following definition.

\begin{definition}
Consider a quantum spin system defined on a set $\Gamma$ of sites, with quasi-local algebra $\alg{A}$. An \idx{interaction} $\Phi$ is a map $\Phi : \mc{P}_f(\Gamma) \to \alg{A}$ such that for each $\Lambda$ we have that $\Phi(\Lambda) \in \alg{A}(\Lambda)$ and $\Phi(\Lambda)$ is self-adjoint.
\end{definition}
As the name suggests, an interaction describes what the interactions are between the spin systems sitting at the sites of some finite set $\Lambda$. If $\Phi$ is an interaction, we define the \emph{local Hamiltonians}\index{local!Hamiltonian} by
\[
	H_\Lambda = \sum_{X \subset \Lambda} \Phi(X),
\]
where the sum is over all subsets of $\Lambda$. Note that $H_\Lambda = H_\Lambda^*$ and that $H_\Lambda \in \alg{A}(\Lambda)$.

To define what it means for an interaction to be \emph{short range}\index{interaction!short range}, we have to assume some additional structure on the set $\Gamma$ of sites. In particular, we assume that there is a metric $d(x,y)$ on $\Gamma$ such that $d(x,y) \geq 1$ if $x \neq y$ and such that for any $M > 0$ and $x \in \Gamma$, the ball of radius $M$ around $x$, that is
\[
	B_M(x) = \{ y \in \Gamma : d(x,y) \leq M \},
\]
contains at most $N_M$ elements, where $N_M$ is a constant independent of $x$. This condition ensures that each point $x$ has only finitely many neighbours and the maximum number of neighbours is bounded from above. The example we will usually consider is where $\Gamma = \mathbb{Z}^d$ and the metric is the \emph{taxicab} or \emph{Manhattan} metric, defined as
\[
d(x,y) = \sum_{i=1}^d |x_i-y_i|.
\]
It is easy to see that this metric fulfils all the conditions. If $\Lambda \subset \Gamma$ is any subset, we define the diameter of $\Lambda$ as $\operatorname{diam}(\Lambda) = \sup_{x,y \in \Gamma} d(x,y)$. An interaction $\Phi$ is then set to be of finite range if there is some constant $c_\Phi > 0$ such that $\Phi(\Lambda) = 0$ whenever $\operatorname{diam}(\Lambda) > c_\Phi$. For finite range interactions there are, as the name implies, no interactions between sites that are at least a distance $d_\Phi$ apart.\footnote{At least, these sites do not interact directly. The effect of an operation at site $x$ can still propagate through the system to site $y$, but this will take time. This will be studied extensively in Chapter~\ref{ch:lr}.} We will also consider only \emph{bounded} interactions, for which $\|\Phi \| := \sup_{x \in \Gamma} \sum_{\Lambda \in \mc{P}_f(\Gamma), x \in \Lambda} \|\Phi(\Lambda)\| < \infty$.\index{interaction!bounded}

\begin{remark}
In the remainder we will mainly deal with bounded finite range interactions. These conditions can be relaxed in many cases to interactions whose strength decays sufficiently fast as the distance between two sites increases. Depending on the precise conditions one chooses, however, this can be done only at the expense of weaker statements on the convergence of the dynamics. For example, the dynamics may only converge for certain states. We refer to Chapter 6 of Bratteli and Robinson for more details~\cite{MR1441540}.
\end{remark}

Finally, suppose that we also have a translation symmetry of $\Gamma$. This induces a group of automorphisms $\tau_x$ on $\alg{A}$, as discussed above. An interaction is set to be translation invariant if $\Phi(\Lambda+x) = \tau_x(\Phi(\Lambda))$ for all $x \in \mathbb{Z}^d$ and $\Lambda \in \mc{P}_f(\Gamma)$, where $\mathbb{Z}^d$ is the translation group. Note that a translation invariant short range interaction is automatically bounded. 

\begin{example}[The quantum Heisenberg model]\index{Heisenberg XXZ model}
Consider $\Gamma = \mathbb{Z}$ with at each site of $\Gamma$ a spin-1/2 system. Define an interaction by $\Phi({n}) = -h \sigma^z_n$, with $h$ some real constant, and 
\[
\Phi(\{n,n+1\}) = -\frac{1}{2} \left(J_x \sigma^x_n \sigma^x_{n+1} + J_y \sigma^y_n \sigma^y_{n+1} + J_z \sigma^z_n \sigma^z_{n+1}\right)
\]
for real constants $J_x,J_y$ and $J_z$. In all other cases we set $\Phi(\Lambda) = 0$. Note that $\Phi$ is a finite range interaction. Moreover, it is clear that $\Phi$ is translation invariant. For $\Lambda \subset \Gamma$ of the form $[n,m+1] \cap \mathbb{Z}$ for integers $n,m$, the corresponding local Hamiltonians are
\[
H_\Lambda = -\frac{1}{2} \sum_{k=n}^m \left(J_x \sigma^x_k \sigma^x_{k+1} + J_y \sigma^y_k \sigma^y_{k+1} + J_z \sigma^z_k \sigma^z_{k+1}\right) - h \sum_{k=n}^{m+1} h \sigma^z_k.
\]
The last term can be interpreted as an external field, while the terms with $J_i$ describe a coupling between nearest neighbour spins. If $J_x = J_y \neq J_z$, this model is called the \emph{Heisenberg XXZ model}. An interesting fact is that the behaviour of the system strongly depends on the couplings $J_i$.
\end{example}

\begin{example}
Consider again $\Gamma = \mathbb{Z}$. A generalisation of the previous example can be obtained as follows. Choose functions $j_i : \mathbb{Z} \to \mathbb{R}$. Define an interaction $\Phi$ by
\[
\begin{split}
	\Phi(\{0\}) &= h \sigma^z_0, \\
	\Phi(\{0,n\}) &= \sum_{k=1}^3 j_k(n) \sigma_0^k \sigma_n^k,
\end{split}
\]
and by requiring that the interaction is translation invariant (i.e., $\Phi(\Lambda+x) = \tau_x(\Phi(\Lambda))$ for all $\Lambda \in \mc{P}_f(\Gamma)$). This describes a model of spins in an external magnetic field, together with two-body interactions whose strength is determined by the functions $j_i$. Note that if the support of $j_i$ is not bounded, there are interactions between pairs of spins that are arbitrarily far away. If this is the case, the interaction is clearly not of finite range.
\end{example}

Now let $\Phi$ be a bounded finite range interaction of range $c_\Phi$. As before, we can define the local Hamiltonians $H_\Lambda$. Now suppose that $A \in \alg{A}_{loc}$ is localised in some set $\Lambda_A$. Because the interaction is of finite range, there are only finitely many finite subsets $\Lambda$ of $\Gamma$ such that $[\Phi(\Lambda), A] \neq 0$. For example, consider the set of all points of $\Gamma$ with distance at most $c_\Phi$ from $\Lambda_A$. Then any finite subset $\Lambda$ of $\Gamma$ such that $\Phi(\Lambda)$ does not commute with $A$ must necessarily be contained in this set, by the finite range assumption. Hence if we consider $\sum_{\Lambda \in \mc{P}_f(\Gamma)} [\Phi(\Lambda), A]$, only a finite number of terms do not vanish and this sum converges to an element in $\alg{A}_{loc}$. It follows that
\[
	\delta(A) = \lim_{\Lambda \to \infty} i [H_\Lambda, A]
\]
defines a symmetric derivation\index{derivation} with domain $D(\delta) = \alg{A}_{loc}$. Here with $\Lambda \to \infty$ we mean the following. Consider a sequence $\Lambda_1 \subset \Lambda_2 \subset \dots$ of finite subsets of $\Gamma$ such that $\bigcup_{n} \Lambda_n = \Gamma$. Then with $\lim_{\Lambda \to \infty}$ we mean that the limit $\lim_{n \to \infty}$ converges independently of the sequence $\Lambda_n$ that was chosen.\footnote{Alternatively we can consider $\mc{P}_f(\Gamma)$ as the index set of a net $\Lambda \mapsto i [H_\Lambda, A]$, where the order relation is by inclusion. The convergence described in the text is the convergence of this net in the sense of topology.}

The goal is now to show that $\exp(t \delta)(A)$ converges for $A \in D(\delta)$ and $t$ sufficiently small. Note that $\delta(A) \in D(\delta)$ for all $A \in D(\delta)$, hence expressions of the form $\delta^n(A)$ are well-defined. The problem, however, is that $\operatorname{supp}\delta(A)$ may be bigger than the support of $A$. A close look at the definition of $\delta$ and the local Hamiltonians show $H_\Lambda$ that this means that for $\delta^n(A)$, in principle more and more interaction terms can contribute. The boundedness of $\Phi$ will ensure that the growth of the support will not be too big, which in turn makes it possible to make estimates on the norm $\|\delta^n(A)\|$, which will allow us to conclude that the exponential $\exp(t \delta)$ converges. The ideas behind the proof date back to Robinson~\cite{MR0228246}.

\begin{lemma}
	Let $\Phi$ be a bounded interaction of finite range and write $\delta$ for the corresponding derivation. Then the algebra $D(\delta) = \alg{A}_{loc}$ is analytic for $\delta$, in the sense that for each $A \in D(\delta)$, $\exp(t \delta)(A)$ is analytic as a function of $t$, for $t$ sufficiently small.\index{derivation!analytic algebra}
\end{lemma}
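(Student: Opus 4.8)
The plan is to reduce everything to a norm estimate on the iterated derivatives $\delta^n(A)$ and then let the $1/n!$ in the exponential series absorb the growth. So I would fix $A \in \alg{A}_{loc} = D(\delta)$ and a finite set $\Lambda_A$ with $\supp(A) \subseteq \Lambda_A$. Since for every $B \in \alg{A}_{loc}$ the defining expression $\delta(B) = i \sum_{X \in \mc{P}_f(\Gamma)} [\Phi(X), B]$ is in fact a \emph{finite} sum (only finitely many $X$ fail to commute with $B$, by the finite range assumption) and stays in $\alg{A}_{loc}$, iterating linearity of $\delta$ gives
\[
	\delta^n(A) = i^n \sum_{X_1, \dots, X_n} [\Phi(X_n), [\Phi(X_{n-1}), \cdots [\Phi(X_1), A] \cdots ]],
\]
a finite sum over $n$-tuples of finite subsets of $\Gamma$. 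A term vanishes unless $\operatorname{diam}(X_k) \leq c_\Phi$ for every $k$, and unless each $X_k$ meets the support of the operator on which it acts, that is $X_k \cap (\Lambda_A \cup X_1 \cup \cdots \cup X_{k-1}) \neq \emptyset$; the latter uses $\supp([\Phi(X_{k-1}), \cdots [\Phi(X_1), A] \cdots ]) \subseteq \Lambda_A \cup X_1 \cup \cdots \cup X_{k-1}$.

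To estimate the surviving terms I would use $\| [\Phi(X), B] \| \leq 2 \|\Phi(X)\| \, \|B\|$ and carry out the sums from the innermost index $X_n$ outward, exploiting two facts: for any point $x$ one has $\sum_{X \ni x} \|\Phi(X)\| \leq \|\Phi\|$ (boundedness of the interaction), and any $X$ with $\operatorname{diam}(X) \leq c_\Phi$ has at most $N_{c_\Phi}$ points, so that the "support so far" satisfies $|\Lambda_A \cup X_1 \cup \cdots \cup X_{k-1}| \leq |\Lambda_A| + (k-1) N_{c_\Phi}$. Summing over an $X_k$ required to meet a set $S$ contributes a factor at most $\|\Phi\| \, |S|$, and performing this for $k = n, n-1, \dots, 1$ yields
\[
	\| \delta^n(A) \| \leq (2\|\Phi\|)^n \|A\| \prod_{k=0}^{n-1} \big( |\Lambda_A| + k N_{c_\Phi} \big) \leq (2\|\Phi\|)^n \|A\| \, \big( |\Lambda_A| + n N_{c_\Phi} \big)^n .
\]

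These bounds would finish the proof. With $c_n := \|\delta^n(A)\| / n!$, the ratio test (using $\big(1 + \tfrac{N_{c_\Phi}}{|\Lambda_A| + n N_{c_\Phi}}\big)^n \to e$) shows that $\sum_n c_n t^n$ has radius of convergence at least $R := (2 e \|\Phi\| N_{c_\Phi})^{-1} > 0$, a value \emph{independent} of $A$. Hence for $|t| < R$ the series $\exp(t\delta)(A) = \sum_{n=0}^\infty \frac{t^n}{n!} \delta^n(A)$ converges absolutely in the Banach space $\alg{A}$, and an absolutely norm-convergent power series with coefficients in $\alg{A}$ defines an analytic $\alg{A}$-valued function of $t$ on that disc; this is exactly the claimed analyticity of $\alg{A}_{loc}$ for $\delta$.

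The one delicate point — and the main obstacle — is the combinatorial bookkeeping in the second step. It is essential to work with the nested-commutator expansion of $\delta^n$ and to track the concrete set $\Lambda_A \cup X_1 \cup \cdots \cup X_{k-1}$, whose cardinality grows only \emph{additively}, by at most $N_{c_\Phi}$ per step. If one instead iterated the support of $\delta$ itself — which can grow \emph{multiplicatively}, picking up a factor $N_{c_\Phi}$ at each application — one would be led to a bound of the form $\|\delta^n(A)\| \lesssim C^n (n!)^d \|A\|$ for $\Gamma = \mathbb{Z}^d$, leaving a spurious factor $(n!)^{d-1}$ after dividing by $n!$, which destroys convergence as soon as $d \geq 2$. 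Getting the additive (hence "single factorial") estimate is the crux.
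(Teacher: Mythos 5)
Your proposal is correct and follows essentially the same route as the paper: the nested-commutator expansion $\delta^n(A) = i^n\sum[\Phi(X_n),[\cdots[\Phi(X_1),A]\cdots]]$, the observation that the accumulated support $\Lambda_A \cup X_1 \cup \cdots \cup X_{k-1}$ grows only \emph{additively} (by at most $N_{c_\Phi}$ per step), and the resulting single-factorial bound $\|\delta^n(A)\|\leq (2\|\Phi\|)^n\|A\|\prod_{k=0}^{n-1}(|\Lambda_A|+kN_{c_\Phi})$ are exactly the paper's argument. The only cosmetic difference is in the last step: you close with the ratio test, while the paper instead invokes the elementary inequality $a^n\leq n!\,\lambda^{-n}e^{a\lambda}$ to package the product into a clean $n!\cdot C^n$ bound (and uses the free parameter $\lambda$ to argue for convergence beyond a fixed disc); both finishing moves deliver the lemma's claim of analyticity for $t$ sufficiently small.
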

\begin{proof}
Let $A \in \alg{A}(\Lambda)$ for some finite set $\Lambda$. By locality we have that $[\Phi(X), A] = 0$ whenever $X \cap \Lambda = \emptyset$. Hence we can write
\[
	\delta^n(A) = i^n \sum_{X_1 \cap S_1 \neq \emptyset} \cdots \sum_{X_n \cap S_n \neq \emptyset} [\Phi(X_n), [\cdots [\Phi(X_1), A] \cdots ]],
\]
where we set $S_1 = \Lambda$ and $S_k = S_{k-1} \cup X_{k-1}$ for $k > 1$. Since $\Phi$ is a bounded interaction and the number of sites in a $r$-ball around any site is uniformly bounded, there is some constant $N_\Phi$ such that $\Phi(X) = 0$ if $|X| > N_\Phi$, hence it suffices to consider sets $X_i$ such that $|X_i| \leq N_\Phi$. This implies that $|S_2| \leq N_\Phi-1 + |\Lambda|$ and by induction,
\[
	|S_n| \leq (n-1)(N_\Phi-1) + |\Lambda|.
\]
Using this observation we can make the estimate
\begin{equation}
	\|\delta^n(A)\| \leq 2^n \|A\| \prod_{k=1}^n \left( (k-1)(N_\Phi-1) + |\Lambda|\right) \|\Phi\|^n
	\label{eq:derivest}
\end{equation}
by successively estimating the norm of the commutators, and doing the sums.

Next we note the following inequality, which holds for all positive $a$ and $\lambda$: $a^n \leq n! \lambda^{-n} \exp(a \lambda)$. This inequality can be easily verified using the expansion of the exponent. Using this inequality with $a = (n-1)(N_\Phi-1) + |\Lambda|$, we obtain the estimate
\begin{equation}
	\label{eq:derivest2}
	\|\delta^n(A)\| \leq \|A\| n! e^{\lambda (|\Lambda|+1-N_\Phi)} \left( \frac{2 \|\Phi\| e^{\lambda (N_\varphi-1)}}{\lambda} \right)^n.
\end{equation}
Using this estimate we see that $\exp(t \delta)(A)$ converges for $|t| < t_\lambda := \frac{\lambda}{2 \|\Phi\| e^{\lambda(N_\varphi-1)}}$, since $\sum_n x^n$ converges for $x < 1$.
\end{proof}
We define $\alpha_t(A) = \exp(t \delta)(A)$ for $A \in \alg{A}_{loc}$ and $|t| < t_\lambda$ as defined in the proof of the Lemma. Since $\delta$ is a derivation we have that $\alpha_t(AB) = \alpha_t(A)\alpha_t(B)$ and that $\alpha_t(A^*) = \alpha_t(A)^*$. However, we cannot just yet conclude that $\alpha_t$ defines an automorphism of $\alg{A}$, since $\alpha_t$ is only defined on a \emph{subset} of $\alg{A}$. Moreover, so far we have only defined it for small $t$. The basic strategy to define $\alpha_t$ for all $t$ is to use the group property $\alpha_{s+t} = \alpha_s \circ \alpha_t$. Again, one has to be a bit careful here, since generally it is not true that $\alpha_t(A) \in D(\delta)$, even for small $t$ and $A \in D(\delta)$.

The easiest way to attack these problems is by approximating $\alpha_t$ in terms of \emph{local dynamics}\idx{local!dynamics} in the following sense. Define $\delta_\Lambda(A) = i [H_\Lambda, A]$ for $\Lambda \in \mc{P}_f(\Gamma)$. Clearly, $\lim_{\Lambda \to \infty} \delta_\Lambda(A) = \delta(A)$. We will show that $\alpha_t^\Lambda(A) := \exp(t \delta_\Lambda)(A) \to \alpha_t(A)$ as $\Lambda$ grows to $\Gamma$.
\begin{theorem}\label{thm:dynconv}\index{dynamics!existence}
	Let $\Phi$ be as in the Lemma above. Then $\alpha_t$ as defined above extends to a strongly continuous one-parameter group $t \mapsto \alpha_t$ of automorphisms of $\alg{A}$ such that for each $A \in \alg{A}$ and $t \in \mathbb{R}$ we have
	\begin{equation}
		\label{eq:autconv}
		\lim_{\Lambda \to \infty} \| \alpha_t^\Lambda(A) - \alpha_t(A) \| = 0.
	\end{equation}
\end{theorem}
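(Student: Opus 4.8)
The plan is to build everything on two facts already available. For each finite $\Lambda$ the local Hamiltonian $H_\Lambda\in\alg{A}(\Lambda)$ is a bounded self-adjoint element, so $\alpha_t^\Lambda(A)=e^{itH_\Lambda}Ae^{-itH_\Lambda}=\exp(t\delta_\Lambda)(A)$ is a genuine one-parameter group of \emph{inner}, hence isometric, automorphisms of $\alg{A}$. Moreover the norm estimate \eqref{eq:derivest} of the preceding lemma holds verbatim with $\delta$ replaced by $\delta_\Lambda$: indeed $\delta_\Lambda^n(A)$ is given by the same nested-commutator sum as $\delta^n(A)$ but with each index set restricted to subsets of $\Lambda$, so the bound — which only involves $\|\Phi\|$, $N_\Phi$ and $|\supp A|$ — is uniform in $\Lambda$. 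In particular $\exp(t\delta_\Lambda)(A)=\sum_n\frac{t^n}{n!}\delta_\Lambda^n(A)$ converges for every $t\in\mathbb{R}$, uniformly in $\Lambda$, whenever $A\in\alg{A}_{loc}$.

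First I would prove \eqref{eq:autconv} for strictly local $A$, say $A\in\alg{A}(\Xi)$ with $\Xi$ finite. The finite-range hypothesis gives two things: each application of $\delta$ enlarges the support by at most $c_\Phi$ in the metric $d$, so $\delta^n(A)$ is localised in the $nc_\Phi$-neighbourhood of $\Xi$; and $\delta_\Lambda$ agrees with $\delta$ on $\alg{A}(\Theta)$ as soon as $\Lambda$ contains the $c_\Phi$-neighbourhood of $\Theta$. By induction these combine to give $\delta_\Lambda^n(A)=\delta^n(A)$ for all $n\le M$ whenever $\Lambda$ contains the $Mc_\Phi$-neighbourhood of $\Xi$ (a finite set, by the uniform ball condition on $\Gamma$). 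Now, given $\varepsilon>0$, pick $M$ so large that, by \eqref{eq:derivest}, $\sum_{n>M}\frac{|t|^n}{n!}\bigl(\|\delta_\Lambda^n(A)\|+\|\delta^n(A)\|\bigr)<\varepsilon$ uniformly in $\Lambda$; then for $\Lambda$ large enough the first $M+1$ terms of $\alpha_t^\Lambda(A)$ and $\alpha_t(A)$ coincide and the remainders are bounded by $\varepsilon$, giving $\|\alpha_t^\Lambda(A)-\alpha_t(A)\|<\varepsilon$.

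Next I would extract the structural statements. Since each $\alpha_t^\Lambda$ is isometric and $\alpha_t^\Lambda(A)\to\alpha_t(A)$ in norm, $\|\alpha_t(A)\|=\|A\|$ for $A\in\alg{A}_{loc}$; because $\delta$ is a symmetric derivation, $\alpha_t=\exp(t\delta)$ is a $*$-homomorphism on $\alg{A}_{loc}$, so being isometric it extends uniquely to an isometric $*$-endomorphism of $\alg{A}$. A routine $\varepsilon/3$-argument (approximate $A\in\alg{A}$ by a local $A_\varepsilon$ and use $\|\alpha_t^\Lambda\|=1=\|\alpha_t\|$) then upgrades \eqref{eq:autconv} to all of $\alg{A}$, and in particular shows $\alpha_t^\Lambda\to\alpha_t$ strongly on $\alg{A}$. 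The group law $\alpha_{s+t}=\alpha_s\circ\alpha_t$ follows by letting $\Lambda\to\infty$ in $\alpha_{s+t}^\Lambda=\alpha_s^\Lambda\circ\alpha_t^\Lambda$ (valid because $e^{i(s+t)H_\Lambda}=e^{isH_\Lambda}e^{itH_\Lambda}$) and applying strong convergence to $\alpha_t(A)$; together with $\alpha_0=\mathrm{id}$ this makes each $\alpha_t$ invertible with inverse $\alpha_{-t}$, hence a genuine automorphism. For strong continuity, $\|\alpha_t(A)-A\|\le\sum_{n\ge1}\frac{|t|^n}{n!}\|\delta^n(A)\|\to0$ as $t\to0$ for local $A$ by the lemma's estimate, continuity at an arbitrary $t_0$ follows from the group law and isometry, and one final $\varepsilon/3$-step handles general $A$.

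The main obstacle is the first step: controlling $\alpha_t^\Lambda(A)-\alpha_t(A)$ uniformly in $\Lambda$. The crux is that the combinatorial estimate of the preceding lemma is \emph{uniform in $\Lambda$}; given that, the finite-range hypothesis makes the ``low-order terms stabilise'' observation essentially free. For merely fast-decaying interactions this step breaks down and one must instead use the Duhamel identity $\alpha_t^{\Lambda_2}(A)-\alpha_t^{\Lambda_1}(A)=\int_0^t\alpha_{t-s}^{\Lambda_1}\!\bigl(i[H_{\Lambda_2}-H_{\Lambda_1},\alpha_s^{\Lambda_2}(A)]\bigr)\,ds$ together with Lieb--Robinson-type localisation bounds; everything after the first step is routine extension-by-continuity bookkeeping.
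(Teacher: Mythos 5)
Your proposal is correct and follows essentially the same approach as the paper's proof: exploit that $\alpha_t^\Lambda=\operatorname{Ad}\exp(itH_\Lambda)$ is isometric, observe that for strictly local $A$ the low-order terms $\delta_\Lambda^n(A)$ and $\delta^n(A)$ coincide once $\Lambda$ is large enough, control the tail of the exponential series with the derivative bound, and then extend by density and isometry to obtain the automorphism group, group law, and strong continuity. Your explicit remark that the estimate \eqref{eq:derivest} holds uniformly in $\Lambda$ for $\delta_\Lambda$ makes a step the paper leaves implicit, but it is the same proof.
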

\begin{proof}
	As we saw before, $\exp(t \delta_\Lambda)(A) = \exp(i t H_\Lambda) A \exp(-i t H_\Lambda)$, hence $\alpha_t^\Lambda$ is isometric. Now let $A \in \alg{A}(X)$ be a local operator. Since the interaction is strictly local, for each $\Lambda$ that contains $X$ there is an integer $N(\Lambda)$ such that  $\delta_\Lambda^n(A) = \delta^n(A)$ for all $n \leq N(\Lambda)$. Moreover, $N(\Lambda) \to \infty$ if $\Lambda \to \infty$. Hence we see that
\[
\| \alpha_t^\Lambda(A) - \alpha_t(A) \| = \left\| \sum_{n=N(\Lambda)}^\infty \left( \frac{t^n \delta^n_\Lambda(A)}{n!} - \frac{t^n \delta^n(A)}{n!} \right) \right\|.
\]
By using the estimate from equation~\eqref{eq:derivest2} we see that this goes to zero as $\Lambda \to \infty$ (and hence $N(\Lambda) \to \infty$) when $|t| < t_\lambda$. Hence for each strictly local observable $A$ it follows that equation~\eqref{eq:autconv} holds for $t$ small enough. Since the strictly local observables form a dense subset of all the observables, and $\alpha_t^\Lambda$ and hence $\alpha_t$ is isometric, we can extend $\alpha_t$ to a bounded linear map on all of $\alg{A}$. A standard argument then shows that equation~\eqref{eq:autconv} holds for all $A \in \alg{A}$ and $|t|$ small enough.

To show that $\alpha_t$ can be defined for all $t$, consider $A \in \alg{A}$ and $|s|,|t| < t_\lambda$. Then by the above argument $\alpha_s(\alpha_t(A))$ is well-defined. Moreover,
\begin{align*}
	\alpha^{\Lambda}_{s+t}&(A) - \alpha_s(\alpha_t(A)) = \alpha^{\Lambda}_s(\alpha^\Lambda_t(A)) - \alpha_s(\alpha_t(A)) \\
	&= \alpha_s^\Lambda( (\alpha_t^\Lambda - \alpha_t)(A)) + \alpha_s((\alpha_t^\Lambda-\alpha_t)(A)) + (\alpha_s^\Lambda - \alpha_s)(\alpha_t(A))
\end{align*}
where the first equality sign follows from the group property for the local dynamics. The terms in the last line all vanish as $\Lambda \to \infty$ since the $\alpha_t$ are isometric and because of equation~\eqref{eq:autconv}. Hence we can define $\alpha_{s+t}(A)$ as the limit of $\alpha_{s+t}^\Lambda(A)$ and the argument above shows that it is equal to $\alpha_s(\alpha_t(A))$. Because of the group property for the local dynamics this limit does not depend on the specific choices of $s$ and $t$, only on $s+t$. Repeating this argument one can define $\alpha_t(A)$ for all $t \in \mathbb{R}$.

To show that $\alpha_t$ is a $*$-morphism one can either check it first for small $t$ and $A$ in the dense subset $D(\delta)$, or use that the local dynamics $\alpha^\Lambda_t$ are automorphisms in combination with the approximation given above. That $\alpha_t$ is an automorphism is clear, since $\alpha_t(\alpha_{-t}(A)) = \alpha_{t-t}(A) = A$.

It remains to show that $t \mapsto \alpha_t$ is strongly continuous. Since $\alpha_{t+s}(A) = \alpha_t(\alpha_s(A))$ and $\alpha_s$ is isometric for each $s$, it is enough to show that $\alpha_t(A) \to A$ as $t \to 0$. But this follows again by using the estimates for the norm of $\delta^n(A)$.
\end{proof}

Once we introduce ground states in this formalism, we will see how we can recover a Hamiltonian implementing the time translations from the one-parameter group of automorphisms.

\begin{remark}
	Essentially the same proof also works for two more general classes of interactions. First of all, one can consider the normed vector space of bounded finite range interactions as we have discussed here, and take the completion with respect to the norm to obtain a Banach space. Alternatively, one could consider all interactions $\Phi$ that satisfy $\sup_{x \in \Gamma} \sum_{X \ni x} e^{\lambda |X|} \|\Phi(X)\| < \infty$ for some $\lambda > 0$. Note that bounded interactions of finite range satisfy this condition.
\end{remark}

\begin{definition}
	Let $\alg{A}$ be a $C^*$-algebra and $t \mapsto \alpha_t$ a strongly continuous one-parameter group of automorphisms. Such a pair $(\alg{A}, \alpha)$ is called a \emph{$C^*$-dynamical system}\index{C*dynamical system@$C^*$-dynamical system}.
\end{definition}

For more general interactions there are two different strategies than can be followed in proving the existence of dynamics. As mentioned, essentially the same proof as above can be applied to interactions that decay sufficiently fast. One can also start directly with the local dynamics $\alpha^\Lambda_t(A)$ and show that they converge as $\Lambda$ grows. To this end one can employ \emph{Lieb-Robinson bounds}\index{Lieb-Robinson!bound}, which give a way to approximate $\alpha_t^{\Lambda}(A)$ by local operators whose support does not grow too fast as a function of $t$. This in turn makes it possible to show that $\alpha^\Lambda_t$ converges to some automorphism $\alpha_t$ (in the strong topology). We will discuss this strategy in Chapter~\ref{ch:lr}. Alternatively one can work directly with the derivations $\delta$. In that case, it is necessary to show that $\delta$ is a generator of a group of automorphisms, or more precisely, it can be extended uniquely to a generator $\overline{\delta}$. The theory of generators of one-parameter groups is well studied (see e.g. Chapter 3 of~\cite{MR887100}), and there are several different criteria that can be used to verify that $\delta$ is a generator. Some examples of this strategy can be found in Chapter 6 of~\cite{MR1441540}.

\section{Ground states and thermal states}
Once we have defined dynamics on a system we can try to talk about ground states and equilibrium states of such systems. Unfortunately, a rigorous treatment of most of these results would require a lot of advanced techniques from, e.g., functional analysis and complex analysis, and falls outside the scope of this course. For this reason we will omit or only sketch most of the proofs in this section, and restrict to introducing the main definitions and results. A full treatment can be found in~\cite{MR1441540}.

A ground state\index{ground state} of a Hamiltonian $H$ acting on some Hilbert space is an eigenvector of the Hamiltonian with minimum eigenvalue. Since for physically relevant systems the energy is bounded from below, we can always normalise the Hamiltonian in such a way that $H\Omega = 0$ for a ground state $\Omega$. Note that a ground state is stationary, since $e^{-i t H} \Omega = \Omega$. Hence the first condition on ground states of a $C^*$-dynamical system $(\alg{A}, \alpha)$ is that it should be invariant with respect to each $\alpha_t$. If $\omega$ is a state on $\alg{A}$ such that $\omega\circ\alpha_t = \omega$ for all $t$, it follows from Exercise~\ref{ex:invstategns} that in the corresponding GNS representation, $(\pi_\omega, \Omega, \mc{H}_\omega)$, there is a strongly continuous one-parameter group $t \mapsto U(t)$ of unitaries implementing the automorphisms $\alpha_t$. By Stone's theorem, Theorem~\ref{thm:stone}, there is a self-adjoint densely defined operator $H_\omega$ such that $U(t) = \exp(i t H_\omega)$. We require that this $H_\omega$ is positive, denoted $H_\omega \geq 0$. This means that $\langle \psi, H_\omega \psi \rangle \geq 0$ for all $\psi$ in the domain of $H_\omega$. Hence a ground state will be defined as follows:

\begin{definition}\index{ground state}
	Let $\alg{A}$ be a $C^*$-algebra and $t \mapsto \alpha_t$ a strongly continuous one-parameter group of automorphisms. A state $\omega$ on $\alg{A}$ is said to be an $\alpha$-ground state if $\omega$ is invariant for each $\alpha_t$ and if $H_\omega \geq 0$ for the Hamiltonian in the corresponding GNS representation.
\end{definition}

This definition first requires to check that the ground state is invariant. Moreover, one has to construct the GNS representation and find a Hamiltonian implementing the dynamics in the representation. It is therefore desirable to have a purely $C^*$-algebraic characterisation of ground states. A criterion in terms of the generator $\delta$ of the time translations would be particularly useful, considering that we defined derivations that generate the dynamics in terms of interactions. Indeed, there are equivalent characterisations of ground states. For this result we first need the following lemma.
\begin{lemma}
	\label{lem:invder}
Let $\delta$ be a symmetric derivation generating a strongly continuous one-parameter group $\alpha_t$ and suppose that $\omega$ is a state on $\alg{A}$ such that
\[
i \omega(A \delta(A)) \in \mathbb{R}
\]
for all $A = A^*$ in $D(\delta)$. Then $\omega(\delta(A)) = 0$ for all $A \in D(\delta)$ and $\omega \circ \alpha_t = \omega$ for all $t$.
\end{lemma}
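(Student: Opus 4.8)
The plan is to first extract an algebraic identity from the reality hypothesis, then bootstrap it to $\omega\circ\delta = 0$ by polarisation, and finally deduce invariance from a differentiation argument.

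First I would record that, since $\omega$ is a state, $\omega(X^*) = \overline{\omega(X)}$ for all $X \in \alg{A}$ (take $B = I$ in Lemma~\ref{lem:csineq}). Now fix $A = A^* \in D(\delta)$. Because $\delta$ is symmetric, $\delta(A)$ is again self-adjoint, so $(A\delta(A))^* = \delta(A)A$ and hence $\overline{\omega(A\delta(A))} = \omega(\delta(A)A)$. The hypothesis $i\omega(A\delta(A)) \in \mathbb{R}$ says exactly that $\omega(A\delta(A))$ is purely imaginary, i.e.\ $\omega(A\delta(A)) = -\overline{\omega(A\delta(A))} = -\omega(\delta(A)A)$. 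Combined with the Leibniz rule $\delta(A^2) = \delta(A)A + A\delta(A)$ this gives $\omega(\delta(A^2)) = 0$ for every self-adjoint $A \in D(\delta)$.

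Next I would polarise: for self-adjoint $A, B \in D(\delta)$ the element $A+B$ is self-adjoint and $(A+B)^2 = A^2 + (AB + BA) + B^2$, so combining $\omega(\delta(A^2)) = \omega(\delta(B^2)) = \omega(\delta((A+B)^2)) = 0$ yields $\omega(\delta(AB + BA)) = 0$. Taking $B = I$, which lies in $D(\delta)$ with $\delta(I) = 0$ since $\alpha_t(I) = I$, gives $\omega(\delta(A)) = 0$ for all self-adjoint $A \in D(\delta)$; writing an arbitrary element of the $*$-algebra $D(\delta)$ as a complex linear combination of self-adjoint elements then gives $\omega(\delta(A)) = 0$ for all $A \in D(\delta)$.

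Finally, for the invariance statement I would fix $A \in D(\delta)$ and set $f(t) = \omega(\alpha_t(A))$. From $\alpha_{t+s}(A) - \alpha_t(A) = \alpha_t(\alpha_s(A) - A)$, together with $A \in D(\delta)$ and the fact that each $\alpha_t$ is isometric, the norm-limit $\tfrac{d}{dt}\alpha_t(A) = \alpha_t(\delta(A)) = \delta(\alpha_t(A))$ exists, the last equality using the inclusion $\alpha_t(D(\delta)) \subset D(\delta)$ from the preceding lemma. Since $\omega$ is a continuous linear functional, $f$ is differentiable with $f'(t) = \omega(\delta(\alpha_t(A)))$, and $\alpha_t(A) \in D(\delta)$, so by the previous paragraph $f'(t) \equiv 0$; hence $f$ is constant and $\omega(\alpha_t(A)) = \omega(A)$ for all $A \in D(\delta)$. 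Density of $D(\delta)$ in $\alg{A}$ and continuity of $\omega$ and of each $\alpha_t$ then extend this to $\omega\circ\alpha_t = \omega$ on all of $\alg{A}$. I expect the only delicate point to be this last step: one must justify interchanging $\omega$ with the $t$-derivative and identify the derivative of $t \mapsto \alpha_t(A)$ with $\delta(\alpha_t(A))$, which is precisely where strong continuity of the group and the invariance $\alpha_t(D(\delta)) \subset D(\delta)$ are needed.
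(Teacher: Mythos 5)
Your proof is correct, and in two places it takes a genuinely different (and in some ways tidier) route than the one in the notes. The first step — showing $\omega(\delta(A^2)) = 0$ for self-adjoint $A \in D(\delta)$ — is the same. But then the paper passes from $\omega(\delta(A^2)) = 0$ to $\omega(\delta(X)) = 0$ for positive $X$ by writing $X = (\sqrt{X})^2$, which requires the nontrivial fact (relegated to a footnote) that $\sqrt{X}$ lies in $D(\delta)$, and then decomposes a general element as a linear combination of four positive elements. Your polarisation argument avoids the square-root subtlety entirely: you only use that $A+B$ is self-adjoint for $A,B$ self-adjoint, and then substitute $B = I$ (equivalently, expand $(A+I)^2$). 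Both arguments quietly use unitality of $\alg{A}$ — the paper to keep the four positive pieces in $D(\delta)$, you to put $I$ into $D(\delta)$ — so neither is more restrictive, but yours is more self-contained.

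For the invariance step the paper substitutes the power-series expansion $\alpha_t(A) = \sum_n t^n\delta^n(A)/n!$ term by term, which strictly speaking presupposes that $A$ is an analytic element for $\delta$ (true for $\alg{A}_{loc}$ under finite-range interactions, but not for a general $A \in D(\delta)$ of a general strongly continuous group). Your argument — differentiate $f(t) = \omega(\alpha_t(A))$, use $\frac{d}{dt}\alpha_t(A) = \alpha_t(\delta(A)) = \delta(\alpha_t(A))$ and $\alpha_t(D(\delta)) \subset D(\delta)$ from the preceding lemma, then note $f' \equiv 0$ since $\omega\circ\delta = 0$ — is both cleaner and more honest about the hypotheses it uses. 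The only point you should state a little more carefully is the interchange $\frac{d}{dt}\omega(\alpha_t(A)) = \omega\bigl(\frac{d}{dt}\alpha_t(A)\bigr)$: this is legitimate because $\omega$ is a \emph{bounded} linear functional and the derivative of $t \mapsto \alpha_t(A)$ is a norm limit, so $\omega$ passes through it; you flag this as the delicate point, which shows you see where the content lies.
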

\begin{proof}
Let $A \in D(\delta)$ be self-adjoint. Then we have that
\[
	i \omega(A \delta(A)) = \overline{i \omega(A \delta(A))} = -i \omega(\delta(A)^* A^*) = -i \omega(\delta(A)A).
\]
Using the Leibniz property of the derivation we obtain
\[
0 = \omega(A\delta(A)) + \omega(\delta(A)A) = \omega(\delta(A^2)).
\]
Now let $X \in D(\delta)$ be any positive operator, then we can take the square root $\sqrt{X}$ and use the formula above to conclude $\omega(\delta(X)) = 0$.\footnote{\label{ftn:sqrt}If $\delta$ is the generator of a strongly continuous one-parameter group of automorphisms, one can show that the square root $\sqrt{X}$ is also in the domain of $\delta$ if $X \in D(\delta)$.} Because any element of a $C^*$-algebra can be written as linear combination of at most four positive operators, it follows that $\omega(\delta(A)) = 0$ for all $A \in D(\delta)$.

Since $\delta$ is the generator of $\alpha_t$, $\delta(A)$ can be obtained as the derivative at $t=0$, by equation~\eqref{eq:derivation} Note that if $A \in D(\delta)$, then $\alpha_t(A) \in D(\delta)$ (by Lemma~\ref{lem:derivation} and if $A$ is positive, so is $\alpha_t(A)$ (by Theorem~\ref{thm:positiveop}). Hence we have for positive $A \in D(\delta)$:
\[
	\left. \frac{d}{dt} \omega(\alpha_t(A))\right|_{t=s} = \lim_{t \to 0} \frac{\omega(\alpha_{s+t}(A))-\omega(\alpha_s(A))}{t} = \omega(\delta(\alpha_s(A))) = 0.
\]
Hence $\omega(\alpha_t(A))$ is constant (and hence equal to its value at $t=0$). Since the domain of $\delta$ is a dense subset of $\alg{A}$ and any element of $\alg{A}$ is the sum of at most four positive operators, the claim follows from continuity.
\end{proof}

Suppose that $\psi$ is a vector in $\mc{H}_\omega$. On the one hand, consider the equation
\[
\left.\frac{d}{d t}\right|_{t=0} e^{it H_\omega} \pi_\omega(A) e^{-it H_\omega} \psi = i [H_\omega, \pi_\omega(A)] \psi.
\]
On the other hand, $e^{i t H} \pi_\omega(A) e^{-i t H} = \pi_\omega(\alpha_t(A))$, and hence if we take the derivative again at $t=0$, it follows that the expression above is equal to $\pi_\omega(\delta(A)) \psi$. A proper analysis requires a bit more care (in particular, since the expressions will in general not be well defined for arbitrary $\psi$ or $A$), but this can be done (see e.g.\ the text after~\cite[Prop. 3.2.28]{MR887100}). The result is that we have the following identity for all vectors $\psi \in D(H_\omega)$ and $A \in D(\delta)$:
\begin{equation}
	\label{eq:Hderiv}
	\pi_\omega(\delta(A)) \psi = i [H_\omega, \pi_\omega(A)] \psi.
\end{equation}
In addition one can show that $\pi_\omega(D(\delta)) \Omega \subset D(H_\omega)$. Another way to state the equation above is that the derivation $\delta$ is implemented as an \emph{inner} derivation on the Hilbert space $\mc{H}_\omega$.\index{derivation!inner} With this observation we can now give an alternative definition of ground states.

\begin{theorem}\label{thm:groundeq}\index{ground state!algebraic definition}
Let $\omega$ be a state on a $C^*$-algebra $\alg{A}$ and suppose that $t \mapsto \alpha_t$ is a strongly continuous one-parameter group of automorphisms. Write $\delta$ for the corresponding generator. Then the following are equivalent:
\begin{enumerate}
	\item \label{it:aground} $\omega$ is a ground state for $\alpha$,
	\item \label{it:dground} $-i \omega(A^*\delta(A)) \geq 0$ for all $A \in \delta(A)$.
\end{enumerate}
\end{theorem}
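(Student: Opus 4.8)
The plan is to funnel both implications through a single identity. Suppose for the moment that $\omega$ is invariant under $\alpha$. Then by Exercise~\ref{ex:invstategns} the automorphisms are implemented in the GNS representation $(\pi_\omega,\mc{H}_\omega,\Omega)$ by a strongly continuous unitary group $t\mapsto U(t)$ with $U(t)\Omega=\Omega$, and by Stone's theorem (Theorem~\ref{thm:stone}) $U(t)=e^{itH_\omega}$ for a self-adjoint $H_\omega$. Since $t\mapsto U(t)\Omega$ is constant it is differentiable at $0$, so $\Omega\in D(H_\omega)$ and $H_\omega\Omega=0$. Feeding $\psi=\Omega$ into~\eqref{eq:Hderiv} and using $H_\omega\Omega=0$ gives $\pi_\omega(\delta(A))\Omega=i[H_\omega,\pi_\omega(A)]\Omega=iH_\omega\pi_\omega(A)\Omega$ for every $A\in D(\delta)$, hence
\[
-i\,\omega(A^*\delta(A))=-i\langle\pi_\omega(A)\Omega,\pi_\omega(\delta(A))\Omega\rangle=\langle\pi_\omega(A)\Omega,H_\omega\pi_\omega(A)\Omega\rangle .
\]
This is the key formula; everything else is bookkeeping.

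For \eqref{it:aground}$\Rightarrow$\eqref{it:dground}: a ground state is invariant by definition, so the key formula applies, and $H_\omega\geq 0$ immediately forces $-i\,\omega(A^*\delta(A))\geq 0$ for all $A\in D(\delta)$.

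For \eqref{it:dground}$\Rightarrow$\eqref{it:aground} there are two things to prove: invariance of $\omega$, and positivity of $H_\omega$. Invariance comes for free: for self-adjoint $A\in D(\delta)$, condition~\eqref{it:dground} reads $-i\,\omega(A\delta(A))\geq 0$, in particular $i\,\omega(A\delta(A))\in\mathbb{R}$, which is exactly the hypothesis of Lemma~\ref{lem:invder}; hence $\omega\circ\alpha_t=\omega$ for all $t$, and we obtain $H_\omega$ with $H_\omega\Omega=0$ as above. The key formula then gives $\langle\pi_\omega(A)\Omega,H_\omega\pi_\omega(A)\Omega\rangle=-i\,\omega(A^*\delta(A))\geq 0$, so the quadratic form of $H_\omega$ is non-negative on the subspace $\mc{D}:=\pi_\omega(D(\delta))\Omega\subset D(H_\omega)$ (recall the inclusion $\pi_\omega(D(\delta))\Omega\subset D(H_\omega)$ noted before~\eqref{eq:Hderiv}). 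To promote this to $H_\omega\geq 0$: $\mc{D}$ is dense (since $\alg{A}_{loc}\subset D(\delta)$ is dense in $\alg{A}$ and $\Omega$ is cyclic), and it is $U(t)$-invariant because $U(t)\pi_\omega(A)\Omega=\pi_\omega(\alpha_t(A))\Omega$ with $\alpha_t(D(\delta))\subset D(\delta)$; a dense, $U(t)$-invariant subspace of $D(H_\omega)$ is a core for $H_\omega$, so every $\psi\in D(H_\omega)$ is a graph-norm limit of vectors $\psi_n\in\mc{D}$, and passing to the limit in $\langle\psi_n,H_\omega\psi_n\rangle\geq 0$ yields $\langle\psi,H_\omega\psi\rangle\geq 0$. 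Thus $H_\omega\geq 0$ and $\omega$ is an $\alpha$-ground state.

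The main obstacle is therefore not algebraic but analytic: making~\eqref{eq:Hderiv} and the inclusion $\pi_\omega(D(\delta))\Omega\subset D(H_\omega)$ precise (domains, analytic vectors for $H_\omega$), and the final core step. The first of these is exactly what is quoted from the discussion around~\cite[Prop.~3.2.28]{MR887100}; the core step is a standard fact about one-parameter unitary groups, and one who prefers to avoid it can instead argue Fourier-analytically that $-i\,\omega(A^*\delta(A))\geq 0$ forces each spectral measure of $H_\omega$ in the cyclic vectors $\pi_\omega(A)\Omega$ to be supported on $[0,\infty)$.
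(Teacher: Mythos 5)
Your argument is correct and follows the same route as the paper: derive the identity $-i\,\omega(A^*\delta(A))=\langle\pi_\omega(A)\Omega,H_\omega\pi_\omega(A)\Omega\rangle$ from~\eqref{eq:Hderiv} together with $H_\omega\Omega=0$, read off \eqref{it:aground}$\Rightarrow$\eqref{it:dground} from positivity of $H_\omega$, and for the converse get invariance from Lemma~\ref{lem:invder} and promote positivity on $\pi_\omega(D(\delta))\Omega$ to $H_\omega\geq 0$ via the core argument; you have merely spelled out the core step more explicitly than the paper does. One small slip: the density of $\mc{D}$ should be justified by density of $D(\delta)$ itself (a consequence of strong continuity of $\alpha$, valid for any $C^*$-algebra), not by invoking $\alg{A}_{loc}$, which exists only in the quasi-local setting while the theorem is stated in full generality.
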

\begin{proof}
	(\ref{it:aground}. $\Rightarrow$~\ref{it:dground}.) Let $A \in \delta(A)$. Because $H_\omega$ is a positive operator it follows that $\langle \pi_\omega(A) \Omega, H_\omega \pi_\omega(A) \Omega \rangle \geq 0$. Hence
\[
\langle \Omega \pi_\omega(A), H_\omega \pi_\omega(A) \Omega  \rangle = -i \langle \pi_\omega(A) \Omega, \pi_\omega(\delta(A)) \Omega \rangle = -i \omega(A^* \delta(A)) \geq 0,
\]
since $\pi_\omega(\delta(A)) \Omega = i H_\omega \pi_\omega(A) \Omega$ by the argument before the theorem (and since $H_\omega \Omega = 0$).

(\ref{it:dground}. $\Rightarrow$~\ref{it:aground}.) From Lemma~\ref{lem:invder} it follows that $\omega$ is invariant with respect to $\alpha_t$ and hence we can find a Hamiltonian $H_\omega$ leaving the GNS vector $\Omega$ invariant. It remains to show that $H_\omega$ is positive. Let $\psi = \pi_\omega(A)\Omega$ for $A \in D(\delta)$. Then with the help of equation~\eqref{eq:Hderiv}, we find
\[
	\langle \psi, H_\omega \psi \rangle = \langle \pi_\omega(A) \Omega, \pi_\omega(\delta(A)) \Omega \rangle =  -i \omega(A^*\delta(A)) \geq 0.
\]
The result follows by a density argument (technically, that $\pi_\omega(D(\delta))$ is a \emph{core} for $H_\omega$).
\end{proof}
Hence we can define ground states in a purely algebraic setting by specifying the interactions of the system considering the corresponding derivation. Having a definition of ground states we can then try to answer different questions on the structure of the set of ground states. For example, it is not clear if it is non-empty. Indeed, there exist $C^*$-dynamical systems $(\alg{A}, \alpha)$ without ground states. Furthermore, ground states should precisely be those states that minimize the local Hamiltonians in a suitable sense. We will come back to these points later.

\subsection{KMS states}\label{sec:kmsstates}
A basic question in statistical mechanics is how to characterise thermal states.\index{thermal state|see{KMS state}} The definition that will be given here was introduced by Hugenholtz, Haag and Winnink~\cite{MR0219283}, giving a characterisation of states satisfying the conditions studied by Kubo, Martin and Schwinger~\cite{MR0098482,MR0109702}. In their honour they are called KMS states. Besides for the description of thermal states, KMS states have turned out to be extremely useful in the theory of operator algebras as well. We first give the (or rather, one of the many equivalent) definition. Later we will discuss some results that will give support to the claim that this is the right notion of thermal equilibrium states.
\begin{definition}\label{def:kms}
Let $(\alg{A}, \alpha)$ be a $C^*$-dynamical system and let $\beta > 0$. Consider the strip
\[
	S_\beta = \{ z \in \mathbb{C} : \operatorname{Im}(z) \in [0, \hbar \beta] \}
\]
in the complex plane. A state $\omega$ on $\alg{A}$ is called a \idx{KMS state} if for each $A, B \in \alg{A}$ there is a complex function $F_{AB}$ on $S_\beta$ such that
\begin{enumerate}
	\item $F_{AB}$ is analytic on the interior of the strip and continuous on the boundaries,
	\item $F_{AB}(t) = \omega(A \alpha_t(B))$ for all $t \in \mathbb{R}$, and
	\item $F_{AB}(t+i \hbar \beta) = \omega(\alpha_t(B)A)$.
\end{enumerate}
\end{definition}
From now on we will use units where $\hbar = 1$. The parameter $\beta$ is called the inverse temperature, $\beta^{-1} = k_B T$, where $T$ is the temperature and $k_B$ is Boltzmann's constant. Intuitively speaking a ground state should be an equilibrium state at $T = 0$, and indeed one can see ground states as KMS states for $\beta = \infty$. Nevertheless, despite being similar in many aspects, there are also some fundamental differences between KMS states and ground states.

Note that the function $F_{AB}$ is given by expectation values of operators in the state $\omega$ on the boundary of the strip. This relation can be extended to values into the strip, under suitable circumstances. To discuss this, we first introduce the set of \emph{entire} elements\index{entire element} for a $C^*$-dynamical system. Note that for each $A \in \alg{A}$ we have the continuous function $t \mapsto \alpha_t(A)$ by virtue of strong continuity of $\alpha_t$. The operator $A$ is said to be \emph{entire} if there is an entire (i.e., complex analytic) function $f: \mathbb{C} \mapsto \alg{A}$ such that $f(t) = \alpha_t(A)$ for all $A \in \alg{A}$. If $A$ is entire we will simply write $\alpha_z(A)$ for the corresponding function. Note that in the expansion of $\exp(t \delta)(A)$ for a derivation $\delta$ it is easy to plug in complex numbers instead of real numbers, to obtain $\alpha_z(A)$ for $z$ complex. A useful result is that there the set of entire elements, $\alg{A}_\alpha$, is a norm dense $*$-subalgebra of $\alg{A}$~\cite[Prop. IV.4.6]{MR1239893}. We then have the following result, whose proof can be found in~\cite[Prop. 5.3.7]{MR1441540}.

\begin{proposition}\label{prop:kmsanalytic}
	Let $\omega$ be a KMS state at inverse temperature $\beta$. Then for each $A \in \alg{A}$ and $B \in \alg{A}_\alpha$ we have that, restricted to the strip $S_\beta$,
	\[
		F_{A,B}(z) = \omega(A \alpha_z(B)),
	\]
	where $F_{AB}$ is the function as in Definition~\ref{def:kms}.
\end{proposition}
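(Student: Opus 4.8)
The plan is to exploit the rigidity of analytic functions. The function $\omega(A\,\alpha_z(B))$ is entire in $z$ because $B$ is an entire element, it agrees with the prescribed function $F_{AB}$ along the real axis, and this alone forces the two to coincide throughout the strip $S_\beta$.

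First I would record the relevant analyticity. Since $B \in \alg{A}_\alpha$, the map $z \mapsto \alpha_z(B)$ is by definition an entire $\alg{A}$-valued function that restricts to $t \mapsto \alpha_t(B)$ on $\mathbb{R}$; composing with the bounded (hence continuous) linear functional $\omega$, and using that a Banach-space-valued function is holomorphic iff it is weakly holomorphic, the scalar function $g(z) := \omega(A\,\alpha_z(B))$ is entire. In particular $g$ is continuous on the closed strip $S_\beta$, and by property~(2) of Definition~\ref{def:kms} we have $g(t) = \omega(A\,\alpha_t(B)) = F_{AB}(t)$ for every $t \in \mathbb{R}$.

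Next I would put $h := F_{AB} - g$ on $S_\beta$. By the hypotheses on $F_{AB}$ (holomorphic on the open strip, continuous on the closed strip) together with the analyticity of $g$, the function $h$ is continuous on $S_\beta$, holomorphic on the interior $\{0 < \operatorname{Im} z < \beta\}$, and vanishes identically on the boundary line $\mathbb{R} = \{\operatorname{Im} z = 0\}$. The subtle point is that $\mathbb{R}$ lies on the \emph{boundary} of the open strip, so the identity theorem cannot be applied directly. I would resolve this by the Schwarz reflection principle: since $h$ is continuous up to $\mathbb{R}$ and takes the real value $0$ there, it extends to a function $\tilde h$ holomorphic on the symmetric strip $\{|\operatorname{Im} z| < \beta\}$ via $\tilde h(z) = \overline{h(\bar z)}$ for $\operatorname{Im} z < 0$. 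This extension is holomorphic on a connected open set and vanishes on $\mathbb{R}$, which has accumulation points in that set, so $\tilde h \equiv 0$ by the identity theorem; hence $h \equiv 0$ on the open strip and, by continuity, on all of $S_\beta$. Therefore $F_{AB}(z) = g(z) = \omega(A\,\alpha_z(B))$ on $S_\beta$, as claimed.

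The main obstacle is exactly this boundary issue: the set on which the two analytic functions agree is a boundary line, not an interior accumulation set, so the naive identity-theorem argument fails, and the reflection trick (equivalently, a Morera-type gluing across $\mathbb{R}$) is the device that repairs it. Everything else is routine bookkeeping — checking that $\alpha_z(B)$ genuinely restricts to $\alpha_t(B)$ on $\mathbb{R}$, and that post-composition with $\omega$ preserves analyticity. I would also note that the density of $\alg{A}_\alpha$ in $\alg{A}$, cited just before the statement, is what makes this proposition a useful computational tool, but it is not needed in the proof itself.
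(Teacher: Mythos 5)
Your argument is correct. The paper itself does not give a proof of this proposition --- it simply cites Bratteli--Robinson, Proposition 5.3.7 --- so there is no in-text argument to compare against, but your proof is a clean and complete version of the standard one. The structure is exactly right: $g(z) := \omega(A\,\alpha_z(B))$ is entire because $B$ is an entire element and $\omega$ is bounded linear; it agrees with $F_{AB}$ on $\mathbb{R}$ by the KMS condition; and since $\mathbb{R}$ is a boundary component of $S_\beta$ rather than an interior accumulation set, one must reflect. Passing to $h := F_{AB} - g$, using Schwarz reflection across $\mathbb{R}$ (legitimate because $h$ is continuous up to $\mathbb{R}$ and vanishes there, hence is in particular real-valued there), and then invoking the identity theorem for the extension $\tilde h$ on the symmetric strip is precisely the device that closes the gap. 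The final step --- pushing $h \equiv 0$ from the open strip to all of $S_\beta$ by continuity --- also covers the upper boundary line $\operatorname{Im} z = \beta$, which is what one ultimately wants (it yields the useful identity $\omega(\alpha_t(B)A) = \omega(A\,\alpha_{t+i\beta}(B))$). Two tiny remarks, neither a gap: (i) for the analyticity of $g$ you only need the trivial direction (strong holomorphy of $z \mapsto \alpha_z(B)$ implies holomorphy of $\omega(A\,\alpha_z(\cdot))$ by postcomposition with a bounded functional), so invoking the full equivalence of weak and strong holomorphy is harmless overkill; (ii) for the identity theorem any nondegenerate subinterval of $\mathbb{R}$ would already suffice. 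You also correctly identify that density of $\alg{A}_\alpha$ in $\alg{A}$ plays no role in this proof.
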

Using this result we can put the claim that KMS states at $\beta = \infty$ correspond to ground states in context. Indeed, $\omega$ is a ground state if and only if the function $x \mapsto \omega(A \alpha_z(B))$ is analytic and bounded on the upper half plane (i.e., the interior of the ``strip'' $S_\infty$) for all $A,B \in \alg{A}_\alpha$~\cite[Prop. 5.3.19]{MR1441540}, and continuous on the real line.

Just as for ground states there is an equivalent formulation of KMS states using only the derivation $\delta$. Again we omit the proof.\index{KMS state!algebraic definition}
\begin{theorem}[Roepstorff-Araki-Sewell]\label{thm:autocor}
Let $(\alg{A}, \alpha)$ be a $C^*$-dynamical system and $\delta$ the corresponding derivation. Then the following are equivalent:
\begin{enumerate}
	\item $\omega$ is a KMS state at inverse temperature $\beta$;
	\item $-i \beta \omega(A^* \delta(A)) \geq \omega(A^*A) \log\left(\frac{\omega(A^*A)}{\omega(AA^*)}\right)$ for all $A \in D(\delta)$.
\end{enumerate}
\end{theorem}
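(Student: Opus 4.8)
The plan is to carry everything over to the GNS triple $(\pi_\omega,\mc{H}_\omega,\Omega)$ of $\omega$ and read both conditions off the generator of the implementing unitary group. The point is that once we know $\omega$ is $\alpha$-invariant, Exercise~\ref{ex:invstategns} and Stone's theorem (Theorem~\ref{thm:stone}) give a self-adjoint $H_\omega$ with $U(t)=e^{itH_\omega}$, $H_\omega\Omega=0$, $\pi_\omega\circ\alpha_t=\Ad U(t)$, together with $\pi_\omega(D(\delta))\Omega\subset D(H_\omega)$ and $\pi_\omega(\delta(A))\Omega=iH_\omega\pi_\omega(A)\Omega$ from~\eqref{eq:Hderiv}. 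Writing $\xi_A:=\pi_\omega(A)\Omega$ and letting $\mu_A$ be the spectral measure of $H_\omega$ in $\xi_A$, one then has $\omega(A^*A)=\mu_A(\mathbb{R})$, $-i\,\omega(A^*\delta(A))=\langle\xi_A,H_\omega\xi_A\rangle=\int\lambda\,d\mu_A(\lambda)$ (in particular this quantity is real, so the left side of (2) makes sense), and $\omega(A^*\alpha_{is}(A))=\langle\xi_A,e^{-sH_\omega}\xi_A\rangle$ wherever the right side is finite. So (2) is really a statement about the moments of the positive measure $\mu_A$.

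For $(1)\Rightarrow(2)$ I would argue as follows. A KMS state is $\alpha$-invariant, so the above applies, and it suffices to prove (2) for entire $A\in\alg{A}_\alpha$ — in the finite-range case $D(\delta)=\alg{A}_{loc}\subseteq\alg{A}_\alpha$ by the Lemma preceding Theorem~\ref{thm:dynconv}, and in general $\alg{A}_\alpha$ is a core for $\delta$, so one extends by continuity of $\omega$, using $|\omega(A^*\delta(A))|\le\omega(A^*A)^{1/2}\|\delta(A)\|$ for the degenerate case. For entire $A$, Proposition~\ref{prop:kmsanalytic} shows that $F(s):=\omega(A^*\alpha_{is}(A))=\|e^{-sH_\omega/2}\xi_A\|^2$ is finite on $[0,\beta]$, and Cauchy--Schwarz gives $F(\tfrac{s+s'}{2})\le F(s)^{1/2}F(s')^{1/2}$, so $h:=\log F$ is midpoint-convex, hence convex, whence $h(\beta)\ge h(0)+\beta h'(0)$. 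Now $h(0)=\log\omega(A^*A)$, $h'(0)=-\langle\xi_A,H_\omega\xi_A\rangle/\omega(A^*A)=i\,\omega(A^*\delta(A))/\omega(A^*A)$ by~\eqref{eq:Hderiv}, and $h(\beta)=\log\omega(AA^*)$ by the boundary condition $\omega(A^*\alpha_{i\beta}(A))=\omega(AA^*)$ coming from Definition~\ref{def:kms}(iii) and Proposition~\ref{prop:kmsanalytic}. Substituting and multiplying by $\omega(A^*A)>0$ yields exactly $-i\beta\,\omega(A^*\delta(A))\ge\omega(A^*A)\log(\omega(A^*A)/\omega(AA^*))$, with the convention $0\log0=0$ (note $\omega(AA^*)=0$ forces $\xi_A=0$, hence $\omega(A^*A)=0$). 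This is just the tangent-line inequality for the log-convex Laplace transform of $\mu_A$.

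For $(2)\Rightarrow(1)$: taking $A=A^*\in D(\delta)$ makes the right side $\omega(A^2)\log1=0$, so $-i\,\omega(A\delta(A))\ge0$; in particular $i\,\omega(A\delta(A))\in\mathbb{R}$, so Lemma~\ref{lem:invder} gives $\omega\circ\alpha_t=\omega$ and the GNS picture above becomes available. The goal is to upgrade the family of inequalities (2) to the spectral identity that characterises KMS states, namely $\tilde\nu_A=e^{-\beta\lambda}\mu_A$ for every entire $A$, where $\nu_A$ is the spectral measure of $H_\omega$ in $\xi_{A^*}$ and $\tilde\nu_A$ its reflection under $\lambda\mapsto-\lambda$; in particular $\omega(AA^*)=\langle\xi_A,e^{-\beta H_\omega}\xi_A\rangle<\infty$. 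To extract this I would probe the spectrum: for $f\in L^1(\mathbb{R})$ put $A_f:=\int f(t)\,\alpha_t(A)\,dt\in\alg{A}$, so that $\xi_{A_f}=\widehat f(H_\omega)\xi_A$ and $\xi_{(A_f)^*}=g(H_\omega)\xi_{A^*}$ with $\widehat f(\lambda)=\int f(t)e^{it\lambda}\,dt$ and $g(\lambda)=\overline{\widehat f(-\lambda)}$. Applying (2) to $A_f$ and then to $(A^*)_f$, and letting $|\widehat f|^2$ approach the indicator of an interval shrinking to a point $\lambda_0$, both sides of (2) localise at that point and the two inequalities pass in the limit to $d\tilde\nu_A\ge e^{-\beta\lambda}\,d\mu_A$ and $d\tilde\nu_A\le e^{-\beta\lambda}\,d\mu_A$ respectively — it is the strict convexity of $\exp$ that converts the tangent-line bound into an equality as the spectral window closes — hence $\tilde\nu_A=e^{-\beta\lambda}\mu_A$ and $\omega(AA^*)=\nu_A(\mathbb{R})=\int e^{-\beta\lambda}\,d\mu_A(\lambda)<\infty$. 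This finiteness says $\xi_A\in D(e^{-sH_\omega/2})$ for $s\in[0,\beta]$, so $t\mapsto\omega(A^*\alpha_t(A))=\langle\xi_A,U(t)\xi_A\rangle$ extends to a bounded function on $S_\beta$, analytic in the interior, with value $\langle\xi_A,e^{itH_\omega}e^{-\beta H_\omega}\xi_A\rangle=\omega(\alpha_t(A)A^*)$ at $t+i\beta$ — the required function $F_{A^*,A}$. A polarisation argument — identifying, on the core $\pi_\omega(\alg{A}_\alpha)\Omega$, the closure of the Tomita operator $\xi_B\mapsto\xi_{B^*}$ with $Je^{-\beta H_\omega/2}$ via the quadratic-form identity just obtained — together with density of $\alg{A}_\alpha$ then produces the functions $F_{A,B}$ of Definition~\ref{def:kms} for all $A,B\in\alg{A}$, so $\omega$ is a KMS state at $\beta$.

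The direction $(1)\Rightarrow(2)$ is routine once the Laplace-transform picture is set up; the substance is in $(2)\Rightarrow(1)$. The main obstacle is making the probing step with the smeared elements $A_f$ rigorous: one has to control $\omega(A_fA_f^*)$, justify the interchange of limits as the spectral window shrinks, and exploit strict convexity of $\exp$ to turn the family of inequalities into the sharp measure identity; and then the polarisation/Tomita step upgrading a single-vector identity to the full KMS condition must be handled carefully in view of the unboundedness of $e^{-\beta H_\omega}$ and the choice of cores. These analytic details are worked out in~\cite[Ch.~5]{MR1441540}.
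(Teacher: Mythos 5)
The paper itself omits the proof of this theorem (``Again we omit the proof''), so there is nothing in the text to compare against; the reference here is the argument in Bratteli--Robinson Vol.~2 (Thm.~5.3.15 and its supporting lemmas), which is exactly what your sketch reproduces. Your $(1)\Rightarrow(2)$ is complete and correct as far as it goes: writing $F(s)=\omega(A^*\alpha_{is}(A))=\int e^{-s\lambda}\,d\mu_A$ for entire $A$, using log-convexity of the Laplace transform and the tangent-line bound at $s=0$, reading off $F(0)=\omega(A^*A)$, $F(\beta)=\omega(AA^*)$ from the KMS boundary relation via Proposition~\ref{prop:kmsanalytic}, and $h'(0)$ from~\eqref{eq:Hderiv}, is the standard derivation of the Roepstorff lower bound, and the extension from $\alg{A}_\alpha$ to $D(\delta)$ works because the entire elements are a core for $\delta$ and because, as the text notes right after the theorem, $(x,y)\mapsto x\log(x/y)$ is lower semicontinuous. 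Your $(2)\Rightarrow(1)$ correctly isolates the two nontrivial moves: first using Lemma~\ref{lem:invder} (via the self-adjoint case $A=A^*$) to get $\alpha$-invariance so the GNS Hamiltonian picture is available, and then converting the family of scalar inequalities into the measure identity $\tilde\nu_A=e^{-\beta\lambda}\mu_A$ by testing against smeared elements $A_f$ and pinching with strict convexity, followed by a polarisation/Tomita step to recover the full two-variable KMS function. This is indeed the route BR take, and you are right that it is where all the analytic substance lies.

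A few points you should make explicit if you were to fill this in. First, to apply~(2) to $A_f$ you need $A_f\in D(\delta)$; this is automatic if you restrict $f$ so that $\widehat f$ has compact support (then $A_f$ is entire), and that class is still rich enough to separate points of the spectrum, so you should say so rather than write $f\in L^1$. Second, the claim that $\alg{A}_\alpha$ is a core for $\delta$ and that $\alg{A}_\alpha\subset D(\delta)$ is standard (BR Prop.~3.1.7) but is not stated in these notes, so it needs a citation. Third, in the $(1)\Rightarrow(2)$ tangent-line step you implicitly use that $\omega(AA^*)=0\Leftrightarrow\omega(A^*A)=0$ for a KMS state; this is true because $\Omega$ is separating for $\pi_\omega(\alg{A})''$ when $\omega$ is KMS, but it is a Tomita--Takesaki fact that should be flagged (in the converse direction the convention on $x\log(x/y)$ together with~(2) itself rules out the bad case $\omega(AA^*)=0<\omega(A^*A)$, since the right side would be $+\infty$). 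With those remarks added, the sketch is a faithful and correct outline of the argument that the notes defer to~\cite{MR1441540}.
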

Here we define $x \log(x/y)$ to be zero if $x = 0$ and $+\infty$ if $x > 0$ and $y = 0$. This function is actually \emph{lower semi-continuous}, which means that $\liminf_{(x,y) \to (x_0, y_0)} x \log(x/y) \geq x_0 \log(x_0/y_0)$.

We end this section on technical properties of KMS states with the following useful result, to the effect that limits of KMS states are again KMS states. We will see later how this result can be used to show that KMS states exist.
\begin{proposition}\label{prop:limkms}
	Let $(\alg{A}, \alpha)$ be a $C^*$-dynamical system (where $\alg{A}$ is unital) and $\delta$ the generator of $\alpha$. Suppose that for each $n \in \mathbb{N}$ we have a derivation $\delta_n$ generating a one-parameter group $t \mapsto \alpha_t^n$ of automorphisms such that $\lim_n \delta_n(A) = \delta(A)$ for all $A \in D(\delta)$. Moreover, suppose that for each $n$ we have a state $\omega_n$ which is KMS (with respect to $\alpha_t^n$) at inverse temperature $\beta_n$. Finally, suppose that $\lim \beta_n = \beta$, where we allow $\beta = +\infty$ as well. Then each weak-$*$ limit point\footnote{This means that there is a subsequence $\omega_{n_k}$ such that $\lim_{n_k} \omega_{n_k}(A) = \omega(A)$ for the limit point $\omega$.} of the sequence $\omega_n$ is a KMS state (with respect to $\alpha$) at inverse temperature $\beta$.
\end{proposition}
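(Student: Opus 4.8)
\medskip

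\noindent\emph{Proof proposal.} The plan is to reduce everything to the derivation-based characterisations of KMS and ground states, Theorems~\ref{thm:autocor} and~\ref{thm:groundeq}, and then pass to the limit in the relevant inequalities using weak-$*$ convergence of the states together with the norm convergence $\delta_n \to \delta$ on $D(\delta)$. First, a limit point exists at all: since $\alg{A}$ is unital, Theorem~\ref{thm:statecompact} says $\mc{S}(\alg{A})$ is weak-$*$ compact, so the sequence $(\omega_n)_n$ has a weak-$*$ limit point, which is automatically a state. Fix such a limit point $\omega$ and a subsequence with $\omega_{n_k}(X) \to \omega(X)$ for every $X \in \alg{A}$ (the argument below goes through verbatim for subnets, which is all compactness guarantees in general); note also $\beta_{n_k} \to \beta$. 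Implicit in the hypothesis $\lim_n \delta_n(A) = \delta(A)$ for $A \in D(\delta)$ is that $\delta_n(A)$ makes sense for every $A$ in the common domain $D(\delta)$, and this is all we use.

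Next I would establish the three convergences needed to take the limit of the Roepstorff--Araki--Sewell inequality for $\omega_{n_k}$. Fix $A \in D(\delta)$. Then $A^*A$ and $AA^*$ are fixed elements of $\alg{A}$, so $\omega_{n_k}(A^*A) \to \omega(A^*A)$ and $\omega_{n_k}(AA^*) \to \omega(AA^*)$ directly from weak-$*$ convergence. For the term involving the derivation I would split
\[
\omega_{n_k}(A^*\delta_{n_k}(A)) = \omega_{n_k}\bigl(A^*\delta(A)\bigr) + \omega_{n_k}\bigl(A^*(\delta_{n_k}(A) - \delta(A))\bigr);
\]
the first summand converges to $\omega(A^*\delta(A))$ because $A^*\delta(A)$ is a fixed element of $\alg{A}$, and the second is bounded in absolute value by $\|\omega_{n_k}\|\,\|A\|\,\|\delta_{n_k}(A) - \delta(A)\| = \|A\|\,\|\delta_{n_k}(A) - \delta(A)\| \to 0$, using $\|\omega_{n_k}\| = 1$. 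Hence $-i\beta_{n_k}\,\omega_{n_k}(A^*\delta_{n_k}(A)) \to -i\beta\,\omega(A^*\delta(A))$ when $\beta < \infty$, and in all cases the (real) number $-i\,\omega_{n_k}(A^*\delta_{n_k}(A)) \to -i\,\omega(A^*\delta(A))$.

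For finite $\beta$ I would apply Theorem~\ref{thm:autocor} to each $\omega_{n_k}$ (with respect to $\alpha^{n_k}$, at $\beta_{n_k}$), giving
\[
-i\beta_{n_k}\,\omega_{n_k}(A^*\delta_{n_k}(A)) \;\geq\; \omega_{n_k}(A^*A)\log\!\left(\frac{\omega_{n_k}(A^*A)}{\omega_{n_k}(AA^*)}\right).
\]
Taking $\liminf$ over $k$: the left side converges to $-i\beta\,\omega(A^*\delta(A))$, and by the lower semicontinuity of $(x,y)\mapsto x\log(x/y)$ recorded just after Theorem~\ref{thm:autocor}, the $\liminf$ of the right side is bounded below by $\omega(A^*A)\log(\omega(A^*A)/\omega(AA^*))$. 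Thus $\omega$ satisfies condition~(2) of Theorem~\ref{thm:autocor} at inverse temperature $\beta$, so $\omega$ is a KMS state at $\beta$. For $\beta = \infty$ I would instead weaken the inequality for $\omega_{n_k}$ using the elementary bound $x\log(x/y) \geq x - y$ (valid with the stated conventions, as one checks by one-variable calculus) to $-i\beta_{n_k}\,\omega_{n_k}(A^*\delta_{n_k}(A)) \geq \omega_{n_k}(A^*A) - \omega_{n_k}(AA^*)$; dividing by $\beta_{n_k} > 0$ and letting $k \to \infty$ (the numerator on the right stays bounded by $2\|A\|^2$ while $\beta_{n_k}\to\infty$) yields $-i\,\omega(A^*\delta(A)) \geq 0$ for all $A \in D(\delta)$, which by Theorem~\ref{thm:groundeq} says precisely that $\omega$ is a ground state, i.e. a KMS state at $\beta = \infty$. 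Since a limit point exists, in either case a KMS state at inverse temperature $\beta$ exists.

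The main obstacle — the only point that is not purely formal — is the passage to the limit in the right-hand side of the Roepstorff--Araki--Sewell inequality: the function $x\log(x/y)$ blows up as $y \to 0^+$, so one cannot expect continuity, only lower semicontinuity, and it is essential that the inequality is oriented so that one needs a \emph{lower} bound on the limit of the right side (supplied by semicontinuity) matched against an honest limit on the left. One should also be careful that the hypotheses genuinely deliver $\delta_n(A)$ on the common domain $D(\delta)$ and that $\delta_n(A) \to \delta(A)$ \emph{in norm}, which is exactly what makes the derivation term converge even though the states $\omega_{n_k}$ converge only weak-$*$.
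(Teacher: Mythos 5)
Your proof is correct and follows essentially the same route as the paper's: reduce to the Roepstorff--Araki--Sewell inequality, pass to the limit along the weak-$*$ convergent subsequence, use lower semicontinuity of $x\log(x/y)$ for finite $\beta$, and use $x\log(x/y)\geq x-y$ together with $\beta_{n_k}\to\infty$ for the ground-state case. You are actually more careful than the paper at one point: you split $\omega_{n_k}(A^*\delta_{n_k}(A))$ into $\omega_{n_k}(A^*\delta(A))$ plus an error controlled by $\|\delta_{n_k}(A)-\delta(A)\|$, whereas the paper silently writes $\delta(A)$ in the RAS inequality for $\omega_{n_k}$ even though that inequality properly involves $\delta_{n_k}$ — the only tiny thing you elide (and the paper handles by a case split) is that when $\beta=\infty$ some $\beta_{n_k}$ could themselves equal $\infty$, in which case one applies the ground-state criterion to those terms directly rather than dividing by $\beta_{n_k}$.
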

\begin{proof}
	Since $\alg{A}$ is unital, the state space of $\alg{A}$ is compact in the weak-$*$ topology, by Theorem~\ref{thm:statecompact}. Hence weak-$*$ limit points exist. Let $\omega_{n_k}$ be a subsequence converging to a limit point $\omega$. The goal is to show that $\omega$ is a KMS state. To this end, we will distinguish the case $\beta < \infty$ and $\beta = \infty$.

Let us first assume that $\beta < \infty$. Without loss of generality, we can assume that $\beta_{n_k} < \infty$. By Theorem~\ref{thm:autocor} we have for each $A \in D(\delta)$ the following inequality:
\[
-i \beta_{n_k} \omega_{n_k}(A^*\delta(A)) \geq \omega_{n_k}(A^*A) \log\left(\frac{\omega_{n_k}(A^*A)}{\omega_{n_k}(AA^*)}\right)
\]
We can then take the limit $k \to \infty$. Note that the right-hand side is not continuous (but only lower semi-continuous), but we can still get
\[
\begin{split}
-i \beta \omega(A^*\delta(A)) &\geq \limsup_{k \to \infty} \omega_{n_k}(A^*A) \log\left(\frac{\omega_{n_k}(A^*A)}{\omega_{n_k}(AA^*)}\right) \\
	&\geq \liminf_{k \to \infty} \omega_{n_k}(A^*A) \log\left(\frac{\omega_{n_k}(A^*A)}{\omega_{n_k}(AA^*)}\right) \\
	&= \omega(A^*A) \log\left(\frac{\omega(A^*A)}{\omega(AA^*)}\right), 
\end{split}
\]
where we used that $\limsup \leq \liminf$ and that $x \log(x/y)$ is lower semi-continuous. Again by Theorem~\ref{thm:autocor} it follows that $\omega$ is a $\beta$-KMS state.

The case $\beta = \infty$ remains. We will show that $\omega$ is a ground state. First consider the case that each $\beta_{n_k} = \infty$. Then we have
\[
	-i \omega_{n_k}(A^* \delta(A)) \geq 0
\]
for all $k$ and the result follows by taking the limit. Finally, consider the case where all $\beta_{n_k} < \infty$. In this case we have by Theorem~\ref{thm:autocor} that
\[
\begin{split}
	-i \omega(A^*\delta(A)) &= \lim_{k \to \infty} -i \omega_{n_k}(A^* \delta(A)) \\
	&\geq \limsup_{k \to \infty} \frac{1}{\beta_{n_k}} \omega_{n_k}(A^*A) \log\left(\frac{\omega_{n_k}(A^*A)}{\omega_{n_k}(AA^*)}\right) \\
	&\geq \limsup_{k \to \infty} \frac{1}{\beta_{n_k}}(\omega_{n_k}(A^*A)-\omega_{n_k}(AA^*)) = 0.
\end{split}
\]
In the last line the inequality $x \log(x/y) \geq x-y$ is used, which follows from the convexity of $x \mapsto x \log x$.
\end{proof}

\subsection{Gibbs states}
We now come to an important class of examples of KMS states, the \emph{Gibbs states}\index{Gibbs state}. Let $\Lambda \in \mc{P}_f(\Gamma)$. Then $\alg{A}(\Lambda)$ is a finite dimensional $C^*$-algebra and we can write any state $\omega_\Lambda$ on $\alg{A}(\Lambda)$ in the form $\omega_\Lambda(A) = \Tr(\rho A)$ for some density matrix $\rho$ (see Exercise~\ref{exc:density}). Consider a $H_\Lambda = H_\Lambda^*$ in $\alg{A}(\Lambda)$. Consider the state corresponding to the density matrix $\rho_\Lambda = Z^{-1} \exp(- \beta H_\Lambda)$, where $Z = \Tr(\exp(-\beta H_\Lambda))$.\footnote{The reader might recognize the partition function from statistical mechanics.} Hence we have a state $\omega_\Lambda$ given by
\[
\omega_\Lambda(A) = \frac{\Tr\left(e^{-\beta H_\Lambda} A\right)}{\Tr\left(e^{-\beta H_\Lambda}\right)}.
\]
We claim that it is a KMS state for inverse temperature $\beta$, where the time automorphism is given by $\alpha_t^\Lambda(A) = e^{i t H_\Lambda} A e^{-it H_\Lambda}$. Define $F_{AB}(z) = \omega_\Lambda(A \alpha_z^\Lambda(B))$ for $z \in \mathbb{C}$. Then we have by the cyclicity of the trace, and $t$ real,
\[
Z F_{AB}(t+i\beta) = \Tr\left(e^{-\beta H_\Lambda} A e^{(-\beta+it) H_\Lambda} B e^{(\beta-it) H_\Lambda}\right) = 
\Tr\left(e^{-\beta H_\Lambda} \alpha_t^\Lambda(B) A \right).
\]
Hence $\omega$ satisfies the KMS conditions at inverse temperature $\beta$ with respect to the action $\alpha_t^\Lambda$. It is in fact the \emph{only} state on $\alg{A}(\Lambda)$ with this property. To show this result we first need the show that each tracial state on a finite matrix algebra, that is a state for which $\tau(AB) = \tau(BA)$, coincides with the usual trace. The proof is left as an exercise.

\begin{exercise}
	Let $\alg{A} = M_n(\mathbb{C})$. Let $\omega$ be a linear functional on $\alg{A}$ such that $\omega(AB) = \omega(BA)$ for all $A,B$. Prove that $\omega(A) = c \Tr(A)$ for some constant $c$. \emph{Hint:} consider matrices $E_{ij}$ which are one on the $(i,j)$ position and zero otherwise.
\end{exercise}
Consider the algebra $\alg{A}$ as in the exercise, together with a given time evolution $\alpha_t(A) = e^{i t H} A e^{-i t H}$. It turns out that the Gibbs state at inverse temperature $\beta$ is in fact the unique KMS state with respect to this action. This can be seen as follows (following the argument in~\cite{MR1239893}). Let $\omega$ be a KMS state on $\alg{A}$. Note that since $\alg{A}$ is finite dimensional, each element is analytical. This can be seen by noting that for the derivation $\delta$ we have $\delta(A) \leq 2 \|H\| \|A\|$, that is, it is bounded. By the KMS condition and Proposition~\ref{prop:kmsanalytic} we have $\omega(AB) = \omega(B\alpha_{i\beta}(A))$. Define $\widetilde{\omega}(A) = \omega(e^{\beta H} A)$. We then have
\[
\begin{split}
	\widetilde{\omega}(AB) = \omega(e^{\beta H} AB) = \omega\left(e^{\beta H} A \alpha_{i\beta}\left(e^{\beta H} B e^{-\beta H}\right)\right) = \omega\left(e^{\beta H} B A\right) = \widetilde{\omega}(BA),
\end{split}
\]
where we used the KMS condition and $\alpha_{i \beta}\left(e^{\beta H} B e^{-\beta H}\right) = B$. Hence by the exercise above, $\widetilde{\omega}(A) = \lambda \Tr(A)$ for some constant $\lambda$. The constant can be found by plugging in the identity and it follows that $\omega$ must be the Gibbs state at inverse temperature $\beta$.

In infinite dimensions, things are more complicated. Nevertheless, we can use the Gibbs states for finite systems to obtain KMS states. It is however no longer true in general that each KMS state can be obtained in this way. This has an important physical interpretation. The argument above shows that KMS states on finite volumes are unique, so if each KMS state at temperature $\beta$ could be obtained as a limit of such states, it would also be unique. That there are other KMS states leaves open the possibility of a \idx{phase transition}, where there is a critical temperature $\beta_c$ above which there is a unique thermal state, but below which this uniqueness breaks down and there are distinct equilibrium states for the same temperature.

To see how KMS states on the infinite system can be obtained from Gibbs states, we will require a basic result in functional analysis, the \idx{Hahn-Banach theorem}. It says that we can extend $\omega_\Lambda$ to a state $\omega_\Lambda^G$ on all of $\alg{A}$. This extension is not unique in general, but for our purposes existence is enough. We write $\omega^G_\Lambda$ for any such extension. Now consider an increasing sequence $\Lambda_1 \subset \Lambda_2 \subset \dots$ of finite subsets of $\Gamma$ such that their union is equal to $\Gamma$. This gives a sequence $\omega^G_{\Lambda_n}$ of (extensions of KMS) states as above. Since the state space of $\alg{A}$ is weak-$*$ compact, there is a subsequence $\omega_{\Lambda_{n_k}}^G$ converging to a state $\omega$ on $\alg{A}$. Since the convergence is in the weak-$*$ topology, by definition for local observables $A \in \alg{A}(\Lambda)$ we have
\[
	\omega(A) = \lim_{k \to \infty} \omega_{n_k}(A),
\]
where it is understood that the limit on the right hand side is only over states corresponding to sufficiently large $\Lambda_{n_k}$ (i.e., those containing the support of $A$). We say that $\omega$ is the \idx{thermodynamic limit} of local Gibbs states.

If $\Phi$ is a strictly local and bounded interaction, we have seen before that $D(\delta) = \alg{A}_{loc}$ for the corresponding derivation. We can then argue as in Theorem~\ref{thm:dynconv} and Proposition~\ref{prop:limkms} that the thermodynamic limit of the local Gibbs states defined above is a $\beta$-KMS state for the time evolution generated by $\Phi$. In other words, we have the following result:

\begin{corollary}\index{KMS state!existence}
Let $\Phi$ be a strictly local and bounded interaction. Then there is a KMS state for each $\beta > 0$ with respect to the dynamics generated by $\Phi$. Moreover, the set of ground-states is non-empty.
\end{corollary}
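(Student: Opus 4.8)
The plan is to deduce the corollary by combining the convergence of the finite-volume dynamics (Theorem~\ref{thm:dynconv}) with the stability of the KMS and ground-state conditions under weak-$*$ limits (Proposition~\ref{prop:limkms}), taking the approximating sequence to be the finite-volume Gibbs states. First I would apply Theorem~\ref{thm:dynconv}: since $\Phi$ is a bounded finite-range interaction, the derivation $\delta$ with domain $D(\delta)=\alg{A}_{loc}$ generates a strongly continuous one-parameter automorphism group $\alpha$ of $\alg{A}$, so $(\alg{A},\alpha)$ is a $C^*$-dynamical system and the characterisations of Theorems~\ref{thm:autocor} and~\ref{thm:groundeq} are available. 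I would then fix an increasing exhausting sequence $\Lambda_1\subset\Lambda_2\subset\cdots$ of finite subsets of $\Gamma$ with $\bigcup_n\Lambda_n=\Gamma$, set $H_{\Lambda_n}=\sum_{X\subset\Lambda_n}\Phi(X)\in\alg{A}(\Lambda_n)$, and let $\delta_n(A)=i[H_{\Lambda_n},A]$ be the associated bounded derivation, generating $\alpha^{\Lambda_n}_t(A)=e^{itH_{\Lambda_n}}Ae^{-itH_{\Lambda_n}}$. Because $\Phi$ has finite range, for every $A\in\alg{A}_{loc}$ one in fact has $\delta_n(A)=\delta(A)$ for all $n$ large enough, so $\delta_n(A)\to\delta(A)$ on $D(\delta)$.

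For fixed $\beta\in(0,\infty)$, let $\omega_{\Lambda_n}$ be the finite-volume Gibbs state on $\alg{A}(\Lambda_n)$ at inverse temperature $\beta$, which we have already seen is a $\beta$-KMS state for $\alpha^{\Lambda_n}$ on $\alg{A}(\Lambda_n)$, and extend it by the Hahn--Banach theorem to a state $\omega_n$ on all of $\alg{A}$. Since $\alg{A}$ is unital, Theorem~\ref{thm:statecompact} lets me extract a weak-$*$ convergent subsequence $\omega_{n_k}\to\omega$. To verify that $\omega$ is a $\beta$-KMS state I would argue as in the proof of Proposition~\ref{prop:limkms}, via the Roepstorff--Araki--Sewell inequality (Theorem~\ref{thm:autocor}): for $A\in\alg{A}_{loc}$ and $n_k$ large enough that $\supp A\subset\Lambda_{n_k}$, the elements $A^*\delta_{n_k}(A)=A^*\delta(A)$, $A^*A$, $AA^*$ all lie in $\alg{A}(\Lambda_{n_k})$, so $\omega_{n_k}$ agrees there with the genuine Gibbs state $\omega_{\Lambda_{n_k}}$, and hence $-i\beta\,\omega_{n_k}(A^*\delta(A))\ge\omega_{n_k}(A^*A)\log\!\bigl(\omega_{n_k}(A^*A)/\omega_{n_k}(AA^*)\bigr)$. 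Letting $k\to\infty$, the left side converges by weak-$*$ convergence on the fixed element $A^*\delta(A)$, while the lower semicontinuity of $(x,y)\mapsto x\log(x/y)$ controls the right side, yielding the same inequality for $\omega$ and all $A\in D(\delta)$; by Theorem~\ref{thm:autocor} this means $\omega$ is a $\beta$-KMS state.

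For the ground-state assertion I would instead take $\beta_n\to\infty$, let $\omega_n$ be the Hahn--Banach extension of the Gibbs state at inverse temperature $\beta_n$, extract a weak-$*$ limit $\omega$ as above, and run the same local reduction followed by the $\beta=\infty$ estimate from the proof of Proposition~\ref{prop:limkms}: using $x\log(x/y)\ge x-y$ and $\beta_n^{-1}\to 0$ gives $-i\,\omega(A^*\delta(A))\ge 0$ for all $A\in D(\delta)$, so $\omega$ is an $\alpha$-ground state by Theorem~\ref{thm:groundeq}. The one point I would treat carefully — and the main conceptual obstacle — is that the Hahn--Banach extensions $\omega_n$ need not themselves be KMS states for $\alpha^{\Lambda_n}$ on all of $\alg{A}$; this is harmless precisely because the differential criteria of Theorems~\ref{thm:autocor} and~\ref{thm:groundeq} only test elements of $D(\delta)=\alg{A}_{loc}$, on which $\omega_n$ coincides with the true finite-volume Gibbs state once $\Lambda_n$ is large, so the finite-volume KMS property is all that is needed. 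Everything else is a routine interchange of limits justified by weak-$*$ convergence and (semi)continuity.
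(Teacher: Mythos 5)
Your argument follows the paper's approach essentially exactly: extend the finite-volume Gibbs states by Hahn--Banach, extract a weak-$*$ limit by compactness, and verify the Roepstorff--Araki--Sewell criterion (respectively the ground-state criterion of Theorem~\ref{thm:groundeq}) in the limit using lower semicontinuity, exactly as in the proof of Proposition~\ref{prop:limkms}. You correctly identify and resolve the one subtlety the paper glosses over with its ``argue as in\ldots'' — namely that the Hahn--Banach extensions are not themselves KMS on all of $\alg{A}$ — by observing that the differential criteria only probe $\alg{A}_{loc}$, where $\omega_n$ agrees with the genuine finite-volume Gibbs state and $\delta_n(A)=\delta(A)$ for $n$ large.
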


\begin{remark}
	Again, the same result can be stated for more general interactions. For example, instead of the convergence of the derivations for a fixed $A$, as required in Proposition~\ref{prop:limkms}, can be loosened to demand that $\lim_{\Lambda \to \infty} \| \alpha_t^{\Lambda}(A) - \alpha_t(A)\| \to 0$. This implies that there is a sequence $A_n \to A$ such that for the corresponding derivations we have $\delta_n(A_n) \to \delta(A)$. Note that this allows for the situation where $D(\delta_n)$ is different for each $n$.
\end{remark}

\subsection{Von Neumann entropy}
In classical statistical mechanics the equilibrium states are characterised by minimising the free energy. We will give a quantum version of this statement, supporting the definition of KMS states. To define the free energy we first have to define the \idx{entropy} of a state. For us it will suffice to define this only for states that are given by density matrices in finite dimensional $C^*$-algebras. The entropy can be defined for states on more general $C^*$ (or von Neumann) algebras, but that requires considerable additional machinery. The interested reader can consult~\cite{MR1230389} for the basics.

To calculate the entropy, one has to make sense of expressions of the form $\log A$ for some operator $A$. But we have such tools available: it is the \index{functional calculus}functional calculus developed in Section~\ref{sec:funccalc}. Since $\log(x)$ is continuous on $(0,\infty)$, it allows us to define $\log(\rho)$ for all $\rho$ with $\sigma(\rho) \subset (0,\infty)$. We will also need to consider $\rho$ such that $0 \in \sigma(\rho)$, however. Fortunately, to define the entropy why only need to define $\rho \log(\rho)$, which can be done using the functional calculus applied to the function $x \mapsto x \log(x)$, where we set $0 \log(0) = 0$. Note that his function is continuous on $[0,\infty)$, so we can apply it to \emph{any} positive operator.

	It is instructive to carry out this procedure for a simple example (cf. Exercise~\ref{ex:pvm}). Consider $\alg{A} = M_n(\mathbb{C})$ and let $\rho \in \alg{A}$ be a self-adjoint matrix. Then there is a complete set of eigenvalues $\lambda_i$ with corresponding orthonormal basis $\ket{\psi_i}$. Moreover, $\rho = \sum_{i=1}^n \lambda_i \ketbra{\psi_i}$. Since the $\ketbra{\psi_i}$ are mutually commuting orthogonal projections, it is easy to calculate that for a polynomial $p$ one has $p(\rho) = \sum_{i=1}^n p(\lambda_i) \ketbra{\psi_i}$. It follows that $f(\rho) = \sum_{i=1}^n f(\lambda_i) \ketbra{\psi_i}$ for an arbitrary continuous function $f$ on the spectrum of $A$.

	Consider for the moment a finite dimensional system with observable algebra $\alg{A} = M_d(\mathbb{C})$. By Exercise~\ref{exc:density} it follows that any state on $\alg{A}$ can be uniquely written in the form $\omega(A) = \Tr(\rho A)$ for some density operator $\rho$. Von Neumann defined the entropy for such a density matrix $\rho$ as
\begin{equation}
	S(\rho) = -\Tr(\rho \log \rho).
\end{equation}
Since $\rho$ is positive, this is well defined as explained above. This coincides with the classical entropy of a probability distribution on $\{1,\dots,n\}$ with $p(i) = \lambda_i$. The von Neumann entropy plays an important role in quantum information theory, just as its classical analogue, the Shannon entropy, does in classical communication theory. See~\cite{MR1796805} for more details.

\begin{exercise}
	Let $\alg{A} = M_d(\mathbb{C})$. Show that if $\rho \in \alg{A}$ is a density matrix, $S(\rho) \in [0, \log(d)]$. Which states minimize or maximize the entropy?
\end{exercise}

Note that we have defined the entropy only for finite dimensional algebras. In the quantum spin systems we have been discussing, however, we are dealing with states on the quasi-local algebra, for which no entropy is defined (because in general they are not given by a density matrix $\rho \in \alg{A}$). Therefore we will look at the entropy in a \emph{finite} subset of the system.
\begin{definition}
	Let $\omega$ be a state on a quasi-local algebra $\alg{A}$, where the local algebras $\alg{A}(\Lambda)$ are finite dimensional matrix algebras. Let $\Lambda$ be a finite set. Then we define the entropy $S_\Lambda(\omega)$ by\index{entropy}\index{_SLambda@$S_\Lambda(\omega)$}
\[
	S_\Lambda(\omega) = - \Tr( \rho_\Lambda \log \rho_\Lambda ),
\]
where $\rho_\Lambda$ is the unique density matrix such that $\omega(A) = \Tr(\rho_\Lambda A)$ for all $A \in \alg{A}(\Lambda)$.
\end{definition}

Finally, we consider the \idx{relative entropy} of two states. If $\rho$ and $\sigma$ are the density matrices of two states, their relative entropy is defined as
\[
	S_{rel}(\sigma | \rho) = \Tr(\rho \log \rho - \rho \log \sigma).
\]
One can show that $S_{rel}(\rho | \sigma) \geq 0$, where we have equality if and only if $\rho = \sigma$. Again, one can define the relative entropy of two states on the quasi-local algebra, when restricted to a finite region $\Lambda$, in a straightforward way.

\subsection{Free energy}
Consider a finite dimensional system and a time evolution given by a Hamiltonian $H$. Moreover, suppose that $\rho$ is a density matrix giving a state of the system. We will also write $\rho$ for the corresponding state. Let $\beta > 0$. Then the \idx{free energy} is given by
\[
	F_\beta(\rho) := \rho(H) - \beta^{-1} S(\rho).
\]
Write $\rho_\beta$ for the Gibbs state at inverse temperature $\beta$. Then we have
\begin{equation}
	\label{eq:srelfree}
	S_{rel}(\rho_\beta|\rho) = \beta(F_\beta(\rho)-\Phi(\beta)).
\end{equation}
Here $\Phi(\beta)$ is the value of the free energy on the \idxr{Gibbs state}. Since $S_{rel} \geq 0$, it follows that the free energy is minimal when $\rho$ is the Gibbs state at inverse temperature $\beta$. Because $S_{rel}(\rho_\beta | \rho) = 0$ implies that $\rho_\beta = \rho$ it follows that the Gibbs state can be alternatively defined as the state that minimizes the free energy. Note that this is similar to the situation in classical statistical mechanics.

\begin{exercise}
Show that $\Phi(\beta) = -\beta^{-1} \log \Tr(\exp(-\beta H))$ and verify that equation~\eqref{eq:srelfree} indeed holds.
\end{exercise}

\subsection{Translationally invariant states}\index{state!translationally invariant|(}
It would be interesting to have an analogue of the characterisation of equilibrium states as those that minimise the free energy for infinite systems. However this result does not easily generalise to such systems: in general the entropy scales proportionally to $|\Lambda|$, so that it is not very useful to define the entropy for the whole infinite system as the limit of $S_\Lambda(\omega)$ as $\Lambda \to \infty$. One can still look at the \emph{entropy per unit of volume},\index{entropy!per unit volume} at least in the case of translationally invariant systems, defined as $S(\omega) := \lim_{\Lambda \leadsto \infty} S_\Lambda(\omega)/|\Lambda|$. Here $\Lambda \leadsto \infty$ is the limit \emph{in the sense of Van Hove}. This is defined as follows. Assume that $\Gamma = \mathbb{Z}^d$, and choose a vector $(a_1, a_2, \dots, a_d) \in \mathbb{Z}^d$, where each $a_i > 0$. This defines a box $\Lambda_a$ as follows:
\[
\Lambda_a = \{ x \in \mathbb{Z}^d : 0 \leq x_i < a_i, i = 1 \dots d\}.
\]
We can translate each box over a vector $na$ (with $n \in \mathbb{Z}$) to obtain a partition of $\mathbb{Z}^d$ into boxes. If $\Lambda$ is any finite subset of $\mathbb{Z}^d$, we write $n^+_\Lambda(a)$ for the number of (translated) boxes as above that have non-empty intersection. Similarly, $n^-_\Lambda(a)$ is the number of boxes that are contained in $\Lambda$. We then say that a sequence of sets goes to infinity in the sense of Van Hove, notation $\Lambda_n \leadsto \infty$ (or simply without the subscript $n$), if
\[
\lim_{n \to \infty} \frac{n^+_{\Lambda_n}(a)}{n^-_{\Lambda_n}(a)} = 1
\]
for all $a$. Note that this is a weaker form of convergence than $\Lambda \to \infty$, where we only allow a certain sequence of sets $\Lambda_1 \subset \Lambda_2 \subset\dots$. Remark that if $\Lambda_n \leadsto \infty$ then $\cup_n \Lambda_n = \mathbb{Z}^d$. This can be seen by taking $a$ larger and larger.

Using this notion of convergence we can make sense of the entropy per unit of volume in the case of translationally invariant systems. More precisely, we have the following proposition, whose proof can be found in~\cite{MR1441540}.
\begin{proposition}
For a translation invariant state $\omega$ the following limit exists
\[
S(\omega) = \lim_{\Lambda \leadsto \infty} \frac{S_\Lambda(\omega)}{|\Lambda|}.
\]
It is equal to $\inf_{a \in \mathbb{Z}^d} S_{\Lambda_a}(\omega)/|\Lambda_a|$.
\end{proposition}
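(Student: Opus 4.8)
\emph{Proof proposal.} Write $\rho_\Lambda$ for the density matrix with $\omega(A)=\Tr(\rho_\Lambda A)$ on $\alg{A}(\Lambda)$, so that $S_\Lambda(\omega)=-\Tr(\rho_\Lambda\log\rho_\Lambda)$. The plan is to reduce the statement to a Fekete-type subadditivity argument resting on four facts about the von Neumann entropy. (i) Crude bounds: $0\le S_\Lambda(\omega)\le|\Lambda|\log d$. (ii) Subadditivity for disjoint regions: if $\Lambda_1\cap\Lambda_2=\emptyset$ then, using $\alg{A}(\Lambda_1\cup\Lambda_2)\cong\alg{A}(\Lambda_1)\otimes\alg{A}(\Lambda_2)$ together with the fact that the relative entropy of $\rho_{\Lambda_1\cup\Lambda_2}$ with respect to $\rho_{\Lambda_1}\otimes\rho_{\Lambda_2}$ is non-negative (which the excerpt records), one gets $S_{\Lambda_1\cup\Lambda_2}(\omega)\le S_{\Lambda_1}(\omega)+S_{\Lambda_2}(\omega)$. (iii) The Araki--Lieb triangle inequality $|S_{\Lambda_1}(\omega)-S_{\Lambda_2}(\omega)|\le S_{\Lambda_1\cup\Lambda_2}(\omega)$ for disjoint $\Lambda_1,\Lambda_2$, so enlarging or shrinking $\Lambda$ by a set $F$ changes $S_\Lambda(\omega)$ by at most $|F|\log d$. (iv) Translation invariance: since $\tau_x$ carries $\rho_{\Lambda+x}$ onto $\rho_\Lambda$ up to an isomorphism of finite matrix algebras, $S_{\Lambda+x}(\omega)=S_\Lambda(\omega)$.

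First I would treat the boxes $\Lambda_a$ themselves. Slicing $\Lambda_{(a_1+a_1',a_2,\dots,a_d)}$ into the disjoint pieces $\Lambda_{(a_1,a_2,\dots,a_d)}$ and a translate of $\Lambda_{(a_1',a_2,\dots,a_d)}$, properties (ii) and (iv) give
\[
S_{\Lambda_{(a_1+a_1',a_2,\dots,a_d)}}(\omega)\le S_{\Lambda_{(a_1,a_2,\dots,a_d)}}(\omega)+S_{\Lambda_{(a_1',a_2,\dots,a_d)}}(\omega),
\]
and likewise in the other coordinate directions; that is, $a\mapsto S_{\Lambda_a}(\omega)$ is separately subadditive. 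A multidimensional version of Fekete's lemma then shows $\lim_{a_1,\dots,a_d\to\infty}S_{\Lambda_a}(\omega)/|\Lambda_a|$ exists and equals $s_\infty:=\inf_{a}S_{\Lambda_a}(\omega)/|\Lambda_a|$. Next, for an arbitrary finite $\Lambda$ and a fixed box $\Lambda_c$, cover $\Lambda$ by the $n^+_\Lambda(c)$ translates of $\Lambda_c$ from the standard partition of $\mathbb{Z}^d$ determined by $c$ that meet $\Lambda$; their union $\widehat\Lambda\supseteq\Lambda$ satisfies $S_{\widehat\Lambda}(\omega)\le n^+_\Lambda(c)\,S_{\Lambda_c}(\omega)$ by (ii) and (iv), while (iii) applied to $\widehat\Lambda=\Lambda\sqcup(\widehat\Lambda\setminus\Lambda)$ gives $S_\Lambda(\omega)\le S_{\widehat\Lambda}(\omega)+|\widehat\Lambda\setminus\Lambda|\log d$ with $|\widehat\Lambda\setminus\Lambda|\le(n^+_\Lambda(c)-n^-_\Lambda(c))|\Lambda_c|$. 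Dividing by $|\Lambda|\ge n^-_\Lambda(c)|\Lambda_c|$ and letting $\Lambda\leadsto\infty$, where $n^+_\Lambda(c)/n^-_\Lambda(c)\to1$, yields $\limsup_{\Lambda\leadsto\infty}S_\Lambda(\omega)/|\Lambda|\le S_{\Lambda_c}(\omega)/|\Lambda_c|$ for every $c$, hence $\limsup\le s_\infty$.

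The matching lower bound, $\liminf_{\Lambda\leadsto\infty}S_\Lambda(\omega)/|\Lambda|\ge s_\infty$, is the crux, and it cannot be obtained from subadditivity alone, which only produces upper bounds on the entropy of a union. Here I would invoke strong subadditivity, $S_{X\cup Y\cup Z}(\omega)+S_Y(\omega)\le S_{X\cup Y}(\omega)+S_{Y\cup Z}(\omega)$ for pairwise disjoint $X,Y,Z$, together with a chain-rule argument: fixing a linear order on the sites, write $S_\Lambda(\omega)=\sum_{x\in\Lambda}\big(S_{\Lambda_{\le x}}(\omega)-S_{\Lambda_{<x}}(\omega)\big)$ with $\Lambda_{\le x},\Lambda_{<x}$ the initial segments of $\Lambda$. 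Strong subadditivity says each conditional-entropy summand can only decrease as the conditioning region grows, and translation invariance lets one compare it with the same quantity computed with $\Lambda$ replaced by the full past half-space, whose common value is a constant that one identifies with $s_\infty$. The Van Hove hypothesis is precisely what guarantees that, as $\Lambda\leadsto\infty$, all but an $o(1)$ fraction of $x\in\Lambda$ have $\Lambda_{<x}$ containing any prescribed large neighbourhood of the sites preceding $x$, so the remaining summands contribute $o(|\Lambda|)$ and the average of the summands tends to $s_\infty$ from above. Combined with the previous paragraph, this gives existence of the limit and its identification with $\inf_{a}S_{\Lambda_a}(\omega)/|\Lambda_a|$. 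The main obstacle is exactly this step: one must use strong subadditivity rather than subadditivity, and one must turn the Van Hove condition into the quantitative statement that only a vanishing fraction of the sites of $\Lambda$ sit near its boundary; an alternative to the explicit chain-rule bookkeeping is the Ornstein--Weiss quasi-tiling machinery for invariant subadditive set functions over Van Hove (F{\o}lner) sequences.
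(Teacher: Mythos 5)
The paper does not prove this proposition; it delegates to Bratteli and Robinson, Vol.~2. Your proposal reconstructs, correctly in its essentials, the argument in that reference (and its classical predecessors, e.g.\ Robinson--Ruelle and Lanford--Robinson): multi-dimensional Fekete for boxes from subadditivity and translation invariance, a covering estimate for the $\limsup$ over Van Hove sequences, and a chain-rule/strong-subadditivity argument for the $\liminf$. You correctly identify the crux: ordinary subadditivity gives only upper bounds on the entropy of a union, so it cannot by itself produce $\liminf_{\Lambda\leadsto\infty} S_\Lambda(\omega)/|\Lambda| \ge s_\infty$, and one must invoke strong subadditivity, namely the monotonicity (under enlarging the conditioning region) of the increments $S_{\Lambda_{\le x}}(\omega) - S_{\Lambda_{<x}}(\omega)$ in the chain rule.

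One refinement is worth making: the Van Hove hypothesis is not actually needed for the lower bound. Once strong subadditivity and translation invariance give that each chain-rule increment $S_{\Lambda_{\le x}}(\omega) - S_{\Lambda_{<x}}(\omega)$ is $\ge h_\infty := \inf_Y \bigl[ S_{\{0\}\cup Y}(\omega) - S_Y(\omega) \bigr]$, the infimum running over finite $Y$ in the lexicographic past of $0$, the chain rule immediately yields the \emph{pointwise} bound $S_\Lambda(\omega) \ge |\Lambda|\, h_\infty$ for every finite $\Lambda$, with no condition on $\Lambda$ whatsoever. Applying the chain rule to a growing sequence of boxes --- where the boundary-layer fraction visibly vanishes --- then identifies $h_\infty = s_\infty$, and you even get the existence of the box limit for free, making the separate Fekete step in your first paragraph optional. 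In particular $S_\Lambda(\omega)/|\Lambda| \ge s_\infty$ holds for all finite $\Lambda$, and the Van Hove condition is only what rescues your $\limsup$ estimate. Your closing remark that all but an $o(1)$ fraction of summands converge is true, but is needed for the direction you already established by covering, not for the one you present it under. Your alternative suggestion of Ornstein--Weiss quasi-tiling is sound and is what one would reach for on a general amenable group, but is heavier machinery than $\mathbb{Z}^d$ requires.
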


In a similar manner we can define the other component in the free energy defined above: the energy of a state. Again we look at translationally invariant systems. In addition we assume that we are given an interaction $\Phi$ that is also translationally invariant. The interaction should satisfy the bound
\[
	\|\Phi\| = \sum_{\Lambda \ni 0} \frac{\|\Phi(\Lambda)\|}{|\Lambda|} < \infty,
\]
which is certainly the case if $\Phi$ is of finite range. Again, for such systems we can define the mean energy per unit volume. In particular we have the following proposition.
\begin{proposition}\label{prop:minfree}
Let $\Phi$ be as above and suppose that $\omega$ is an invariant state. Then the following limit exists:
\[
H_\Phi(\omega) = \lim_{\Lambda \leadsto \infty} \frac{\omega(H^\Phi_\Lambda)}{|\Lambda|},
\]
where $H^\Phi_\Lambda$ are the local Hamiltonians. The limit is equal to $\omega(E_\Phi)$, where
\[
	E_\Phi = \sum_{\Lambda \ni 0} \frac{\Phi(\Lambda)}{|\Lambda|}.
\]
\end{proposition}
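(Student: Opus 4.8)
The plan is to prove the equality by computing $\omega(H^\Phi_\Lambda)$ directly, extracting the ``bulk'' term $|\Lambda|\,\omega(E_\Phi)$ explicitly, and showing that the remainder is a boundary term that is negligible compared to $|\Lambda|$ in the Van Hove limit.

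First I would note that $E_\Phi$ is a genuine element of $\alg{A}$: the series $\sum_{X\ni 0}\Phi(X)/|X|$ converges absolutely in norm, since $\sum_{X\ni 0}\|\Phi(X)\|/|X| = \|\Phi\| < \infty$ and $\alg{A}$ is complete. Since $\Phi$ and $\omega$ are translation invariant, $\omega(\Phi(X+x)) = \omega(\tau_x\Phi(X)) = \omega(\Phi(X))$, so $\sum_{X\ni y}\omega(\Phi(X))/|X| = \omega(E_\Phi)$ for \emph{every} site $y$. Summing over $y\in\Lambda$ and interchanging the (absolutely convergent) sums gives
\[
|\Lambda|\,\omega(E_\Phi) = \sum_{X:\,X\cap\Lambda\neq\emptyset}\frac{|X\cap\Lambda|}{|X|}\,\omega(\Phi(X)).
\]
On the other hand $\omega(H^\Phi_\Lambda) = \sum_{X\subset\Lambda}\omega(\Phi(X))$, and for $X\subset\Lambda$ the coefficient $|X\cap\Lambda|/|X|$ equals $1$, so those terms cancel and
\[
\omega(H^\Phi_\Lambda) - |\Lambda|\,\omega(E_\Phi) = -\sum_{\substack{X\cap\Lambda\neq\emptyset\\ X\not\subset\Lambda}}\frac{|X\cap\Lambda|}{|X|}\,\omega(\Phi(X)).
\]
Writing $|X\cap\Lambda|/|X| = \sum_{y\in X\cap\Lambda}1/|X|$, swapping the order of summation, and using $|\omega(\Phi(X))|\leq\|\Phi(X)\|$, I get $|\omega(H^\Phi_\Lambda)-|\Lambda|\,\omega(E_\Phi)| \leq \sum_{y\in\Lambda}f_\Lambda(y)$, where $f_\Lambda(y) := \sum_{X\ni y,\,X\not\subset\Lambda}\|\Phi(X)\|/|X|$.

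Next I would control $f_\Lambda(y)$. Trivially $f_\Lambda(y)\leq\|\Phi\|$. Given $\varepsilon>0$, since $\sum_{X\ni 0}\|\Phi(X)\|/|X|$ converges I can choose $R$ with $\sum_{X\ni 0,\,\operatorname{diam}(X)>R}\|\Phi(X)\|/|X| < \varepsilon$, and by translation invariance the same holds with $0$ replaced by any $y$. If the ball $B_R(y)$ is contained in $\Lambda$, then any $X\ni y$ with $X\not\subset\Lambda$ must have $\operatorname{diam}(X)>R$, hence $f_\Lambda(y)<\varepsilon$. Setting $\partial_R\Lambda := \{y\in\Lambda : B_R(y)\not\subseteq\Lambda\}$, I obtain
\[
\frac{1}{|\Lambda|}\bigl|\omega(H^\Phi_\Lambda)-|\Lambda|\,\omega(E_\Phi)\bigr| \leq \frac{|\partial_R\Lambda|}{|\Lambda|}\,\|\Phi\| + \varepsilon .
\]

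The main work — and the step I expect to be the real obstacle — is showing $|\partial_R\Lambda|/|\Lambda|\to 0$ as $\Lambda\leadsto\infty$, i.e. that the Van Hove property forces a vanishing surface-to-volume ratio. Here I would use the box partition from the definition of Van Hove convergence, with side length $L$ much larger than $R$. A site $y\in\partial_R\Lambda$ lies either in one of the $n^+_\Lambda(a)-n^-_\Lambda(a)$ boxes that meet $\Lambda$ without being contained in it (at most $(n^+_\Lambda(a)-n^-_\Lambda(a))L^d$ such sites), or within distance $R$ of the edge of one of the $n^-_\Lambda(a)$ interior boxes (at most $n^-_\Lambda(a)(L^d-(L-2R)^d)$ such sites). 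Dividing by $|\Lambda|\geq n^-_\Lambda(a)L^d$ and using $n^+_\Lambda(a)/n^-_\Lambda(a)\to 1$ yields $\limsup_{\Lambda\leadsto\infty}|\partial_R\Lambda|/|\Lambda| \leq (L^d-(L-2R)^d)/L^d$, which tends to $0$ as $L\to\infty$ with $R$ fixed; hence the $\limsup$ is $0$. Combining with the previous display, $\limsup_{\Lambda\leadsto\infty}\bigl|\,\omega(H^\Phi_\Lambda)/|\Lambda| - \omega(E_\Phi)\,\bigr| \leq \varepsilon$, and since $\varepsilon$ is arbitrary the limit exists and equals $\omega(E_\Phi)$. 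The only genuinely delicate points beyond this are bookkeeping: justifying the rearrangements of summation by absolute convergence, and the two-part count of the boundary sites.
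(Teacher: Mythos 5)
Your proof is correct, and it supplies an argument the paper itself omits: the text simply states Proposition~\ref{prop:minfree} without proof (the analogous entropy statement is explicitly deferred to~\cite{MR1441540}, and this one is implicitly deferred as well). The route you take — rewriting $|\Lambda|\,\omega(E_\Phi)$ as $\sum_{X\cap\Lambda\neq\emptyset}\frac{|X\cap\Lambda|}{|X|}\omega(\Phi(X))$ by translation invariance and Fubini, cancelling the $X\subset\Lambda$ terms against $\omega(H^\Phi_\Lambda)$, bounding the residual sum by a surface term $\sum_{y\in\partial_R\Lambda}f_\Lambda(y)$, and then using the box partition from the Van Hove definition to force $|\partial_R\Lambda|/|\Lambda|\to 0$ — is exactly the standard Bratteli--Robinson argument, so there is nothing to contrast.

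Two small remarks. First, the paper's displayed norm condition contains a typo, $\|\Phi\|=\sum_{\Lambda\ni 0}\|\Phi\|/|\Lambda|$, which you have (correctly and silently) read as $\sum_{\Lambda\ni 0}\|\Phi(\Lambda)\|/|\Lambda|<\infty$; this is the hypothesis that makes $E_\Phi$ a well-defined element of $\alg{A}$ and justifies all the absolute-convergence interchanges. Second, the boundary count is the only place where a reader could stumble: for a site $y$ in a box $B$ that is \emph{contained} in $\Lambda$, you need that $B_R(y)\not\subseteq\Lambda$ forces $y$ within distance $R$ of $\partial B$ — which holds because the offending point of $B_R(y)$ is outside $\Lambda\supseteq B$, hence outside $B$ — and you need $L>2R$ so that $(L-2R)^d$ makes sense. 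Both are fine as written; I flag them only because they are the genuinely quantitative steps.
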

It follows that for translationally invariant states and ditto interactions we can again define the free energy functional\index{free energy functional} by
\[
F_\beta(\omega) = H_\Phi(\omega) - \beta^{-1} S(\omega).
\]
Recall that in the finite dimensional case the KMS states were precisely those states that minimize the free energy functional (since the KMS states are precisely the Gibbs states in that case). This is true in the present setting as well, at least if we make a slightly stronger assumption on the behaviour of the interaction. The result can be stated as follows (again we omit the proof).
\begin{theorem}\label{thm:transinvgs}
Let $\Phi$ be a translationally invariant interaction such that there is some $\lambda > 0$ such that
\[
\|\Phi\|_\lambda = \sum_{\Lambda \ni 0} \| \Phi(\Lambda) \| e^{\lambda |\Lambda|} < \infty.
\]
Write $\alpha$ for the corresponding time evolution. Then the following are equivalent for a translationally invariant state:
\begin{enumerate}
	\item $\omega$ is a KMS state at inverse temperature $\beta$ with respect to $\alpha$;
	\item $\omega$ minimizes the free energy functional $\sigma \mapsto F_\beta(\sigma)$.
\end{enumerate}
\end{theorem}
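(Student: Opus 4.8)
The plan is to bootstrap the infinite-volume variational principle from the trivial finite-volume one, and then to identify the resulting \emph{equilibrium states} (minimisers of $F_\beta$) with KMS states via the energy--entropy balance characterisation of Theorem~\ref{thm:autocor}. I would start with the finite-volume statement. For $\Lambda \in \mc{P}_f(\Gamma)$ put $H_\Lambda = \sum_{X \subset \Lambda}\Phi(X)$ and $Z_\Lambda = \Tr(e^{-\beta H_\Lambda})$, and write the restriction of a state $\sigma$ to $\alg{A}(\Lambda)$ as a density matrix $\sigma_\Lambda$. Applying~\eqref{eq:srelfree} in the finite-dimensional algebra $\alg{A}(\Lambda)$ gives
\[
S_{rel}(Z_\Lambda^{-1}e^{-\beta H_\Lambda} \,|\, \sigma_\Lambda) = \beta\big( \sigma(H_\Lambda) - \beta^{-1}S_\Lambda(\sigma) + \beta^{-1}\log Z_\Lambda \big).
\]
Since $S_{rel} \ge 0$, with equality exactly when $\sigma_\Lambda = Z_\Lambda^{-1}e^{-\beta H_\Lambda}$, the local Gibbs state is the unique minimiser of $\sigma \mapsto \sigma(H_\Lambda) - \beta^{-1}S_\Lambda(\sigma)$ over states on $\alg{A}(\Lambda)$, and its value there is $-\beta^{-1}\log Z_\Lambda$.

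Next I would pass to the thermodynamic limit in the Van Hove sense. Using subadditivity of the von Neumann entropy and the bound $\|\Phi\|_\lambda < \infty$, one obtains (as in the two propositions preceding the theorem, quoted from~\cite{MR1441540}) that for translation-invariant $\omega$ the limits $S(\omega) = \lim_{\Lambda\leadsto\infty} S_\Lambda(\omega)/|\Lambda|$ and $H_\Phi(\omega) = \lim_{\Lambda\leadsto\infty}\omega(H_\Lambda)/|\Lambda|$ exist and are affine in $\omega$, with $H_\Phi$ weak-$*$ continuous and $S$ weak-$*$ upper semicontinuous, and that the \emph{pressure} $P(\beta) := \lim_{\Lambda\leadsto\infty}(\beta|\Lambda|)^{-1}\log Z_\Lambda$ exists. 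Dividing the displayed inequality by $|\Lambda|$ and letting $\Lambda\leadsto\infty$ gives $F_\beta(\omega) \ge -P(\beta)$ for every translation-invariant $\omega$; by upper semicontinuity and the weak-$*$ compactness of the state space (Theorem~\ref{thm:statecompact}) the value $-P(\beta)$ is attained, and I would call a translation-invariant state an equilibrium state when $F_\beta(\omega) = -P(\beta)$. The theorem thus reduces to the equivalence: $\omega$ is an equilibrium state $\iff$ $\omega$ is a KMS state.

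For ``KMS $\Rightarrow$ equilibrium'' I would show that a KMS state $\omega$ satisfies $S_{rel}(Z_\Lambda^{-1}e^{-\beta H_\Lambda} \,|\, \omega_\Lambda) = o(|\Lambda|)$ --- in fact a surface-order bound, which is where the local structure of KMS states and the decay of $\Phi$ enter --- so that the finite-volume identity above forces $F_\beta(\omega) = -P(\beta)$ in the limit. For ``equilibrium $\Rightarrow$ KMS'' I would use convex duality together with Araki's perturbation theory: $P$ extends to a convex, weak-$*$ continuous functional on the Banach space of interactions with $\|\cdot\|_\lambda < \infty$, and $\omega$ being an equilibrium state for $\beta\Phi$ is equivalent to $\omega$ being a tangent functional to $P$ at $\beta\Phi$; perturbing $\Phi$ by a self-adjoint local operator $Q$ and expanding $e^{-\beta(H_\Lambda + Q)}$ by the Duhamel formula (using that the entire elements $\alg{A}_\alpha$ form a dense $*$-subalgebra, so $\alpha_z$ is available), the tangency condition, after a suitable choice of perturbation built from $A^*A$, $AA^*$ and $A^*\delta(A)$, turns into the energy--entropy balance inequality of Theorem~\ref{thm:autocor}. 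Invoking that theorem then yields the KMS property and completes the argument.

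The main obstacle is the ``equilibrium $\Rightarrow$ KMS'' direction: it genuinely requires differentiability of the pressure under local perturbations and the identification of the derivative with an equilibrium expectation (Araki's perturbation expansion), and this is exactly where the strengthened decay hypothesis $\|\Phi\|_\lambda < \infty$ is needed --- it guarantees that the perturbed interaction still generates a $C^*$-dynamical system with enough analyticity (entire elements, convergence of the perturbed local dynamics and of the perturbed local Gibbs states) for Proposition~\ref{prop:limkms} and Theorem~\ref{thm:autocor} to apply. The analytic input behind the thermodynamic limit --- existence, affineness and semicontinuity of the entropy density and of the pressure --- is also non-trivial, but we would take it from~\cite{MR1441540}.
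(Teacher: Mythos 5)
The paper omits the proof of this theorem entirely (it is imported from Bratteli--Robinson Vol.\ 2), so there is no in-paper argument to compare against. Judged on its own terms, your outline is the standard one: the finite-volume Gibbs variational principle via the relative-entropy identity, the Van Hove limit giving existence and affineness/semicontinuity of $S$, $H_\Phi$ and the pressure, the reformulation of ``equilibrium'' as ``tangent functional to the pressure'', and Araki perturbation theory plus the Roepstorff--Araki--Sewell inequality to close the loop back to KMS. That is exactly the route taken in~\cite{MR1441540}, so the strategy is sound.

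Two places, though, are more delicate than your write-up suggests, and one of them is a genuine imbalance in how you allocate the hard work. For ``KMS $\Rightarrow$ equilibrium'' you say one should show $S_{rel}(Z_\Lambda^{-1}e^{-\beta H_\Lambda}\,|\,\omega_\Lambda)=o(|\Lambda|)$ and wave at ``the local structure of KMS states''. But the restriction of an infinite-volume KMS state to $\alg{A}(\Lambda)$ is \emph{not} the local free-boundary Gibbs state, nor close to it in any naive sense: the discrepancy is controlled only through the \emph{Gibbs condition}, which says (via modular theory) that $\omega$ restricted and conditioned looks like a perturbed Gibbs state, with perturbation the surface energy $W_\Phi(\Lambda)$. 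Establishing KMS $\Rightarrow$ Gibbs already requires the full Araki perturbation machinery for KMS states; so the analytic input you describe as entering only on the ``equilibrium $\Rightarrow$ KMS'' side is in fact needed for both directions. Without the Gibbs-condition step the surface-order bound is not just unproved but not obviously true. Relatedly, the claim that the tangency condition ``turns into the energy--entropy balance inequality'' after a clever choice of local perturbation is plausible but is not the textbook chain of implications; the standard argument goes tangent functional $\Rightarrow$ Gibbs condition $\Rightarrow$ KMS via modular theory, and then uses Theorem~\ref{thm:autocor} as an independent \emph{characterisation}, not as an intermediate target. Your shortcut may work, but you should be explicit that you are proposing a variant and would need to supply the Duhamel/perturbation estimates that make the chosen perturbations admissible. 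Finally, be careful with the remark that tangent functionals are automatically translation-invariant: they need not be, and part of the argument is showing that a translation-invariant minimiser is nevertheless a tangent functional in the full (non-invariant) sense before the perturbation argument can be run.

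Apart from these gaps in the ``KMS $\Rightarrow$ equilibrium'' half, the skeleton is correct and matches the intended (omitted) proof.
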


For ground states there is a similar result. Under the same assumptions as in the theorem, one can show that a translationally invariant state is a ground state for the dynamics if and only if $\omega$ minimizes $H_\Phi(\omega)$.\index{ground state!translationally invariant}

\begin{remark}\label{rem:surface}
	We have restricted ourselves to translationally invariant states, but similar results are also true for general systems (under suitable assumptions on the interaction $\Phi$). In this case the mean energy and mean entropy lose their meaning. Instead one can look at arbitrary bipartitions of the system in a finite part and its complement. Then one can calculate certain entropic quantities and energies. In the calculation of the energies one has to be careful, since there are contributions coming from the interactions at the boundary of $\Lambda$, the so-called \emph{surface elements}. Using these quantities again a free energy functional can be defined, now depending on the region $\Lambda$. Given a state $\omega$, one can then look at the set $C_\omega^\Lambda$ of all states that agree with $\omega$ when restricted to $\Lambda^c$. If for each region $\Lambda$ the free energy functional applied to $\omega$ is the same as the infimum of the value on all states in $C_\omega^\Lambda$, then $\omega$ is a KMS state (and vice versa). Details can be found in~\cite[Sect. 6.2.3]{MR1441540}.
\end{remark}
\index{state!translationally invariant|)}

\subsection{Thermalisation and the second law of thermodynamics}\index{thermalisation}
We end the discussion of KMS states with two properties relating them to thermodynamical questions. First of all, suppose we have a large (in our case, infinite) system in equilibrium at inverse temperature $\beta$. We then consider another small (finite) system in any given state and let the systems interact. In other words, we consider a quantum system coupled to a heath bath in some way. One expects that after some time, the small system will be driven to a thermal state at the same inverse temperature. This is called the \emph{zeroth law of thermodynamics}. It turns out that this is true and in fact this condition uniquely characterizes the KMS states (under certain assumptions on the coupling between the two systems). This was first proven in~\cite{MR0468989}. An overview of the main arguments can be found in~\cite[pp. 114--117]{MR1919619}. Here we will content ourselves with a brief sketch on how to describe these systems.

As for the reservoir system, we assume that it is described by a $C^*$-dynamical system $(\alg{A}_R, \alpha_R)$ with corresponding generator $\delta_R$. The small system is described by a finite dimensional algebra $\alg{A}_f = M_d(\mathbb{C})$ with time evolution governed by a Hamiltonian $H_f$. The system as a whole is then described by the observable algebra $\alg{A} = \alg{A}_f \otimes \alg{A}_{R}$. We can define a derivation describing the evolution when there is no coupling between the systems. It is just given by
\[
	\delta(A \otimes B) = i ([H,A]) \otimes I + I \otimes \delta_R(B).
\]
One can check that this indeed generates a time evolution of the form $\alpha_t = \alpha_t^f \otimes \alpha_t^R$. To introduce a coupling between the systems, we have to add a term that acts on both sides at the same time. We assume that there is an interaction of the form $\lambda V$, where $\lambda$ is some parameter and $V$ is of the form
\[
	V = \sum_{j=1}^n A_j \otimes A_j^R.
\]
We assume that $V$ is self-adjoint and in addition that the reservoir initially is in a state $\omega$. We  demand that $\omega(A_j^R) = 0$ for all $j$. The dynamics of the perturbed system is then generated by the derivation $\delta(\cdot) + i \lambda [V, \cdot]$, where $\delta$ is the derivation of the uncoupled system as above. Now suppose that the finite system is in an initial state $\rho$, such that the combined system is in the state $\rho \otimes \omega$. Under the time evolution the state will also undergo an evolution. At a later point $t$ we might be interested in the state of the finite system. In general, the system as a whole is not in a product state any more, because of the interactions between the bath and the system. Nevertheless, we can still ``forget'' about the big system and obtain a state $\rho_t$ of the small system. This can be done with the help of a \idx{conditional expectation}.\footnote{This is a generalisation of the partial trace operation.} 

\begin{definition}
	Let $\alg{A} \subset \alg{B}$ be two $C^*$-algebras. A positive linear map $\mathcal{E}: \alg{A} \to \alg{B}$ is called a \emph{conditional expectation} if $\mathcal{E}(I) = I$ and $\mathcal{E}(ABC) = A \mathcal{E}(B) C$ for all $A,C \in \alg{B}$ and $B \in \alg{A}$.
\end{definition}

The conditional expectation maps observables in $\alg{B}$ to observables in $\alg{A}$. Hence we can interpret $\mathcal{E}$ as an operation that limits our abilities to measure the system. In the particular case that we are considering here it amounts to ``forgetting'' about the heat bath our system is coupled to.

\begin{exercise}
	Set $\alg{A}_i = M_{n_i}(\mathbb{C})$ for some positive integers $n_1, n_2$ and let $\alg{A} = \alg{A}_1 \otimes \alg{A}_2$. Define a map $\mathcal{E}$ on $\alg{A}$ via
	\[
		\mathcal{E}(A) := \int_{\mathcal{U}(I \otimes \alg{A}_2)} (UAU^*) dU.
	\]
	Here the integration is with respect to the \emph{Haar measure} on the group of unitaries in $I \otimes \alg{A}_2$. Note that since this group is compact, the Haar measure exists and can be normalised to have measure 1. It is also left (and right) invariant: $d(VU) = dU$ for every unitary $V$ in $I \otimes \alg{A}_2$.

	Show that $\mathcal{E} : \alg{A} \to \alg{A}_1 \otimes I$ is a conditional expectation. You may use the facts that every element in $\alg{A}_2$ can be written as a linear combination of unitaries, and that in this case $(I \otimes \alg{A}_2)' = \alg{A}_1 \otimes I$.\footnote{The prime denotes the commutant, see equation~\eqref{eq:commutant}.}
\end{exercise}

We can now look at the state of the small system at each time $t$ to see how it evolves. Because of the coupling this time evolution is in general not given by a unitary any more. Nevertheless, there is a map $\gamma_t$ such that $\rho_t = \gamma_t(\rho)$. Technically, $t \mapsto \gamma_t$ is a one-parameter \emph{semi}group of completely positive transformations on $\alg{A}_f$. In particular, it might not be reversible. The complete positivity has a physical interpretation. It means that if we add another (finite dimensional) quantum system $\alg{B}$ to it, that is not coupled to the rest of the system, we can extend $\gamma_t$ to a positive map on $\alg{A} \otimes \alg{B}$. Indeed, by definition of complete positivity, $\gamma_t \otimes \operatorname{id}_\alg{B}$ is a unital positive map (and hence maps states to states).

It is perhaps insightful to indicate how this map $\gamma_t$ can be obtained. Suppose that the full dynamics on the whole system are implemented by a one-parameter group of unitaries $t \mapsto U_t$ (possibly acting on a GNS space of the whole system). Recall that the reservoir is in a state $\omega$. We can then define a map $P: \alg{A}_f \times \alg{A}_r \to \alg{A}_f$ via $P(A \otimes A_R) = A \omega(A_R)$. Then $\gamma_t : \alg{A}_f \to \alg{A}_f$ can be defined by 
\[
	\gamma_t(A) := P(U_t(A \otimes I) U_t^*),
\]
with $A \in \alg{A}$. It can be checked that, for a state $\rho$ on $\alg{A}_f$, $\rho(\gamma_t(A))$ is the expectation value of $A$ at time $t$ with the state at $t=0$ given by $\rho \otimes \omega$. See~\cite{Sewell2003} and references therein for a more complete discussion.

In this situation, it is shown in~\cite{MR0468989} that, roughly speaking, the following conditions are equivalent:
\begin{enumerate}
	\item $\omega$ is a KMS state at inverse temperature $\beta$,
	\item There is a unique state $\rho$ on the finite system that is invariant under the dynamics induced by the coupled system. This is the Gibbs state at inverse temperature $\beta$. Moreover, any initial state $\rho_0$ converges to $\rho$ as $t \to \infty$.
\end{enumerate}
Hence, as expected, the state of the small system will tend to the (unique) equilibrium state at the same temperature as the big reservoir to which it is coupled.

Finally we have a look at the second law of thermodynamics, which says that one cannot extract energy from a system at thermal equilibrium in a cyclic process (that is, by returning back to the equilibrium dynamics). Again we first investigate the question for a finite dimensional system governed by a Hamiltonian $H$. Let $\omega$ be the corresponding Gibbs state for $\beta > 0$. Note that the energy of the system is given by $\omega(H)$. We then apply a unitary operation $U$ to the system, to obtain a state $\omega_U(A) := \omega(U^*AU)$. The energy difference between the new and the old state is then
\begin{equation}
	\label{eq:energydifference}
		\omega_U(H) - \omega(H) = \omega(U^*HU)- \omega(H) = -i \omega(U^*\delta(U)),
\end{equation}
where $\delta(U) = i [H,U]$ as usual. Now represent $\omega$ by a density operator $\rho$. Then $\omega_U$ is represented by a unitary equivalent density operator. But this does not change the eigenvalues of $\rho$, and it follows that $S(\omega) = S(\omega_U)$. Therefore we have the following inequality for the free energy\index{free energy}:
\[
F_\beta(\omega_U)-F_\beta(\omega) = \omega_U(H)-\omega(H) \geq 0.
\]
The inequality follows because the Gibbs state minimizes the free energy. Hence the energy of the new state cannot be lower than the energy of the equilibrium state. By equation~\eqref{eq:energydifference} this is equivalent to $-i \omega(U^*\delta(U)) \geq 0$. This condition can be generalised to infinite systems in a straightforward way, leading to the condition of \emph{passivity}.
\begin{definition}
	Let $(\alg{A}, \alpha)$ be a $C^*$-dynamical system with generator $\delta$. Suppose that $\omega$ is an invariant state. Then $\omega$ is said to be \emph{passive} if 
\[
	-i \omega(U^* \delta(U)) \geq 0
\]
for all unitaries $U$ in the domain $D(\delta)$.\index{state!passive}
\end{definition}
Note that by definition ground states are passive. In a seminal work, Pusz and Woronowicz proved that in fact any KMS state is passive~\cite{MR0471796}.

\begin{theorem}[Pusz-Woronowicz]\index{Pusz-Woronowicz theorem}
Let $(\alg{A}, \alpha)$ be a $C^*$-dynamical system. Then every $\beta$-KMS state is passive.
\end{theorem}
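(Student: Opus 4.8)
The plan is to read off passivity directly from the Roepstorff--Araki--Sewell characterisation of KMS states, Theorem~\ref{thm:autocor}. That result states that a $\beta$-KMS state $\omega$ satisfies, for every $A \in D(\delta)$,
\[
	-i\beta\,\omega(A^*\delta(A)) \geq \omega(A^*A)\log\!\left(\frac{\omega(A^*A)}{\omega(AA^*)}\right).
\]
Since a unitary $U$ has $U^*U = UU^* = I$, everything on the right-hand side is determined, so the inequality ought to collapse to exactly the passivity condition $-i\,\omega(U^*\delta(U)) \geq 0$.

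Carrying this out: the only step is to substitute $A = U$ for a unitary $U \in D(\delta)$. Because $\omega$ is a state, $\omega(U^*U) = \omega(I) = 1$ and $\omega(UU^*) = \omega(I) = 1$, so the argument of the logarithm equals $1$ and the right-hand side is $1 \cdot \log 1 = 0$. Hence $-i\beta\,\omega(U^*\delta(U)) \geq 0$, and dividing by $\beta > 0$ gives $-i\,\omega(U^*\delta(U)) \geq 0$. Since $U$ was an arbitrary unitary in $D(\delta)$, $\omega$ is passive.

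In this form there is no real obstacle, because the analytic work has been pushed into Theorem~\ref{thm:autocor}. If one wanted a proof from first principles that does not cite that theorem, one would instead work directly from Definition~\ref{def:kms} applied to the pair $(U^*, U)$: first reduce to entire $U$ using norm density of $\alg{A}_\alpha$, then consider $F_{U^*,U}(z) = \omega(U^*\alpha_z(U))$ on the strip $S_\beta$ via Proposition~\ref{prop:kmsanalytic}, observe $F_{U^*,U}(0) = F_{U^*,U}(i\beta) = 1$, identify $\tfrac{d}{dt}\big|_{t=0}F_{U^*,U}(t) = \omega(U^*\delta(U))$, and then extract the sign of this derivative from the boundary behaviour of $F_{U^*,U}$ using a Hadamard three-lines / convexity estimate. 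In that approach the genuine difficulties are justifying the differentiation under $\omega$, controlling the analytic continuation $\alpha_z(U)$, and making the convexity estimate rigorous --- all of which the streamlined argument above avoids entirely.
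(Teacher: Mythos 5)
The paper states Pusz--Woronowicz without proof (it only cites the original reference and gives an informal finite-dimensional free-energy argument in the surrounding text), so there is nothing in the source to compare your argument against step by step. That said, your route via the Roepstorff--Araki--Sewell inequality (Theorem~\ref{thm:autocor}) is a correct and indeed standard short derivation: substituting a unitary $U \in D(\delta)$ gives $\omega(U^*U)=\omega(UU^*)=\omega(I)=1$, so the right-hand side is $1\cdot\log 1 = 0$, and dividing by $\beta>0$ yields the passivity inequality. The historical Pusz--Woronowicz proof is quite different --- it works directly with time-dependent perturbations $\Gamma_t^P$ and a compactness/convexity argument rather than the entropy-type inequality --- so what your approach buys you is a one-line proof, at the cost of presupposing the nontrivial analytic content already packaged into Theorem~\ref{thm:autocor}, which the paper also states without proof.

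One small point you should address explicitly: the paper's definition of ``passive'' presupposes that $\omega$ is $\alpha$-invariant, so a complete proof has to note that a $\beta$-KMS state is automatically invariant. This is in fact already encoded in the machinery you are using: the right-hand side of the RAS inequality is real-valued (or $\pm\infty$), so $-i\beta\,\omega(A^*\delta(A))\in\mathbb{R}$; taking $A=A^*$ gives $i\,\omega(A\delta(A))\in\mathbb{R}$, and then Lemma~\ref{lem:invder} yields $\omega\circ\alpha_t=\omega$. Adding that sentence closes the only gap. Your sketch of the alternative ``first-principles'' route via $F_{U^*,U}$ and a three-lines estimate is also sound in outline and is essentially how one proves the RAS inequality itself.
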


The condition of passivity is related to the inability to let the equilibrium system perform work, as we will outline now. Consider dynamics $\alpha_t$ generated by some derivation $\delta$. At $t=0$ we start perturbing the dynamics. One can think of turning on a magnetic field, for example. We do this until some later time $T$, after which all perturbations are turned off again and we are back with the original dynamics. Mathematically this can be described as follows. At times $0 \leq t \leq T$, the dynamics is not generated by $\delta$ alone, but by
\[
	\delta^P_t(A) = \delta(A) + i [P_t, A].
\]
Note that if the dynamics is generated by a Hamiltonian, this amounts to adding an extra term $P_t$ to the Hamiltonian. We will assume that $P_t : \mathbb{R} \to \alg{A}$ with $P_t = P_t^*$ for all $t$. Moreover, $P_t = 0$ if $t \leq 0$ or $t \geq T$. Finally, we will assume that $t \mapsto P_t$ is differentiable.

The introduction of perturbations means that the time evolution is modified: for each $t$ we have an automorphism $\alpha_t^P$ describing the time evolution. Note that $t \mapsto \alpha_t^P$ does not need to be a one-parameter group any more! Since the dynamics is now time-dependent, finding the automorphisms $\alpha_t^P$ is not as straightforward as in the time-independent case. Nevertheless, one can show that they exist and are related to the original dynamics.
\begin{theorem}
Let $(\alg{A}, \alpha)$ be a $C^*$-dynamical system with generator $\delta$. Let $t \mapsto P_t$ be as above. Then there is a unique one-parameter family of $*$-automorphisms $\alpha_t^P$ solving the differential equation
\[
\frac{d\alpha_t^P}{d t}(A) = \alpha_t^P(\delta(A) + i [P_t, A])
\]
for all $A \in \delta(D)$ and with initial condition $\alpha_0^P(A) = A$. Moreover, for each $t$ there is a unitary $\Gamma^P_t \in \alg{A}$ such that
\[
	\alpha_t^P(A) = \Gamma_t^P \alpha_t(A) (\Gamma_t^P)^*
\]
for all $A \in \alg{A}$.
\end{theorem}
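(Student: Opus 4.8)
The plan is to reduce the perturbed problem to the already-constructed dynamics $\alpha$ by conjugating it with a unitary \emph{cocycle} $\Gamma_t^P\in\alg{A}$. Since $t\mapsto P_t$ is differentiable and supported in $[0,T]$, the quantity $M:=\sup_{t\in\mathbb{R}}\|P_t\|$ is finite and $t\mapsto P_t$ is norm-continuous. First I would solve, by Picard (Dyson) iteration, the integral equation
\[
\Gamma_t^P = I + i\int_0^t \Gamma_s^P\,\alpha_s(P_s)\,ds .
\]
Because each $\alpha_s$ is isometric, the $n$-th iterate is bounded in norm by $M^n|t|^n/n!$, so the series converges in norm uniformly on compact $t$-intervals and defines a norm-differentiable family $\Gamma_t^P\in\alg{A}$ with $\dot\Gamma_t^P = i\Gamma_t^P\alpha_t(P_t)$ and $\Gamma_0^P=I$. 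Differentiating $\Gamma_t^P(\Gamma_t^P)^*$ and using that $\alpha_t(P_t)$ is self-adjoint shows this product is constant, hence equal to $I$; and $(\Gamma_t^P)^*\Gamma_t^P$ solves the homogeneous linear equation $\dot Z_t = i[Z_t,\alpha_t(P_t)]$ with $Z_0=I$, whose unique solution is $Z_t\equiv I$ (Gronwall). Thus each $\Gamma_t^P$ is unitary.

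\textbf{Existence.} Next I would simply \emph{define} $\alpha_t^P := (\Ad\Gamma_t^P)\circ\alpha_t$, i.e.\ $\alpha_t^P(A)=\Gamma_t^P\,\alpha_t(A)\,(\Gamma_t^P)^*$; this is manifestly a $*$-automorphism of $\alg{A}$ and $\alpha_0^P=\mathrm{id}$. For $A\in D(\delta)$ the Lemma on derivations gives $\alpha_t(A)\in D(\delta)$ with $t\mapsto\alpha_t(A)$ norm-differentiable and $\frac{d}{dt}\alpha_t(A)=\alpha_t(\delta(A))$. Hence, by the product rule,
\[
\frac{d}{dt}\alpha_t^P(A) = \dot\Gamma_t^P\,\alpha_t(A)(\Gamma_t^P)^* + \Gamma_t^P\,\alpha_t(\delta(A))\,(\Gamma_t^P)^* + \Gamma_t^P\,\alpha_t(A)\,\dot{(\Gamma_t^P)}{}^{*}.
\]
Substituting $\dot\Gamma_t^P=i\Gamma_t^P\alpha_t(P_t)$ and $\dot{(\Gamma_t^P)}{}^{*}=-i\alpha_t(P_t)(\Gamma_t^P)^*$, the first and third terms combine to $i\Gamma_t^P\,\alpha_t([P_t,A])\,(\Gamma_t^P)^* = \alpha_t^P(i[P_t,A])$, while the middle term is $\alpha_t^P(\delta(A))$. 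Adding these gives exactly $\alpha_t^P(\delta(A)+i[P_t,A])$, the required differential equation, with the right initial condition.

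\textbf{Uniqueness.} For this I would pass to the Duhamel (variation-of-constants) form. If $\beta_t$ is any one-parameter family of $*$-automorphisms solving the same equation with $\beta_0=\mathrm{id}$, then for $A\in D(\delta)$ the function $s\mapsto\beta_s(\alpha_{t-s}(A))$ has norm-derivative $i\beta_s([P_s,\alpha_{t-s}(A)])$ (the $\delta$-terms cancel), so integration from $0$ to $t$ yields
\[
\beta_t(A) = \alpha_t(A) + i\int_0^t \beta_s\bigl([P_s,\alpha_{t-s}(A)]\bigr)\,ds .
\]
This is a Volterra equation. Subtracting it for two solutions $\beta^{(1)},\beta^{(2)}$, noting that $[P_s,\alpha_{t-s}(A)]$ does not depend on the solution, and using that automorphisms are isometries, one gets $\|\beta^{(1)}_t-\beta^{(2)}_t\|_{\mathrm{op}}\le 2M\int_0^t\|\beta^{(1)}_s-\beta^{(2)}_s\|_{\mathrm{op}}\,ds$, so Gronwall forces $\beta^{(1)}=\beta^{(2)}$ on $D(\delta)$, and then on all of $\alg{A}$ by density and isometry. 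Since $\alpha_t^P$ is one such solution, every solution coincides with it; the cocycle representation $\alpha_t^P(A)=\Gamma_t^P\alpha_t(A)(\Gamma_t^P)^*$ is then the one established in the existence step.

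\textbf{Main obstacle.} The convergence and unitarity of the cocycle and the product-rule computation are routine once set up; the delicate point is the bookkeeping around the unbounded generator $\delta$: making precise that $t\mapsto\alpha_t(A)$ is genuinely norm-differentiable with derivative $\alpha_t(\delta(A))$ for $A\in D(\delta)$ (which is exactly the invariance $\alpha_t(D(\delta))\subset D(\delta)$ from the Lemma), and performing the Duhamel manipulation strictly on $D(\delta)$ before extending by continuity. This, rather than any analytic estimate, is where care is needed.
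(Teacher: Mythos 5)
Your proof is correct and, while the notes themselves only cite Bratteli--Robinson, Sect.\ 5.4.4, for this result, your argument is exactly the standard one from that reference: construct the unitary cocycle $\Gamma_t^P$ by iterating the Dyson-type integral equation, show unitarity from the two first-order ODEs, define $\alpha_t^P=\Ad\Gamma_t^P\circ\alpha_t$, check the differential equation by the product rule, and establish uniqueness via the Duhamel rewrite and Gronwall. The subtleties you flag --- the norm-differentiability of $t\mapsto\alpha_t(A)$ for $A\in D(\delta)$, the invariance $\alpha_t(D(\delta))\subset D(\delta)$, and carrying out the Volterra manipulation on $D(\delta)$ before extending by density and isometry --- are indeed the points one must state carefully, and you have identified them correctly.
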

One can give an explicit expression for $\Gamma_t^P$ in terms of a perturbation series (see~\cite[Sect. 5.4.4]{MR1441540}).

Perturbing the system means that work is performed by some external forces. The question is then, how much work is actually performed? Suppose that initially the system is in a state $\omega$ at $t=0$. At some later time the state has evolved to $\omega_t = \omega \circ \alpha_t^P$. To estimate the change in energy we divide the interval $[0,T]$ up into $N$ intervals $(t_i, t_{i+1})$. If $N$ is big enough we can assume that the state $\omega_t$ doesn't change much over the time of an interval. The difference in energy is then given by $\omega_{t_i}(P_{t_i}-P_{t_{i-1}})$. Summing over all intervals we obtain
\[
	L^P(\omega) = \sum_{i=1}^N \omega_{t_i}(P_{t_i}-P_{t_{i-1}}).
\]
In the limit $N \to \infty$ this can be expressed as an integral. 

Alternatively we can argue as in the finite case. The total amount of work is given by the energy of the state $\omega_T$ minus the energy of the state $\omega_0 = \omega$. Note that $\omega_T(A) = \omega(\Gamma^P_T \alpha_T(A) (\Gamma^P_T)^*)$. One then has a similar result for the energy difference of the two states as in the case of finite systems discussed above. That is, it is given by $-i \omega(\Gamma^P_T \delta ((\Gamma^P_T)^*)$. In fact it is equal to $L^P(\omega)$ defined above (where strictly speaking we should use the integral formulation obtained from $N \to \infty$), that is
\[
	L^P(\omega) = -i \omega(\Gamma^P_T \delta( (\Gamma^P_T)^*).
\]
This gives us the second law of thermodynamics for KMS states. Since KMS states are passive, $-i \omega(\Gamma^P_t \delta( (\Gamma^P_t)^*)) \geq 0$, hence the system has more (or equal) energy after the ``cycle'' of perturbing the system and going back to the original dynamics. It is therefore not possible to extract energy from the system in a cycle. This gives further evidence that the notion of a KMS state is indeed the right one to describe thermal equilibrium states.

\begin{remark}
Under some additional assumptions a converse is also true. That is, under these assumptions one can show that a passive state is a KMS state. The extra assumptions are necessary because the set of passive states is convex. For example, a convex combination of two KMS states at different temperatures is a passive state, but it is clearly not a KMS state at a certain temperature. Therefore one needs to impose additional assumptions to somehow pick out the ``pure phases''.
\end{remark}

\section{The toric code}\label{sec:toriccode}
We now apply the results and techniques developed in this chapter to an important example in quantum information theory, the \idx{toric code}. This model was first introduced by Kitaev~\cite{MR1951039}. The reason that it is called the toric code is two-fold: the model is often considered on a torus (i.e., as a finite system with periodic boundary conditions in the $x$ and $y$ direction), and it is an example of a \emph{quantum code}. Quantum codes are used to store quantum information and correct errors. We will only make some brief comments later on this aspect.

\begin{figure}
	\begin{center}
	\begin{tikzpicture}
		\draw[step=1cm,gray,very thin] (-4.4,-2.4) grid (4.4,3.4);

		\foreach \x in {-4,-3,...,4}
			\foreach \y in {-2,-1,...,3}
			{
				\filldraw[fill=gray,color=gray] (\x,\y) circle [radius=0.05];
			}
		
		\filldraw[fill=black] (-2,1) circle [radius=0.05];
		\filldraw[fill=black] (-3,1) circle [radius=0.05];
		\filldraw[fill=black] (-1,1) circle [radius=0.05];
		\filldraw[fill=black] (-2,2) circle [radius=0.05];
		\filldraw[fill=black] (-2,0) circle [radius=0.05];
		\draw[line width=2pt,dashed] (-3,1) -- (-1,1);
		\draw[line width=2pt,dashed] (-2,2) -- (-2,0);

		\filldraw[fill=black] (1,1) circle [radius=0.05];
		\filldraw[fill=black] (1,2) circle [radius=0.05];
		\filldraw[fill=black] (2,2) circle [radius=0.05];
		\filldraw[fill=black] (2,1) circle [radius=0.05];
		\draw[line width=2pt,solid] (1,1) -- (1,2);
		\draw[line width=2pt,solid] (1,2) -- (2,2);
		\draw[line width=2pt,solid] (2,2) -- (2,1);
		\draw[line width=2pt,solid] (2,1) -- (1,1);
	\end{tikzpicture}
\end{center}
\caption{Lattice on which Kitaev's toric code is defined. The spin degrees of freedom live on the \emph{edges} between the solid dots. Also indicated are a star (thick dashed lines) and a plaquette (thick solid lines).}
\label{fig:lattice}
\end{figure}
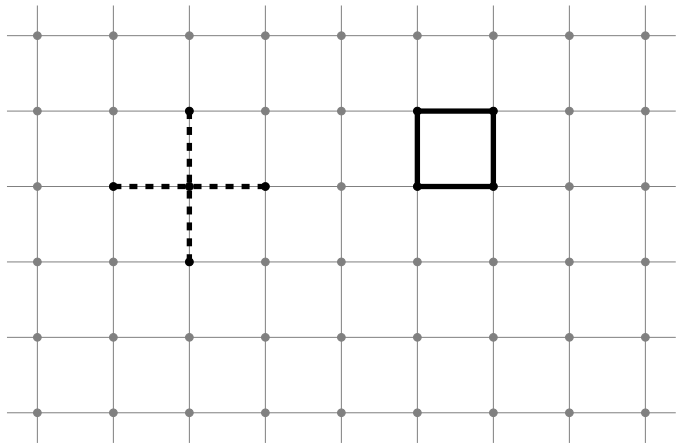
Instead of on a torus, we will consider the model on an infinite plane. That is, consider the lattice $\mathbb{Z}^2$. The set $\Gamma$ of sites is defined to consist of the \emph{edges} between nearest-neighbour points in the lattice (see Figure~\ref{fig:lattice}). At each of these edges there is a spin-1/2 degree of freedom, with corresponding observable algebra $M_2(\mathbb{C})$. We can then define the quasi-local algebra $\alg{A}(\Gamma)$ as before. Note that there is a natural action of the translation group, so that it makes sense to talk about translation invariant interactions or states.

There are two special subsets of sites that we will consider. For any vertex $v$ there are in total four edges that begin or end in that vertex (see the picture). Such a set will be called a \idx{star}. Similarly, one can define a \idx{plaquette}, as the edges around a vertex in the dual lattice (see the solid lines in the picture). To a star $s$ and a plaquette $p$ we associate the following operators:
\[
	A_s = \bigotimes_{j \in s} \sigma^x_j, \quad B_p = \bigotimes_{j \in p} \sigma^z_j.
\]
An important property is that $[A_s, B_p] = 0$ for any star $s$ and plaquette $p$. This can be seen because a star and a plaquette always have an even number of edges in common. Commuting the operators at each edge gives a minus sign, because of the anti-commutation of Pauli matrices. Since the number of minus signs is even the claim follows. Another property is that $A_s^2 = B_p^2 = I$.

The star and plaquette operators will be used to define the interactions of the model. Namely, for $\Lambda \in \mc{P}_f(\Gamma)$ we set\index{toric code!interaction}
\[
\Phi(\Lambda) = \begin{cases}
	-A_s & \textrm{$\Lambda = s$ for some star $s$} \\
	-B_p & \textrm{$\Lambda = p$ for some plaquette $p$} \\
	0 & \textrm{else}
\end{cases}.
\]
Note that the interaction is of finite range and bounded. Moreover, it is translation invariant. By Theorem~\ref{thm:dynconv} it follows that there is a one-parameter group $t \mapsto \alpha_t$ describing the time evolution with respect to these dynamics. In the remainder of this chapter we will discuss the ground states of this model. We will show in detail that there is a unique translationally invariant ground state, where we largely follow the proof of~\cite{MR2345476}. We then discuss other (non-translational invariant) ground states.

The toric code is actually only one example of a large class of models defined by Kitaev~\cite{MR1951039}, usually called the \emph{quantum double} models. For example, to each finite group $G$ one can associate such a model. The toric code corresponds to the choice $G = \mathbb{Z}_2$, the smallest non-trivial group. Using similar methods as discussed here one can find the translationally invariant ground state for any finite group $G$~\cite{haagdouble}.

To define the invariant ground state, the following lemma will be useful.
\begin{lemma}\label{lem:ststate}
	Suppose that $\alg{A}$ is a unital $C^*$-algebra and let $\omega$ be a state on $\alg{A}$. Suppose $X \in \alg{A}$ is such that $X^*=X$, the operator $I-X$ is positive, and $\omega(X) = 1$. Then for each $Y \in \alg{A}$ we have that $\omega(XY) = \omega(YX) = \omega(Y)$.
\end{lemma}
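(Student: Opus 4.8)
The plan is to reduce the statement to the Cauchy--Schwarz inequality of Lemma~\ref{lem:csineq}, applied to the positive element $Z := X - I$. First I would record the three elementary facts about $Z$: it is self-adjoint (since $X = X^*$), it is positive (by hypothesis ``$X-I$ is positive''), and $\omega(Z) = \omega(X) - \omega(I) = 1 - 1 = 0$, using that $\omega$ is a state on a \emph{unital} algebra so $\omega(I) = 1$. With this in hand, the whole statement follows from the sublemma: \emph{if $Z \in \alg{A}$ is positive and $\omega(Z) = 0$, then $\omega(YZ) = \omega(ZY) = 0$ for every $Y \in \alg{A}$.} Indeed, granting the sublemma, $\omega(XY) = \omega(ZY) + \omega(I\cdot Y) = 0 + \omega(Y)$ and similarly $\omega(YX) = \omega(YZ) + \omega(Y) = \omega(Y)$, which is exactly what is claimed.

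For the sublemma I would write $Z = W^*W$ with $W \in \alg{A}$ (this is the definition of ``$Z$ positive''). Then, applying Lemma~\ref{lem:csineq} with the choices $A = W$ and $B = WY^*$ (so that $B^* = YW^*$ and $B^*A = YW^*W = YZ$),
\[
|\omega(YZ)|^2 \;\leq\; \omega(B^*B)\,\omega(A^*A) \;=\; \omega(YW^*WY^*)\,\omega(W^*W) \;=\; \omega(YZY^*)\cdot\omega(Z) \;=\; 0,
\]
so $\omega(YZ) = 0$. For the opposite order I would invoke conjugate-symmetry of a state, $\omega(A^*) = \overline{\omega(A)}$ (the ``moreover'' clause of Lemma~\ref{lem:csineq} with $B = I$): since $Z^* = Z$,
\[
\omega(ZY) = \overline{\omega\bigl((ZY)^*\bigr)} = \overline{\omega(Y^*Z)} = \overline{0} = 0,
\]
applying the first part with $Y$ replaced by $Y^*$.

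I do not expect any real obstacle. The only point requiring a little attention is that the Cauchy--Schwarz estimate directly bounds only $\omega(YZ)$, so the other ordering $\omega(ZY)$ has to be obtained separately via $\omega(A^*) = \overline{\omega(A)}$; and one should note that it is precisely the positivity of $X - I$ (together with $\omega(X)=1$) that forces $\omega(X-I) = 0$ with $X - I \geq 0$, which is what the argument needs. An alternative phrasing would use the positive square root $(X-I)^{1/2}$ supplied by the continuous functional calculus in place of an arbitrary factorization $Z = W^*W$, but using the factorization keeps the proof self-contained within the material already developed.
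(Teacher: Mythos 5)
Your proof is correct and follows essentially the same route as the paper: both reduce to the Cauchy--Schwarz inequality of Lemma~\ref{lem:csineq} applied to the positive element $X-I$ using $\omega(X-I)=0$, the only cosmetic differences being that the paper writes the factor explicitly as $\sqrt{X-I}\cdot\sqrt{X-I}$ (which you note as an alternative) and handles the second ordering by a second Cauchy--Schwarz application rather than by conjugate symmetry.
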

\begin{proof}
Because $I-X$ is positive we can take the square root $\sqrt{I-X}$ (compare with footnote~\ref{ftn:sqrt} on page~\pageref{ftn:sqrt}), which is also self-adjoint. Hence we have
\[
|\omega( (I-X) Y)|^2 = |\omega( \sqrt{I-X} \sqrt{I-X} Y) |^2 \leq \omega(I-X) \omega(Y^*(X-I)Y) = 0, 
\]
where we used Lemma~\ref{lem:csineq}. The equality $\omega( Y(I-X)) = 0$ is proved similarly, from which the claim follows.
\end{proof}

With the help of this lemma it is not so difficult to show that any state with $\omega(A_s) = \omega(B_p) = 1$ is a ground state.
\begin{proposition}
Let $\omega$ be a state on $\alg{A}$ such that $\omega(A_s) = \omega(B_p) = 1$ for all stars $s$ and all plaquettes $p$. Then $\omega$ is a ground state of the toric code.
\end{proposition}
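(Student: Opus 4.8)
The plan is to verify the purely $C^*$-algebraic criterion for ground states from Theorem~\ref{thm:groundeq}: since the toric code interaction $\Phi$ is bounded and of finite range, it generates a strongly continuous dynamics $\alpha$ with generator $\delta$ on $D(\delta) = \alg{A}_{loc}$ (Theorem~\ref{thm:dynconv}), so it suffices to show $-i\,\omega(A^*\delta(A)) \geq 0$ for every strictly local $A$. First I would unwind $\delta(A)$ for such an $A$. Because $\delta(A) = \lim_{\Lambda \to \infty} i[H_\Lambda, A]$ with $H_\Lambda = -\sum_{s \subset \Lambda} A_s - \sum_{p \subset \Lambda} B_p$, and $[A_s, A] = 0$ whenever the star $s$ is disjoint from $\supp(A)$ (and similarly for plaquettes), only the finitely many stars and plaquettes meeting the finite set $\supp(A)$ can contribute. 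Hence the net is eventually constant and $\delta(A) = -i[Q, A]$, where $Q = \sum_{s\,:\,s\cap\supp(A)\neq\emptyset} A_s + \sum_{p\,:\,p\cap\supp(A)\neq\emptyset} B_p$ is a finite sum of self-adjoint operators, each with $A_s^2 = B_p^2 = I$; write $N$ for the number of terms in $Q$.

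Expanding and keeping the factors of $i$ straight gives
\[
	-i\,\omega(A^*\delta(A)) = \omega(A^*AQ) - \omega(A^*QA).
\]
For the first term I would apply Lemma~\ref{lem:ststate} with $X = 2I - A_s$: one has $X = X^*$, and $X - I = I - A_s$ is positive because $A_s = A_s^*$ with $\|A_s\| = 1$ forces $-I \leq A_s \leq I$, and $\omega(X) = 2 - \omega(A_s) = 1$. The lemma then yields $\omega(A_sY) = \omega(YA_s) = \omega(Y)$ for all $Y$, and likewise for each $B_p$; taking $Y = A^*A$ gives $\omega(A^*AQ) = N\,\omega(A^*A)$. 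For the second term, $I - A_s \geq 0$ implies $A^*(I - A_s)A \geq 0$, so $\omega(A^*A_sA) \leq \omega(A^*A)$ by positivity of $\omega$, and the same for the $B_p$; summing, $\omega(A^*QA) \leq N\,\omega(A^*A)$. Combining, $-i\,\omega(A^*\delta(A)) \geq N\,\omega(A^*A) - N\,\omega(A^*A) = 0$, and Theorem~\ref{thm:groundeq} shows $\omega$ is a ground state.

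The only slightly delicate point is the reduction of $\delta(A)$ to the finite commutator $-i[Q,A]$: one must check that the stars and plaquettes meeting $\supp(A)$ really are finite in number (each site lies in only finitely many of them) and that the net $\Lambda \mapsto i[H_\Lambda, A]$ is eventually constant, not merely norm convergent. Everything else is routine bookkeeping — tracking the sign in $-i\,\omega(A^*\delta(A))$ and verifying the three hypotheses of Lemma~\ref{lem:ststate} for $2I - A_s$ and $2I - B_p$. (If one prefers to avoid the monotonicity step for the $\omega(A^*QA)$ bound, the same estimate follows by applying the Cauchy--Schwarz inequality of Lemma~\ref{lem:csineq} after writing $I - A_s = (I - A_s)^{1/2}(I - A_s)^{1/2}$, but the monotonicity argument is shorter.)
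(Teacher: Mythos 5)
Your proof follows the paper's argument essentially step for step: reduce to the criterion $-i\,\omega(A^*\delta(A)) \geq 0$ of Theorem~\ref{thm:groundeq}, identify $\delta(A)$ as a finite commutator with the stars and plaquettes meeting $\supp(A)$, split the resulting expression as $\omega(A^*AQ) - \omega(A^*QA)$, handle the first sum with Lemma~\ref{lem:ststate} and the second with positivity of $A^*(I-A_s)A$. Your invocation of Lemma~\ref{lem:ststate} with $X = 2I - A_s$ is in fact \emph{more} careful than the paper's, which states that $I - A_s$ is positive and cites the lemma without comment, even though as literally stated the lemma requires $X - I \geq 0$ (so the direct choice $X = A_s$ would not satisfy the hypothesis); your shift by $2I$ makes the hypotheses match exactly.
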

\begin{proof}
Let $A \in \alg{A}(\Lambda)$ be a local operator. Write $\delta$ for the derivation corresponding to the toric code interaction. Because of locality, most terms in $\delta(A)$ vanish. We obtain
\[
\delta(A) = - i \sum_{s \cap \Lambda \neq \emptyset} [A_s, A] - i \sum_{p \cap \Lambda \neq \emptyset} [B_p, A],
\]
where the sum as over all stars $s$ (plaquettes $p$) that have non-empty intersection with the support of $A$. Hence we have
\[
\begin{split}
	-i \omega(A^*\delta(A)) =  \sum_{s \cap \Lambda \neq \emptyset} & \omega(A^*A A_s) - \omega(A^*A_s A) + 
		\\ &\sum_{p \cap \Lambda \neq \emptyset} \omega(A^*A B_p)-\omega(A^* B_p A).
\end{split}
\]
But $I-A_s$ and $I-B_p$ are positive operators, since they are positive multiples of projections, so that we can use Lemma~\ref{lem:ststate} to write $\omega(A^*A A_s) = \omega(A^*A)$ (and similarly for the $B_p$ term). 

We are done if we can show that $\omega(A^*A) - \omega(A^* A_s A)$ is positive, since the analogous result for $B_p$ follows from the same argument. But we have $\omega(A^*(I-A_s)A) \geq 0$, because $A^*(I-A_s)A$ is positive, since $I-A_s$ is positive. By Theorem~\ref{thm:groundeq} it follows that $\omega$ is a ground state.
\end{proof}
Hence any state such that $\omega(A_s) = \omega(B_p) = 1$ is a ground state. This condition can be interpreted as the condition that the operators $A_s$ and $B_p$ \emph{stabilise} the ground state. Indeed, if $(\pi_0, \mc{H}, \Omega)$ is the corresponding GNS representation, it follows that $\pi_0(A_s) \Omega = \pi_0(B_p) \Omega = \Omega$. Our next goal is to show that there in fact exist states $\omega$ such that $\omega(A_s) = \omega(B_p) = 1$, which is a priori not clear, and that there in fact is a \emph{unique} state with this property. We construct this state $\omega$ by first considering a \emph{classical} Ising model. Note that this does \emph{not} imply that there is a unique ground state, since there may be ground states that do not satisfy $\omega(A_s) = \omega(B_p) = 1$ (and indeed, we will see examples of such states later).

To construct $\omega$, first consider the $C^*$-algebra $\alg{A}_{cl}$ that is generated by the operators $A_s$ and $B_p$ (for all stars $s$ and plaquettes $p$). Since all generators mutually commute, it follows that $\alg{A}_{cl}$ is abelian. We want to map this algebra to a algebra of observables of a classical Ising system. Recall that this is defined as follows. Consider a set $\Gamma_s$ of spins. In our case, we take this set to be equal to the set of stars on the lattice. A \emph{configuration} assigns to every spin $s \in \Gamma_s$ the value +1 (up) or -1. The set of all configurations $S$, which can be identified with the set of functions $f: \Gamma_s \to \{-1,1\}$, is called the configuration space. 

As discussed before, the set of observables is the set of (real-valued) functions on the configuration space that vanish at infinity. Here we will slightly extend this to all \emph{bounded} functions. This makes no essential difference here. A particular example are the functions $\sigma_s : S \to \mathbb{R}$, with $\sigma_s(f) = f(s)$. That is, it gives the value of the spin at site $s$. Note that $\sigma_s^2 = 1$ and $\sigma_s^* = \sigma_s$. Finally, clearly all of the observables $\sigma_s$ commute. Now we add another copy of the system, associated with the plaquettes of the Kitaev model. We write $T$ for the corresponding configuration space and $\tau_p$ for the corresponding observables that measure the spin at the plaquette $p$. Hence the configuration space $S \times T$ describes two copies of a free (non-interacting) Ising model. The idea is then to map the algebra $\alg{A}_{cl}$ to a algebra of observables of the two copies of the free Ising model. But since $\sigma_s$ and $\tau_p$ have the same algebraic properties as the $A_s$ and $B_p$, we can just map $A_s$ to $\sigma_s$ and $B_p$ to $\tau_p$. This gives an isomorphism of the corresponding algebras. In this way we can obtain a state on $\alg{A}_{cl}$ by giving a state on the Ising model. Consider the state where all spins are in the up direction (this corresponds to the Dirac measure concentrated in the corresponding element in $S \times T$). Hence we obtain a state $\omega_{cl}$ with $\omega_{cl}(A_s) = \omega_{cl}(B_p) = 1$.

We now have a state $\omega_{cl}$ on a abelian subalgebra $\alg{A}_{cl}$ of $\alg{A}$ with the desired properties. By the Hahn-Banach\index{Hahn-Banach theorem} extension theorem, it follows that there is a state $\omega$ on $\alg{A}$ that extends $\omega_{cl}$. By the proposition above it follows that $\omega$ is a ground state, and we will now argue that the conditions that $\omega(A_s) = \omega(B_p) = 1$ in fact completely determine $\omega$ (so that there is a \emph{unique} extension of $\omega_{cl}$ to $\alg{A}$). To see why this is true, we first consider an example. Suppose $A = \sigma_j^y$ for some site $j$. We want to know the value of $\omega(\sigma_j)$. Let $s$ be any star that contains the site $j$. By applying Lemma~\ref{lem:ststate}, we find
\[
	\omega(A) = \omega(A_s \sigma_j^y) = - \omega(\sigma_j^y A_s) = -\omega(A),
\]
hence $\omega(A) = 0$. The second equality follows because $\sigma_j^x$ and $\sigma_j^y$ anti-commute.

Now consider a general local element $A$. By linearity it suffices to show that we can calculate $\omega(A)$ in the case that $A$ is a monomial in the Pauli matrices, that is, 
\[
	A = \prod_{j \in \Lambda} \sigma_j^{k_j},
\]
where $\Lambda$ is the support of $A$ and $k_j = x,y,z$. The idea is to use the trick above repeatedly, in a systematic way. We first try to find all sites plaquettes $p$ and stars $s$ that have only one site in common with the support $\Lambda$ of $A$. For example, let $p$ be a plaquette such that $\Lambda \cap p = \{j\}$ for some site $j$. Then, by locality, we see that $B_p$ and $A$ \emph{commute} if $k_j = z$, and anti-commute otherwise. Similarly, for star operators we have a similar result with $k_j = x$. Hence by the argument above, if for any of these sites we have anti-commutation, $\omega(A) = 0$. This forces $k_j$ for all sites $j$ as above. Continuing in this way it follows that $\omega(A)$ is zero unless $A$ is a product of star and plaquette operators. In that case, clearly we have $\omega(A) = 1$ by a repeated application of Lemma~\ref{lem:ststate}. It follows that $\omega$ is completely determined on all local operators, and hence on all of $\alg{A}$ since the local operators are dense in $\alg{A}$ and $\omega$ is continuous (because it is a state). Note that the state is manifestly translationally invariant.

\index{toric code!ground state}
The state $\omega$ that we have constructed is the \emph{only} translationally invariant ground state. This can be seen as follows. Suppose that $\rho$ is any translationally invariant state. First of all, note that $\rho(A_s)$ and $\rho(B_p)$ are real because both operators are self-adjoint. By translational invariance, the value of $A_s$ is the same for all stars $s$, and similarly for $B_p$. Now, since $A_s^2 = I$, it follows from the Cauchy-Schwarz inequality, Lemma~\ref{lem:csineq}, that $\rho(A_s) \leq 1$. But if $\rho(A_s) < 1$, it is clear that $H_\Phi(\rho) > H_\Phi(\omega)$, where $H_\Phi$ is the mean energy as before. By the paragraph following Theorem~\ref{thm:transinvgs} it follows that $\rho$ cannot be a ground state. The argument for the case that $\rho(B_p) \neq 1$ is the same.

\begin{remark}
This ground state of the Kitaev model is actually special in the sense that it minimizes each of the terms in the Hamiltonian individually. A ground state for which this is true is called \emph{completely free of frustration}.\index{interaction!frustration free} This is true here, essentially, because all the terms in the Hamiltonian commute with each other. Hence one expects that they all can be diagonalised simultaneously.
\end{remark}

The unique translationally invariant ground state $\omega$ of the system is a pure state.\index{state!pure} To show that one can argue as follows. First restrict $\omega$ to the \emph{abelian} algebra $\mathfrak{A}_{XZ}$ generated by the $A_s$ and $B_p$. Using the characterisation of $\omega$ given above, it is not difficult to see that $\omega(AB) = \omega(A) \omega(B)$ for all $A,B \in \alg{A}_{XZ}$. Hence $\omega \upharpoonright \mathfrak{A}_{XZ}$ is a pure state, since it is a character (Proposition~\ref{prop:characters}). It is known that every pure state on a unital subalgebra of a $C^*$-algebra can be extended to a pure state of the whole algebra~\cite[Prop. 2.3.24]{MR887100}, and since as we have seen above that there is a unique extension in this case, $\omega$ must be pure. The discussion can be summarised in the following theorem.

\index{toric code!ground state}
\begin{theorem}
The toric code on the plane has a unique translationally invariant ground state $\omega$, which is a pure state. It is completely determined by the conditions $\omega(A_s) = \omega(B_p) = 1$.
\end{theorem}

The situation changes when we require that the ground states need not be translationally invariant, and one can construct other ground states.

\subsubsection{Ground states of the finite model}
Before we consider other ground states of the toric code it is instructive to consider the finite version of the model. In particular, there are differences regarding the ground state degeneracy. In the finite model, one takes a compact surface of genus $g$. The genus is rougly the number of holes, so a torus has genus one. On this compact surface one can draw a (finite) graph, where the edges represent the spin degrees of freedom.\footnote{One needs to demand some additional conditions on the graph to avoid pathological cases.} Again considering the case of the torus, one can think of a square lattice, and identify the left and right edges, as well as the top and bottom edges, to get a graph that can be embedded into the torus. Once one has the graph, the edges determine the Hilbert space of the system and one can define a Hamiltonian in the model in a  similar way as above. The ground state degeneracy can then to be shown to be equal to $4^g$. Note that this is consistent with the uniqueness of the ground state on the plane, since the plane has genus zero.

We will outline why this is the case in case of the torus. Note that for the torus, there are two different non-contractible loops on the surface. If we model the torus as a square with the edges identified, these loops correspond to a horizontal and a vertical line from edge to edge. We can form such a loop by connecting edges in the graph. Let us call one such loop $\xi_Z$. To this path $\xi_Z$ we associate an operator $F_{\xi_Z} = \bigotimes_{j \in \xi_Z} \sigma^z_j$. Note that $F_{\xi_Z}$ commutes with all star operators and all plaquette operators. Hence we cannot apply the trick above, so what $\omega(F_{\xi_Z})$ is not determined. This essentially gives an extra parameter in the definition of ground states. If we combine this with a loop on the dual lattice (with operators $\sigma^x_j$), and a combination of the two, we obtain a four dimensional ground state.

It has been the hope that this degenerate ground state space could be used to store quantum information (or qubits\index{qubit}). The idea is that we would encode the information in a certain state in the ground space. Since on the torus the ground state is four-fold degenerate, this would allow us two store two qubits, i.e. $\mathbb{C}^2 \otimes \mathbb{C}^2$. The hope was that this would lead to a reliable storage. That is, the state should not change because of small perturbations due to the environment, such as thermal noise. This is a natural requirement if one wants to store information for a longer time, just as with a classical hard disk you expect to be able to read your files again if you turn on your computer the next day. The essential idea now is that to transform the state $F_{\xi_Z} \Omega$, where $\Omega$ is a state in the ground space, to another (orthogonal) state in the ground space, one has to do a \emph{non-local} operation, where locality is interpreted here in terms of the system size. Here, to undo the effect of the path operator, one has to apply another operator going around the loop. If we make the system big, such non-local operations are very unlikely to occur as a result of thermal fluctuations~\cite{MR1924451}.

A more careful analysis unfortunately shows that that this is a bit too optimistic, see for example~\cite{MR2345476,MR2525435,terhaltoric,PhysRevLett.110.090502}.\footnote{This is only a small selection on the vast amount of literature on so-called self-correcting quantum memories.} For example, what can happen is that a local (pair) of excitations gets created. This is local, and will not destroy the information in the ground state. However, there is no energy penalty for moving the excitations around. This has the effect that they may spread out very fast, to get something that is \emph{not} local any more. It is now generally accepted that the toric code is not a good quantum memory, and finding a system that is, is one of the big challenges in quantum computing at the moment.

\subsubsection{Excitations of the toric code}
The other (non-translational invariant) ground states of the toric code are related to \emph{excitations}\index{toric code!excitation} of the model. This may seem counter-intuitive at first, but we will see later why this is true.

Let $\xi$ be a (finite) non-intersecting path on the lattice, that is, a set of edges $e_i$ such that the endpoint of $e_i$ connects to the initial point of $e_{i+1}$. Then, as above, we can define an operator $F_{\xi} := \bigotimes_{j \in \xi} \sigma_j^z$. That is, we act with $\sigma^z$ on each site $j$ of the path. The operator $F_{\xi}$ will be called a \idx{path operator}. Note that $F_{\xi}$ commutes with all plaquette operators $B_p$. The same is true for all star operators $A_s$ as well, except for two cases: if $s$ is either based on the start of the path, or at the end. In that case, the star $s$ will have an odd number of edges in common with the path $\xi$, and hence $A_s$ and $F_\xi$ are easily seen to anti-commute.

Now consider a ground state $\Omega$ of the finite-dimensional model. Since all terms in the Hamiltonian commute, we can diagonalise them all at the same time, and we see that a ground state necessarily has $A_s \Omega = B_p \Omega = \Omega$. Now let us look at what happens with the energy if we apply $F_\xi$ to the ground state. By the observation above, it is easy to check that
\[
	H F_\xi \Omega = F_\xi (2 A_{s_1} + 2 A_{s_2} + H) \Omega = (4+E_0) F_\xi \Omega,
\]
where $E_0$ is the ground state energy, i.e. $H \Omega = E_0 \Omega$, and $s_1$ and $s_2$ are the stars at the endpoints of $\xi$. Hence $F_\xi \Omega$ is an excited state, and the extra energy comes precisely from the stars at the endpoints of $\xi$. It therefore makes sense to think of the state $F_\xi \Omega$ as describing a configuration where there is an excitation located at the star $s_1$, and one located at $s_2$.

A similar thing can be done with \emph{dual paths}, which are paths on the dual lattice. In other words, such paths consists of (dual) edges between the centres of the plaquettes. Note that these dual edges cross precisely one edge in the lattice. We identify the dual path $\widehat{\xi}$ with these edges in the lattice. In that way we can define the dual path operator\index{path operator} $F_{\widehat{\xi}} := \bigotimes_{j \in \widehat{\xi}} \sigma^x_j$. They commute with all star operators and with all plaquette operators, except for the $B_p$ at the ends of the path. We think of $F_{\widehat{\xi}} \Omega$ as a state with an excitation at each plaquette at the end of $\widehat{\xi}$. Hence there are two types of excitations in the toric code: those located at stars, and those located at plaquettes.

\begin{exercise}
	\label{ex:pathindep}
	Let $\Omega$ be a ground state vector of the toric code. Show that for two paths $\xi_i$ which have the same endpoints, it holds that $F_{\xi_1} \Omega = F_{\xi_2} \Omega$. (Hint: consider operators of the form $A_s F_{\xi_i}$.) Prove a similar result for dual paths.
\end{exercise}

This exercise, together with the energy calculation above, shows that we can use path operators to create pairs of excitations, and the resulting state does not depend on the specific choice of path.\footnote{It should be noted that this is only true if we just have two excitations. As soon as we use path operators to create more excitations, that is states of the form $F_{\xi_1} F_{\xi_2} \Omega$, they are only unique up to a phase. This phase depends on whether paths associated to different types of charges cross or not, and is due to the anyonic nature of the excitations. We will discuss this in more detail in Section~\ref{sec:sselecttoric}.} Also note that we can use the path operators to move the excitations around. Let $\xi_1$ and $\xi_2$ be two paths on the lattice, sharing an endpoint. If we write $\xi_1 \xi_2$ for the concatenated path, it is clear from the definition that $F_{\xi_1} F_{\xi_2} = F_{\xi_1 \xi_2}$. So if we have a two-excitation state $F_{\xi_1} \Omega$, and we apply the unitary $F_{\xi_2}$ to this state, we obtain the new two-excitation state $F_{\xi_1 \xi_2} \Omega$. Hence effectively what we have done is moving around one of the excitations. Similarly we can remove two excitations, by acting with a path operator that closes the loop. In fact, a bit of thought shows that the number of excitations of the toric code is conserved modulo 2. This is a form of charge conservation.

We will be interested in states with a \emph{single} excitation. Using the description above it is apparent that there are no local operators that create such a state. There is however something else we can do: since the excitations can be moved around, in the thermodynamic limit we can try to move on excitation to infinity. Indeed, this can be done. Choose a semi-infinite path $\xi$ going off to infinity (for simplicity, one can think of a straight line). We write $\xi_n$ for the (finite!) path consisting of the first $n$ edges of $\xi$. Then for $A \in \alg{A}$, we can define
\begin{equation}
	\label{eq:toricauto}
	\rho(A) := \lim_{n \to \infty} F_{\xi_n} A F_{\xi_n}^*.
\end{equation}
The interpretation is that this map describes how the observables of the system change in the presence of an excitation in the background. 

\begin{exercise}
	Show that $\rho(A)$ as above is well-defined and defines an automorphism of $\alg{A}$. In addition, show that $\omega_0 \circ \rho$ is a pure state, where $\omega_0$ is the translational invariant ground state of the toric code.
\end{exercise}

The automorphism $\rho$ depends on the specific choice of path. However, if we look at the state $\omega_0 \circ \rho$, this is no longer true: it only depends on the location of the single excitation. Hence the direction in which we moved the other excitation away is not observable.

\begin{exercise}
	Choose two semi-infinite paths $\xi_1$ and $\xi_2$ going off to infinity, such that their starting points coincide. Let $\rho_1$ and $\rho_2$ be the corresponding automorphisms, defined via equation~\eqref{eq:toricauto}. Let $\omega_0$ be the translation invariant ground state. Use Exercise~\ref{ex:pathindep} to show that $\omega_0 \circ \rho_1 = \omega_0 \circ \rho_2$, and conclude that there is a unitary $U$ such that $U \pi_0 \circ \rho_1(A) = \pi_0 \circ \rho_2(A) U$ for all $A \in \alg{A}$.
\end{exercise}

\subsubsection{Non-invariant ground states}
The states $\omega_0 \circ \rho$ discussed above give us new examples of ground states. It is perhaps not immediately clear why this would be a ground state. Indeed, the flipped star (or plaquette) that remains leads to an energy increase for the local Hamiltonian at that site. In other words, $\omega_0 \circ \rho(H_\Lambda) - \omega_0(H_\Lambda) = 2$ for each set $\Lambda$ containing the star or plaquette at the end of the path. Nevertheless, this is still a ground state. Heuristically this can be understood as follows. Using local operations it is possible to move the excitation around. While this lowers the energy density \emph{locally}, it raises the energy density at the new location of the excitation, leaving the total energy of the system untouched. One can show that is all one can do with local operations, so that it is for example not possible to obtain the translationally invariant ground state by only acting on a finite part of the system. This is an example where a system admits different \index{superselection sector}superselection sectors as ground states. We will study the 
superselection structure of the toric code in more detail in Section~\ref{sec:sselecttoric}.

\begin{exercise}\label{ex:noninvgs}
	Let $\omega_0$ be the translationally invariant ground state of the toric code on the $\mathbb{Z}^2$ lattice. Recall that if $A$ is a monomial of Pauli matrices, $\omega_0(A) = 1$ if and only if $A$ is a product of star and plaquette operators, and zero otherwise. Let $\xi$ be the path going from the origin to infinity along the $x$-axis and let $\rho$ be the corresponding automorphism, defined in equation~\eqref{eq:toricauto}. Show that $\omega_0 \circ \rho$ is a ground state, i.e. that $-i \omega_0 \circ \rho(A^* \delta(A)) \geq 0$ for all $A \in D(\delta)$.
\end{exercise}

It is important that one of the excitations is moved to infinity, so that we cannot remove all excitations with local operators:
\begin{exercise}
Let $\xi$ be a finite path on the lattice with distinct beginning and end point, and let $F_\xi$ the corresponding path operator. Show that $\omega(A) := \omega_0(F_\xi A F_\xi^*)$ is not a ground state, where $\omega_0$ is the translationally invariant ground state.
\end{exercise}

Exercise~\ref{ex:noninvgs} gives an example of a non-translational invariant ground state. A natural question is if there are other ground states that we have not found yet. This amounts to classifying all states $\omega$ such that $-i \omega(A^* \delta(A)) \geq 0$ for all $A \in D(\delta)$. In general this is a difficult task, and not many results of this type are known. Fortunately, the toric code is simple enough to make this problem tractable, essentially because we can reduce it to classifying ground states of the toric code Hamiltonian on a finite system with certain boundary conditions.

To state the result, recall that there are two types of excitations: of stars and of plaquettes. Consider the path $\xi_Z$ from the origin up in a straight line to infinity. Let $\widehat{\xi}_X$ be the dual path going through the plaquettes on the right of $\xi_Z$. Then we can define the corresponding automorphisms $\rho^X$ and $\rho^Z$, via equation~\eqref{eq:toricauto}. Finally, set $\rho^{Y} = \rho^X \circ \rho^Z$. We think of $\rho^{Y}$ as describing a ``composite'' excitation of both a star and a plaquette excitation. This can be seen as an ``elementary'' excitation (or charge) as well, since there are no local operators converting it into an excitation of a different type. For convenience, we set $\rho^0 = \operatorname{id}$, the trivial automorphism of $\alg{A}$.

With the notation introduced above it is now possible to give a complete description of the set of all ground states of the toric code~\cite{toricgs}:
\begin{theorem}\label{thm:toricgs}
	Let $\omega$ be a ground state of the toric code in the thermodynamic limit. Then there are states $\omega^k$, $k=0,X,Y,Z$, such that $\omega$ is a convex combination of the $\omega^k$, and each $\omega^k$ is (quasi-)equivalent to $\omega_0 \circ \rho^k$, where $\omega_0$ is the unique translational invariant ground state. In particular, if $\omega$ is pure, it is equivalent to $\omega_0 \circ \rho^k$ for some $k$.
\end{theorem}
Here quasi-equivalence of states means that the corresponding GNS representations are quasi-equivalent. Two representations $\pi_1$ and $\pi_2$ are said to be quasi-equivalent if each subrepresentation of $\pi_1$ (that is, a representation that is obtained by restricting $\pi_1$ to an invariant subspace) is unitarily equivalent to a subrepresentation of $\pi_2$, and vice versa.\index{representation!quasi-equivalent} In the case of irreducible representations this quasi-equivalence is equivalent to unitary equivalence (since there are no non-trivial invariant subspaces of the Hilbert space).

\chapter{Lieb-Robinson bounds}\label{ch:lr}
One important difference between relativistic systems and non-relativistic systems (such as the ones we consider here), is that for relativistic system one has \idx{causality}: information cannot travel faster than the speed of light. In lattice systems there is no such natural bound. Correlations can, in principle, spread arbitrarily fast. Nevertheless, under suitable conditions there \emph{is} a maximum velocity in the system, which dictates how fast information can propagate through the system. This is essentially what a \emph{Lieb-Robinson bound}\index{Lieb-Robinson!bound} is.

The first of these bounds was proven by Lieb and Robinson in 1972~\cite{MR0312860}, where the authors assumed translation invariance. This allowed them to use Fourier techniques to prove the result. In the next three decades, comparatively little work was done on this type of bounds. This changed however drastically around 2004, when people realised that such bounds are very useful in studying, for example, many body systems and quantum information. For example, Hastings~\cite{PhysRevB.69.104431} used an improved version of the bound to prove a multi-dimensional version of the Lieb-Schultz-Mattis theorem. This theorem tells us something about the low energy excitations of gapped Hamiltonians. Around the same time, Nachtergaele and Sims~\cite{MR2217299} proved a version of the Lieb-Robinson bound that does not require translational invariance. In addition, they showed that for Hamiltonians with an energy gap, one has exponential decay of correlation functions (in terms of the distance of the support of two observables). After that, many new applications and improvements of the Lieb-Robinson bounds have been found. For more information we refer to~\cite{MR2681770,lriamp}.

\section{Statement and proof for local dynamics}\label{sec:lrfinite}
Before we can state the Lieb-Robinson theorem, we first have to make clear what kind of systems we want to consider. Here we will largely follow the work of Nachtergaele and Sims~\cite{MR2217299}. The class of systems that we allow will be much wider than those considered in the previous chapter. Again, we will consider a set $\Gamma$ of sites, together with a metric $d$ on the set of sites. At each site $x \in \Gamma$ we have a finite dimensional system, as before. We allow the dimension to vary from site to site, as long as there is a (finite) upper bound $N$ to this dimension. This condition can be dropped for suitable interactions, but it will complicate the proofs.

The class of interactions that we consider will be bigger than before as well. In particular, we define the following norm of an interaction $\Phi$:\index{interaction!norm}\index{_Philambda@$\Vert\Phi\Vert_\lambda$}
\begin{equation}
	\|\Phi\|_\lambda := \sup_{x \in \Gamma} \sum_{\Lambda \ni x} |\Lambda| \|\Phi(\Lambda)\| N^{2 |\Lambda|} e^{\lambda \operatorname{diam}(\Lambda)},
\end{equation}
where $\lambda > 0$ is some constant. We consider all interactions for which $\| \Phi \|_\lambda < \infty$. Note that this includes all bounded interactions of finite range, and hence all interactions we have considered so far.

We will also need a regularity condition on the set $\Gamma$ and the metric $d$. In particular, it is important that the number of sites does not grow too fast, in the following sense. Pick a fixed site $x_0 \in \Gamma$, for example the origin in a lattice. Let $B_n(x_0)$ be the set of sites $x \in \Gamma$ with $d(x,x_0) < n$. With this notation we make the following definition:

\begin{definition}\index{lattice!lambda-regular@$\lambda$-regular}
	Let $(\Gamma, d)$ be a set of sites with a metric $d$. Let $\lambda > 0$. We say that $\Gamma$ is \emph{$\lambda$-regular} if $\sum_{n=1}^\infty f(n) e^{-\lambda n} < \infty$, where $f(n) = |B_n(x_0)\setminus B_{n-1}(x_0)|$ for some fixed $x_0$.
\end{definition}
\begin{figure}
	\begin{center}
	\begin{tikzpicture}[nodes={circle,fill,scale=0.3}, level distance=.8cm]
	  \tikzstyle{level 1}=[sibling distance=50mm] 
	  \tikzstyle{level 2}=[sibling distance=15mm] 
	  \tikzstyle{level 3}=[sibling distance=4mm] 
	\node{}
		child { node {} 
			child { node{} 
				child { node{} }
				child { node{} }
				child { node{} }
				child { node{} }
			}
			child { node{} 
				child { node{} }
				child { node{} }
				child { node{} }
				child { node{} }
			}
			child { node{} 
				child { node{} }
				child { node{} }
				child { node{} }
				child { node{} }
			}
		}
		child { node {} 
			child { node{} 
				child { node{} }
				child { node{} }
				child { node{} }
				child { node{} }
			}
			child { node{} 
				child { node{} }
				child { node{} }
				child { node{} }
				child { node{} }
			}
			child { node{} 
				child { node{} }
				child { node{} }
				child { node{} }
				child { node{} }
			}
		};
	\end{tikzpicture}
	\end{center}
	\caption{\label{fig:tree}The first four steps of the construction of a non-regular set of sites $\Gamma$. The metric is the graph metric, that is, $d(x,y)$ is the minimum number of edges to traverse to go from vertex $x$ to vertex $y$.}
\end{figure}
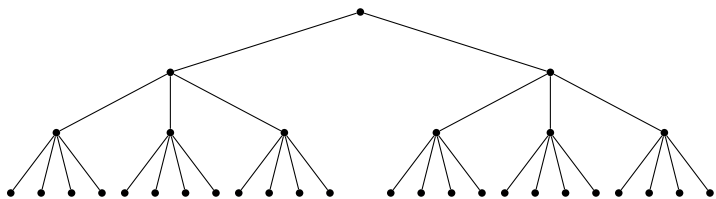
Note that the condition in fact does not depend on the choice for $x_0$. A sufficient condition is if $|B_n(x_0)|$ grows polynomially in $n$, for example in the case of $\Gamma = \mathbb{Z}^d$ with the usual metric: in that case $|B_n(x_0)| \sim n^d$. It is also not so difficult to think of a counterexample. Consider a graph $\Gamma_0$ consisting of a single vertex $x_0$, and no edges. Then define $\Gamma_1$ by adding two vertices to $\Gamma_1$ and connecting them with an edge to $x_0$. Then for each $n > 2$, define $\Gamma_n$ by adding $n+1$ new vertices \emph{for each vertex added in step $n-1$}, and connect them to that vertex. The set $\Gamma$ is the union of all $\Gamma_n$, with the obvious identifications. This is an infinite graph, and is a metric space with the usual graph metric. Since the number of points in a ball centred around any vertex $v$ grows faster than exponentially in the size of the ball, it is not $\lambda$-regular for any $\lambda$. The first few steps are display in Figure~\ref{fig:tree}.

The proof of the Lieb-Robinson bounds requires us to make many norm estimates of commutators. For this reason, it will be convenient to introduce the shorthand notation
\[
	C_{A,B}(x,t) := [\alpha_t(A), B],
\]
where $x \in \Gamma$, $A \in \alg{A}(\{x\})$ and $B \in \alg{A}$. Similarly we define\index{_CB(x,t)@$C_B(x,t)$}
\[
C_B(x,t) := \sup_{A \in \alg{A}(\{x\}), A \neq 0} \frac{\|C_{A,B}(x,t)\|}{\|A\|},
\]
that is, the maximum value of the norm of $C_{A,B}(x,t)$, taking into account irrelevant rescaling of the operator $A$. The bound that we will prove will give an estimate on the norm of $C_B(x,t)$. In particular, this norm will decrease exponentially as the distance between $x$ and the support of $B$ becomes bigger.

\begin{theorem}[Lieb-Robinson]\index{Lieb-Robinson!bound}
	\label{thm:lrbound}
	Let $\lambda > 0$ and consider a $\lambda$-regular set $\Gamma$. Then for all interactions $\Phi$ with $\|\Phi\|_\lambda < \infty$, sites $x \in \Gamma$ and $t \in \mathbb{R}$ and $B \in \alg{A}$, we have the bound
	\begin{equation}\label{eq:lrbound}
	C_B(x,t) \leq e^{2 |t| \| \Phi \|_\lambda} C_B(x,0) + \sum_{y \in \Gamma, y \neq x} e^{-\lambda d(x,y)} \left(e^{2 |t| \|\Phi\|_\lambda}-1\right) C_B(y,0).
\end{equation}
\end{theorem}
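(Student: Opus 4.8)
The plan is to follow the strategy of Nachtergaele and Sims~\cite{MR2217299}: derive a closed integral inequality for the quantity $C_B(x,t)$ and then solve it by Picard iteration. It is convenient to work first with the finite-volume dynamics $\alpha_t^\Lambda(A) = e^{itH_\Lambda}Ae^{-itH_\Lambda}$, $H_\Lambda = \sum_{Z\subset\Lambda}\Phi(Z)$, to prove the bound uniformly in $\Lambda \in \mc{P}_f(\Gamma)$, and to pass to the limit $\Lambda \to \Gamma$ at the end, using that $\alpha_t^\Lambda(A) \to \alpha_t(A)$ in norm (the argument behind Theorem~\ref{thm:dynconv} applies to the present class of interactions; see~\cite{MR1441540}). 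Since the right-hand side of \eqref{eq:lrbound} is independent of $\Lambda$, an estimate $\|[\alpha_t^\Lambda(A),B]\| \le (\text{RHS})$ uniform in $\Lambda$ immediately yields the same for $\alpha_t$ after taking the limit and then the supremum over unit $A \in \alg{A}(\{x\})$. By time reversal we may take $t\ge 0$.

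First I would set $f(t) = [\alpha_t^\Lambda(A), B]$ for a fixed unit $A \in \alg{A}(\{x\})$ and differentiate, using $\tfrac{d}{dt}\alpha_t^\Lambda(A) = i\alpha_t^\Lambda([H_\Lambda, A])$ together with the fact that $A$ is supported at $x$, so $[H_\Lambda, A] = [H_x^\Lambda, A]$ with $H_x^\Lambda := \sum_{Z\subset\Lambda,\, Z\ni x}\Phi(Z)$. The Jacobi identity gives
\[
f'(t) = i[\alpha_t^\Lambda(H_x^\Lambda), f(t)] - i[\alpha_t^\Lambda(A), [\alpha_t^\Lambda(H_x^\Lambda), B]].
\]
The first term is an infinitesimal unitary rotation of $f(t)$: introducing the cocycle of unitaries $U(t)$ with $\dot U(t) = i\alpha_t^\Lambda(H_x^\Lambda)U(t)$, $U(0) = I$, one gets $\tfrac{d}{dt}\big(U(t)^* f(t) U(t)\big) = -iU(t)^*[\alpha_t^\Lambda(A), [\alpha_t^\Lambda(H_x^\Lambda), B]]U(t)$, which upon integration and using $\|f(0)\| = \|[A,B]\| \le C_B(x,0)$ and $\|[\alpha_s^\Lambda(A), C]\| \le 2\|C\|$ yields
\[
\|f(t)\| \le C_B(x,0) + 2\int_0^t \sum_{Z\ni x}\|[\alpha_s^\Lambda(\Phi(Z)), B]\|\,ds .
\]

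Next I would reduce $\Phi(Z)$ to single sites: expanding $\Phi(Z) \in \alg{A}(Z) = \bigotimes_{y\in Z}M_d(\mathbb{C})$ in a tensor-product basis of matrix units (at most $N^{2|Z|}$ of them, each of norm one, with coefficients bounded by $\|\Phi(Z)\|$) and applying the Leibniz-type identity $[\alpha_s^\Lambda(X_1\cdots X_k), B] = \sum_{i=1}^{k}\alpha_s^\Lambda(X_1\cdots X_{i-1})\,[\alpha_s^\Lambda(X_i),B]\,\alpha_s^\Lambda(X_{i+1}\cdots X_k)$ gives $\|[\alpha_s^\Lambda(\Phi(Z)),B]\| \le N^{2|Z|}\|\Phi(Z)\|\sum_{y\in Z}C_B^\Lambda(y,s)$. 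Taking the supremum over unit $A\in\alg{A}(\{x\})$ turns the previous display into the closed system
\[
C_B^\Lambda(x,t) \le C_B(x,0) + 2\int_0^t \sum_{Z\ni x}N^{2|Z|}\|\Phi(Z)\|\sum_{y\in Z}C_B^\Lambda(y,s)\,ds .
\]
Finally I would iterate. Writing $K(x,y) = \sum_{Z\ni x,\, y\in Z}N^{2|Z|}\|\Phi(Z)\|$, the definition of $\|\Phi\|_\lambda$ gives the two crucial estimates $\sum_y K(x,y) = \sum_{Z\ni x}|Z|\,N^{2|Z|}\|\Phi(Z)\| \le \|\Phi\|_\lambda$ and, since $d(x,y) \le \operatorname{diam}(Z)$ whenever $x,y\in Z$, $K(x,y) \le e^{-\lambda d(x,y)}\|\Phi\|_\lambda$ for $y\neq x$. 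Feeding the inequality into itself (Picard) produces $C_B^\Lambda(x,t) \le \sum_{n\ge 0}\tfrac{(2t)^n}{n!}(K^n c)(x)$ with $c(y) = C_B(y,0)$; decomposing each path contributing to $(K^n c)(x)$ according to whether it ever leaves $x$, the purely diagonal paths contribute $K(x,x)^n c(x) \le \|\Phi\|_\lambda^n c(x)$ and sum to $e^{2t\|\Phi\|_\lambda}C_B(x,0)$, while a path that jumps carries at least one factor $K(x,y) \le e^{-\lambda d(x,y)}\|\Phi\|_\lambda$; bounding the remaining steps of such paths with $\sum_y K(x,y)\le\|\Phi\|_\lambda$ reassembles the term $(e^{2t\|\Phi\|_\lambda}-1)\sum_{y\neq x}e^{-\lambda d(x,y)}C_B(y,0)$. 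Combining with the $t<0$ case gives \eqref{eq:lrbound}. The main obstacle is exactly this last combinatorial bookkeeping — organizing the iterated kernel $K^n$ so that the diagonal growth is cleanly separated from the single exponential decay factor, and checking the finite-volume estimates are uniform in $\Lambda$ so the limit is legitimate; the earlier steps (the Jacobi/cocycle manipulation and the telescoping expansion) are routine.
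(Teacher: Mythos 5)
Your overall strategy (derive an integral inequality by differentiating the commutator, reduce the interaction terms to single-site matrix units, then iterate à la Picard using the definition of $\|\Phi\|_\lambda$) is exactly the one the paper follows. There is one genuine technical difference: you derive the integral inequality via the Jacobi identity and the cocycle $U(t)$, which rotates away the norm-neutral part $i[\alpha_t^\Lambda(H_x^\Lambda), f(t)]$ and leaves $\|f(t)\| \leq C_B(x,0) + 2\int_0^t \sum_{Z\ni x}\|[\alpha_s^\Lambda(\Phi(Z)),B]\|\,ds$. The paper takes a more direct route: it writes $C_{A,B}(x,t) = C_{A,B}(x,0) + i\sum_{\Lambda\ni x}\int_0^t[\alpha_s([\Phi(\Lambda),A]),B]\,ds$ and expands $[\Phi(\Lambda),A]$ in matrix units, picking up the factor of $2$ from the bound $\|[\Phi(\Lambda),A]\|\leq 2\|\Phi(\Lambda)\|\|A\|$ rather than from $\|[\alpha_s(A),C]\|\leq 2\|C\|$. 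Both routes yield the same closed system with the same kernel and the same constant. Your finite-volume detour is a legitimate alternative to the paper's direct use of the densely defined derivation.

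The one place your sketch does not quite close is the final combinatorics. Decomposing the iterated kernel $(K^n c)(x)$ by whether the path ever leaves $x$ is not the right cut: a path can leave $x$ and return, in which case it still ends at $x$ and must be credited to $C_B(x,0)$, not to the $\sum_{y\neq x}$ term. Moreover, the decay factor $e^{-\lambda d(x,y)}$ you extract from the first jump does not match the endpoint $y_n$, which is the site appearing in $c(y_n)=C_B(y_n,0)$. The correct bookkeeping (which is what the paper does) is to decompose by \emph{endpoint}: in the string $K(x,y_1)K(y_1,y_2)\cdots K(y_{n-1},y_n)c(y_n)$, convert each factor to $e^{-\lambda d(\cdot,\cdot)}K_\lambda(\cdot,\cdot)$ with $K_\lambda(a,b) := e^{\lambda d(a,b)}K(a,b)$, use the triangle inequality $\sum_i d(y_{i-1},y_i)\geq d(x,y_n)$ to pull out a single factor $e^{-\lambda d(x,y_n)}$ attached to the endpoint, and then bound the remaining product by $\|\Phi\|_\lambda^n$ using $\sum_b K_\lambda(a,b)\leq\|\Phi\|_\lambda$ for the internal sums and $K_\lambda(y_{n-1},y_n)\leq\|\Phi\|_\lambda$ pointwise for the last factor. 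Summing over $n\geq 1$ then gives $(e^{2|t|\|\Phi\|_\lambda}-1)\sum_y e^{-\lambda d(x,y)}C_B(y,0)$; adding the $n=0$ term $C_B(x,0)$ and splitting off $y=x$ (where $e^{-\lambda d(x,x)}=1$) produces exactly \eqref{eq:lrbound}.
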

Before we go into the proof, we discuss some specific forms of the bounds. A special and very useful case is when the operator $B$ is strictly local, so that $B \in \alg{A}(\Lambda)$ for some $\Lambda \in \mc{P}_f(\Gamma)$. If in addition $x \notin \Lambda$, it follows by locality that $C_B(x,0) = 0$. Hence the first term in the bound can be dropped. The second term can be estimated as well, as is outlined below.

Another useful observation is that it is easy to extend the bounds to (local) operators $A$ that are supported on more than one site, lets say on the finite set $\Lambda$. Indeed, note that $A$ can be written in the form
\begin{equation}
	\label{eq:finexp}
	A = \sum_{i_1, \dots i_k} \lambda_{i_1,\dots,i_k} E_{i_1} \cdots E_{i_k},
\end{equation}
where the $\lambda$ are complex coefficients and where for each $l$, $\{E_{i_l} \}$ is a set of matrix units of $\alg{A}(\{l\})$ with $l \in \Lambda$. For example, one can take the matrices that have a one in exactly one entry, and zeros elsewhere. Note that by the uniform bound on the dimension of the Hilbert spaces at each site, there are at most $N^{2 |\Lambda|}$ terms in the sum. We can then repeatedly apply the inequality
\[
	\| [A_1 A_2, B] \| \leq \|A_1\| \| [A_2, B]\| + \|A_2\| \|[A_1,B] \|,
\]
which follows from the triangle inequality. Taking also the summations into account, we can make the following estimates: 
\[
\begin{split}
	\|[\alpha_t(A), B]\| &= \left\| \sum_{i_1, \dots i_k} \lambda_{i_1, \dots, i_k} [\alpha_t(E_{i_1}) \cdots \alpha_t(E_{i_k}), B] \right\| \\
		& \leq \|A\| \sum_{i_1, \dots i_k} \left\| [\alpha_t(E_{i_1}) \cdots \alpha_t(E_{i_k}), B] \right\| \\
		& \leq \| A\| \sum_{i_1, \dots i_k} \sum_{\lambda=1}^{k} \left\| [ \alpha_t(E_{i_\lambda}), B] \right\| \\
		& \leq \|A\| N^{2 |\Lambda|} \sum_{\lambda \in \Lambda} C_B(\lambda, t).
\end{split}
\]
Here we used that $\|E_{i_j}\| = 1$ in the third line, and the Lieb-Robinson theorem in the last line.

Next, if $B$ is also strictly local, we can find a more explicit bound for $C_B(x,t)$. In particular, we can bound the right-hand side of equation~\eqref{eq:lrbound} as follows. First, note that by locality $[A,B] = 0$ if the supports of $A$ and $B$ are disjoint. Hence $C_B(y,0) = 0$ if $y \notin \supp(B)$. If $y \in \supp(B)$, we can use the trivial bound on the commutator, and we obtain $C_B(y,0) \leq 2 \|B\| \chi_{\supp(B)}(y)$. Here $\chi_{\supp(B)}$ is the indicator function of the set $\supp(B)$. Also note that $\exp(-\lambda d(x,y)) \leq \exp(-\lambda d(x,\supp(B))$ if $y \in \supp(B)$. Hence we can conclude that
\[
	C_B(x,t) \leq e^{2 |t| \| \Phi \|_\lambda} C_B(x,0) + 2 | \supp(B)| \|B\| e^{-\lambda d(x,\supp(B))} \left(e^{2 |t| \|\Phi\|_\lambda}-1\right).
\]
We might as well estimate the $e^{2 |t| \|\Phi\|\lambda} -1$ term by dropping the $-1$, and take all the terms together. Combining these results we obtain the following corollary.

\begin{corollary}\label{cor:lrrough}
	Consider the same setting as in Theorem~\ref{thm:lrbound} and suppose that $A \in \alg{A}(\Lambda_1)$ and $B \in \alg{A}(\Lambda_2)$ are local operators. Then we have the bound
\[
\| [\alpha_t(A), B] \| \leq 4 \|A\| \|B\| |\Lambda_1|  |\Lambda_2| N^{2 |\Lambda_1|} e^{2 |t| \|\Phi\|_\lambda - \lambda d(\Lambda_1, \Lambda_2)}.
\]
\end{corollary}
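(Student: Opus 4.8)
The plan is to assemble the corollary directly from Theorem~\ref{thm:lrbound} together with the two reductions carried out in the discussion just above the statement: the passage from a multi-site operator $A$ to single sites, and the explicit estimate of $C_B(x,t)$ when $B$ is strictly local. No new idea is needed here; the whole content sits in Theorem~\ref{thm:lrbound}, and the corollary is a matter of bookkeeping.

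First I would handle the case $A \in \alg{A}(\Lambda_1)$ with $\Lambda_1$ not a single site. Expanding $A$ in matrix units as in~\eqref{eq:finexp}, there are at most $N^{2|\Lambda_1|}$ terms, and applying the inequality $\|[A_1A_2,B]\| \leq \|A_1\|\,\|[A_2,B]\| + \|A_2\|\,\|[A_1,B]\|$ repeatedly — exactly the chain of estimates displayed before the statement — gives
\[
\|[\alpha_t(A),B]\| \leq \|A\|\, N^{2|\Lambda_1|} \sum_{x \in \Lambda_1} C_B(x,t).
\]
It then remains to bound $C_B(x,t)$ for a fixed $x \in \Lambda_1$, now using that $B$ is strictly local. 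Since $B \in \alg{A}(\Lambda_2)$, locality gives $C_B(y,0) = 0$ for $y \notin \supp(B)$ and the trivial estimate $C_B(y,0) \leq 2\|B\|$ for $y \in \supp(B)$; feeding this into~\eqref{eq:lrbound}, bounding $e^{-\lambda d(x,y)} \leq e^{-\lambda d(x,\supp(B))}$ for the surviving terms and dropping the harmless $-1$, one gets
\[
C_B(x,t) \leq 4\,|\supp(B)|\,\|B\|\, e^{-\lambda d(x,\supp(B))}\, e^{2|t|\|\Phi\|_\lambda},
\]
valid whether or not $x$ lies in $\supp(B)$; this is the point at which the constant $4$ rather than $2$ appears, covering the case that $\Lambda_1$ meets $\supp(B)$.

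Finally I would use $|\supp(B)| \leq |\Lambda_2|$ and, since $\supp(B) \subseteq \Lambda_2$, the bound $d(x,\supp(B)) \geq d(\Lambda_1,\Lambda_2)$ for every $x \in \Lambda_1$, so that $C_B(x,t) \leq 4|\Lambda_2|\,\|B\|\, e^{2|t|\|\Phi\|_\lambda - \lambda d(\Lambda_1,\Lambda_2)}$. Summing over the $|\Lambda_1|$ sites of $\Lambda_1$ and substituting into the first display yields
\[
\|[\alpha_t(A),B]\| \leq 4\,\|A\|\,\|B\|\,|\Lambda_1|\,|\Lambda_2|\, N^{2|\Lambda_1|}\, e^{2|t|\|\Phi\|_\lambda - \lambda d(\Lambda_1,\Lambda_2)},
\]
which is the claim. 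There is no genuine obstacle; the only steps needing a little care are the combinatorics of the matrix-unit expansion (keeping the factor $N^{2|\Lambda_1|}$ honest) and phrasing the intermediate estimate so it stays valid without assuming $\Lambda_1 \cap \Lambda_2 = \emptyset$, which is exactly what forces the constant $4$.
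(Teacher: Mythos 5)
Your proof is correct and follows exactly the paper's own route: the matrix-unit expansion~\eqref{eq:finexp} to reduce to single sites, the estimate of $C_B(x,t)$ from~\eqref{eq:lrbound} for strictly local $B$, and then the uniform bound over $x\in\Lambda_1$ and $y\in\supp(B)\subseteq\Lambda_2$. The only part the paper leaves implicit — where the constant $4$ comes from — you make explicit, correctly identifying it as the price of a bound that survives $\Lambda_1\cap\Lambda_2\neq\emptyset$ by absorbing the first term $e^{2|t|\|\Phi\|_\lambda}C_B(x,0)\leq 2\|B\|e^{2|t|\|\Phi\|_\lambda}$ into the second.
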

This is a very rough bound, and with some extra work (and assumptions on the metric), it is possible to improve on it. For example, if $\Lambda_1 \cap \Lambda_2 = \emptyset$, then we can use $e^{2 |t| \|\Phi\|_\lambda}-1$ instead of $e^{2 |t| \|\Phi\|_\lambda}$. We will later discuss further improvements of this bound (at the expense of some mild additional assumptions), but first we will give the proof of Theorem~\ref{thm:lrbound}. 

As mentioned, the proof largely follows~\cite{MR2217299}, which in turn is based on the proof given in~\cite{MR1441540}, the main difference being that no Fourier theory techniques are no longer used, to be able to dispense of the translational invariance requirement. There is a technical issue, however: \emph{a priori} it is not clear that interactions such that $\|\Phi\|_\lambda < \infty$ in fact generate dynamics on the whole system, since we have only proven this result for interactions of finite range. The first step would be to define a derivation that generates the dynamics. Suppose that $A \in \alg{A}(\Lambda)$. Then we have
\[
\begin{split}
	\| \delta(A) \| &= \left\| \sum_{X \cap \Lambda \neq \emptyset} [\Phi(X), A] \right\| 
	\leq \sum_{x \in \Lambda} \sum_{X \ni x} \left\| [\Phi(X), A] \right\|  \\
	&\leq 2 \|A\| \sum_{x \in \Lambda} \sum_{X \ni x} \| \Phi(X) \| \leq 2 |\Lambda| \|A\| \| \Phi \|_\lambda.
\end{split}
\]
Hence we can define a derivation $\delta$ with domain $D(\delta) = \alg{A}_{loc}$. However, this is not quite enough to show that $\delta$ is the generator of dynamics.\footnote{Technically, $\delta$ must have a closed extension.} There are different ways of doing this, but here we follow a strategy similar to the proof of Theorem~\ref{thm:dynconv}. That is, we consider the \emph{local} dynamics $\alpha_t^\Lambda(A) := e^{i t H_\Lambda} A e^{-i t H_\Lambda}$, and show that $\alpha_t^\Lambda \to \alpha_t$ as $\Lambda$ grows. The easiest way to prove this is by first deriving a variant of the Lieb-Robinson bound for the local dynamics. Note that since we are considering only \emph{local} dynamics, it is not necessary to make any assumptions on $\Gamma$.

\begin{lemma}\label{lem:lrfinite}
	Let $\lambda > 0$. Then for all interactions $\Phi$ with $\|\Phi\|_\lambda < \infty$, sites $x \in \Gamma$ and $t \in \mathbb{R}$ and $B \in \alg{A}$, and finite subsets $\Lambda$ of $\Gamma$, we have the bound
	\begin{equation}\label{eq:lrboundfin}
		C_B^{\Lambda}(x,t) \leq e^{2 |t| \| \Phi \|_\lambda} C_B(x,0) + \sum_{y \in \Gamma, y \neq x} e^{-\lambda d(x,y)} \left(e^{2 |t| \|\Phi\|_\lambda}-1\right) C_B^{\Lambda}(y,0),
\end{equation}
where $C_B^{\Lambda}(x,t) = [\alpha_t^\Lambda(A), B],$ and $\alpha_t^\Lambda(A) = e^{it H_\Lambda} A e^{-it H_\Lambda}$.
\end{lemma}
\begin{proof}
Let $x \in \Gamma$ and $A \in \alg{A}(\{x\})$ and suppose that $B \in \alg{A}$. We first look at $C_{A,B}^\Lambda(x,t)$. Note that
\[
[\alpha_t^\Lambda(A),B] -[A,B] = \int_{0}^{t} \frac{d}{ds} [\alpha_s^\Lambda(A), B] ds = \int_{0}^{t} [\alpha_s^\Lambda(\delta_\Lambda(A)), B] ds,
\]
by the fundamental theorem of calculus. We also used that the derivative at $t=0$ of $\alpha_t^\Lambda(A)$ is equal to $\delta_\Lambda(A) := i [H_\Lambda, A]$, and the one-parameter group property of $\alpha_t^\Lambda$. Hence, writing this out we obtain
\begin{equation}
	C_{A,B}^\Lambda(x,t) = C_{A,B}^\Lambda(x,0) + i \sum_{X \subset \Lambda, X \ni x } \int_{0}^t [\alpha_s^\Lambda([\Phi(X),A]), B] ds.
\end{equation}
To get a bound for $C_B^\Lambda(x,t)$ we can take the norm of this expression, and repeatedly use the triangle inequality, to obtain
\begin{equation}
	\label{eq:cbbound1}
	C_B^\Lambda(x,t) \leq C_B^\Lambda(x,0) + \sum_{X \subset \Lambda, X \ni x} \int_0^{|t|} \sup_{A \in \alg{A}(\{x\})} \frac{\| [\alpha_s^\Lambda([\Phi(X), A]), B]\|}{\|A\|} ds.
\end{equation}
We want to estimate the terms under the integral sign. First note that if $X \in \mc{P}_f(\Gamma)$ and $x \in X$, then $[\Phi(X), A]$ is an element of $\alg{A}(X)$ (and if $x \notin X$ it is zero, by locality). We can then again write it in terms of matrix units, as we did before in equation~\eqref{eq:finexp}. Note that for the coefficients we must have $|\lambda_{i_1 \dots i_k}| \leq 2 \|\Phi(X)\| \|A\|$. We also have that
\[
	\| [\alpha_t(E_{i_1} \cdots E_{i_k}), B] \| \leq \sum_{x \in X} C_B^\Lambda(x,t),
\]
since $\|E_{i_j}\| = 1$. Combining everything with equation~\eqref{eq:cbbound1}, we arrive at the bound
\[
C_B^\Lambda(x,t) \leq C_B^\Lambda(x,0) + 2 \sum_{X \subset \Lambda, X \ni x} \| \Phi(X)\| N^{2 |X|} \int_0^{|t|} \sum_{y \in X} C_B^\Lambda(y,s) ds.
\]
Note that $C_B^\Lambda(x,t)$ appears both on the left and the right hand side. Hence we can again apply the inequality to the term under the integral, and so on. We first introduce the following shorthand notation:
\[
\varphi(x,y) := \sum_{X \ni {x,y}} \|\Phi(X)\| N^{2 |X|}.
\]
Note that the sum is over all finite sets $X$ that contain both the point $x$ and $y$. With this notation we can obtain the following bound for $C_B^\Lambda(x,t)$:
\begin{equation}
	C_B^\Lambda(x,t) \leq C_B^\Lambda(x,0) + 2 \int_0^{|t|} \sum_{y \in \Gamma} \varphi(x,y) C_B^\Lambda(y,s)ds.
	\label{eq:cbbound2}
\end{equation}
Note that this bound can be obtained from the bound above, by dropping the restriction that $X \subset \Lambda$ in the summation. This clearly gives a worse bound, but has the advantage that we lose the dependence on $\Lambda$ in the constants, and makes for nicer expressions.

We can now iterate this expression, and calculate the integrals over the constants terms, to get the following series in $|t|$:
\[
\begin{split}
	C_B^\Lambda(x,t) &\leq C_B(x,0) + 2 |t| \sum_{y \in \Gamma} \varphi(x,y) C_B^\Lambda(y, 0) \\
			&+ \frac{(2|t|)^2}{2} \sum_{y \in \Gamma} \varphi(x,y) \sum_{y' \in \Gamma} \varphi(y,y') C_B^\Lambda(y',0) + \\
			&+ \frac{(2|t|)^3}{3!} \sum_{y \in \Gamma} \varphi(x,y) \sum_{y' \in \Gamma} \varphi(y,y') \sum_{y'' \in \Gamma} \varphi(y', y'') C_B^\Lambda(y'',0) + \dots.\\
\end{split}
\]
Now set $\varphi_\lambda(x,y) := e^{\lambda d(x,y)} \varphi(x,y)$. The idea is to first rewrite the term above in terms of $\varphi_\lambda$. To get the idea, we will do this explicitly for the term proportional to $|t|^2$, the other terms are similar. We have
\[
\begin{split}
	\sum_{y \in \Gamma} \sum_{y' \in \Gamma} \varphi(x,&y) \varphi(y,y') C_B^\Lambda(y',0) \\
	&= \sum_{y \in \Gamma} \sum_{y' \in \Gamma} e^{-\lambda (d(x,y) + d(y,y'))} \varphi_\lambda(x,y) \varphi_\lambda(y,y') C_B^\Lambda(y', 0) \\
	&\leq \sum_{y \in \Gamma} \sum_{y' \in \Gamma} e^{-\lambda d(x,y')} \varphi_\lambda(x, y) \varphi_\lambda(y, y') C_B^\Lambda(y',0) \\
	&= \sum_{y \in \Gamma} \sum_{y' \in \Gamma} e^{-\lambda d(x,y)} \varphi_\lambda(x, y') \varphi_\lambda(y', y) C_B^\Lambda(y,0)
\end{split}
\]
where we used the triangle inequality to conclude that $d(x,y') \leq d(x,y) + d(y,y')$. Doing a similar rewriting with the other terms, we arrive at the following bound
\[
C_B^\Lambda(x,t) \leq \sum_{y \in \Gamma} e^{-\lambda d(x,y)} f(x,y) C_B^\Lambda(y,0),
\]
where $f(x,y)$ is the function
\[
\begin{split}
f(x,y) = &\delta_{x,y} + 2 |t| \varphi_\lambda(x,y) + \frac{(2|t|)^2}{2} \sum_{y' \in \Gamma} \varphi_\lambda(x,y') \varphi_\lambda(y', y) + \\
&\frac{(2|t|)^3}{3!} \sum_{y'' \in \Gamma} \sum_{y' \in \Gamma} \varphi_\lambda(x,y'') \varphi_\lambda(y'', y') \varphi_\lambda(y',y) + \dots.
\end{split}
\]
Note that we see that the exponential decay becomes clear from the bound now. It remains to estimate the sums over $\varphi_\lambda$. Note that for each $x \in \Gamma$,
\[
\begin{split}
\sum_{y \in \Gamma} \varphi_\lambda(x,y) &= \sum_{y \in \Gamma} \sum_{X \ni x,y} \|\Phi(X)\| N^{2|X|} e^{\lambda d(x,y)} \\
&\leq \sum_{X \ni x} \sum_{y \in X} \|\Phi(X)\| N^{2 |X|} e^{\lambda \operatorname{diam}(X)} \\
&= \sum_{X \ni x} |X| \|\Phi(X)\| N^{2 |X|} e^{\lambda \operatorname{diam}(X)} \leq \|\Phi\|_\lambda,
\end{split}
\]
where we used the definition of $\|\Phi\|_\lambda$ in the last step. Using this inequality we can estimate $f(x,y)$ by
\[
	f(x,y) \leq \delta_{x,y} + (e^{2 |t| \|\Phi\|_\lambda} -1).
\]
Plugging this in the expression for $C_B^\Lambda(x,t)$ we finally obtain
\[
C_B^\Lambda(x,t) \leq e^{2 |t| \|\Phi\|_\lambda} C_B^\Lambda(x,0) + \sum_{y \in \Gamma, y \neq x} e^{-\lambda d(x,y)} (e^{2|t| \|\Phi\|_\lambda}-1) C_B^\Lambda(y,0).
\]
This concludes the proof.
\end{proof}

\section{Existence of dynamics and proof of Theorem~\ref{thm:lrbound}}
The next goal is to show that the Lieb-Robinson bound for finite volume dynamics implies that the corresponding dynamics converge to some dynamics $\alpha_t$ on the entire system.

\begin{theorem}\label{thm:dynconvlr}
	Suppose that $\Gamma$ is $\lambda$-regular and that we have an interaction $\Phi$ with $\|\Phi\|_\lambda < \infty$. Then the local dynamics converge to a strongly continuous one-parameter group of automorphisms $\alpha_t$, i.e. for all $A \in \alg{A}$ we have
	\[
		\begin{split}
		\lim_{\Lambda \to \infty} \| \alpha_t^{\Lambda}(A) - \alpha_t^\Lambda(A) \| = 0.
	\end{split}
	\]
	Convergence is uniform for $t$ in the compact intervals $[-T, T]$.
\end{theorem}
\begin{proof}
For simplicity we will assume $t > 0$, $t < 0$ is shown by making the obvious changes. Let $A \in \alg{A}(\Lambda)$ be a local observable and suppose that $\Lambda_n \subset \Lambda_m$ are finite sets. Then
\[
	\begin{split}
	\left\| \alpha_t^{\Lambda_m}(A)-\alpha_t^{\Lambda_n}(A) \right\| &= \left\| \int_0^t \frac{d}{dt} \left[ \alpha_s^{\Lambda_m}(A)-\alpha_s^{\Lambda_n}(A)\right] ds \right\| \\
	&= \left\| \int_0^t  \alpha_s^{\Lambda_m}(\delta_{\Lambda_m}(\alpha_{t-s}^{\Lambda_n}(A))-\delta_{\Lambda_n}(\alpha_{t-s}^{\Lambda_n}(A))) ds\right\| \\
	&= \left\| \int_0^t  \alpha_s^{\Lambda_m}\left( \left[H_{\Lambda_m} - H_{\Lambda_n}, \alpha_{t-s}^{\Lambda_n}(A) \right]\right) ds\right\| \\
	&\leq \sum_{x \in \Lambda_m \setminus \Lambda_n} \sum_{X \ni x} \int_0^t \left\| [\Phi(X), \alpha_{t-s}^{\Lambda_n}(A) \right\| ds.
\end{split}
\]
We can again apply the trick of expanding $\Phi(X)$ into matrix units, as in the proof of Lemma~\ref{lem:lrfinite}. This gives the bound
\[
	\left\| \alpha_t^{\Lambda_m}(A)-\alpha_t^{\Lambda_n}(A) \right\| \leq \sum_{x \in \Lambda_m \setminus \Lambda_n} \sum_{X \ni x} \| \Phi(X) \| N^{2|X|} \sum_{y \in X} \int_0^t C_A^{\Lambda_n}(y, t-s) ds.
\]
This brings us in a position to apply Lemma~\ref{lem:lrfinite}. Since we do not need the sharpest bound possible, we take together both terms in the bound to obtain the upper bound
\[
	\sum_{x \in \Lambda_m \setminus \Lambda_n} \sum_{X \ni x} \| \Phi(X) \| N^{2|X|} \sum_{y \in X} \sum_{z \in \Lambda} e^{-\lambda d(y,z)} C_A(z,0) \int_0^t e^{2|t-s| \|\Phi\|_\lambda} ds.
\]
where we used that $C_A(y,0) = 0$ if $y \notin \Lambda$ by locality. This allowed us to get rid of the infinite sum over $\Gamma$. Now note that by the triangle inequality, $d(x,z) \leq d(x,y) + d(y,z) \leq \operatorname{diam}(X) + d(y,z)$. Hence $e^{-\lambda d(y,z)} \leq e^{-\lambda d(x,z) + \lambda \operatorname{diam}(X)}$. Also note that we can find a constant $C_T$ such that the integral in the expression above is majorised by $C_T$ for all $t \in [-T, T]$. Combining these observations it follows that
\[
\begin{split}
2 C_T \sum_{x \in \Lambda_m \setminus \Lambda_n} \sum_{X \ni x} \| \Phi(X) \| N^{2|X|} |X| e^{\lambda \operatorname{diam}(X)} \sum_{z \in \Lambda} e^{-\lambda d(x,z)}  \\
\leq 2 C_T \|\Phi\|_\lambda \|A\| \sum_{x \in \Lambda_m \setminus \Lambda_n} \sum_{z \in \Lambda} e^{-d(x,z)}.
\end{split}
\]
is an upper bound, where we first used the estimates from above and then replaced the summation over $y$ with $|X|$, and that $C_A(y,0) \leq 2 \|A\|$. By the regularity condition on $\Gamma$ the summation converges to zero as $\Lambda_n \to \infty$. The proof that $\alpha_t$ is a one-parameter group of automorphisms is the same is the proof of Theorem~\ref{thm:dynconv}.
\end{proof}

This theorem shows that we can use Lieb-Robinson bounds for \emph{local} dynamics to show that \emph{global} dynamics exist. This now finally allows us to prove Theorem~\ref{thm:lrbound}.

\begin{proof}[Proof of Theorem~\ref{thm:lrbound}]
	The above theorem gives uniform convergence of the local dynamics to the global dynamics $\alpha_t$, from which the claim follows with Lemma~\ref{lem:lrfinite}.

	Alternatively, now we know that the global dynamics exist, it can be shown that the closure of the derivation $\delta$ of Section~\ref{sec:lrfinite} generates this dynamics. The claim can then be proven directly by replacing the local dynamics with global dynamics in the proof of Lemma~\ref{lem:lrfinite}.
\end{proof}

\section{Locality of time evolution}\label{sec:lrlocality}
We are interested in the time evolution of local observables. As the time evolves, in general the size of the support will spread out. But if Lieb-Robinson bounds apply one expects that this spread will be limited. Such considerations are important, for example, when one wants to build quantum computers based on large quantum mechanical systems. For example, if one performs some operation on a small part of the system, it is important to know how long it will take before another distant part of the system is affected by this local operation. In relativistic systems this is limited by a \emph{strict} upper bound: the speed of light. So in a sense, Lieb-Robinson bounds give an analogue of the speed of light in quantum lattice systems. There is however one important difference: the Lieb-Robinson bound is usually not strict, in the sense that there will be some small, exponentially suppressed ``leakage'', as we will see.

To answer this question we first consider the following question. Consider an operator $A$ supported on some finite set $\Lambda$. Then for any operator $B$ whose support is disjoint from $\Lambda$, we have that $[A,B] = 0$. Conversely, if $A$ is any operator and $[A,B] = 0$ for any $B \in \alg{A}(\Lambda^c)$ for some finite set $\Lambda$, then it follows that $A \in \alg{A}(\Lambda)$ (see~\cite[Prop. IV.1.6]{MR1239893} or Exercise~\ref{ex:alghaagdual}). In practice, strict commutativity is a bit too much to ask, and will rule out the dynamics generated by many interesting interactions. So instead of \emph{strict} commutation, we assume only that the norm of the commutator is small. The effect is that although $A$ need not be strictly local any more, it can be approximated by an operator $A'$ that is localised in $\Lambda$. This is the content of the following lemma~\cite{localapprox}.\index{dynamics!locality}
\begin{lemma}\label{lem:loccom}
	Let $\Lambda \in \mc{P}_f(\Gamma)$. Suppose that $\varepsilon > 0$ and $A \in \alg{A}$ are such that
\[
\| [A, B] \| \leq \varepsilon \|A\| \|B\| \quad\textrm{for all } B \in \alg{A}(\Lambda^c).
\]
Then it follows that there is some $A_\Lambda \in \alg{A}(\Lambda)$ such that $\|A_\Lambda - A\| \leq \varepsilon \|A\|$.
\end{lemma}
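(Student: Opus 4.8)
The plan is to use the near-commutation hypothesis to show that $A$ is almost unchanged by a suitable averaging (conditional expectation) onto $\alg{A}(\Lambda)$, and then pass to a limit.

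First I would fix an increasing exhaustion $\Lambda = \Lambda_0 \subset \Lambda_1 \subset \Lambda_2 \subset \cdots$ of $\Gamma$ by finite sets with $\bigcup_n \Lambda_n = \Gamma$. Since $\alg{A}_{loc}$ is dense in $\alg{A}$, pick local operators $A_n \in \alg{A}(\Lambda_n)$ with $\eta_n := \|A_n - A\| \to 0$. For each $n$ the algebra $\alg{A}(\Lambda_n) \cong \alg{A}(\Lambda) \otimes \alg{A}(\Lambda_n \setminus \Lambda)$ is finite dimensional, and I would work with the conditional expectation $E_n : \alg{A}(\Lambda_n) \to \alg{A}(\Lambda)$ obtained by averaging over the unitary group of $\alg{A}(\Lambda_n \setminus \Lambda)$ against normalised Haar measure,
\[
	E_n(X) = \int U X U^* \, dU .
\]
The facts to record: $E_n$ is linear and norm-contractive; it restricts to the identity on $\alg{A}(\Lambda)$, because every such $U$ lies in $\alg{A}(\Lambda_n \setminus \Lambda)$, which is disjoint from $\Lambda$, so $U$ commutes with $\alg{A}(\Lambda)$ by locality; and its range is exactly $\alg{A}(\Lambda)$, since invariance of Haar measure forces $E_n(X)$ to commute with all of $\alg{A}(\Lambda_n\setminus\Lambda)$, and for a tensor product of full matrix algebras the relative commutant of one factor inside the product is the other, i.e. $\alg{A}(\Lambda_n) \cap \alg{A}(\Lambda_n\setminus\Lambda)' = \alg{A}(\Lambda)$.

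Now the key estimate. From $UXU^* - X = [U,X]U^*$ and $\|U\| = \|U^*\| = 1$ we get, for $X \in \alg{A}(\Lambda_n)$,
\[
	\| E_n(X) - X \| \le \int \| [U,X] \| \, dU .
\]
Applying this with $X = A_n$ and using that each unitary $U \in \alg{A}(\Lambda_n\setminus\Lambda) \subset \alg{A}(\Lambda^c)$ satisfies $\|[U,A]\| \le \varepsilon \|A\|$ by hypothesis, together with $\|[U, A_n - A]\| \le 2 \eta_n$, yields
\[
	\| E_n(A_n) - A_n \| \le \varepsilon \|A\| + 2\eta_n , \qquad\text{hence}\qquad \| E_n(A_n) - A \| \le \varepsilon \|A\| + 3\eta_n .
\]
Thus the distance from $A$ to the fixed finite-dimensional subspace $\alg{A}(\Lambda)$ is at most $\varepsilon\|A\| + 3\eta_n$ for every $n$, hence at most $\varepsilon \|A\|$. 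Since $\alg{A}(\Lambda)$ is finite dimensional, its closed balls are compact and this distance is attained at some $A_\Lambda \in \alg{A}(\Lambda)$ with $\|A_\Lambda - A\| \le \varepsilon\|A\|$; equivalently, extract a norm-convergent subsequence of $(E_n(A_n))_n$ from the compact ball of radius $\|A\| + o(1)$ in $\alg{A}(\Lambda)$.

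The only genuinely delicate points are the structural properties of $E_n$ — in particular the identification of its range with $\alg{A}(\Lambda)$ through the relative-commutant computation for tensor products of full matrix algebras — and confirming that the final infimum is actually attained. Both are standard, the first being the one to state carefully; the rest is triangle-inequality bookkeeping.
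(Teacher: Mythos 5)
Your proof is correct. The paper only \emph{illustrates} the averaging idea in the simpler finite-dimensional setting $\alg{B}(\mc{H}_1)\otimes\alg{B}(\mc{H}_2)$ and refers elsewhere for the general statement; you have carried out the extension to the quasi-local algebra in full. The key device is the same --- the Haar-averaged conditional expectation onto the commutant of the ``outside'' part --- but you supplement it with two necessary ingredients: the approximation of $A$ by local operators $A_n \in \alg{A}(\Lambda_n)$ along an exhaustion, and the observation that the range of $E_n$ inside $\alg{A}(\Lambda_n)$ is exactly $\alg{A}(\Lambda)$ via the relative-commutant identity for tensor products of full matrix algebras. The triangle-inequality bookkeeping ($\|E_n(A_n)-A\|\le\varepsilon\|A\|+3\eta_n$) and the final compactness step in the finite-dimensional subspace $\alg{A}(\Lambda)$ are both sound. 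One small stylistic remark: the attainment step can be stated even more directly --- the sequence $(E_n(A_n))_n$ is bounded in the finite-dimensional space $\alg{A}(\Lambda)$, so a convergent subsequence yields $A_\Lambda$ immediately, without appealing separately to attainment of the distance to a subspace.
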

We will not give the proof of this statement, but instead consider as an example the simpler case of finite systems. More precisely, we assume that we are given a Hilbert space $\mc{H} = \mc{H}_1 \otimes \mc{H}_2$ where $\mc{H}_2$ is finite dimensional. The observable algebra is $\alg{A} = \alg{B}(\mc{H}) = \alg{B}(\mc{H}_1) \otimes \alg{B}(\mc{H}_2)$. Suppose that we are given some $A \in \alg{A}$ and $\varepsilon > 0$ such that $\|[A,I \otimes B]\| \leq \varepsilon \|A\| \|B\|$ for all $B \in I \otimes \alg{B}(\mc{H}_2)$. The goal is to approximate $A$ by some operator $A' \in \alg{B}(\mc{H}_1) \otimes I$. To achieve this, define $A'$ by
\[
A' = \int_{\mc{U}(\mc{H}_2)} (I \otimes U) A (I \otimes U)^* dU,
\]
where the integral is over the group of unitaries in $\alg{B}(\mc{H}_2)$ and the integral is with respect to the Haar measure of this group. This measure has the property that the substitutions $U \mapsto UV$ and $U \mapsto VU$ (with $V$ also a unitary) leave the integral invariant. Hence if $V \in \alg{B}(\mc{H}_2)$, we have
\[
\begin{split}
(I\otimes V) &A' = \int_{\mc{U}(\mc{H}_2)} (I \otimes VU)  A (I \otimes U)^* dU \\
		&= \int_{\mc{U}(\mc{H}_2)} (I \otimes U)  A (I \otimes V^*U)^* dU = A' (I \otimes V),
\end{split}
\]
hence $A' \in (I \otimes \alg{B}(\mc{H}_2))' = \alg{B}(\mc{H}_1) \otimes I$. 

Finally we show that $A'$ indeed approximates $A$. Indeed, note that
\[
\| A' - A \| = \int_{\mc{U}(\mc{H}_2)} [I \otimes U, A] (I \otimes U)^* dU \leq \varepsilon \int_{\mc{U}(\mc{H}_2)} \|A\| dU = \varepsilon \|A\|,
\]
since $\|U\| = 1$ for any unitary $U$.

Now suppose that $t \in \mathbb{R}$ and $A \in \alg{A}(\Lambda)$ for some $\Lambda \in \mc{P}_f(\Gamma)$. Moreover we suppose that the time evolution $\alpha_t$ comes from some interaction satisfying the requirements for Theorem~\ref{thm:lrbound}. The goal is to show that $\alpha_t(A)$ can be approximated by a local observable with support on a set that is not too much bigger than $\Lambda$. Obviously, the support will generally grow bigger as $t$ becomes bigger, but we are interested in the case of fixed $t$. The idea is to use Lemma~\ref{lem:loccom}, so we must have a bound on commutators of that form. Corollary~\ref{cor:lrrough} gives such a bound, but the problem is that it depends on the support of the operator $B$, while the Lemma requires the bound to hold for \emph{all} $B$, independent of the size of the support. Hence as a first step we will improve the bound in Corollary~\ref{cor:lrrough}.

Recall that the site of the support of $B$ comes into the bound in Corollary~\ref{cor:lrrough} by estimating the summation in the expression $C_B(x,t)$. We proceed as before, and as a first step recall the bound
\[
	C_B(x,t) \leq 2 \|B\| \sum_{y \in \supp(B)} e^{-\lambda d(x,y) + 2 |t| \|\Phi\|_\lambda}.
\]
We estimated the sum by noting that $d(x,y) \geq d(x, \supp(B))$. This is of course a rough estimate, since the effect of the sites in the support of $B$ decays exponentially with the distance to $x$. Therefore we will make a more careful estimate. In particular, we can estimate the above expression as
\begin{equation}
	C_B(x,t) \leq 2 \|B\| \sum_{k = 0}^\infty e^{-\lambda(d(x,\supp(B)) + k) + 2 |t| \|\Phi\|_\lambda} |\supp(B) \cap (B_{k+1}(x) \setminus B_{k}(x))|,
\end{equation}
where $B_k(x)$ is the ball of size $k$ around $x$. That is, we break up the support of $B$ into bigger and bigger annuli of width $1$. If the amount of points in such an annulus does not grow exponentially (or faster) as $k$ becomes bigger, the sum can be bounded independent of the support of $B$. This is for example the case if $\Gamma = \mathbb{Z}^d$ with the usual metric. As an example we can consider $\Gamma = \mathbb{Z}$. The more general setting is discussed in, for example,~\cite{MR2256615,MR2681770}. Note that in the case $\Gamma = \mathbb{Z}$, if we increase the diameter of the ball $B_k(x)$ by one, we add two additional points. Hence in that case we can estimate
\[
C_B(x,t) \leq 4 \| B\| e^{2 |t| \|\Phi\|_\lambda - \lambda(d(x,\supp(B))} \sum_{k = 0}^\infty e^{-k \lambda}.
\]
Since $\lambda > 0$ the sum converges to $1/(1-\exp(-\lambda))$. As mentioned before, a similar result holds for $\Gamma = \mathbb{Z}^d$. With this bound we find the following improved version of Corollary~\ref{cor:lrrough}.
\begin{corollary}\label{cor:lrsharper}
	Consider the same setting as in Theorem~\ref{thm:lrbound}, with the addition that $\Gamma = \mathbb{Z}^d$.  Suppose that $A \in \alg{A}(\Lambda)$ and $B \in \alg{A}$. Then there is a constant $C > 0$ such that 
\[
\| [\alpha_t(A), B] \| \leq C \|A\| \|B\| |\Lambda| N^{2 |\Lambda|} e^{2 |t| \|\Phi\|_\lambda - \lambda d(\Lambda, \supp(B))}.
\]
\end{corollary}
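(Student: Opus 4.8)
The strategy is to upgrade the one-dimensional computation carried out in the paragraph preceding the statement to general $\Gamma = \mathbb{Z}^d$. First I would reduce to a single-site operator exactly as in the proof of Corollary~\ref{cor:lrrough}: expanding $A \in \alg{A}(\Lambda)$ in matrix units as in~\eqref{eq:finexp} (at most $N^{2|\Lambda|}$ terms) and using the commutator inequality $\|[A_1A_2,B]\| \le \|A_1\|\,\|[A_2,B]\| + \|A_2\|\,\|[A_1,B]\|$ repeatedly, one obtains
\[
	\| [\alpha_t(A), B] \| \leq \|A\|\, N^{2|\Lambda|} \sum_{x \in \Lambda} C_B(x,t),
\]
so it suffices to bound $C_B(x,t)$ for a single site $x \in \Lambda$ and then sum. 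If $d(\Lambda,\supp(B)) = 0$ the claimed inequality holds for $C$ large enough by the trivial bound $\|[\alpha_t(A),B]\| \le 2\|A\|\,\|B\|$, so I may assume $x \notin \supp(B)$; then $C_B(x,0) = 0$ by locality, and for the remaining initial data we have $C_B(y,0) \leq 2\|B\|$ when $y \in \supp(B)$ and $C_B(y,0)=0$ otherwise.

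Feeding this into the Lieb--Robinson bound~\eqref{eq:lrbound} of Theorem~\ref{thm:lrbound} kills the first term and leaves
\[
	C_B(x,t) \leq 2\|B\|\, e^{2|t|\|\Phi\|_\lambda} \sum_{y \in \supp(B)} e^{-\lambda d(x,y)} ,
\]
so the whole problem reduces to showing that $\sum_{y \in \supp(B)} e^{-\lambda d(x,y)}$ is bounded by a constant depending only on $\lambda$ and $d$, times $e^{-\lambda d(x,\supp(B))}$, \emph{uniformly in $\supp(B)$}. This uniformity is exactly what removes the factor $|\supp(B)|$ present in Corollary~\ref{cor:lrrough}. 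Writing $r := d(x,\supp(B))$ and decomposing $\supp(B)$ according to the annuli $B_{k+1}(x)\setminus B_k(x)$, one uses that for the taxicab metric on $\mathbb{Z}^d$ the number of sites at distance exactly $k$ from $x$ is at most $c_d (k+1)^{d-1}$ for a constant $c_d$ depending only on $d$. Hence
\[
	\sum_{y \in \supp(B)} e^{-\lambda d(x,y)} \leq \sum_{k \geq r} c_d (k+1)^{d-1} e^{-\lambda k} = e^{-\lambda r} \sum_{j \geq 0} c_d (r+j+1)^{d-1} e^{-\lambda j}.
\]

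For $d = 1$ this is literally the geometric series $\tfrac{2e^{-\lambda r}}{1-e^{-\lambda}}$ computed in the text. For $d \geq 2$ the remaining sum is still convergent but grows polynomially in $r$ (like $r^{d-1}$), so one absorbs this polynomial into the exponential at the cost of an arbitrarily small decrease of the rate; equivalently, since $\|\Phi\|_{\lambda'} \leq \|\Phi\|_\lambda$ for $\lambda' \le \lambda$, one runs the entire argument from the start with $\lambda$ replaced by some $\lambda' < \lambda$ and then uses $\sup_{r\geq 0} r^{d-1} e^{-(\lambda-\lambda')r} < \infty$. Either way one obtains $\sum_{y \in \supp(B)} e^{-\lambda d(x,y)} \leq C_{d,\lambda}\, e^{-\lambda d(\Lambda,\supp(B))}$ (using $r \geq d(\Lambda,\supp(B))$), and summing over $x \in \Lambda$ and recombining with the matrix-unit reduction gives the stated bound with $C = 4 C_{d,\lambda}$. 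The only genuine obstacle is the uniform control of the annulus sum; once the polynomial growth of spheres in $\mathbb{Z}^d$ is recorded, everything else is bookkeeping, the sole subtlety being the harmless trade-off between the polynomial prefactor and the decay rate for $d \ge 2$.
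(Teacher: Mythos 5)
Your route is the paper's own: reduce to single-site operators via the matrix-unit expansion~\eqref{eq:finexp} exactly as in the derivation of Corollary~\ref{cor:lrrough}, discard the $C_B(x,0)$ term by locality, and then sharpen the estimate of $\sum_{y\in\supp(B)}e^{-\lambda d(x,y)}$ by decomposing $\supp(B)$ into the annuli $B_{k+1}(x)\setminus B_k(x)$ and using that the number of lattice points per annulus grows subexponentially. The text carries out this count only for $\Gamma=\mathbb{Z}$ (two points per annulus, geometric series) and simply asserts that ``a similar result holds'' for $\mathbb{Z}^d$ with a pointer to the literature, so your explicit sphere count $c_d(k+1)^{d-1}$ and the treatment of the case $d(\Lambda,\supp(B))=0$ fill in precisely the steps the paper leaves implicit; this is the same argument, done more carefully.

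One caveat, which you half-acknowledge and then paper over in the last sentence: the concluding inequality $\sum_{y\in\supp(B)}e^{-\lambda d(x,y)}\le C_{d,\lambda}\,e^{-\lambda d(x,\supp(B))}$ with the \emph{same} $\lambda$ on both sides and $C_{d,\lambda}$ uniform in $\supp(B)$ is false for $d\ge 2$: taking $\supp(B)$ to be the sphere of radius $r$ about $x$, the left-hand side is of order $r^{d-1}e^{-\lambda r}$, so no such uniform constant exists. What your trade-off genuinely delivers is either the bound with decay rate $\lambda-\varepsilon$ (constant depending on $d$, $\lambda$, $\varepsilon$) and velocity term $\|\Phi\|_\lambda$, or the rate-$\lambda$ bound under the stronger hypothesis $\|\Phi\|_{\lambda_0}<\infty$ for some $\lambda_0>\lambda$, at the price of $\|\Phi\|_{\lambda_0}$ appearing in the exponent; running the whole argument at $\lambda'<\lambda$, as you suggest, only lowers the final rate further, it does not restore $\lambda$. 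For $d=1$ your proof is complete and gives the corollary exactly as stated. For $d\ge 2$ you should state the conclusion in the weakened form your estimate actually proves rather than claiming the displayed inequality ``either way''---though in fairness the printed corollary shares this imprecision, since the text never verifies the $\mathbb{Z}^d$ annulus estimate it relies on.
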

Using similar considerations it is also possible to get rid of the proportionality to $|\Lambda|$, but this will not be necessary for our purposes.

Now choose an $\varepsilon > 0$, $t \in \mathbb{R}$ and $A \in \alg{A}(\Lambda)$ for some finite set $\Lambda$. Let $C$ be as in the Corollary above. Choose $d_{min}$ such that
\[
	\varepsilon > C |\Lambda| N^{2 |\Lambda|} e^{2 |t| \|\Phi\|_\lambda - \lambda d_{min}}.
\]
Then for any $B$ whose support is at least a distance $d_{min}$ from $\Lambda$, we have the bound $\| [\alpha_t(A), B ]\| < \varepsilon \|A\| \|B\|$. Hence by Lemma~\ref{lem:loccom} it follows that $\alpha_t(A) \in \alg{A}(\Lambda')$ where $\Lambda' = \cup_{x \in \Lambda} B_{d_{min}}(x)$. Also note that $d_{min}$ essentially scales as $2 |t| \|\Phi\|_\lambda$. Hence $2 \|\Phi\|_\lambda$ gives a bound on the velocity with which the support of the time-evolved operator grows: it can be interpreted as the ``group velocity'' or ``speed of sound'' in the system. This constant is sometimes called the \emph{Lieb-Robinson velocity}\index{Lieb-Robinson!velocity}, although it should be noted that this generally is not a sharp bound.

\section{Exponential decay of correlations}
In relativistic quantum field theories with a mass gap, one has that correlations decay exponentially~\cite{fredclust}. Essentially this means that if $\omega$ is the ground state, $|\omega(AB)-\omega(A) \omega(B)|$ is proportional to $\exp(-\gamma d(A,B))$, where $\gamma$ is some constant and $d(A,B)$ is the distance between the supports of $A$ and $B$. The mass gap property here means that there are only massive particles in the theory, and the lightest particle has mass $m_0$.\footnote{This property can be expressed in terms of the joint spectrum (which we will discuss later) of the generators of translations in the vacuum (that is, the momentum operators). The condition is then that (expect for 0, which corresponds to the vacuum), the spectrum lies on or above the mass shell with mass $m_0$ in the forward lightcone.} The constant $\gamma$ in the bound depends on the speed of light, which as mentioned before, also gives rise to a Lieb-Robinson bound. A natural question is then if in the context of lattice systems, it is possible to use the Lieb-Robinson bounds we have derived here as a substitute for the speed of light, and prove an analogue of the exponential clustering theorem. This is indeed the case~\cite{MR2217299}, although its proof is highly non-trivial.

To state the result we first need to give an analogue of the mass gap in the present context. This is done in terms of the spectrum of a Hamiltonian $H$. Recall that generally the Hamiltonian is not a bounded operator, and it is important that we will be careful that the domain $D(H)$ is only a dense subset, since $H$ is self-adjoint. Its spectrum, $\operatorname{spec}(H)$, was defined in Definition~\ref{def:specunbounded}. Note that a Hamiltonian is a positive operator. In particular this implies that its spectrum is contained in the positive real line. With the normalisation $H \Omega = 0$ we see that $0$ is also in the spectrum. We then say that a Hamiltonian $H$ is \emph{gapped}\index{Hamiltonian!gapped} if there is some $\gamma > 0$ such that
\[
	\operatorname{spec}(H) \cap (0, \gamma) = \emptyset.
\]
This means that there are no excited states with an energy below $\gamma$. For systems with such gapped Hamiltonians one can show that exponential clustering holds. Here we state a recent version due to Nachtergaele and Sims~\cite{nsicmp}.

\begin{theorem}[Exponential clustering]\index{exponential clustering}
Let $\Phi$ be an interaction such that $\|\Phi\|_\lambda < \infty$. Moreover, suppose that there is a unique ground state $\omega$ for the corresponding dynamics and the Hamiltonian $H$ in the ground state representation has a gap $\gamma > 0$. Then there is some constant $\mu > 0$ such that
\[
|\omega(AB) - \omega(A) \omega(B)| \leq C(A,B,\gamma) e^{- \mu d(\Lambda_1, \Lambda_2)}
\]
for all disjoint $\Lambda_1, \Lambda_2 \in \mc{P}_f(\Gamma)$ and $A \in \alg{A}(\Lambda_1), B \in \alg{A}(\Lambda_2)$, and $C(A,B,\gamma)$ is given by
\[
C(A,B,\gamma) = \|A\| \|B\| \left( 1+ \sqrt{\frac{1}{\mu d(\Lambda_1, \Lambda_2)}} + c \min(|\partial_\Phi(\Lambda_1)|, |\partial_\Phi(\Lambda_2)|) \right), 
\]
for some constant $c$ (which depends on the lattice structure and the interaction).
\end{theorem}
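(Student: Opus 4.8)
The plan is to combine three ingredients: the spectral gap, which forces certain matrix elements to extend analytically off the real axis and decay exponentially there; the Lieb-Robinson estimate of Theorem~\ref{thm:lrbound} (in the sharpened form of Corollary~\ref{cor:lrsharper}, with the $|\Lambda|$-prefactor removed as remarked there), which controls $\|[A,\alpha_t(B)]\|$ on the real axis; and a contour-shift argument that stitches the two together. First I would reduce to $\omega(A)=0$ by replacing $A$ with $A-\omega(A)I$, so that the quantity to bound is $F(0)$ with $F(t):=\omega(A\,\alpha_t(B))$; note also that, since the supports are disjoint, $[A,B]=0$, whence $\omega(AB)-\omega(A)\omega(B)=\omega(BA)-\omega(B)\omega(A)$, so the final estimate may be symmetrised in $A\leftrightarrow B$ at the end. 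Passing to the GNS triple $(\pi,\mc{H},\Omega)$ of the (unique) ground state, and using that $\alpha_t$ is implemented by $e^{itH}$ with $H\ge 0$, $H\Omega=0$ (Stone's theorem together with the corollary of Theorem~\ref{thm:gnsrep}), I would write $F(t)=\langle \pi(A^*)\Omega,\,e^{itH}\pi(B)\Omega\rangle$. Because $\omega(A)=0$, the vector $\pi(A^*)\Omega$ is orthogonal to $\Omega$, so only the component of $\pi(B)\Omega$ lying in $(I-P_0)\mc{H}$ (where $P_0=\ketbra{\Omega}$) contributes; since $H$ restricted to that subspace has spectrum in $[\gamma,\infty)$, the map $z\mapsto F(z)$ extends analytically to the open upper half plane with $|F(z)|\le 2\|A\|\|B\|\,e^{-\gamma\operatorname{Im}z}$. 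Symmetrically, $G(t):=\omega(\alpha_t(B)A)$ extends to the lower half plane with $|G(z)|\le 2\|A\|\|B\|\,e^{\gamma\operatorname{Im}z}$; moreover $F(0)-G(0)=\omega([A,B])=0$ and $F(t)-G(t)=\omega([A,\alpha_t(B)])$ for real $t$.

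Next I would introduce, for parameters $\sigma>0$ and $\varepsilon>0$, the regularised integral
\[
\mathcal{I}_\varepsilon:=\frac{1}{2\pi i}\int_{\mathbb{R}}\frac{\bigl(F(t)-G(t)\bigr)\,e^{-t^2/2\sigma^2}}{t-i\varepsilon}\,dt,
\]
and evaluate it in two ways. Splitting $\mathcal{I}_\varepsilon$ into its $F$-part and $G$-part and applying Cauchy's theorem on a thin horizontal strip: the $F$-integrand is analytic in the upper half plane apart from the simple pole at $z=i\varepsilon$, so pushing the contour up to $\mathbb{R}+ib$ yields the residue $F(i\varepsilon)e^{\varepsilon^2/2\sigma^2}$, which tends to $F(0)$ as $\varepsilon\to 0^+$, plus a remainder which, using $|F|\le 2\|A\|\|B\|e^{-\gamma b}$ and $|e^{-z^2/2\sigma^2}|=e^{(b^2-x^2)/2\sigma^2}$ on $\mathbb{R}+ib$, is bounded by a constant times $\sigma^{-1}\|A\|\|B\|\,e^{-\gamma^2\sigma^2/2}$ after the optimal choice $b=\gamma\sigma^2$; the $G$-integrand is analytic in the lower half plane, so the same shift (downwards) crosses no pole and produces a remainder of the same size. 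Letting $\varepsilon\to 0^+$ and applying the Sokhotski-Plemelj formula, together with $\omega([A,\alpha_0(B)])=0$ (so the principal value is an ordinary integral, the integrand vanishing linearly at $t=0$), gives the identity
\[
F(0)=\frac{1}{2\pi i}\int_{\mathbb{R}}\frac{\omega\bigl([A,\alpha_t(B)]\bigr)\,e^{-t^2/2\sigma^2}}{t}\,dt+R(\sigma),\qquad |R(\sigma)|\le \frac{\mathrm{const}}{\sigma}\,\|A\|\|B\|\,e^{-\gamma^2\sigma^2/2}.
\]

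It then remains to estimate the surviving integral with the Lieb-Robinson bound. Writing $v=2\|\Phi\|_\lambda$ and $d=d(\Lambda_1,\Lambda_2)$, Corollary~\ref{cor:lrsharper} gives $\|[A,\alpha_t(B)]\|\le M\|A\|\|B\|\,e^{v|t|-\lambda d}$ for a constant $M$. I would split the integral at $|t|=T$: for $|t|>T$ one uses this bound directly together with $|t|^{-1}<T^{-1}$, obtaining a contribution of order $\|A\|\|B\|\,T^{-1}\sigma\,e^{v^2\sigma^2/2-\lambda d}$; for $|t|\le T$, since $\omega([A,B])=0$ one writes $\omega([A,\alpha_t(B)])=\int_0^t\omega([A,\alpha_s(\delta(B))])\,ds$ and bounds the integrand again by Lieb-Robinson applied to $\delta(B)$, whose support is within range $c_\Phi$ of $\Lambda_2$ and whose norm is controlled by the number of interaction terms meeting $\Lambda_2$, i.e.\ by $|\partial_\Phi(\Lambda_2)|$. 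Carrying the same argument through with the roles of $A$ and $B$ interchanged (legitimate by the symmetrisation noted above) replaces $|\partial_\Phi(\Lambda_2)|$ by $|\partial_\Phi(\Lambda_1)|$, and taking the better of the two bounds produces the $\min(|\partial_\Phi(\Lambda_1)|,|\partial_\Phi(\Lambda_2)|)$ in $C(A,B,\gamma)$. Finally I would choose $\sigma^2$ and $T$ proportional to $d$ --- balancing $v^2\sigma^2/2$ and $vT$ against $\lambda d$ --- so that all error terms decay like $e^{-\mu d}$ for a suitable $\mu>0$ depending on $\gamma$, $\lambda$ and $v$; the prefactor $\sigma^{-1}\sim(\mu d)^{-1/2}$ left over from $R(\sigma)$ accounts for the $\sqrt{1/(\mu d)}$ in the statement, while the bulk term contributes the ``$1$'' and the near-diagonal term the $\min$-summand. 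The hard part is the bookkeeping of this last step: making the simultaneous optimisation over $\sigma$, $T$ and $b$ genuinely yield a single exponential rate while keeping the $d$-dependence of the prefactor no worse than $1+\sqrt{1/(\mu d)}+c\min(|\partial_\Phi(\Lambda_1)|,|\partial_\Phi(\Lambda_2)|)$, and --- on the analytic side --- justifying the contour shifts and the $\varepsilon\to 0$ limit carefully for the (in general unbounded) generator $H$.
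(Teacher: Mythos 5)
The paper does not actually prove this theorem --- it is stated with a citation to Nachtergaele--Sims, and the paragraph before it explicitly says the proof is ``highly non-trivial''. So there is no proof in the text to compare against; I am assessing your sketch against the argument in the cited literature, which is essentially the Hastings--Koma / Nachtergaele--Sims method.

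Your skeleton is the right one and the analytic ingredients are set up correctly: the reduction to $\omega(A)=0$; $F(t)=\langle\pi(A^*)\Omega,\,e^{itH}\pi(B)\Omega\rangle$ extending to the upper half-plane with $|F(z)|\lesssim\|A\|\|B\|e^{-\gamma\,\mathrm{Im}\,z}$ once one drops the $P_0$-component (using $H\Omega=0$, $H(I-P_0)\ge\gamma$, and $\omega(A)=0$); the companion bound for $G$ in the lower half-plane; $F-G=\omega([A,\alpha_t(\cdot)])$; and the Gaussian-regularised Cauchy kernel $\int\frac{(F-G)\,e^{-t^2/2\sigma^2}}{t-i\varepsilon}\,dt$ with the contour shift to height $b=\gamma\sigma^2$, which gives the residue $F(0)$ plus a remainder of order $\sigma^{-1}e^{-\gamma^2\sigma^2/2}$. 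The Sokhotski--Plemelj step and the observation that $F-G$ vanishes at $t=0$ (by locality, $[A,B]=0$) are used correctly. Splitting the surviving commutator integral and balancing $\sigma$ against $d$ to get a single rate $\mu$, with a $\sigma^{-1}\sim(\mu d)^{-1/2}$ prefactor, is also the right bookkeeping for the $\sqrt{1/(\mu d)}$ term. This is a genuine reconstruction of the proof.

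There is, however, a concrete gap in how you propose to obtain the $\min\bigl(|\partial_\Phi(\Lambda_1)|,|\partial_\Phi(\Lambda_2)|\bigr)$ prefactor. You try to generate it by differentiating $B$: writing $\omega([A,\alpha_t(B)])=\int_0^t\omega([A,\alpha_s(\delta(B))])\,ds$ and counting the interaction terms entering $\delta(B)$. This does not produce a boundary factor. The derivation $\delta(B)=i\sum_{\Lambda\cap\supp B\ne\emptyset}[\Phi(\Lambda),B]$ has roughly $|\supp(B)|$ relevant terms (each site of $\supp B$ contributes up to $\|\Phi\|_\lambda$ in norm), and --- more importantly --- the Lieb--Robinson distance $d(\Lambda_1,\supp[\Phi(\Lambda),B])$ is the same whether $\Lambda$ sits at the boundary of $\supp B$ or deep in its interior, so you would get a factor of order $|\Lambda_2|$, not $|\partial_\Phi(\Lambda_2)|$. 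In the cited proofs the boundary factor comes instead from a \emph{sharper form of the Lieb--Robinson bound itself}, in which the prefactor is $\sum_{x\in\partial_\Phi\Lambda_1}\sum_{y\in\Lambda_2}F(d(x,y))$ rather than $|\Lambda_1|$ or $\min(|\Lambda_1|,|\Lambda_2|)$; that bound is obtained by resumming the iteration scheme in the proof of Theorem~\ref{thm:lrbound} more carefully (only the interaction terms that straddle the boundary of $\Lambda_1$ can initiate the recursion). Corollary~\ref{cor:lrrough} and Corollary~\ref{cor:lrsharper} in these notes carry prefactors $|\Lambda_1|N^{2|\Lambda_1|}$, which are strictly worse; you cannot get the stated $C(A,B,\gamma)$ from them. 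So the right fix is to first upgrade the Lieb--Robinson bound to a boundary-term version and then feed that into your contour argument unchanged, rather than introduce the boundary via $\delta(B)$.

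Two smaller points. First, your near-diagonal ($|t|\le T$) estimate duplicates work: once you have the boundary-term Lieb--Robinson bound, the single estimate $\|[A,\alpha_t(B)]\|\lesssim\|A\|\|B\|\,|\partial_\Phi(\Lambda_1)|\,e^{v|t|-\lambda d}$ together with the Gaussian weight already handles all $t$, and one takes the better of the two choices $A\leftrightarrow B$ at the very end (as you correctly note by the symmetrisation $\omega([A,B])=0$). Second, the contour-shift and $\varepsilon\to0$ steps are fine in outline, but you should make explicit that they are justified by the Gaussian factor and the a priori uniform bounds $|F|,|G|\le 2\|A\|\|B\|$ on horizontal lines, so dominated convergence applies even though $H$ is unbounded; you flag this but it is worth stating.
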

Here $\partial_\Phi(X)$\index{0d_Phi@$\partial_\Phi(X)$} is the set of all $x \in X$, such that there is some $\Lambda \ni x$ such that $\Lambda \cap X^c \neq \emptyset$ and $\Phi(\Lambda) \neq 0$. In other words, it consists of all points such that there is some interaction term across the boundary of $X$. The uniqueness condition on the ground state can be dropped, but one has to take a bit more care in that case. We also note that one can in fact give an explicit expression for $\mu$ in terms of the gap and the interaction.

Exponential decay of correlations says that for two observables localised far away from each other, $\omega$ is almost a product state, i.e., there are almost no correlations between $A$ and $B$. This restricts the type of system that we can have, and in particular in 1D systems some strong results on the structure of such states has been obtained. Consider a large (but finite) quantum spin system and let $\rho$ be a state of this system. Since the system is finite, $\rho$ can be represented by a density operator. Now consider a bipartition $AB$ of the system. We can restrict the state $\rho$ to part $A$, by tracing out the degrees of freedom in $B$: $\rho_A = \operatorname{Tr}_B \rho$, where $\operatorname{Tr}_B$ is the \idx{partial trace}. Then the \idx{entanglement entropy} $S_A(\rho)$ is defined as the von Neummann entropy $\rho_A$. It can be shown that $S_A(\rho)$ is zero if and only if $\rho = \rho_A \otimes \rho_B$, i.e. in the case where there are no correlations between $A$ and $B$.

In general we would expect $S_A(\rho)$ to scale proportionally to $|A|$. However, in gapped 1D systems (where we have exponential decay), it turns out that $S_A(\rho) \sim |\partial A|$, where $\partial A$ is the boundary of $A$. This is called an \idx{area law}. The first results where obtained by Hastings~\cite{HastingsAreaLaw}, and have later been improved in~\cite{Brandao2014}. This has important consequences, in particular it implies that the state can be approximated by a \idx{matrix product state}. Essentially, what this means is that we can describe the complete state by a relatively small number of parameters, called the \emph{bond dimension}. In general the amount of parameters scales exponentially in the number of sites $n$ (since the Hilbert space dimension scales exponentially), but for matrix product states this is not true. It is clear that this has significant advantages when one wants to do simulations on a computer, allowing for a much larger number of sites to be considered. See~\cite{HastingsAreaLaw,Brandao2014} and references therein for more information.

\chapter{Local quantum physics}\label{ch:aqft}

So far we have only discussed quantum spin systems, defined on a discrete lattice. In this chapter we will also look at \emph{continuous} systems, defined on Minkowski space-time. It turns out that many of the same techniques can be used in this setting. This resemblance is particularly clear in what is called \emph{algebraic quantum field theory} (AQFT).\index{AQFT} This is an attempt to formulate quantum field theory in a mathematically rigorous way, using $C^*$-algebraic techniques. One of the first formulations is due to Haag and Kastler~\cite{MR0165864}. Their goal was to give a purely algebraic description of quantum field theory, that is, without reference to any Hilbert space.

There are two key principles in the approach of Haag and Kastler: the first is locality. The other is the realisation that it is not \emph{fields} which are of fundamental importance, but rather \emph{observables}, since in the end that is what we measure in experiments. Based on this, they assign $C^*$-algebras of observables to regions $\mathcal{O}$ of space-time, in a way compatible with locality. Because locality places such a central role, AQFT is also often called \idx{local quantum physics}. We prefer this term, since these ideas can also be applied outside the realm of quantum field theory, as we have seen.

Already from the description the similarities to the local nets of quantum spin systems becomes clear. Even though there are fundamental differences between quantum spin systems and algebraic quantum field theories (one example we have already seen: the strict upper bound in the speed of light, versus approximate Lieb-Robinson bounds), there are also many similarities. In this chapter we give an introduction to the main ideas local quantum physics, and will see how many of the concepts previously discussed for quantum spin systems reappear.

Algebraic quantum field theory is about relativistic theories, hence we would like to talk about the \emph{energy-momentum spectrum}\index{energy-momentum!spectrum}. Before we discussed the spectrum of a \emph{single} unbounded operator. The energy-momentum spectrum is the \emph{joint} spectrum of the momentum operators $P_\mu$, $\mu = 0,\dots, 3$ (in 4D Minkowksi space). This notion will be developed in the next section. The energy-momentum spectrum is not specific to relativistic theories, and indeed in the next chapter we will apply the results of Section~\ref{sec:snag} to scattering of particle excitations in quantum spin systems.

After the joint spectrum has been introduced, the it is discussed how relativistic quantum field theory can be discussed in the setting of local $C^*$-algebras. For example, the results of Section~\ref{sec:snag} make it possible to elegantly describe that the energy-momentum spectrum in the ground state should be contained in the forward light-cone.

In the last section we return to quantum spin systems, and the toric code in particular. Specifically, we show that the set of superselection sectors\index{superselection sector} has a rich structure, describing the algebraic properties of elementary ``charges'' of the system. This approach originated in algebraic quantum field theory, but can be applied to quantum spin systems as well. This serves to show that the principles of local quantum physics go much further than relativistic theories alone.

This chapter is only intended as an introduction to the key ideas, to show how ideas developed for relativistic theories can also be applied to quantum spin systems. A complete treatment with full proofs is outside of the scope of these lecture notes. The interested reader will find that~\cite{MR2542202,MR1405610} are good starting points to the rich literature available on the algebraic approach to relativistic quantum field theories. An overview of recent developments can be found in~\cite{MR3381848}.

\section{The SNAG theorem and joint spectrum}\label{sec:snag}
Recall that the momentum operators can be obtained as the generator of translations. Hence, if we have more than one spatial dimension, we need to consider multiple (commuting) generators at the same time, to talk about the momentum of, say, a single particle. This is even more so in relativistic theories, where the energy (i.e., the generator of time translations) and momentum operators have to be treated on equal footing. In finite dimensional systems, we know that commuting self-adjoint operators can be diagonalised simultaneously, so we can find a basis of states that are simultaneous eigenvectors to all of them. In infinite systems this is much more delicate, for one because as we have seen, the spectrum generally does not only consist of eigenvalues.

This shows that it is necessary to find a generalisation of the spectrum of a single, unbounded operator. In particular, in relativistic theories we have momentum operators $P^\mu$, $\mu=0,\dots,3$ in four-dimensional Minkowski space-time. We would like these operators to commute. However, since the operators are unbounded, this is a tricky subject. Intuitively commutativity means that $P^\mu P^\nu - P^\nu P^\mu = 0$. The right hand side is unproblematic, but the left hand side isn't, since the operators are unbounded. Indeed, it is only defined on the \emph{intersection} of the domains of $P^\mu$ and $P^\nu$, which might be trivial. And even then, it may be the case that $P^\mu$ does not map the intersection of these domains into the common domain of $P^\mu$ and $P^\nu$. There is a way out here: since the $P^\mu$ are self-adjoint operators, we can apply the spectral theory to get a family of spectral projections $E^\mu(B)$, where $B$ is a Borel subset of $\mathbb{R}$, associated to the unbounded operator $P^\mu$. We then say that $P^\mu$ and $P^\nu$ \emph{commute}\index{unbounded operator!commuting} if and only if their spectral projections commute. 

It is well-known that for commuting self-adjoint matrices we can find a common set of eigenvectors. We want to generalise this to the spectrum of the unbounded operators $P^\mu$. Instead of looking at the unbounded operators $P^\mu$, we look at strongly continuous groups of unitaries. In particular, consider $G = \mathbb{R}^n$ and a strongly continuous unitary representation $\mathbf{t} \mapsto U(t_1, \dots, t_n)$. Note that $t \mapsto U(0, \dots, t, 0, \dots 0) := U^\mu(t)$, where $t$ is in the $\mu$-th component, gives a strongly continuous one-parameter group of unitaries. Hence, by Stone's theorem (Theorem~\ref{thm:stone}), there is a self-adjoint operator $P^\mu$ such that $U^\mu(t) = e^{i P^\mu t}$. Because $U^\mu(t)$ and $U^\nu(t)$ commute, one can in fact prove that $P^\mu$ and $P^\nu$ commute (in the sense defined above). Conversely, such commuting operators naturally lead to a strongly continuous representation of $\mathbb{R}^n$ by setting $U(\mathbf{t}) = e^{ i t_1 P^1} \cdots e^{i t_{n} P^{n}}$.

This leads to the following generalisation of Stone's Theorem, Theorem~\ref{thm:stone}. It was independently proven by Naimark~\cite{MR0010265},\footnote{Naimark's name has not always been transliterated consistently from Russian: except for Naimark, ``Neumark'' is also seen.} Ambrose~\cite{MR0011172} and Godement~\cite{MR0014589}. Consequently, it is sometimes called the \idx{SNAG theorem}.

\begin{theorem}[SNAG]\label{thm:snag}
	Let $\mathbf{t} \mapsto U(\mathbf{t})$ be a strongly continuous unitary representation of $\mathbb{R}^n$ acting on a Hilbert space $\mathcal{H}$. Then for each $\psi \in \mathcal{H}$ there is a measure $\mu_\psi$ on $\mathbb{R}^n$ such that
\[ 
	\langle \psi, U(\mathbf{t}) \psi \rangle = \int_{\mathbb{R}^n} e^{i \mathbf{t} \cdot \mathbf{\lambda}} d\mu_\psi(\lambda).
\]
Equivalently, there is a \emph{projection valued measure} $dP(\mathbf{\lambda})$ on $\mathbb{R}^n$ such that
\[
	U(\mathbf{t}) = \int_{\mathbb{R}^n} e^{i \mathbf{t} \cdot \mathbf{\lambda}} dP(\mathbf{\lambda}).
\]
\end{theorem}
We do not give a proof of this Theorem here. It can be obtained by generalising the proof of Stone's theorem and uses many of the methods we outlined in Section~\ref{sec:positive}. Note that Stone's theorem gives as $n$ generators $P^\mu$, as already remarked above. It is possible in the proof of the SNAG theorem to explicitly construct a common domain for these generators that is left invariant by $U(\mathbf{t})$.

This theorem now allows us to talk about the energy-momentum spectrum. In particular, we can consider the group $U(\mathbf{t})$ of space-time translations. Then the energy-momentum spectrum is the joint spectrum of the generators of $U$.  
\begin{definition}
	Let $U$ and $dP$ be as in the theorem above. The support of $dP$ is called the \emph{spectrum} of the unitary representation $U$, and is denoted by $\operatorname{Sp}(U)$.\index{spectrum!unitary representation of abelian group}\index{_Sp(U)@$\operatorname{Sp}(U)$} Since as mentioned above we can identify a set of $n$ generators $P^\mu$ associated to this representation, we also call this spectrum the \idx{joint spectrum} of the generators $P^\mu$.
\end{definition}

\begin{remark}
	The theorem is true more generally. One can pick any locally compact abelian group $G$. Then the dual group $\widehat{G}$, the group of characters of $G$, can be given a topology which makes it a locally compact abelian group. This is called \idx{Pontry'agin duality}. For any unitary representation $U$ of $G$ there then is a spectral measure as above, where the integral is over the dual group (with respect to the Haar measure) and the integrand is given by $\langle g, \chi \rangle := \chi(g)$. Since all characters of $\mathbb{R}$ are of the form $\chi(x) = e^{i \lambda x}$, the above theorem as stated above corresponds to $G = \mathbb{R}^n$.
\end{remark}

\section{Algebraic quantum field theory}
Quantum field theory (QFT) is arguably one of the most successful theories of the last century. Not withstanding the huge success of the ``traditional'' (mainly perturbative) methods used by physicists working in quantum field theory, these are unsatisfactory from a mathematical viewpoint, because many concepts are mathematically ill-defined. Some aspects \emph{can} be made rigorous (the reader can consult, for example, the book by Glimm and Jaffe~\cite{MR887102}), but there are still many problems. In order to study QFT in a rigorous mathematical framework, it is desirable to have an axiomatic basis for QFT as a starting point.\footnote{This section is based on Chapter 3 of~\cite{phdnaaijkens}.}

One such axiomatisation is given by the \emph{Wightman axioms}\index{Wightman axioms} which, in a nutshell, postulate that quantum fields are given by operator valued distributions. An operator valued distribution $\Phi$ is a linear map from a suitable (e.g. smooth and of compact support) class of functions (called \emph{test functions}) to the (in this case unbounded) operators on some Hilbert space $\mathcal{H}$. Physically this can be understood as follows: quantum fields cannot be localised exactly in a point. Instead, one has to ``smear'' them with some test function $f$. The result is $\Phi(f)$. The classic \emph{PCT, Spin and Statistics, and All That} by Streater and Wightman remains a good introduction to this framework~\cite{MR1884336}. Although this approach is a natural one coming from the more familiar path integral approach to quantum field theory, it also has some drawbacks. From a mathematical point of view, one has to deal with \emph{unbounded} operators, which are more difficult to deal with than bounded operators. For example, innocuously looking expressions like $\Phi^4(f)$ can be problematic, as we have seen before. Nevertheless, such terms appear in many quantum field theories in the Lagrangian of the theory. At a more conceptual level there is the criticism that the quantum fields, which in general are not observables, are like coordinates, which should not be taken as the starting point of a theory.

An alternative axiomatisation is provided by what is called \emph{algebraic quantum field theory}\index{algebraic quantum field theory|see {AQFT}} (AQFT)\index{AQFT}, based on the \emph{Haag-Kastler} axioms. This is the framework that we will discuss here. In essence, the fundamental objects are nets of $C^*$-algebras of \emph{observables}\index{local!net} that can be measured in some finite region of space-time. At first sight it is perhaps surprising that in this approach one considers only \emph{bounded} observables, since it is well known that the position and momentum operators for a single particle are unbounded. One should keep in mind, however, that in the physical world there are always limitations on the measuring equipment, and one can always only measure a bounded set of (eigen)values. In terms of the spectral theory for unbounded operators, it means that instead of dealing with the unbounded operator itself, we only consider its spectral projections to be (at least in principle) observable.

The two approaches are in fact not as unrelated as they might appear at first sight. Under certain conditions one can move from one framework to the other (and back). See, for instance~\cite{MR1199168}, and references therein. While the Wightman axioms are closer to common practice in quantum field theory, the Haag-Kastler approach is easier to work with mathematically, since one does not have to deal with unbounded operator-valued distributions.

One of the earliest works on AQFT is by Haag and Kastler~\cite{MR0165864}. By now there is a large body of work, including introductory textbooks. The monograph by Haag~\cite{MR1405610} and the book by Araki~\cite{MR2542202} are particularly recommended for a review of the physical and mathematical principles underlying this (operator) algebraic approach to quantum field theory. A review can also be found in~\cite{MR1768634}. The second edition of Streater and Wightman~\cite{MR1884336} also contains a short overview.\footnote{In June 2009 the 50 year anniversary of the birth of the theory was celebrated with a conference in G\"ottingen, where Haag, one of the founders of the subject, recollected some of the successes and problems of algebraic quantum field theory~\cite{springerlink:10.1140/epjh/e2010-10042-7}. The reader might also be interested in Haag's personal recollection of this period~\cite{haagpersonal}.}

As argued in the introduction, there are two basic principles underlying the AQFT approach. First of all, it is the \emph{algebraic} structure of the observables that is important. The second principle is locality: in relativistic QFT it makes sense to speak about observables that describe the physical properties localised in some region of space-time (for example $T \times S$, with $T$ a time interval and $S$ a bounded region of space, say a laboratory). Moreover, by Einstein causality one can argue that observables in spacelike separated regions are compatible in that they commute. As the basic regions we consider \emph{double cones}\index{double cone} $\mc{O}$, defined as the intersection of (the interior of) a forward and backward light-cone. Note that a double cone is causally complete: $\mc{O} = \mc{O}''$, where a prime $'$ denotes taking the causal complement. To each double cone $\mc{O}$ we associate a unital $C^*$-algebra $\alg{A}(\mc{O})$ of observables localised in the region $\mc{O}$. Finally, note that the Poincar\'e group $\mc{P}^{\uparrow}_+$ (generated by translations and Lorentz transformations) acts on double cones. We write $g \cdot \mc{O}$ for the image of a double cone under a transformation $g$.

The starting point of AQFT, then, is a map $\mc{O} \mapsto \alg{A}(\mc{O})$. There are a few natural properties the map $\mc{O} \mapsto \alg{A}(\mc{O})$ should have if it is to describe (observables in) quantum field theory. For example, anything that can be localised in $\mc{O}$ can be localised in a bigger region as well. This leads to the following list of axioms, now known as the \idx{Haag-Kastler axioms}.
\begin{enumerate}
	\item \emph{Isotony:} if $\mc{O}_1 \subset \mc{O}_2$ then there is an inclusion $i: \alg{A}(\mc{O}_1) \hookrightarrow \alg{A}(\mc{O}_2)$. We assume the inclusions are injective unital $*$-homomorphisms. Often the algebras are realised on the same Hilbert space, and we have $\alg{A}(\mc{O}_1) \subset \alg{A}(\mc{O}_2)$.
	\item \emph{Locality:}\index{locality} if $\mc{O}_1$ is spacelike separated from $\mc{O}_2$, then the associated local observable algebras $\alg{A}(\mc{O}_1)$ and $\alg{A}(\mc{O}_2)$ commute.
	\item \emph{Poincar\'e covariance:} there is a strongly continuous action $x \mapsto \beta_x$ of the Poincar\'e group $\mc{P}^{\uparrow}_+$ on the algebra of observables, such that 
\[
	\beta_g(\alg{A}(\mc{O})) = \alg{A}(g \cdot \mc{O}).
\]
\end{enumerate}
We will always assume that the algebras $\alg{A}(\mc{O})$ are non-trivial.\footnote{In fact, in practice one usually realises the net as a net of von Neumann algebras acting on some Hilbert space. Under physically reasonable assumptions the algebras $\alg{A}(\mc{O})$ are Type III factors. These are a specific type of von Neumann algebras. See~\cite{Yngvason:2005p1600} for a discussion of the physical significance of this.} 

\begin{remark}
Instead of Poincar\'e covariance one sometimes requires the weaker condition of translation covariance, see for example~\cite{MR660538}.
\end{remark}

Note that $\mc{O} \mapsto \alg{A}(\mc{O})$ is a \emph{net} of $C^*$-algebras, just as the local algebras of spin systems we discussed in Section~\ref{sec:quasilocal}. Just as we constructed the quasi-local observables from the strictly local observables, we can again take the union and closure of the net (or more precisely, the inductive limit), to obtain the algebra $\alg{A}$ of quasi-local observables.\index{quasi-local algebra} If the local algebras are all realised on the same Hilbert space, this amounts to taking $\alg{A} = \overline{\bigcup_{\mc{O}} \alg{A}(\mc{O})}^{\|\cdot\|}$, where the bar denotes closure with respect to the operator norm. The algebra $\alg{A}$ is called the algebra of \emph{quasi-local observables}\index{quasi-local observable}. We will usually assume that the local algebras act as bounded operators on some Hilbert space.\footnote{Recall that this can always be achieved by taking the GNS representation of a suitable state $\omega$ on $\alg{A}$.} In that case, for an arbitrary (possibly unbounded) subset $\mc{S}$ of Minkowski space, we set $\alg{A}(\mc{S}) := \overline{\bigcup_{\mc{O} \subset \mc{S}} \alg{A}(\mc{O})}^{\|\cdot\|}$, where the union is over all double cones contained in the region $\mc{S}$.

It should be noted that in this axiomatic approach some of the constructions of ``conventional'' quantum field theory can be discussed. For example, field operators, particle aspects and scattering theory can be defined in this setting. This approach is particularly suited to study structural properties of quantum field theory. Unfortunately, constructing examples of concrete (in particular interacting) examples has been proven to be very difficult, although there are some results, in particular in low dimensional space-times.

\subsection{Vacuum representation}\label{sec:vacuum}
The vacuum plays a special role in quantum field theory. Intuitively, it describes \emph{empty space}. Alternatively, it has minimal energy. To define the notion of a \emph{vacuum state}\index{vacuum}\index{state!vacuum} rigorously, one first defines energy decreasing operators. The precise details are not important for us (see e.g.~\cite[\S 4.2]{MR2542202}). In essence one considers operators of the form $Q = \int f(x) \beta_x(A) d^4x$ for some observable $A$ and smooth function $f$ whose Fourier transform has support disjoint from the forward light-cone $\overline{V}_{+}$. The $\beta_x$ are the translation automorphisms as in the Haag-Kastler axioms. A vacuum state then essentially is a state $\omega_0$ on $\alg{A}$ such that $\omega_0(Q^*Q) = 0$ for any such $Q$. We will see below that this condition has the interpretation of the vacuum having positive energy.

A vacuum state in the above sense can be shown to be \emph{translation} invariant, see Theorem 4.5 of~\cite{MR2542202}. The corresponding vacuum representation, which will be denoted by $\pi_0$, is then translation covariant.\index{representation!covariant} We denote $\alpha_x$ for the group of translation automorphisms, which is a subgroup of the Poincar\'e group. Note that there is a unitary representation $x \mapsto U(x)$ such that $\pi_0(\alpha_x(A)) = U(x) \pi_0(A) U(x)^*$ defined by $U(x) \pi_0(A) \Omega_0 = \pi_0(\alpha_x(A)) \Omega_0$ for any $x$ in Minkowski space and $A \in \alg{A}$. These translations are generated by unbounded operators $P_\mu$, which have the natural interpretations of energy ($P_0$) and momentum ($P_i$, with $i=1\cdots, d-1$).\index{energy-momentum!operators} In fact, since the group of translations is abelian, we can directly apply to the SNAG theorem and consider the joint spectrum. This spectrum can be shown to be contained in the forward light-cone, as follows from the assumptions above on $\omega_0(Q^*Q) = 0$ (for suitable $Q$). We outline the proof below. This is interpreted as ``positivity of the energy''. Finally, if $\pi_0$ is irreducible, then the vacuum vector $\Omega_0$ is the unique (up to a scalar) translation invariant vector. In fact, any factorial vacuum representation is automatically irreducible. 

Suppose that $\omega_0$ is an invariant (with respect to $\alpha_x$) vacuum state. We can now understand the condition $\omega_0(Q^*Q) = 0$. Since $\omega_0$ is invariant, the automorphisms $\alpha_x$ are implemented as a strongly continuous group of unitaries $\mathbf{x} \mapsto U(\mathbf{x})$ acting on the GNS Hilbert space $\mathcal{H}$. The vacuum vector $\Omega$ is invariant, and $\pi_0(Q) \Omega = 0$ by the vacuum condition. Hence by the SNAG theorem, there is a spectral measure $dP(\mathbf{p})$ related to $U(\mathbf{x})$. We can then calculate (a careful analysis shows each step can be justified):
\begin{align*}
	0 = \pi_0(Q) \Omega &= \int_{\mathbb{R}^4} \left(f(x) \pi_0(\alpha_x(A)) \Omega\right) dx \\
	&= \int_{\mathbb{R}^4} \left(f(x) U(\mathbf{x}) \pi_0(A) \Omega\right) dx \\
		&= \int \int \left( f(x) e^{i \mathbf{x} \cdot \mathbf{p}} dP(\mathbf{p})\right) \pi_0(A) \Omega dx \\
		&= \left(\int \widehat{f}(\mathbf{p}) dP(\mathbf{p}) \right) \pi_0(A) \Omega,
\end{align*}
where we used that $\Omega$ is invariant and $\widehat{f}$ is the Fourier transform of $f$. But this equation is true for any local $A$, and since $\Omega$ is a cyclic vector, this can only be true if the operator inside parentheses is equal to zero. But since this is true for any function $f$ whose Fourier transform is supported \emph{outside} the positive light-cone $\overline{V}_+$, it follows that the support of the spectral measure must be contained in $\overline{V}_+$. This is nothing else but saying that the energy-momentum spectrum lies in the forward light-cone, as is to be expected in a relativistic theory.

Alternatively, one can directly work with a Hilbert space, and characterise a vacuum representation\index{representation!vacuum} as a translation covariant representation such that the spectrum of the generators of these translations is contained in the forward light-cone $\overline{V}_+$. Moreover, $0$ is in the point spectrum, since the vacuum vector is invariant. A special case is a \emph{massive} vacuum representation. This is a vacuum representation where $0$ is an isolated point in the spectrum and there is some $m > 0$ such that the spectrum is contained in $\{0\} \cup \{ p: p^2 \geq m^2, p_0 > 0\}$. That is, there is a mass gap.

\subsection{Particle statistics}\index{particle statistics|(}
One of the highlights of algebraic quantum field theory is the \idx{Doplicher-Roberts theorem}~\cite{MR1062748}. This theorem allows us to construct a net of field operators creating certain (particle) excitations. They allow use to interpolate between different superselection sectors, that is, they map vectors from one irreducible representation of the observables into vectors in another irreducible representation, in a way that is compatible with the action of the observables. These field operators have the commutation properties that one would expect. For example, the field-net operators corresponding to two fermions in spacelike separated regions anti-commute, while those for bosons commute. Before we can come to this theorem, we have to explain how we describe charged excitations, and how one can define their statistics.

The idea is to find the different superselection sectors\index{superselection sector} of the theory. A \idx{charge} is then just a label for a sector. As before, these correspond to equivalence classes of irreducible representations of the observable algebra $\alg{A}$. In general there are many inequivalent ones, and most are not physically relevant. For example, we would only be interested in representations which describe a finite number of excitations. Therefore one has to introduce a criterion to select the relevant representations. One such criterion is the Doplicher-Haag-Roberts one~\cite{MR0297259,MR0334742}.\index{Doplicher-Haag-Roberts criterion} Let $\pi_0$ be a vacuum representation as above. Let $\mc{O}$ be an arbitrary double cone and write $\mc{O}'$ for its causal complement. This induces an algebra $\alg{A}(\mc{O}')$ of observables that can be measured inside $\mc{O}'$. The DHR criterion then considers all representations $\pi$ such that for each double cone $\mc{O}$ it holds that\index{superselection criterion}
\begin{equation}
	\label{eq:dhrselect}
\pi_0 \upharpoonright \alg{A}(\mc{O}') \cong \pi \upharpoonright \alg{A}(\mc{O}'),
\end{equation}
where the symbol $\upharpoonright$ means that we restrict the representations to the algebra $\alg{A}(\mc{O}')$. Intuitively speaking, it selects only those excitations that look like the vacuum when one restricts to measurements outside a double cone. It is known that this criterion rules out some physically relevant theories (for example, one can always detect an electrical charge by measuring the flux through a arbitrarily large sphere, by Gauss' law), but nevertheless this class shows many of the relevant features of a theory of superselection sectors. One can also relax the criterion. For example, Buchholz and Fredenhagen use ``spacelike cones'' instead of double cones, and show that in massive theories one always has such localisation properties~\cite{MR660538}. In recent work, a more general approach is suggested to deal with massless excitations~\cite{BuchholzRoberts}, which lead to infrared problems in the approach outlined here.

The set of representations that satisfy the criterion~\eqref{eq:dhrselect} has a rich structure.\footnote{There are some additional technical conditions necessary on the representation $\pi_0$, however, the most important one being \emph{Haag duality}.} For example, one can define a product operation $\pi_1 \times \pi_2$ on such representations. This new representation can be interpreted as first creating an excitation\index{excitation} of type $\pi_2$, then one of $\pi_1$. The ground state representation acts as the identity with respect to this product, in the sense that $\pi \times \pi_0 \cong \pi$. In this way, one can think of $n$ (particle) excitations of the class $\pi$ located in spacelike separated regions. Such configurations can be obtained as vector states in the representation $\pi^{\times n}$. Moreover, there is a canonical way to define unitary operators that interchange the order in which the excitations are created. These operators are analogous to the transposition operator in, for example, a quantum system of $n$ particles. Applying this unitary to the state vector leaves the expectation values in the state invariant, but \emph{can} change the state vector itself.

This enables a study of the exchange statistics of the excitations. The result is that the interchange of excitations is described by representations of the symmetric group, when the net of observable algebras is defined on Minkowski space with at least two spatial dimensions. This representation only depends on the equivalence class of the representation. In this way we obtain the well-known classification into bosons or fermions from first principles.\index{boson}\index{fermion} In a seminal work~\cite{MR1062748}, Doplicher and Roberts showed how one could, starting from the superselection sectors satisfying the criterion above, construct a net of field operators. These field operators can ``create'' excitations. As expected, for bosonic excitations the field operators located in spacelike regions commute, whereas for fermions they anti-commute.

In lower dimensional space-times an interesting phenomenon appears: excitations are not necessarily bosonic or fermionic anymore. For example, it may happen that the interchange of two particles leads to a non-trivial phase. In fact, even more general (non-abelian) effects are possible. Excitations with such properties (nowadays usually called \emph{anyons}\index{anyon}) are expected to appear as ``emergent'' quasi-particles in certain condensed matter systems. They play an important role in the quest of building a quantum computer that does not suffer too much from thermal noise introduced by the environment~\cite{MR2443722}. In the algebraic quantum field theory community the possibility of such ``braided'' statistics has been studied by many authors, see e.g.~\cite{MR1016869,MR1199171,MR1104414}. The name ``braided'' stems from the fact that one does not obtain a representation of the symmetric group any more, but rather of the so-called braid group.

In the next section we will carry out a similar study of superselection sectors in the context of quantum spin systems, namely in the case of the toric code. Although there are fundamental differences between relativistic theories and spin systems, up to technical differences the analysis of the sector structure is very similar. For example, we will see how in this example braided statistics appear. We therefore end our discussion on relativistic theories and return to quantum spin systems.
\index{particle statistics|)}

\section{Superselection theory of the toric code}\label{sec:sselecttoric}\index{superselection sector!toric code}
We know come back to the automorphisms $\rho^X$, $\rho^Y$ and $\rho^Z$ that we encountered in the study of the ground states of the toric code, in Section~\ref{sec:toriccode}. Above we briefly discussed the idea of superselection sectors in algebraic quantum field theory, where differently charged excitations lived in different superselection sectors. We also claimed that the study of these superselection sectors reveals the properties of the charges. Here we outline this programme in the context of the toric code. These results first appeared in~\cite{toricendo}, and were later extended to Kitaev's quantum double model for finite abelian groups in~\cite{haagdouble}. Full details can be found in these references.

\subsection{The superselection criterion}
We first encountered superselection rules at the end of Chapter~\ref{ch:opalg}, and saw that they are related to the existence of inequivalent representations. Indeed, later we defined a superselection sector to be a suitable equivalence class of suitable representations. Moreover, the different equivalence classes should correspond to different charges.

But in the toric code we already have a construction to make charged states! Moreover, Theorem~\ref{thm:toricgs} tells us that these states belong to different sectors, since states corresponding to different choices of the excitation are unitarily inequivalent. Hence, whatever selection criterion we will choose, it should encompass these cases. We therefore start our investigation with the states $\omega_0 \circ \rho^k$, and see what their properties are. We will use the same notation as before, and not repeat the definitions here.

Let us first see why the states indeed are inequivalent. Since $\omega_0$ is a pure state, as we have argued before, the GNS representation $\pi_0$ is irreducible. But $\rho^k$ is an automorphism, so $\pi_0 \circ \rho^k$ is irreducible, and it is easy to see that $(\pi_0 \circ \rho^k, \mathcal{H}, \Omega)$ is a GNS triple for $\omega_0 \circ \rho^k$. Hence $\omega_0 \circ \rho^k$ is a pure state. But in that case we have a simple criterion to check if they are inequivalent: according to Proposition~\ref{prop:stateinequiv} it is enough to show that they are not equal at infinity. That is, there is some $\varepsilon$ such that for each finite set $\Lambda$, we should be able to find an observable $A$ of norm 1, localised outside $\Lambda$, such that $|\omega_0(A) - \omega_0(\rho^k(A))| \geq \varepsilon$.

The idea behind the construction of $A$ is simple: we want to define an operator that measures the total charge inside the region $\Lambda$. These operators will measure the number of excitations of star or plaquette type, where each number is calculated modulo two. This can be understood as follows. In the toric code, each particle is its own conjugate. So if we have two excitations of a certain type, we actually have a charge and its conjugate charge, making the total charge zero.\footnote{This is somewhat like the case where you would have an electron and a positron, making total electric charge zero. The difference is of course that electrons and positrons are different particles, that is, they are not their own conjugate.} So how can we detect such an excitation? We will discuss the case of star excitations, the plaquette case is similar.

Let $F_{\xi}$ be a path operator creating excitations at a star $s_1$ and a star $s_2$. Let $\Omega$ be the GNS vector of the translation invariant ground state. Then, as we have seen before, $A_{s} F_{\xi} \Omega = \pm F_{\xi} \Omega$, where the minus sign occurs if and only if $s = s_1$ or $s = s_2$. Hence: $\omega_0(F_{\xi} A_s F_{\xi}) = \pm 1$, where we used that $F_{\xi}$ is self-adjoint. We get a minus sign if and only if $s = s_1$ or $s=s_2$. This readily generalises to states with more excitations. Now consider an operator of the form $P = \prod A_s$, where the product is over finitely many stars $s$. Then, for $F = F_{\xi_1} \cdots F_{\xi_n}$, we have that $\omega_0(F_{\xi} P F_{\xi}^*) = (-1)^{n_s}$, where $n_s$ is the number of stars that are endpoints of one of the $\xi_i$ \emph{and} appear in the product $P$. Hence $P$ detects the number of excitations modulo two at the stars appearing in the product.

We are now almost done. The construction above suggests that we can use the operators $P$ to detect the charges, and in this way distinguish the different states. However, a priori it is not clear if they have the right locality properties. Indeed, it appears that $P$ is supported on the stars $s$ in the product. However, it turns out that if we consider a ``block'' of stars, the operator is only supported at the \emph{boundary} of this block. In particular, consider a closed (for simplicity, non-intersecting) loop $\widehat{\xi}$ on the \emph{dual} lattice. We claim that $F_{\widehat{\xi}} = \prod_{s \subset \Lambda} A_s$, where $\Lambda$ are the bonds enclosed by or intersecting with $\widehat{\xi}$. To see this, first consider the smallest closed loop possible, going through four adjacent plaquettes. But in this case, $F_{\widehat{x}}$ is seen to be the star operator $A_s$ of the vertex it encloses. Now consider an adjacent star $s_1$. Then $A_s A_{s_1}$ is the path operator enclosing the two vertices. This is because on the edge in the interior of the path, we act with a $\sigma^x$ from both $A_s$ and $A_{s_1}$. Since $\sigma^x$ squares to the identity, this yields the identity on that edge and we are left with an operator acting only on the outer edges. Continuing in this way, the claim follows.

\begin{figure}
	\begin{center}
	\begin{tikzpicture}
		\draw[very thick] plot [smooth, tension=1] coordinates { (-2,0) (0.5,1) (-1,3) (0,5)};
		\draw[very thick,dashed] plot [smooth cycle, tension=1] coordinates {(-4,0) (-2,-1) (0,-1.5) (2,0) (0.5,3)};
		\filldraw[fill=black] (-2,0) circle [radius=0.05];
		\draw (-3.1,1.5) node {$\widehat{\xi}$};
		\draw (-0.35,3.5) node {$\rho^Z$};
		\draw (-2.25,0) node {$s_1$};
	\end{tikzpicture}
	\end{center}
	\caption{A path going to infinity, leading to an automorphism $\rho^Z$. Also drawn is the dual path $\widehat{\xi}$, where the corresponding path operator can be used to detect the charge $Z$ on the vertex.}\label{fig:chargedetect}
\end{figure}
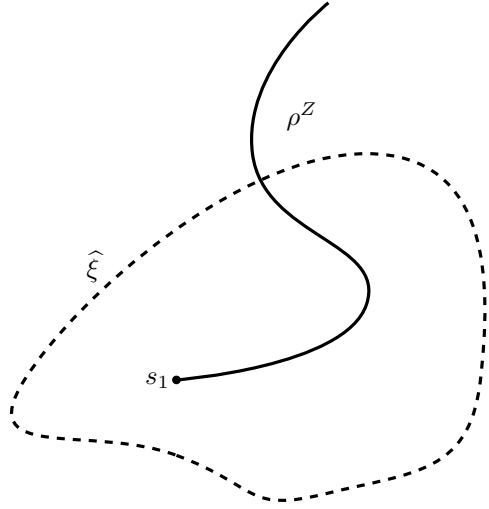

We are now in a position to show that $\omega_0$ and $\omega_0 \circ \rho^Z$ are inequivalent. Recall that $\rho^Z$ is obtained by conjugating with a path operator, and sending one end of the path to infinity. Using the commutation relations of path operators and the operators $A_s$, it follows that $\rho^Z(A_s) = (1-2 \delta_{s, s_1}) A_s$, where $s_1$ is the star at the end of the semi-infinite string used to define $\rho^Z$. Now let $\widehat{\xi}$ be a closed path on the dual lattice and let $P$ be the operator as defined above (see Figure~\ref{fig:chargedetect}). Then it follows that $\rho^Z(P) = \pm P$, where the minus sign occurs only if $s_1$ is enclosed by the dual path. Now set $\varepsilon = 1/2$ and let $\Lambda$ be any finite subset of the bonds containing the star $s_1$. Choose a dual path $\widehat{\xi}$ such that $\Lambda$ is contained in the area enclosed by the path. Then the corresponding operator $P$ is supported outside $\Lambda$, by the observation above. Moreover, repeated application of Lemma~\ref{lem:ststate} and the property that $\omega_0(A_s) = 1$, we obtain
\[
	| \omega_0(P) - \omega_0(\rho^Z(P)) | = | 2 \omega_0(P) | = 2 > \varepsilon.
\]
By Proposition~\ref{prop:stateinequiv} it follows that $\omega_0$ and $\omega_0 \circ \rho^Z$ are inequivalent.

\begin{exercise}
	Give an example of a state in each of the equivalence classes described in Theorem~\ref{thm:toricgs}, and adapt the argument above to show that they are all mutually inequivalent.
\end{exercise}

We have now constructed different examples of inequivalent states. Hence, by the GNS representation, we get representatives of different superselection sectors. Can we say more about their properties? The first observation is that $(\pi_0 \circ \rho^k, \mathcal{H}, \Omega)$ is a GNS triple for the state $\omega_0 \circ \rho^k$, hence every GNS representation for $\omega_0 \circ \rho^k$ is unitarily equivalent to this one.

But there is much more to say about the representation. Let $\Lambda$ be a cone-like region containing the path to infinity used in defining $\rho^k$, and let $\Lambda^c$ be its complement in the set of edges. Then, by locality, it follows that $\rho^k(A) = A$ if $A \in \alg{A}(\Lambda^c)$. By the uniqueness of the GNS representation, it follows that for \emph{any} GNS representation $\pi$ for $\omega_0 \circ \rho^k$, there is a unitary operator $V$ such that $V \pi(A) = \pi_0(A) V$ for all $A \in \alg{A}(\Lambda^c)$. The restriction to observables in $\alg{A}(\Lambda^c)$ is important here! Indeed, we know that $\pi$ and $\pi_0$ are inequivalent representations, so that the relation cannot hold for \emph{all} observables $A$, as this would imply that $\pi$ and $\pi_0$ are equivalent. So we arrive at the following interpretation: \emph{outside} of the cone $\Lambda$ the ``charged'' representation looks just like the translation invariant ground state representation. However, as soon as we allow observables to cross the cone, it is possible to measure the charge in the background and distinguish the representations. We say that the charge is \emph{localised} in $\Lambda$.\index{charge!localised}

There is nothing special about the choice of the cone $\Lambda$. In fact, we could have chosen any other cone $\Lambda$, and the statement about the existence of $V$ above would still hold. This is ultimately a consequence of the ``topological'' properties of the excitations, in the sense that the state $F_\xi \Omega$ only depends on the endpoints of the path $\xi$, not on the path itself. This is most easily seen if the automorphisms are defined by specifying semi-infinite paths with the same endpoints. Let us write $\rho_1$ and $\rho_2$ for the corresponding automorphisms. The claim is that $\omega_0(\rho_1(A)) = \omega_0(\rho_2(A))$. In that case, it follows from the uniqueness of the GNS representation that $\pi_0(\rho_1(A)) = U \pi_0(\rho_2(A)) U^*$ for some unitary $U$, from which the claims follows.

We first show that the two states are equal. First choose a local observable $A$. Write $\xi_1$ and $\xi_2$ for the paths defining $\rho_1$ and $\rho_2$. Then, by locality, there is some $N$ such that for all $n > N$, we have $\rho_1(A) = F_{\xi_{1,n}} A F_{\xi_{1,n}}^*$ and $\rho_2(A) = F_{\xi_{2,n}} A F_{\xi_{2,n}}^*$, where $\xi_{i,n}$ is the finite path consisting of the first $n$ segments of $\xi_i$. Now choose for each $n$ a finite path $\xi'_n$, going from the endpoint of $\xi_{1,n}$ to that of $\xi_{2,n}$ in such a way that $d(x, \xi'_n) \to \infty$ as $n \to \infty$ for some fixed site $x$. Then, if $n$ is large enough, $\rho_1(A) = F_{\xi'_n} F_{\xi_{1,n}} A (F_{\xi'_n} F_{\xi_{1,n}})^* = F_{\xi_{1,n} \xi'_n} A F_{\xi_{1,n} \xi'_n}^*$. But, by construction $\xi_{1,n} \xi'_n$ and $\xi_{2,n}$ have the same endpoints! By the independence of the state on the precise path, it follows that $\omega_0(\rho_1(A)) = \omega_0(\rho_2(A))$. Since the local algebras are dense in $\alg{A}$, the equality holds for all $A \in \alg{A}$ by continuity.  

In the case the endpoints are different, connect them by a finite path $\xi$. Then $F_{\xi} \rho_1(A) F_{\xi}$ and $\rho_2(A)$ are both automorphisms corresponding to an infinite path starting in the \emph{same} site. The result then follows.

To summarise, we see that the charged representations are unitarily equivalent to the ground state representation \emph{outside} of a cone. Moreover, we can move this cone around by applying unitary operators. We say that the charges are \emph{transportable}.\index{charge!transportable} Both conditions have a clear physical interpretation. Moreover, the charged states give natural examples of representations with these properties. Hence it is natural to posit that \emph{all} such representations correspond to a charge (or superselection sector). That is, we consider all irreducible representations $\pi$ such that\index{superselection criterion}
\begin{equation}
	\label{eq:toricselect}
	\pi_0 \upharpoonright \alg{A}(\Lambda^c) \cong \pi \upharpoonright \alg{A}(\Lambda^c),
\end{equation}
for any cone $\Lambda$. To recall: the notation $\pi_0 \upharpoonright \alg{A}(\Lambda^c)$ means that we restrict the representation to the algebra $\alg{A}(\Lambda^c)$. Again, it is important that the equivalence holds for \emph{arbitrary} cones $\Lambda$. This is called the \emph{superselection criterion}, as it selects the relevant class of superselection sectors (= classes of inequivalent representations of $\alg{A})$.

\subsection{Localised endomorphisms and charge transporters}
We want to study the class of representations satisfying the selection criterion, equation~\eqref{eq:toricselect}. As we will see, this set actually has a very rich structure. Instead of working directly with representations, it will turn out to be beneficial to work with endomorphisms of $\alg{A}$ instead, like the automorphisms we have constructed to give examples of different superselection sectors. This appears to be less general, but as we shall see in the toric code this turns out not to be the case.

The first step is to encode the properties behind the superselection criterion in properties of the endomorphisms. First of all, the charges should be localised in a cone. And, as we saw, we should be able to move the localisation region around. This leads to the following definition:
\begin{definition}\index{endomorphism!localised and transportable}\index{transportable}
	Let $\rho$ be an endomorphism of $\alg{A}$. We say that $\rho$ is \emph{localised} in a cone $\Lambda$, if $\rho(A) = A$ for all $A \in \alg{A}(\Lambda^c)$. It is called \emph{transportable}, if for any other cones $\Lambda'$, there is an endomorphism $\sigma$ that is localised in $\Lambda'$ and a unitary $V$ such that $V \pi_0 \circ \rho(A) = \pi_0\circ \sigma(A) V$ for all $A \in \alg{A}$. The unitary $V$ is called a \emph{charge transporter}\index{charge!transporter}.
\end{definition}
In general, if $T$ is an operator such that $T \rho_1(A) = \rho_2(A)T$ for two morphisms $\rho_i$ of $\alg{A}$, we call $T$ an \idx{intertwiner}. Note that the charge transporters are unitary intertwiners.

By the discussion in the previous section the automorphisms that we constructed are both localised and transportable, and give examples of representations that satisfy the selection criterion. The existence of the of the charge transporters $V$ followed from the uniqueness of the GNS representation. This definition of $V$ is rather indirect, and it turns out to be useful to have a concrete construction of $V$. Fortunately, since we explicitly constructed the automorphisms, it is possible to also construct $V$. We will do this by defining a sequence of unitaries $V_n$ that converges in the \idx{weak operator topology} to $V$. In this topology on the bounded operators on a Hilbert space $\mathcal{H}$, a net $A_\lambda \in \alg{B}(\mathcal{H})$ converges to an operator $A$ if and only if $\langle \xi, (A_\lambda-A) \psi \rangle$ converges to zero for every $\xi, \psi \in \mathcal{H}$. It is a natural topology when discussing von Neumann algebras. Indeed, a unital $*$-subalgebra of $\alg{B}(\mathcal{H})$ is a von Neumann algebra if and only if it is closed in the weak operator topology. This result, which is known as the \emph{bicommutant theorem}, relates the purely algebraic definition of von Neumann algebras given on page~\pageref{p:vna}, to a topological condition.\index{von Neumann algebra}

Recall that $\alg{A}$ is a simple algebra, hence the representation $\pi_0$ is faithful (or, injective). Hence we can identify $\pi_0(A)$ with $A$, and that is what we will do from know on.

To obtain $V$ we assume for simplicity that we have two paths $\xi_1$ and $\xi_2$ on the lattice, both extending to infinity and starting in the same vertex $x$ of the lattice. Then, for each $n$, let $\xi_i^n$ be the path consisting of the first $n$ segments of $\xi_i$. Then choose paths $\widehat{\xi}_n$, connecting the endpoints of $\xi_1^n$ and $\xi_2^n$, in such a way that $\operatorname{dist}(\widehat{\xi}_n, x) \to \infty$. In particular, this means that for any local operator $A$, there will be an $N$ such that for all $n > N$, $\widehat{\xi}_n$ is disjoint from the support of $A$.

We can now define our sequence $V_n = F_{\xi_1^n} F_{\widehat{\xi}_n} F_{\xi_2}^n$. Note that $\xi_1^n \widehat{\xi}_n \xi_2^n$ is a closed path (for the toric code, the direction of the path does not matter), so that $V_n \Omega = \Omega$. Now consider a local operator $A$ and choose $N$ large enough, such that for all $n > N$, the support of $A$ is disjoint from $\widehat{\xi}_n$ and disjoint from $\xi_i \setminus \xi_i^n$. Then, for such $n$, $\rho_i(A) = F_{\xi_i^n} A F_{\xi_i^n}$. From locality one can then easily verify that $V_n \rho_1(A) = \rho_2(A) V_n$ for all $n > N$. This already suggests that taking the limit we will obtain a charge transporter. This sequence certainly does not converge in the norm topology, but we will show that it \emph{does} converge in the weak operator topology, and in fact converges to $V$.

Recall that $V$ was obtained by uniqueness of the GNS representation, so that we have that $V \Omega = \Omega$. Now let $A$ and $B$ be local observables. Then, for $n$ large enough, we have by the argument above that
\[
	\langle A \Omega, (V-V_n) \rho_1(B) \Omega \rangle = \langle A \Omega, \rho_2(B) (V-V_n) \Omega \rangle = 0.
\]
Here we used that $V_n$ is $F_{\xi}$ for some closed loop, and hence a product of plaquette operators (or stars in the dual case), which act trivially on the ground state. Since local observables are dense, the sequence $V_n$ is uniformly bounded, and $\rho_1$ is an automorphism, the claim follows from a density argument.

We have assumed that both paths start in the same point $x$. To get the general case, we only have to move one of the endpoints away over a \emph{finite} distance. But this can be done using the path operators $F_{\xi}$, so by multiplying $V$ with such path operators, the general case follows easily from the special case.

The above discussion allows us to construct examples of representations satisfying the selection criterion (each of them implemented by localised and transportable endomorphisms), and construct the corresponding charge transporters. The fact that we can represent the sectors by endomorphisms is going to be crucial in our analysis of the charges. However, at this point it is far from clear if every representation satisfying the selection criterion is of this form. This turns out to be true, but it requires an additional technical property:

\begin{definition}
	A representation $\pi$ of $\alg{A}$ is said to satisfy \idx{Haag duality} for cones, if for any cone $\Lambda$ we have that $\pi(\alg{A}(\Lambda))'' = \pi(\alg{A}(\Lambda^c))'$.
\end{definition}
The inclusion $\pi(\alg{A}(\Lambda))'' \subset \pi(\alg{A}(\Lambda^c))'$ follows from locality, but the other inclusion is non-trivial (and can indeed fail to hold for general representations). However, for the ground state representation $\pi_0$ of the toric code it can be shown to hold:
\begin{theorem}[\cite{haagdtoric}]
Let $\pi_0$ be the ground state representation corresponding to the translational invariant ground state of the toric code. Then $\pi_0$ satisfies Haag duality.
\end{theorem}

Haag duality allows us to go from representations that satisfy the selection criterion back to localised maps. The argument is as follows. Let $\pi$ be a representation satisfying equation~\eqref{eq:toricselect}, $\Lambda$ a fixed cone, and $V$ a unitary such that $V \pi(A) = \pi_0(A) V$ for all $A \in \alg{A}(\Lambda^c)$. Then we can define a morphism $\rho: \alg{A} \to \alg{B}(\mathcal{H})$ via (we again write $\pi_0$ for clarity):
\[
	\rho(A) := V \pi(A) V^*.
\]
Note that for $B \in \alg{A}(\Lambda^c)$ we have $\rho(B) = V \pi(B) V^* = \pi_0(B) V^*V$, by the choice of $V$, so that $\rho$ is localised in $\Lambda$ (since it acts like the ground state representation outside $\Lambda$). But it is not clear if $\rho$ is actually and \emph{endomorphism} of $\alg{A}$, since we do not know if it maps $\alg{A}$ into itself. And it turns out it indeed is not an endomorphism, but something very close. If $A \in \alg{A}(\Lambda)$ and $B \in \alg{A}(\Lambda^c)$, then by locality $[A,B] = 0$. Hence we calculate
\[
	\pi_0(B) \rho(A) = V \pi(AB) V^* = \rho(AB) = \rho(A) \pi_0(B).
\]
Hence, by Haag duality, $\rho(A) \in \pi_0(\alg{A}(\Lambda^c))' = \pi_0(\alg{A}(\Lambda))''$. So $\rho$ does not map $\alg{A}(\Lambda)$ into itself, but rather into the weak-operator closure of $\alg{A}(\Lambda)$ in the ground state representation.

Fortunately there is a way around this, which will turn out to be extremely useful in the next section. The idea is to find a slightly bigger algebra $\alg{A}^a$ (containing $\alg{A}$), such that we can uniquely extend the morphism $\rho$ to an \emph{endomorphism} of $\alg{A}^a$. This algebra is called the \idx{auxiliary algebra}\index{_Aa@$\alg{A}^a$} and contains the algebras $\alg{A}(\Lambda)''$ for any cone (except those in a fixed direction that can be chosen freely). One can even show that the extended morphism (which we again denote by $\rho$) is weak-operator continuous. Here we will take the existence of $\alg{A}^a$ and the extension property as fact. The construction and proofs can be found in the literature~\cite{MR660538,toricendo}.

Note that we saw before that the charge transporters $V$ intertwining $\rho_1$ and $\rho_2$ generally are not elements of the $C^*$-algebra $\alg{A}$. In the construction of the charge transporters we see that $V$ is contained in $\alg{A}(\Lambda)''$ for any cone containing the two localisation regions of $\rho_1$ and $\rho_2$. This is true in general for intertwiners. Indeed, let $\Lambda$ be such a cone. Then, for $B \in \alg{A}(\Lambda^c)$ localisation dictates that $\rho_1(B) = \rho_2(B) = B$. In particular,
\[
	V B = V \rho_1(B) = \rho_2(B) V = B V,
\]
so that $V \in \alg{A}(\Lambda^c)' = \alg{A}(\Lambda)''$ by Haag duality. Hence the remark above shows that the intertwiners \emph{are} elements of $\alg{A}^a$, and hence we can consider expressions like $\rho(V)$, which again is in $\alg{A}^a$. This will be important in the next section.

\subsection{Fusion rules}
The irreducible representations satisfying the superselection criterion are in correspondence with the different types of excitations in the toric code. There is however more to say about them. In particular, we will be interested in what happens if we bring two excitations close together (this is called \idx{fusion}), or what happens if we interchange two of them (\idx{braiding}). Here the step from representations to localised and transportable endomorphisms pays off.

To keep track of the different properties we will make use of the language of \index{tensor category}\emph{tensor categories}.\footnote{A \idx{category} is a collection of objects and maps between these objects, together with an associative composition operation of maps. That is, if $f: \rho \to \sigma$ and $g: \sigma \to \tau$ are two maps, then there is a map $g \circ f : \rho \to \tau$. What a map is depends on the context, and maps are not necessarily functions. In addition, for each object $\rho$ there is a map $1_\rho$ which acts as the identity for composition. Standard examples are the category of sets with maps between them, the category of finite groups with group homomorphisms, or the category of topological spaces with continuous maps between them. Here we will use category theory merely as a bookkeeping device. We refer to~\cite{mmappendix} for a more in-depth discussion of tensor categories, which we need here.} We let $\Delta$\index{_Delta@$\Delta$} be the category which has (cone-)localised and transportable endomorphisms as its objects. As before, it will be convenient to consider them as maps of the auxiliary algebra $\alg{A}^a$. We still have to specify the maps between two objects. If $\rho$ and $\sigma$ are objects in $\Delta$, write\index{_Deltarhosigma@$\Delta(\rho,\sigma)$}
\[
	\Delta(\rho,\sigma) := \left\{ T : T \rho(A) = \sigma(A) T\quad\textrm{ for all } A \in \alg{A} \right\}
\]
for the set of intertwiners between them. If $S \in \Delta(\rho,\sigma)$ and $T \in \Delta(\sigma,\tau)$, it is easy to check that $TS \in \Delta(\rho,\tau)$, which gives us the composition of maps. Note that the identity operator $I$ is in $\Delta(\rho,\rho)$ for any $\rho$ and acts as the unit map in the category. An object $\rho \in \Delta$ is called \idx{irreducible} if $\Delta(\rho,\rho) = \mathbb{C} I$. Note that this is equivalent to $\pi_0 \circ \rho$ being an irreducible representation.

Suppose that we now have endomorphisms $\rho$ and $\sigma$ describing certain charges. Then we can consider $\rho \circ \sigma$, which again is an endomorphism. The interpretation is that this describes the change of the observables if we first create a charge $\sigma$, and then a charge $\rho$ in the background. We use this to turn $\Delta$ into a \idx{tensor category}: for objects $\rho, \sigma \in \Delta$, we set $\rho \otimes \sigma := \rho \circ \sigma$.\index{tensor product!of category Delta@of category $\Delta$}

\begin{remark}
This tensor product operation has nothing to do with the tensor product of vector spaces that we discussed earlier, although the notion of a tensor category originates as an abstraction of that idea. For example, one can consider the category of finite dimensional Hilbert spaces. In that category, the natural $\otimes$-operation is the usual tensor product of Hilbert spaces. Another example is obtained from the category of finite dimensional representations of a compact group $G$. The tensor product there is the tensor product of two group representations.
\end{remark}

This should be a new object in our category, so we have to show that $\rho \otimes \sigma$ is again localised and transportable. In addition, we should define the tensor product of two intertwiners. First note that if $\rho$ is localised in a cone $\Lambda_1$, and $\sigma$ in a cone $\Lambda_2$, then $\rho \otimes \sigma$ is localised in any cone $\Lambda$ that contains both $\Lambda_1$ and $\Lambda_2$. So $\rho \otimes \sigma$ is again localised. To show it is transportable, we first define $S \otimes T$ for $S \in \Delta(\rho_1, \rho_2)$ and $T \in \Delta(\sigma_1, \sigma_2)$ by:
\[
	S \otimes T := S \rho_1(T) = \rho_2(T) S.
\]
Recall that the intertwiners $S$ and $T$ generally are only elements of $\alg{A}^a$ (and not of $\alg{A}$), so that in this definition it is crucial that the endomorphisms can be extended to $\alg{A}^a$. The claim is that $S \otimes T \in \Delta(\rho_1 \otimes \sigma_1, \rho_2 \otimes \sigma_2)$. And indeed, we verify for all $A \in \alg{A}$:
\[
\begin{split}
(S \otimes T) (\rho_1 \otimes \sigma_1)(A) = & S\rho_1(T) \rho_1(\sigma_1(A)) = \rho_2(T \sigma_1(A))S \\&= \rho_2(\sigma_2(A)T)S = (\rho_2 \otimes \sigma_2)(A) (S \otimes T).
\end{split}
\]
It is straightforward to check that that the tensor product is associative, where equality holds ``on the nose''. It should be noted that this is rather special, and in many categories this is only true up to isomorphism. This is already the case for the category of vector spaces, which is due to the fact that the tensor product is unique only up to isomorphism, which in turn stems from the observation that there is no canonical basis.

With this definition we can also show that $\rho \otimes \sigma$ is transportable. Choose a cone $\Lambda$. Since $\rho$ and $\sigma$ are transportable, there are unitaries $S$ and $T$ such that $\rho_1(A) := S \rho(A) S^*$ and $\sigma_1(A) := T \sigma(A) T^*$ are localised in $\Lambda$. But then $\rho_1 \otimes \sigma_1$ is also localised in $\Lambda$. Finally, note that $S \otimes T$ is unitary and $S \otimes T \in \Delta(\rho \otimes \sigma, \rho_1 \otimes \sigma_1)$, from which the claim follows.

We now come to the fusion of two charges. What happens if we bring two charges together, what is going to be the total charge of the system? By the interpretation above, this amounts to studying the tensor product $\rho \otimes \sigma$ for two localisable and transportable endomorphisms. Let $\rho_i$ and $\rho_j$ be irreducible objects (i.e., $\Delta(\rho_i, \rho_j) \cong \mathbb{C} I$). Then we can try to decompose $\rho_i \otimes \rho_j$ into irreducibles:
\[
	\rho_i \otimes \rho_j \cong \sum_{k} N^{k}_{ij} \rho_k.
\]
This is called a \idx{fusion rule}. The sum is over a fixed set of irreducibles, one for each equivalence class, and the $N^k_{ij}$ are integers. The $\sum$ is to be interpreted as a ``direct sum'' in the category, an abstraction  of the direct sum of vector spaces. Here we will not go into details, since it is not required for the toric code, but a useful analogy is to think of representations of finite (or compact) groups. There, it is well known that if we take the tensor product of two representations, we can decompose it as a direct sum of irreducible representations. Indeed, the category of representations of finite groups is very much like our category $\Delta$.

So what are the fusion rules for the toric code? Note that the rules are given up to unitary equivalence, so it is enough to select a specific representative for each class. To this end, choose a site $x$ (for example, the origin of the lattice), and let $\xi$ be the path from $x$ going to infinity along a straight, upward line. Let $\widehat{\xi}$ be the parallel path on the dual lattice, on the plaquettes to the right of the path $\xi$. Then we can define, as before, the automorphisms $\rho^X$ and $\rho^Z$ (corresponding to the path $\xi$ and $\widehat{\xi}$, and set $\rho^Y = \rho^X \circ \rho^Z = \rho^X \otimes \rho^Z$. Finally, we let $\iota$ be the identical homomorphism of $\alg{A}^a$. Then it follows immediately that
\[
	\rho^X \otimes \iota = \rho^X, \quad \rho^Y \otimes \iota = \rho^Y, \quad \rho^Z \otimes \iota = \rho^Z.
\]
In tensor categories there always is (by definition) such a unit object for the tensor operation. Note that if we switch the order of the tensor factors, we get the same result.

By definition, $\rho^X \otimes \rho^Z = \rho^Y$. Finally, since all path operators square to the identity, we also have the following fusion rules:
\[
	\rho^X \otimes \rho^X = \iota = \rho^Z \otimes \rho^Z  = \rho^Y \otimes \rho^Y.
\]
The last rule follows from associativity of the tensor product, together with the observation that $\rho_i \otimes \rho_j \cong \rho_j \otimes \rho_i$. This can be checked directly for the toric code, but also follows from more general considerations that we will come to shortly. This completes the discussion of the fusion rule.

\begin{remark}
Note that each fusion rule has exactly one outcome, that is, we do not have to take direct sums into account. This is a special feature of \emph{abelian} superselection sectors, and is not true in general. Non-abelian models are necessarily more complicated, but this also makes them more powerful in applications, such as topological quantum computing (see~\cite{MR1943131,Wang} for a review).
\end{remark}

\begin{remark}
	For each of the irreducible charges we have that $\rho \otimes \rho \cong \iota$. More generally, if $\rho \otimes \sigma$ contains the identity map $\iota$ with multiplicity one (in other words, $N^\iota_{\rho\sigma} = 1$), we say that $\sigma$ is the \idx{conjugate} of $\rho$, and write $\overline{\rho} = \sigma$. One can think of conjugates as anti-charges, and in the type of physical models that we are discussing here, it automatically follows that also $\overline{\rho} \otimes \rho$ contains $\iota$ only once. Moreover, $\overline{\overline{\rho}} \cong \rho$. Both properties are not automatically true in arbitrary tensor categories (or more precisely, in fusion categories). 
\end{remark}

We conclude the discussion of fusion rules with an example. Let $x$ and $y$ be two sites, and let $\xi_1$ and $\xi_2$ be two paths to infinity, starting in $x$ and $y$, respectively. Let $\rho_1$ and $\rho_2$ be the corresponding automorphisms. Note that these both describe excitation of type $Z$. Recall that before we defined a charge transporter $V$ transporting the ribbon $\xi_1$ to $\xi_2$, i.e.\ such that
\[
	((V \rho_1 V^*) \otimes \rho_2)(A) = F_{\xi} A F_{\xi}^*
\]
for some path $\xi$ connecting $x$ and $y$. Note that this just is $\iota$ conjugated by a local unitary. The corresponding state $\omega_0(F_{\xi} A F_{\xi}^*)$ describes two excitations, but since they are conjugate to each other, the total charge is zero, and indeed this is a state in the trivial superselection sector, confirming $\rho_1 \otimes \rho_2 \cong \iota$.

\subsubsection{Braiding}
We now come to the question what happens if we start moving charges around. This is called \idx{braiding}. 
Equivalently, we can ask the question if it matters if we first create a charge $\rho_1$, and then a charge $\rho_2$, or do it the other way round.
More precisely, we will relate $\rho_1 \otimes \rho_2$ to $\rho_2 \otimes \rho_1$ for $\rho_1$ and $\rho_2$ in $\Delta$. Suppose that $\rho_1$ and $\rho_2$ are localised in cones $\Lambda_1$ and $\Lambda_2$.
We will define unitary operators $\varepsilon_{\rho_1, \rho_2} \in \Delta(\rho_1 \otimes \rho_2, \rho_2 \otimes \rho_1)$\index{_epsilon_rho1rho2@$\varepsilon_{\rho_1,\rho_2}$} in a way that is compatible with the rest of the structure of $\Delta$.

The general strategy is then to use transportability to move one of the charges, say $\rho_2$, away from $\rho_1$. That is, choose a cone $\widehat{\Lambda}_2$. Then there is an endomorphism $\widehat{\rho}_2$ that is localised in $\widehat{\Lambda}_2$ such that there is a unitary $V \in \Delta(\rho_2, \widehat{\rho}_2)$. Now define
\begin{equation}
	\varepsilon_{\rho_1,\rho_2} := (V \otimes I)^* (I \otimes V) = V^* \rho_1(V).
\end{equation}
Then if $A \in \alg{A}$ we calculate
\[
\begin{split}
	\varepsilon_{\rho_1, \rho_2} &(\rho_1 \otimes \rho_2)(A) = (V \otimes I)^* (\rho_1 \otimes \widehat{\rho}_2)(A) (I \otimes V) \\&= (V \otimes I)^* (\widehat{\rho}_2 \otimes \rho_1)(A) (I \otimes V) = (\rho_2 \otimes \rho_1)(A) \varepsilon_{\rho_1, \rho_2}.
\end{split}
\]
Here we used that $\rho_1 \otimes \widehat{\rho}_2 = \widehat{\rho}_2 \otimes \rho_1$, since the localisation regions are disjoint. Hence we see that $\varepsilon_{\rho_1,\rho_2} \in \Delta(\rho_1 \otimes \rho_2, \rho_2 \otimes \rho_1)$.

This gives the desired unitary intertwining $\rho_1 \otimes \rho_2$ and $\rho_2 \otimes \rho_1$, but to define it we had to choose the transported endomorphism $\widehat{\rho}_2$ and a charge transporter. For this to be a good definition, the final intertwiner should not depend on the specific choice of $\widehat{\rho}$ and intertwiner. This is indeed the case, up to the ``orientation'' of $\widehat{\Lambda}$: in our two dimensional settings, we can say that a cone is to the ``left'' of another disjoint cone.

\begin{figure}
	\begin{center}
	\begin{tikzpicture}
		\clip (-5.9,-1) rectangle (4,4.9);
		\filldraw[fill=lightgray, draw=black] (-2,6) -- (1,-0.5) -- (-0.5,6) -- (-2,6);
		\filldraw[fill=lightgray, draw=black] (3,6) -- (1.5,-1) -- (4.5,6) -- (3,6);
		\draw[thin,gray] (0,0) circle [radius=3];
		\draw[very thick] ([shift=(30.7:3)]0,0) arc (30.7:44.2:3cm);
		\draw[very thick] ([shift=(86.4:3)]0,0) arc (86.4:101.6:3cm);
		\draw[very thick] ([shift=(158.7:3)]0,0) arc (158.7:163.9:3cm);
		\draw[dashed] (-8,1) -- (0,0.75) -- (-8,1.75) -- (-8,1);
		\draw[draw=black] (-2,6) -- (1,-0.5) -- (-0.5,6) -- (-2,6);
		\draw[draw=black] (3,6) -- (1.5,-1) -- (4.5,6) -- (3,6);
		\draw (-3,1.5) node {$\Lambda_a$};
		\draw (-0.35,3.5) node {$\widehat{\Lambda}$};
		\draw (2.6,2.5) node {$\Lambda$};
	\end{tikzpicture}
	\end{center}
\caption{Two cones $\Lambda$ and $\widehat{\Lambda}$ with the auxiliary cone $\Lambda_a$. For a big enough circle, the intersection of the cones with this circle will yield three disjoint segments of the circle, making it possible to define what left and right mean. In this picture, the cone $\widehat{\Lambda}$ is to the \emph{left} of $\Lambda$.}
\label{fig:leftcone}
\end{figure}
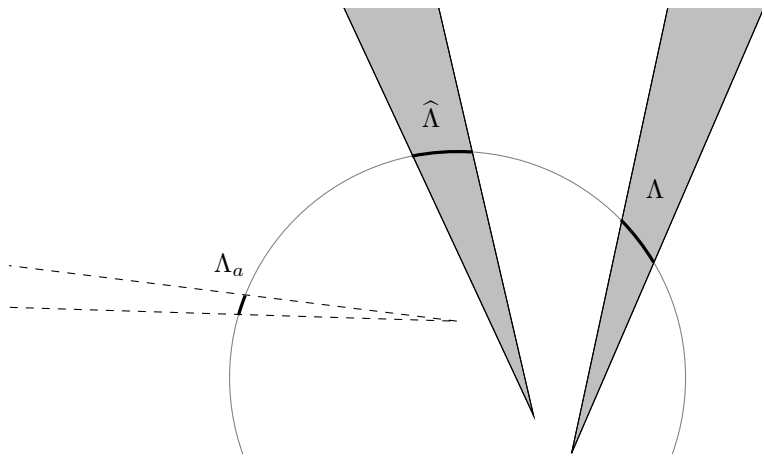

The idea for this definition can be seen from Figure~\ref{fig:leftcone}. To make this precise, Recall that we have chosen an auxiliary cone $\Lambda_a$. Let $\Lambda$ and $\widehat{\Lambda}$ be two disjoint cones (and disjoint from $\Lambda_a + x$ for some $x$). Now consider a circle centred at the origin. If we choose the radius of this circle to be big enough, we can arrange it so that the intersection of each cone with the circle is a connected segment of the circle (that is, the cone intersects the circle in only one place). Then the cone $\widehat{\Lambda}$ is to the \emph{left} of $\Lambda$ if we can rotate the segment corresponding to $\widehat{\Lambda}$ counter-clockwise until it intersects with the auxiliary cone without ever intersecting with the segment of $\Lambda$. If this is not the case, we say that it is to the right of $\Lambda$. Since the two cones are disjoint, this are all possibilities.

With this definition one can show that there are essentially two choices: we either have to chose $\widehat{\Lambda}$ to be to the \emph{left} or to the \emph{right} of $\Lambda$. Once this has been fixed, $\varepsilon_{\rho,\sigma}$ does not depend on the other choices that have to be made. This can shown by slowly moving the cone $\widehat{\Lambda}$ in a specific way, and show that $\varepsilon_{\rho,\sigma}$ stays the same along each step. This also explains why there are two choices: in two dimensions, we cannot continuously move the cone $\widehat{\Lambda}$ from the left of $\Lambda$ to the right, without it intersecting with $\Lambda$ (or the auxiliary cone). In essence, this is the reason why there are two choices. We will always chose $\widehat{\Lambda}$ to be to the left of $\Lambda$.

Note that the geometry of the system plays a role. In higher dimensions, we can no longer true define the relative orientation of the two cones, and we can always move one cone around the other without them ever intersecting (or intersecting the auxiliary cone). In other words, we can continuously deform one choice into the other, and it follows that they both agree. A consequence is that in that case, we can only have bosonic or fermionic charges, and no anyons, since then it can be shown that $\varepsilon_{\rho,\sigma} = \varepsilon_{\sigma,\rho}^*$~\cite{MR660538,MR0297259}. But this implies that $\varepsilon_{\rho,\sigma} \varepsilon_{\sigma,\rho}$, so that exchanging $\sigma$ and $\rho$ twice (or alternatively, moving $\rho$ around $\sigma$) is the trivial operation.

\begin{figure}
	\begin{center}
	\begin{tikzpicture}
		\clip (-3.9,-1) rectangle (5.4,4.9);
		\filldraw[fill=lightgray, fill opacity=0.3, draw=black] (-4,6) -- (1,-0.5) -- (-1.5,6) -- (-4,6);
		\filldraw[fill=lightgray, draw=black] (3,6) -- (1.5,-1) -- (7.5,6) -- (3,6);
		\draw[dashed] (-8,1) -- (-0.3,0.75) -- (-8,1.75) -- (-8,1);
		\draw (-3,1.5) node {$\Lambda_a$};
		\draw (0,3.5) node {$\widehat{\Lambda}$};
		\draw (3.6,0.5) node {$\Lambda$};
		\draw[very thick] (2.2,0.5) .. controls (4,2) and (2.6,6) .. (4,6);
		\draw[very thick,dashed] (3.2,1.2) .. controls (3.5,2.5) .. (6.5,6.5);
		\draw[very thick,dashed] (0.0,1.2) .. controls (-1.3,2.5) .. (-2,6.5);
		\filldraw[fill=black] (2.2,0.5) circle [radius=0.05];
		\filldraw[fill=black] (3.2,1.2) circle [radius=0.05];
		\filldraw[fill=black] (0.0,1.2) circle [radius=0.05];
		\draw[very thick, dashed] (-1.35,3.5) .. controls (0,5) and (1.5,3) .. (4.3,3.5);
		\draw (1,4.3) node {$F_{\widehat{\xi}_n}$};
	\end{tikzpicture}
	\end{center}
	\caption{An $X$ (solid line) and a $Z$ (dashed line) charge localised in the cone $\Lambda$, together with a choice for $\widehat{\rho}$. Also shown is a possible choice of paths for $F_{{\widehat{\xi}}_n}$ used in constructing the charge transporter. Note that it crosses the path used to define the $X$ charge.}
\label{fig:charges}
\end{figure}
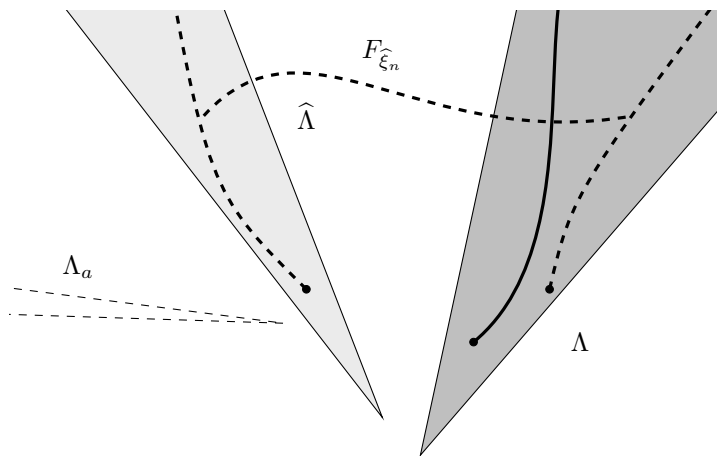

The definition of $\varepsilon_{\rho,\sigma}$ enjoys some natural properties that are to be expected in a category. For example, let $T \in \Delta(\rho,\rho')$. Then it can be shown that $\varepsilon_{\rho',\sigma} (T \otimes I) = (I \otimes T) \varepsilon_{\rho,\sigma}$ (and similar in the other argument). One says that $\varepsilon_{\rho,\sigma}$ is \idx{natural} in its arguments. Another property is that the braid relations hold:
\[
	\varepsilon_{\rho \otimes \sigma, \tau} = (\varepsilon_{\rho,\tau} \otimes 1_\sigma)(1_\rho \otimes \varepsilon_{\sigma,\tau}), \quad
	\varepsilon_{\rho, \sigma \otimes \tau} = (1_\sigma, \varepsilon_{\rho,\tau})(\varepsilon_{\rho,\sigma} \otimes 1_\tau).
\]
In other words, we could either interchange $\rho \otimes \sigma$ with $\tau$, or first interchange $\sigma$ and $\tau$, and then $\rho$ and $\sigma$ (and similarly for the second equation).
This compatible (with the $\otimes$-product) makes $\Delta$ into a \idx{braided tensor category}.

We know come back to the interpretation of moving one charge around the other. This can be made clear by considering the example of the toric code. In this case it is possibly to explicitly calculate the operators $\varepsilon_{\rho,\sigma}$, since we can easily find a net of local operators converging to a charge transporter $V$. To this end, consider a charge of type $X$ and one of type $Z$ both localised in a cone $\Lambda$, with paths extending to infinity as in Figure~\ref{fig:charges}. We will write $\rho$ and $\sigma$ for the corresponding automorphisms. By the convention chosen above we also have to pick a cone $\widehat{\Lambda}$ to the left of $\Lambda$. We choose the path corresponding to the transported $Z$-charge as in Figure~\ref{fig:charges} (the choice is not important, as long as it is inside of the cone).

To construct the charge operator $V$ transporting the $Z$ charge to $\widehat{\Lambda}$ we proceed as before. Let $\xi$ denote the ribbon for the $Z$-charge in $\Lambda$, and write $\xi'$ for the transported ribbon. Then, for each $n > 0$, we choose a path $\widehat{\xi_n}$ between the $n$th sites of $\xi$ and $\xi'$, giving us a path operator $F_{\widehat{\xi}_n}$. The charge operator $V$ can then be obtained as the weak limit of $V_n = F_{\xi_n} F_{\widehat{\xi}_n} F_{\xi'_n}$.

Note that the paths cross at some point. This means that at the intersection, one of the Pauli operators in $F_{\widehat{\xi}_n}$ will act on the same site as the path operators that we have to conjugate with to define $\rho$. By using the commutation relations for the Pauli operators we can get rid of the conjugation, at the expense of a minus sign. Hence, for big enough $n$, we will have $\rho(V_{\xi_n}) = - V_{\xi_n}$. Since we can extend $\rho$ to a weakly continuous endomorphism on the auxiliary algebra, and the construction of the sequence converging to $V$, we see that $\rho(V) = -V$. This means that
\begin{equation}
	\varepsilon_{\rho,\sigma} = V^*\rho(V) = - V^*V = -I.
\end{equation}
Other braiding operators can be found similarly.

Note that instead of $\rho \otimes \sigma$, we could also start with $\sigma \otimes \rho$. In constructing $\varepsilon_{\sigma,\rho}$ we again have to choose a path in a cone $\widehat{\Lambda}$ to the left. Since now we have to move the $X$ charge, the paths used in the converging sequence no longer cross, and it is easy to see that $\varepsilon_{\sigma,\rho} = I$. Hence we see that $\varepsilon_{\rho,\sigma} \circ \varepsilon_{\sigma,\rho} \neq I$. This is the defining feature of anyons: interchanging the two charges twice is a non-trivial operation. This will always be the case in the toric code, as long as we consider two distinct charges: either $\varepsilon_{\rho,\sigma}$ or $\varepsilon_{\sigma,\rho}$ will be equal to $-I$, the other will be equal to $I$. Charges with such non-trivial exchange behaviour are called \emph{anyons}\index{anyon}. In this case they are abelian: the result is only phase factor. For \emph{non}-abelian anyons, $\varepsilon_{\rho,\sigma} \rho_{\sigma,\rho}$ is a non-trivial unitary.\footnote{In the local quantum physics community, the term \emph{plektons} is also used for non-abelian anyons, but this does not seem to have caught on outside of this community.}

This is not true for fermions or bosons, for which we always have $\varepsilon_{\rho,\sigma} \circ \varepsilon_{\sigma,\rho} = I$. In the language of tensor categories, we say that $\Delta$ is a braided tensor category that is not \emph{symmetric}. In fact, in our case it is \emph{modular}\index{tensor category!modular}. In a sense, this is a braided tensor category that is as far away from being symmetric is possible. The category is said to be modular if for every irreducible $\rho \in \Delta$, there is an irreducible $\sigma \in \Delta$ such that $\varepsilon_{\rho,\sigma} \circ \varepsilon_{\sigma,\rho} \neq I$, but there are other (equivalent) definitions~\cite{Muger:2003p3368,MR1147467}.

\begin{remark}
It is good to contrast with the example of representations of finite groups. In that case, we can use the flip operator to relate $\pi_1 \otimes \pi_2$ to $\pi_2 \otimes \pi_1$. This defines a braiding on the category. However, it is clear that applying the flip operator twice gives back the same tensor product representation. Hence this category is an example of a \emph{symmetric} tensor category (or a symmetric fusion category).
\end{remark}

There is a more structure than we have discussed here, for example even though the fusion rules do not require it, it is possible to define a direct sum operation on $\Delta$. In the language of tensor categories, it follows that $\Delta$ is a \emph{braided tensor $C^*$-category} (BTC). The modularity can be interpreted as meaning that the braiding is as non-degenerate as possible. It implies that we can always detect a charge by creating a pair of charges (of a different type) from the ground state, move one charge of the pair around the charge we want to measure, and annihilate them again. If the category is modular, there always is a charge for which this is a non-trivial operation.

In the case of the toric code, the category $\Delta$ is completely known, and can be summarised as follows:
\begin{theorem}[\cite{toricendo}]
	The category $\Delta$ is equivalent (as a braided tensor $C^*$-category) to the category $\operatorname{Rep}_f \mathcal{D}(\mathbb{Z}_2)$, the category of finite dimensional representations of the quantum double of the group algebra of $\mathbb{Z}_2$.
\end{theorem}
The quantum double is an algebraic object that can be defined for any suitable Hopf algebra, such as the group algebra of a finite group. The upshot of this theorem is that to understand the physical properties of the charges in our model, it is enough to study the representations of an \emph{algebraic} object. The representations of the quantum doubles of Hopf algebras are well understood, so this answers the question.

\chapter{Applications of Lieb-Robinson bounds}
In Chapter~\ref{ch:lr} we already discussed some applications of Lieb-Robinson bounds. Here we will discuss two more recent developments: scattering theory for quantum spin systems and automorphic equivalence of gapped ground states. The first topic is again a confirmation that the Lieb-Robinson velocity is to quantum spin systems what the speed of light is to relativistic systems. Indeed, we will outline how one can build up a scattering theory for quantum spin systems along the lines of Haag-Ruelle theory in algebraic quantum field theory. The Lieb-Robinson bound here is key to construct incoming and outgoing particle states: it allows us to localise the operators that create single-particle states from the ground state. We can then choose such creation operators for particles with different velocities. If $|t|$ then becomes large, the excitations will be far apart from each other, which can be made precise using the localisation of the creation operators. This implies that the single-particles do not interact at large times, so that they will behave as free particles. Besides providing yet another application of Lieb-Robinson bounds, it also nicely illustrates how the SNAG theorem, discussed in Section~\ref{sec:snag}, can be used in quantum spin systems.

The other application concerns the equivalence of ``quantum phases''. A quantum phase is essentially an equivalence class of ground states. For topologically ordered systems the following definition of equivalence is prevalent in the literature: two gapped ground states are in the same phase if one can continuously deform the Hamiltonian of the first ground state to that of the second one without closing the gap along the way. This notion can be studied in the thermodynamic limit, and it will turn out that in this case the two ground states are related by an automorphism. This automorphism is local, in the sense that it satisfies a Lieb-Robinson type of bound. The construction of this automorphism again heavily relies on Lieb-Robinson bounds. The main points are reviewed in the last section of these lecture notes.

\section{Scattering theory}
One only has to look at the success of the Large Hadron Collider at CERN in obtaining information on and verifying the standard model, to see how important and useful scattering theory is. Also in many other experiments, scattering theory plays a role. It would therefore be helpful to have a scattering theory for excitations of quantum spin systems. This is what we will outline here for the case of gapped ground states of translational invariant ground states of quantum spin systems satisfying a Lieb-Robinson bound.

In a scattering experiment one prepares a number of particles (or excitations) in a known configuration $\psi^{in}$. Then, after some time, the particles will have interacted (or scattered), and we determine the outgoing configuration $\psi^{out}$. The incoming and outgoing states are related by a (typically unitary) matrix $S$. The goal of scattering theory is to find this scattering matrix $S$.\index{S-matrix@$S$-matrix}

We have already seen examples how concepts and techniques from relativistic local quantum physics can often be translated to the quantum spin setting. This will again be the case here, where we adapt the Haag-Ruelle scattering theory~\cite{Ha58,Ru62}\index{Haag-Ruelle scattering} to quantum spin systems. To discuss scattering theory it is first necessary to talk about $n$-particle excitations. To this end, we first recall the construction of Fock space, and then discuss single-particle excitations. This definition is motivated by the well-known particle classification in relativistic quantum mechanics, and makes essential use of the spectrum as defined in the SNAG Theorem~\ref{thm:snag}. Then Haag-Ruelle creation operators that create these excitations from the ground state are introduced. Using these operators it is possible to discuss scattering theory and define the $S$-matrix.

Again, Lieb-Robinson bounds provide an essential tool in the analysis. This section is also meant as an introduction to the literature, and only the big picture is discussed. We also clarify the role Lieb-Robinson bounds play in the whole theory. Full proofs and all the estimates can be found in~\cite{BDNScattering}. We also note that Haag-Ruelle theory is by no means the only way to do scattering theory in spin systems, see for example~\cite{Haegeman:2013a,Vanderstraeten:2014a}, although the setting there is slightly different.

The precise sufficient assumptions to define an $S$-matrix will be given below, but for now note that so far, in the case of quantum spin systems, only scattering of neutral excitations has been developed. That is, all bosonic excitations that can be obtained in the ground state sector (fermions would require operators that anti-commute at a distance, which we do not have in the ground state sector). Therefore, it cannot be applied to the toric code example discussed in the previous section. It should also be noted that in this section, we consider a specific type of excitations, in the sense that they lie on a momentum shell. It does not encompass the ``anyonic excitations'' discussed before (for one, they are time invariant, so that there is no scattering).

\subsection{Fock space}
What actually happens at a microscopic level when two particles scatter off of each other is often difficult to describe. Rather, what one does in scattering experiments, is describing a configuration of \emph{incoming} particles, and then look at the \emph{outgoing} particles. If we do this long before and long after the actual scattering, the particles will be far away from each other, hence it is reasonable to assume that they do not interact any more. In other words, we can regard them as \emph{free particles}\index{free particle}. It would therefore be useful to have a way to describe configurations of $n$ free particles, for each $n$. This can be done with the help of the \emph{Fock space}:

\begin{definition}
	Let $\mathcal{H}$ be a Hilbert space. Then we define the \idx{Fock space}\index{_F(H)@$\mathcal{F}(\mathcal{H})$} $\mathcal{F}(H)$ as the Hilbert space
	\[
		\mathcal{F}(\mathcal{H}) := \bigoplus_{n=0}^\infty \mathcal{H}^{\otimes n},
	\]
	where $\mathcal{H}^{\otimes n}$ is the tensor product of $n$ copies of $\mathcal{H}$, and the case $n=0$ is understood to be $\mathbb{C}$.
\end{definition}

In our application, $\mathcal{H}$ will be a single-particle Hilbert space. Then, as we have seen before, $\mathcal{H}^{\otimes n}$ would describe $n$ of such particles together. By taking the direct sum, we are able to describe $n$ particle states for any $n$.

It will be useful to extend certain operators on $\mathcal{H}$ to the Fock space. First consider the case of a unitary $U \in \alg{B}(\mc{H})$. Then for each $n \geq 1$, we can define
\[
	U^{(n)}(\psi_1 \otimes \dots \otimes \psi_n) := U\psi_1 \otimes U\psi_2 \otimes \dots \otimes U \psi_n.
\]
Since $U$ is unitary, $U^{(n)}$ is unitary. If we define $U^{(0)} = I$, it follows that $\bigoplus_{n \geq 0} U^{(n)}$ can be extended to a unitary $\Gamma(U)$ on $\mc{F}(\mc{H})$. The operator $\Gamma(U)$ is called the \idx{second quantisation} of $U$. The origin of the term is historical, and does not have much to do with quantisation in the sense of going from classical to quantum operators.

Now consider a self-adjoint (possible unbounded) operator $H$ defined on a dense domain $D(H)$. Define for $\psi_1 \otimes \cdots \otimes \psi_n \in \bigotimes_{k=1}^n D(H)$ an operator $H^{(n)}$ by
\[
	H^{(n)} (\psi_1 \otimes \cdots \otimes \psi_n) = \sum_{i=1}^n (\psi_1 \otimes \psi_2 \otimes \cdots \otimes H \psi_i \otimes \psi_{i+1} \otimes \cdots \otimes \psi_n).
\]
This yields an unbounded operator $\bigoplus_{n=0}^\infty H^{(n)}$ (even if $H$ is bounded!). One can show that this operator is closable and has a dense set of analytic vectors. The closure is usually written as $d\Gamma(H) = \overline{\bigoplus_{n=0}^\infty H^{(n)}}$.
The notation can be explained by considering a strongly continuous one-parameter group $U_t = \exp(i t H)$ for some unbounded self-adjoint generator $H$. Then one can show that
\[
	\Gamma(U_t) = e^{i t d\Gamma(H)},
\]
so that $d\Gamma(H)$ can be seen as the derivative of $\Gamma(U_t)$ at $t = 0$, just like $H$ is the derivative of $U_t$ at $t = 0$.

\subsubsection{Bosonic and fermionic Fock space}
Usually not all states in $\mathcal{F}$ are relevant. For example, we know that if we interchange two bosons, the resulting state should be the same. Hence it would be useful to restrict to bosonic (resp.\ fermionic) states that are symmetric (resp.\ anti-symmetric) under permutations. To this end, let $S_n$ be the permutation group of $n$ elements. Let $\sigma \in S_n$ and write $\sigma(k)$ for the image of $k \in \{1, \dots, n\}$ under the permutation. Then for each $\sigma$ we can define
\[
	U_\sigma (\psi_1 \otimes \cdots \otimes \psi_n) := \psi_{\sigma(1)} \otimes \dots \otimes \psi_{\sigma(n)}.
\]
It can be shown that this map extends to a unitary operator on $\mathcal{H}^{\otimes n}$ (see Exercise~\ref{ex:permutation} below). Hence we can take the average over all permutations, to obtain an operator
\begin{equation}
	P_n^s := \frac{1}{n!} \sum_{\sigma \in S_n} U_{\sigma}.
\end{equation}
\begin{exercise}\label{ex:permutation}
	Show that $U_\sigma$ is unitary, that $P_n^s$ is a projection, and that the closure of $P_n \mathcal{H}^{\otimes n}$ is precisely the subspace of permutation invariant vectors.
\end{exercise}

Because of this exercise, it is natural to define the \emph{bosonic} or (\emph{symmetric}) \emph{Fock space} as $\mathcal{F}_s(\mathcal{H}) := \bigoplus_{n=0}^\infty P_n^s \mathcal{H}^{n}$.\index{Fock space!bosonic and fermionic}\index{_FsH@$\mathcal{F}_s(\mathcal{H})$} We write $P_s$ for the projection from $\mathcal{F}(\mathcal{H})$ onto $\mc{F}_s(\mc{H})$.

Although we will not use the fermionic Fock space, it can be defined analogously. Let again $\sigma \in S_n$ be a permutation, and write $\operatorname{sign}(\sigma)$ for the sign of the permutation (i.e., it is +1 if it is an even number of transpositions, or -1 if it can be obtained by an odd number of transpositions). Then define a unitary via
\[
	U_\sigma^a(\psi_1 \otimes \cdots \otimes \psi_n) := \operatorname{sign}(\sigma) \psi_{\sigma(1)} \otimes \cdots \psi_{\sigma(n)}.
\]
Then $P^a_n := \frac{1}{n!} \sum_{\sigma \in S_n} U_\sigma^a$ defines a projection on the anti-symmetric part of $\mathcal{H}^{\otimes n}$. The fermionic Fock space is then $\mathcal{F}_a(\mathcal{H}) := \bigoplus_{n=0}^\infty P_n^a \mathcal{H}^{\otimes n}$, consisting of the anti-symmetric states.

\subsubsection{Creation and annihilation operators}
Suppose that we have an $n$-particle state $\psi_1 \otimes \cdots \otimes \psi_n$ and let $\psi$ be a single-particle state. Then it would be useful to have an operation that either ``creates'' the particle $\psi$ to get a $n+1$-particle state, or remove it from the $n$ particles. To this end, define for $\psi \in \mathcal{H}$ the \idx{creation operator} $a^*(\psi)$ and the \idx{annihilation operator} $a(\psi)$ by linear extension of
\begin{align*}
	a(\psi) (\psi_1 \otimes \cdots \otimes \psi_n) &= n^{\frac{1}{2}} \langle \psi, \psi_1 \rangle \psi_2 \otimes \cdots \otimes \psi_n, \\
	a^*(\psi) (\psi_1 \otimes \cdots \otimes \psi_n) &= (n+1)^{\frac{1}{2}} \psi  \otimes \psi_1 \otimes \cdots \otimes \psi_n.
\end{align*}
It should be noted that the operators are not bounded, so that they will only be defined on a dense domain.

\begin{exercise}
	Show that $a(\psi)$ and $a^*(\psi)$ are not bounded. Also show that $\langle a^*(\psi) \varphi_1, \varphi_2 \rangle = \langle \varphi_1, a(\psi) \varphi_2\rangle$ with $\varphi_1 \in \mathcal{H}^{\otimes n}$ and $\varphi_2 \in \mathcal{H}^{\otimes n+1}$.
\end{exercise}

In applications we are often interested in either the bosonic or fermionic Fock space, hence it is useful to have creation and annihilation operators that restrict to these subspaces. Hence can define
\[
a_+(\psi) := P_s a(\psi) P_s, \quad\quad\quad a_+^*(\psi) := P_s a^*(\psi) P_s.
\]
The fermionic creation and annihilation operators $a_{-}^*(\psi)$ and $a_{-}(\psi)$ are defined analogously.

As the name suggests, these operators can be used to create $n$-particle states. To see this, consider the state $\Omega = (1, 0, 0, \dots)$ in Fock space. This corresponds to the absence of any particle, that is, it is the vacuum state. Suppose that $\psi_1, \psi_2 \in \mc{H}$ are single-particle states. Then we calculate
\[
\begin{split}
	\varphi^{(2)} &:=  \frac{1}{\sqrt{2}} a_+^*(\psi_1) a_+^*(\psi_2) \Omega = \frac{1}{\sqrt{2}} a_+^*(\psi_1) \psi_2 \\
		&= P_s (\psi_1 \otimes \psi_2) = \frac{1}{\sqrt{2}} \left( \psi_1 \otimes \psi_2 + \psi_2 \otimes \psi_1 \right)
\end{split}
\]
Hence $\varphi^{(2)}$ is a bosonic 2-particle state, with a particle in state $\psi_1$ and one in $\psi_2$. Similarly, $a_{+}(\psi_1)$ removes a particle $\psi_1$. This construction can be straightforwardly extended to $n$ particles, and we see that by acting with products of creation operators on the vacuum state, a dense subset of Fock space can be obtained. As a final aside, note that $a_{-}^*(\psi) a_{-}^*(\psi) \Omega = 0$, due to the anti-symmetric projection. This encodes the Pauli exclusion principle: we cannot create two fermions in the same state $\psi$.

\subsection{Single-particle states}
The next step is to define single-particle states, and find operators to create them. The definition will ultimately make use of the joint spectrum\index{joint spectrum} as obtained via the SNAG theorem, but before that we make precise the assumptions that we make. 

\begin{assumption}\label{as:qspin}
	Consider a ground state $\omega$ of a quantum spin system which carries an action $\mathbf{x} \mapsto \tau_{\mathbf{x}}$, with $\mathbf{x} \in \mathbb{Z}^d$, of translations, and let $(\pi, \Omega, \mathcal{H})$ be the corresponding GNS representation. We furthermore assume that:
\begin{enumerate}
	\item the dynamics $t \mapsto \tau_t$ satisfy a Lieb-Robinson bound;
	\item the ground state is translation invariant;
	\item if $H$ is the Hamiltonian implementing the dynamics in the ground state representation, $H\Omega = 0$ and $H$ is gapped;
	\item there is a unique ground state in this representation, i.e.\ $0$ is a simple eigenvalue with eigenvector $\Omega$;
	\item the interaction is translation invariant.
\end{enumerate}
\end{assumption}
The last assumption implies that $\tau_t \tau_{\mathbf{x}} = \tau_{\mathbf{x}} \tau_t$ for all $x \in \mathbb{Z}^d$ and $t \in \mathbb{R}$. Consequently, we will write $\tau_{(t,\mathbf{x})} := \tau_t \circ \tau_{\mathbf{x}}$ for the action of space-time translations. The condition on the gap is there for technical reasons. The existence of a Lieb-Robinson bound is more fundamental, and plays a crucial role. As we have seen before, it plays the role of the speed of light in relativistic theories. This again is the case here. For example, we shall see that it gives an upper bound on the velocity of particle excitations. It will also be necessary for some locality estimates, which are used to show that at large times, particle creation operators for different velocities almost commute (and hence behave as non-interacting particles).

Note that the assumptions imply that we can find a strongly continuous group of unitaries $(t,\mathbf{x}) \mapsto U(t,\mathbf{x})$ acting on $\mathcal{H}$ which leaves the ground state $\Omega$ invariant. By Theorem~\ref{thm:snag}, the SNAG theorem, there is a spectral measure $dP$ such that
\[
	U(t,\mathbf{x}) = \int_{\mathbb{R} \times \mathbb{T}^d} e^{i(Et - \mathbf{p} \cdot \mathbf{x})} dP(E,\mathbf{p}).
\]
Here we changed the sign of the momentum part to make the expression similar to relativistic theories. The $d$-dimensional torus, $\mathbb{T}^d$, appears because it is the group dual of the group $\mathbb{Z}^d$. This duality is essentially the Fourier transform on the circle.

The energy-momentum spectrum of the theory is then defined to be $\operatorname{Sp}(U)$. Note that from the assumptions that we have made, $0$ is an isolated point in the spectrum. Because $\omega$ is a ground state, it follows that $\operatorname{Sp}(U) \subset \mathbb{R}_+ \times \mathbb{T}^d$.

We make some further assumptions on the spectrum, namely that there is a \idx{mass shell} satisfying certain regularity conditions. This is somewhat analogous to mass shells in relativistic theories. The definition is as follows:

\begin{definition}
	Let $\mathfrak{h} \subset \operatorname{Sp}(U)$. Then $\mathfrak{h}$ is a \emph{mass shell} if the following conditions are satisfied:
	\begin{enumerate}
		\item the spectral projection $P(\mathfrak{h})$ is non-zero;
		\item there is an open subset $\Delta_\mathfrak{h} \subset \Gamma$ such that $\mathfrak{h}$ is the graph of a smooth, real valued function $E \in C^\infty(\mathbb{T}^d)$:\index{_Deltah@$\Delta_{\mathfrak{h}}$}
			\[
				\mathfrak{h} = \{ (E(\mathbf{p}), \mathbf{p}) : \mathbf{p} \in \Delta_{\mathfrak{h}} \}.
			\]
			$E$ is called the \idx{dispersion relation}.
		\item The second derivatives of $E$ vanish almost nowhere, that is $\{ \mathbf{p} \in \Delta_{\mathfrak{h}} : D^2E(p) = 0\}$ has Lebesgue measure zero. Here $D^2$ is the Hessian of $E$, that is, $D^2E(\mathbf{p}) = [\partial_i \partial_j E(\mathbf{p})]_{i,j=1, \dots, d}$.
	\end{enumerate}
	If $\mathfrak{h}$ is a mass shell, $\mc{H}_{\mathfrak{h}} := P(\mathfrak{h}) \mathcal{H}$ is called the \idx{single-particle subspace}.
\end{definition}
Note that $\Delta_\mathfrak{h}$ might be smaller than $\mathbb{T}^d$, i.e., the mass shell might only cover part of the momentum range. The last condition is necessary for technical reasons. Since we do not discuss the complete proof here, it will not appear again, but the main reason is as follows. The group velocity is given by $\nabla E$, so that the last condition says that it is constant only on a set of Lebesgue measure zero. This will ensure that we can find a dense space of single-particle states by considering configurations of particles with distinct velocities.

In addition to Assumption~\ref{as:qspin}, we make the following additional assumptions about the mass shell $\mathfrak{h}$:
\begin{assumption}\label{as:massshell}
	Let $\mathfrak{h}$ be a mass shell. We assume in addition that it is \emph{isolated}\index{mass shell!isolated}, and that either condition~\ref{it:regular} or~\ref{it:pseudorel} holds:
	\begin{enumerate}
		\item $\mathfrak{h}$ is isolated, in the sense that for any $p \in \Delta_{\mathfrak{h}}$ there exists $\varepsilon > 0$ such that 
			\[
				( [E(\mathbf{p}) -\varepsilon, E(\mathbf{p}) + \varepsilon] \times \{\mathbf{p} \} ) \cap \operatorname{Sp}(U) = \{(E(\mathbf{p}), \mathbf{p}) \};
			\]
		\item\label{it:regular} it is \emph{regular}, in the sense that $\{ \mathbf{p} : \det(D^2(E)) = 0 \}$ has Lebesgue measure zero; 
		\item\label{it:pseudorel} it is \emph{pseudo-relativistic}: $(\mathfrak{h} - \mathfrak{h}) \cap \operatorname{Sp}(U) = \{0 \}$.
	\end{enumerate}
\end{assumption}
\begin{figure}\label{fig:massshells}
	\begin{center}
	\includegraphics[width=10cm]{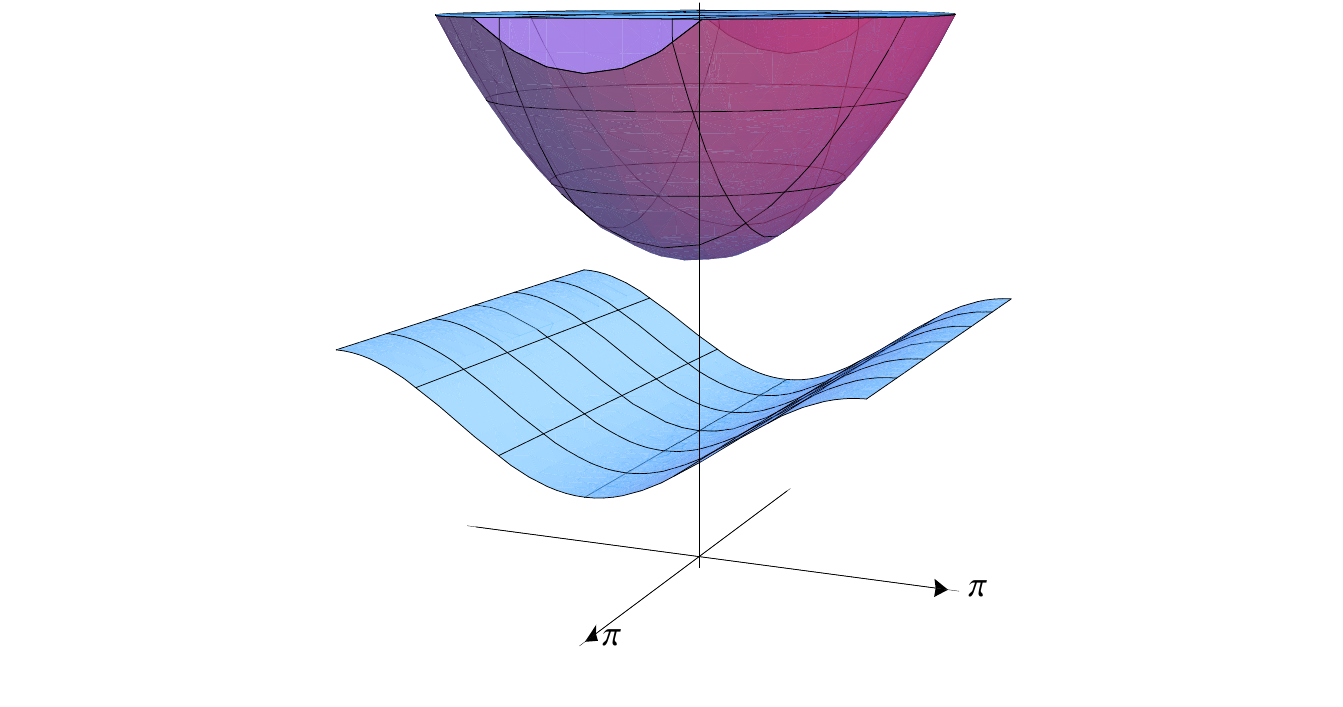}
	\includegraphics[width=7cm]{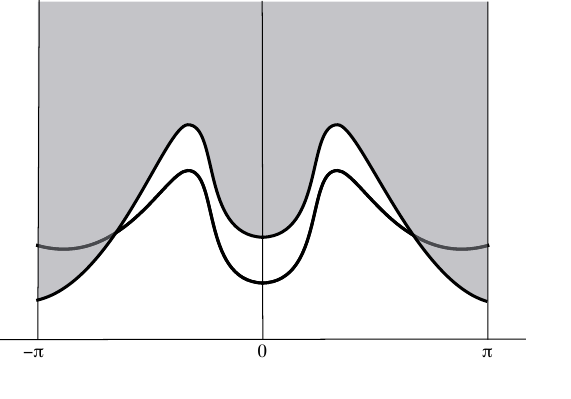}
\end{center}
	\caption{A pseudo-relativistic mass shell (top) and a regular, isolated mass shell (bottom). Note in the bottom picture the mass shell is, by definition, only the part that is isolated from the rest of the spectrum. That is, it does not cover the full momentum range. Pictures taken from~\cite{BDNScattering}. (c) 2015 Springer Basel, included with permission.}
\end{figure}

The name \emph{pseudo-relativistic} is chosen because this property is shared with relativistic mass shells. Again, the last two properties are necessary for technical reasons, and will not play much of a role in our discussion. The isolatedness of the mass shell is more important: as we will see it makes it possible to create states with momentum on the mass shell, but disjoint from the rest of the spectrum. Figure~\ref{fig:massshells} shows an example of a regular and a pseudo-relativistic mass shell.

\subsection{Operator smearing and almost local observables}
Now that we have defined single-particle states, we will look at operators that create such states from the ground state. Moreover, these operators should be local in a suitable sense. As we shall see, there is a trade-off between the locality of this operator, and localisation of the energy-momentum spectrum. In particular, it is not possible to have a strictly local operator $A$ such that $\pi(A) \Omega \in P(\Delta) \mathcal{H}$ for some compact subset $\Delta \subset \mathbb{R}^+ \times \mathbb{T}^d$. 

The following definition is useful in studying these questions.
\begin{definition}
	An operator $A \in \alg{A}$ is said to have \emph{energy-momentum transfer}\index{energy-momentum!transfer} in $\Delta_t$ if	
	\[
		\pi(A) P(\Delta) = P(\overline{\Delta + \Delta_t}) \pi(A) P(\Delta)
	\]
	for all Borel subsets $\Delta \subset \mathbb{R} \times \mathbb{T}^d$.
\end{definition}
In other words, if one has a state $\psi$ with energy-momentum in $\Delta$, then $\pi(A) \psi$ has its energy-momentum shifted by $\Delta_t$. We have already seen an example of such an operator in Section~\ref{sec:vacuum}, where we considered smeared operators $\int \beta_x(Q) f(x) dx$, with the Fourier transform of $f$ outside the forward light-cone. The calculation there shows that (at least on the vacuum), it shifts the energy-momentum outside of the forward light-cone, and hence the vacuum state is annihilated by such operators. This idea of smearing operators will also be of use here: let $f \in L^1(\mathbb{R} \times \mathbb{Z}^d)$ and $A \in \alg{A}$. Then we can define the smeared version of $A$ via\index{_tauf(A)@$\tau_f(A)$}
\begin{equation}
	\tau_f(A) := (2 \pi)^{-\frac{d+1}{2}} \sum_{\mathbf{x} \in \mathbb{Z}^d} \int_\mathbb{R} \tau_{(t,\mathbf{x})}(A) f(t,\mathbf{x}) dt.
\end{equation}
Note that since the spatial variable is discrete, we have to sum instead of to integrate over $\mathbf{x}$. The significance of this construction lies in the following lemma:

\begin{lemma}
	\label{lem:smearing}
	Let $f \in L^1(\mathbb{R} \times \mathbb{Z}^d)$ and suppose that $A \in \alg{A}$ has energy-momentum transfer in $\Delta$. Then $\tau_f(A)$ has energy-momentum transfer in $\Delta \cap \supp \widehat{f}$, where $\widehat{f}$ is the Fourier transform of $f$.
\end{lemma}
The proof, in the context of quantum spin systems, can be found in~\cite{BDNScattering}, but an essential part is to note that $\pi(\tau_{(t,\mathbf{x})}(A)) = U(t,\mathbf{x}) \pi(A) U(t, \mathbf{x})^*$. At that point one can use the SNAG theorem to write $U(t, \mathbf{x})$ as an integral over the spectral measure (see also the calculation of $\pi_0(Q) \Omega$ in Section~\ref{sec:vacuum}).

To find operators that create single-particle states we can now proceed as follows. First choose an open neighbourhood of a point $(E,p) \in \mathfrak{h}$ that is disjoint from $\operatorname{Sp}(U) \setminus \mathfrak{h}$. Note that this is always possible because $\mathfrak{h}$ is isolated. Then choose a function $f \in L^1(\mathbb{R} \times \mathbb{T}^d)$ with a Fourier transform $\widehat{f}$ supported in $\Delta$. Consider now $\psi := \pi(\tau_f(A)) \Omega$ for some $A \in \alg{A}$. Then we find
\[
	\psi = \pi(\tau_f(A)) \Omega = \pi(\tau_f(A)) P(\{0\}) \Omega = P(\supp \widehat{f}) \psi.
\]
Here we used that $0$ is an isolated point of the spectrum, together with the energy-momentum transfer lemma. It follows that $\psi \in \mathcal{H}_{\mathfrak{h}}$, the single-particle subspace. It might still happen that $\psi = 0$, but because $\Omega$ is a cyclic vector, it can be shown that this cannot be the case for all $A \in \alg{A}$. Hence we can create single-particle states.

Just being able to create single-particle states, however, is not enough. In the end we want the particles to act independently for large timescales, or in other words, there creation operators should commute for such large $t$, at least if we choose the velocities of the particles judiciously. As we have seen, locality of the operators would be very useful here. Unfortunately, these operators cannot be strictly local. A heuristic argument goes as follows. The energy-momentum of $\tau_f(A)$ will be in $\supp \widehat{f}$. This has to be a compact set. But it is well-known from Fourier analysis, that the better localised $\widehat{f}$ is, the more $f$ is spread out. Consequently, $\tau_f(A)$ will not be strictly local. This also is true because of the smearing in the time direction: we know that $\tau_t(A)$ is in general not strictly localised any more for $t \neq 0$. In fact, it cannot be for all local $A$, since this would imply a zero-velocity Lieb-Robinson bound. In that case there would be no scattering at all, as we will also see later.

\begin{remark}
	One can even show a stronger result. If we disregard the energy component for a moment, and only smear over the space directions, one can show that the momentum transfer of \emph{any} non-trivial local operator is equal to $\mathbb{T}^d$, see~\cite[Proposition 3.7]{BDNScattering}. A similar statement is true in relativistic theories~\cite{Bu90}. The proof uses the support properties of the Fourier transform.
\end{remark}

Fortunately, it is possible to to consider a slightly larger class of observables, called \emph{almost local}, which can be approximated well by strictly local observables. Of course, by construction, \emph{any} observable in $\alg{A}$ can be approximated by strictly local observables. What sets the almost local observables apart is that the error that we make decreases quickly with the size of the support. In the context of quantum spin systems, the almost local observables have first been studied by Schmitz~\cite{Sch83}, but before that they also appeared in algebraic quantum field theory. The precise definition is:

\begin{definition}
	An observable $A \in \alg{A}$ is called \emph{almost local}\index{observable!almost local} if there exists a mapping $\mathbb{R}_+ \ni r \mapsto A_r \in \alg{A}_{loc}$ such that
	\[
		\| A - A_r \| = O(r^{-\infty}),
	\]
	with the support of $A_r$ contained in a ball of radius $r$. The notation $O(r^{-\infty})$ means that the terms is of $O(r^{-n})$ for any $n \in \mathbb{N}$. In other words, $r^n \| A-A_r \|$ goes to zero as $r \to \infty$ for any $n \in \mathbb{N}$. The set of all almost local observables is a $*$-algebra, which is denoted by $\alg{A}_{a-loc}$.\index{_A_aloc@$\alg{A}_{a-loc}$}
\end{definition}

Clearly we have $\alg{A}_{loc} \subset \alg{A}_{a-loc}$. Although the almost local observables are a bit more cumbersome to work with, it turns out that they have the right locality properties for our purposes. For example, if $A_i \in \alg{A}_{a-loc}, i=1,2$, then it can be shown that, with $\mathbf{y} \in \mathbb{Z}^d$, one has for any $n \in \mathbb{N}$
\[
	[ A_1, \tau_{\mathbf{y}}(A_2) ] = O\left( \left(\frac{1}{1 + \mathbf{y}^2}\right)^{\frac{n}{2}} \right).
\]
In other words, if we translate $A_2$ far enough, the commutator will go to zero. Although it will not become identically zero in general, it will become small enough for our purposes. The proof of this statement follows from a straightforward approximation argument, by first approximating $A_i$ with local observables.

More important is the following theorem, which says that smearing of almost local observables with Schwarz class functions again gives an almost local observable. A smooth function $f: \mathbb{R} \times \mathbb{Z}^d \to \mathbb{C}$ is said to be of Schwarz class (notation: $f \in \mathcal{S}(\mathbb{R} \times \mathbb{Z}^d)$) if $\partial_t^n f = O( ((1+t)^2 + (1+\mathbf{x}^2))^{-\infty})$. Alternatively, it means that all partial derivatives of the Fourier transform $\widehat{f}(E,\mathbf{p})$ are of $O( (1+E^2)^{-\infty})$ uniformly in $\mathbf{p}$. This is the point where the assumption that Lieb-Robinson bounds hold becomes crucial.

\begin{theorem}\label{thm:localsmear}
	Let $A \in \alg{A}_{a-loc}$ and $f \in \mc{S}(\mathbb{R} \times \mathbb{Z}^d)$, and suppose that $\tau_t$ satisfies a Lieb-Robinson bound. Then we have
	\begin{enumerate}
		\item $\tau_{(t,\mathbf{x})}(A) \in \alg{A}_{a-loc}$,
		\item $\tau_f(A) \in \alg{A}_{a-loc}$.
	\end{enumerate}
\end{theorem}
To understand why this is true it is helpful to understand why $\tau_f(A)$ is generally not local, even when $A$ is strictly local. We will assume $A \in \alg{A}_{loc}$. Recall the definition of $\tau_f(A)$:
\[
	\tau_f(A) = (2 \pi)^{-\frac{d+1}{2}} \int_{\mathbb{R}} \sum_{\mathbf{x} \in \mathbb{Z}^d} f(t, \mathbf{x})\tau_{(t,\mathbf{x})}(A) dt.
\]
Hence the non-locality can be seen to be due to two parts. First of all, we take sums of translates of $A$. Since $f$ is of Schwarz class, it generally does not have finite support in the position variables. Nevertheless, $f(t, \mathbf{x})$ will decay faster than any polynomial in $|\mathbf{x}|$, so the non-locality is relatively mild. However, the time evolution $\tau_t(A)$ also increases the support. This is where Lieb-Robinson bounds come into play, since we have seen before that they can be used to show that $\tau_t(A)$ can be well approximated by a strictly local observable. More precisely, define a mapping $\mathbb{R}_+ \ni r \mapsto A_{(r)}$ via
\[
	A_{(r)} = (2 \pi)^{-\frac{d+1}{2}} \int_{|t| \leq r} \sum_{|\mathbf{x}| \leq r} f(t,\mathbf{x}) \tau_{(t,\mathbf{x})}(A) dt.
\]
This operator in general still is non-local, because of the time evolution, but this non-locality can we controlled by using the Lieb-Robinson bound. In particular, using the arguments after Corollary~\ref{cor:lrsharper}, we know that we can approximate it well by strictly local operators, and it is in fact possible to estimate the size of the support of this local observable in terms of $v_{LR} t$. A careful analysis along these lines, together with approximating elements in $\alg{A}_{a-loc}$ by strictly local operators yields the result. 

As an aside, it is interesting to note that this theorem is harder the proof for quantum spin systems than it is for relativistic theories. The reason is precisely the non-locality of the time evolution. While in relativistic theories the time evolution is strictly local, this is no longer true in spin systems.

\subsubsection{Haag-Ruelle creation operators}
The above discussion provides a method to obtain states in the single particle subspace. Moreover, they can be obtained using almost local operators. The next step is create states that correspond to \emph{free} particle, since at large times we want states that essentially behave like a collection of non-interacting particles. To this end, consider a function $g : \mathbb{Z}^d \to \mathbb{R}$ such that $\widehat{g} \in C^\infty(\mathbb{T}^d)$ with the support of $\widehat{g}$ in $\Delta_{\mathfrak{h}}$. We then define a \emph{positive energy} \idx{wave packet} via
\[
	g_t(\mathbf{x}) = (2 \pi)^{-\frac{d}{2}} \int_{\Delta_\mathfrak{h}} \widehat{g}(\mathbf{p}) e^{-i t E(\mathbf{p}) + i \mathbf{p} \cdot \mathbf{x}} d\mathbf{p}.
\]
Note that this is somewhat like a plane wave, where the energy is replaced by the dispersion relation. It describes the \emph{free} evolution of single particle (or in this case, rather the \emph{backward} evolution) governed by the dispersion relation. If $g$ is a wave packet, its \idx{velocity support} is defined as
\[
	V(g) := \{ \nabla E(\mathbf{p}) : \mathbf{p} \in \Delta_\mathfrak{h} \}.
\]
As the name suggests, it tells us which velocities occur in the wave packet.

We now have all the definitions to define Haag-Ruelle creation operators, which will create single-particle states.
\begin{definition}
	Let $A^* \in \alg{A}_{a-loc}$ have compact energy-momentum transfer $\Delta$ contained in $(0,\infty) \times \mathbb{T}^d$ such that $\Delta \cap \operatorname{Sp}(U) \subset \mathfrak{h}$, and let $g$ be a positive energy wave packet. The \idx{Haag-Ruelle creation operator} is then defined as
	\[
		B_t^*(g_t) := (2 \pi)^{-\frac{d}{2}} \sum_{\mathbf{x} \in \mathbb{Z}^d} U(t,\mathbf{x}) \pi(A^*) U(t, \mathbf{x})^* g_t(x).
	\]
	That is, we smear $\pi_0(\tau_t(A^*))$ over the \emph{spatial} directions using the wave packet.
\end{definition}
The use of $B^*_t$ instead of its adjoint is to match the notation with the creation operators on Fock space. The operator $A^*$ can be obtained by smearing local observables with suitable test functions, as we have seen before.

The idea behind $B_t^*$ is that it compares the \emph{full} evolution $\tau_t$ with the backward \emph{free} evolution described by the wave packet. Also, by construction, it creates states in the single particle subspace, when acting on the ground state.
\begin{lemma}
	Let $B_t(g_t)^*$ be a Haag-Ruelle creation operator. Then we have $B_t^*(g_t) \Omega \in \mathcal{H}_\mathfrak{h}$ and $(B_t^*(g_t))^* \Omega = 0$, and
	\[
		B_t^*(g_t)\Omega = B^*(g) \Omega = P(\mathfrak{h}) B^*(g) \Omega
	\]
	for all $t \in \mathbb{R}$, where $B^*(g) = (2 \pi)^{-\frac{d}{2}} \sum_{\mathbf{x}} g(\mathbf{x}) \pi(\tau_{\mathbf{x}}(A^*))$. 
\end{lemma}
\begin{proof}
	The first equation can be seen form the energy-momentum transfer lemma, together with a slight extension of Lemma~\ref{lem:smearing}. The second can be seen by noting that if the energy-momentum transfer of an operator $A$ is $\Delta$, then $-\Delta$ is the energy-momentum transfer of $A^*$. Note that we use the group structure of $\mathbb{R} \times \mathbb{T}^d$ here.

To see the last equation, compute
\[
	\begin{split}
		B_t^*(g_t) \Omega &= (2 \pi)^{-\frac{d}{2}} \sum_{\mathbf{x} \in \mathbb{Z}^d} g_t(\mathbf{x}) U(t,\mathbf{x}) \pi(A^*) \Omega  \\
		&= (2 \pi)^{-\frac{d}{2}} \sum_{\mathbf{x}} g_t(\mathbf{x})  \int_{\mathfrak{h}}  e^{i E t - i \mathbf{p} \cdot \mathbf{x}} dP(E, \mathbf{p}) P(\mathfrak{h}) \pi(A^*) \Omega \\ 
		&= \int_{\mathfrak{h}}  \widehat{g}(\mathbf{p}) e^{i (E - E(\mathbf{p})) t} dP(E, \mathbf{p}) P(\mathfrak{h}) \pi(A^*) \Omega.
	\end{split}
\]
In the last step we used the inverse Fourier transform to obtain $\widehat{g}(\mathbf{p})$. Since we are only integrating over the mass shell, $E = E(\mathbf{p})$ and the exponential disappears. 

Similarly, we calculate
\[
	\begin{split}
		B^*(g) \Omega &= (2 \pi)^{-\frac{d}{2}} \sum_{\mathbf{x}} g(\mathbf{x}) U(\mathbf{x}) P(\mathfrak{h}) \pi(A^*) \Omega \\
		&= \int_{\mathfrak{h}} \widehat{g}(\mathbf{p}) dP(E, \mathbf{p}) P(\mathfrak{h}) \pi(A^*) \Omega.
	\end{split}
\]
This concludes the proof.
\end{proof}

Note that this lemma tells us that on the ground state, the ``free'' evolution coincides with the full (interacting) evolution, even for arbitrary $t$ (not only in the asymptotic limit). This is not surprising, since $B_t^*(g_t) \Omega$ contains a single particle excitation, hence there is nothing to interact with and the particle behaves like a free particle. This in general is no longer true for states like $B_{1,t}^*(g_{1,t}) B_{2,t}^*(g_{2,t}) \Omega$ that contain more excitations.

We end this construction with a remark on the Lieb-Robinson velocity. As we have stressed several times, it plays the role of the speed of light in quantum spin systems. Hence a reasonable conjecture would be that the velocity of single-particle excitations is bounded by the Lieb-Robinson velocity. This indeed turns out to be the case.

\begin{proposition}
	Let $v_{LR}$ be the Lieb-Robinson velocity. Then $| \nabla E(\mathbf{p}) | < v_{LR}$ for any $\mathbf{p} \in \Delta_\mathfrak{h}$.
\end{proposition}
The proof is by contradiction. One first assumes that such states exist. A dense set of such states can be obtained as $B^*(g) \Omega$, where $\widehat{g}$ has support in an open subset of $\Delta_{\mathfrak{h}}$ on which $|\nabla E(\mathbf{p})| < v_{LR}$ is violated. Using Lieb-Robinson bounds one can then show that this vector must orthogonal to $\pi(A) \Omega$ for any local $A$. Hence, by cyclicity of $\Omega$, it must be zero. The argument that leads to this conclusion takes some work: the details can be found in~\cite[Prop. 5.3]{BDNScattering}. An alternative proof can be found in~\cite{Sch83}.

\subsection{Scattering states}
The Haag-Ruelle creation operators can be used to create single-particle states from the ground state. To create multiple particles, the idea is to apply the creation operators repeatedly. That is, choose positive energy wave packets $g_i$, $i=1, \dots, n$ and Haag-Ruelle creation operators $B_{i,t}^*(g_i)$. Then we can look at states
\[
	\Psi_t := B_{n,t}^*(g_{n,t}) \cdots B_{1,t}^*(g_{1,t}) \Omega.
\]
In particular, we are interested in the asymptotic (incoming or outgoing) states, where we take the limit $t \to \pm \infty$. An important part of scattering theory is to show that these limits exist and have the right properties.

It is good to first think about which properties should be expected. As discussed before, the incoming and outgoing particles should behave (asymptotically) as free particles. But free (bosonic) particles are well described by the symmetric Fock space over the single-particle Hilbert space $\mathcal{H}_\mathfrak{h}$. This in particular means that the state should be invariant under changing the order of the creation operators, at least in the limit $t \mapsto \pm \infty$. They should also only depend on the \emph{single}-particle states: if $B_{i,t}^*(g_{i,t}) \Omega = \widetilde{B}_{i,t}^*(\widetilde{g}_{i,t}) \Omega$, we should be able to replace $B_{i,t}^*$ with $\widetilde{B}_{i,t}$ in the definition of $\Psi_t$, at least in the limit of large $|t|$. This indeed turns out to be true. Anticipating this, we write
\begin{equation}
	\label{eq:outgoing}
	\Psi_1 \timeso \cdots \timeso \Psi_n := \Psi^{out} := \lim_{t \to \infty} B_{n,t}^*(g_{n,t}) \cdots B_{1,t}^*(g_1) \Omega.
\end{equation}
Here $\Psi_i = B_{i,t}^*(g_{i,t}) \Omega$. Similarly, the \emph{incoming} states are defined by taking the limit to negative infinity. These states indeed have all the properties we expect them to have:

\begin{theorem}\label{thm:fock}
	Suppose that Assumptions~\ref{as:qspin} and~\ref{as:massshell} hold, and moreover assume that the wave packets $g_i$ have disjoint velocity support. Then the limit in equation~\eqref{eq:outgoing} exists and is independent of the choice of $B_{i,t}^*(g_{i,t})$ as long as $B_{i,t}^*({g}_{i,t}) \Omega = \widetilde{B}_{i,t}^*(\widetilde{g}_{i,t}) \Omega$. Moreover, this limit is independent of the order of the Haag-Ruelle creation operators. Finally, let $\Psi^{out}$ and $\widetilde{\Psi}^{out}$ be two such states with $n$ and $\widetilde{n}$ particles respectively. Then
	\begin{equation}
	\langle \Psi^{out}, \Psi^{out} \rangle = \delta_{n, \widetilde{n}} \sum_{\sigma \in S_n} \langle \Psi_1, \widetilde{\Psi}_{\sigma(1)} \rangle \cdots \langle \Psi_n, \widetilde{\Psi}_{\sigma(n)} \rangle,
		\label{eq:fockinner}
	\end{equation}
where $S_n$ is again the set of all permutations of $n$ elements.

All statements equally hold for the incoming particle states.
\end{theorem}
The proof of this theorem is quite technical, and requires various bounds on how the Haag-Ruelle creation operators depend on $t$, and bounds on the commutators of such creation operators. The main idea however is easy to understand. The creation operators (and hence, in a sense, the particles) are reasonably well localised. Since the velocity supports are disjoint, for large $|t|$, the particles must be far away from each other. By the locality of creation operators, this implies that commutator bounds of such operators will become small, since they particles essentially behave as free particles.

The theorem shows that the asymptotic states have a structure resembling that of (symmetric) Fock space. For the scattering theory it will be useful to have an operator mapping the single particle (symmetric) Fock space $\mc{F}_s(\mc{H}_\mathfrak{h})$ to these scattering states. Before we give the definition of such an operator, we note that $U(t,\mathbf{x})$ can be restricted to a unitary $U(t,\mathbf{x})_\mathfrak{h}$ acting on $\mathcal{H}_{\mathfrak{h}}$, and hence this defines (via second quantisation) an action of space-time translations on the Fock space, by $(t,\mathbf{x}) \mapsto \Gamma(U(t,\mathbf{x})_\mathfrak{h})$. The mapping alluded to should be compatible with this mapping, in the following sense:

\begin{definition}
	An outgoing \idx{wave operator} is an isometry $W^{out} : \mc{F}_s(\mathcal{H}_{\mathfrak{h}}) \to \mathcal{H}$ such that
\[
	\begin{split}
	W^{out} \Omega &= \Omega, \\
	W^{out} (a_+^*(\Psi_1)a_+^*(\Psi_2) &\cdots a^*_+(\Psi_n)) \Omega = \Psi_1 \timeso \cdots \timeso \Psi_n,\\
	U(t,\mathbf{x}) \circ W^{out} &= W^{out} \circ \Gamma(U(t,\mathbf{x})_\mathfrak{h}),
	\end{split}
\]
for all states $\Psi_i \in \mathcal{H}_{\mathfrak{h}}$. The incoming wave operator $W^{in}$ is defined analogously.
\end{definition}
Hence the wave operators map configurations of incoming or outgoing free particle states, to states in the Hilbert space $\mathcal{H}$, in a way that is compatible with space-time translations. We can then define the \emph{$S$-matrix}\index{S-matrix@$S$-matrix} as the isometry
\[
	S := (W^{out})^* W^{in}.
\]
Note that this definition coincides with what we said earlier about the $S$-matrix: it maps incoming particle configurations to outgoing particle configurations.

The main result is that the wave operators exist and are unique.
\begin{theorem}
	Consider a quantum spin system satisfying Assumptions~\ref{as:qspin} and~\ref{as:massshell}. Then the wave operators and the $S$-matrix exist and are unique.
\end{theorem}
In other words, the scattering theory exists and is well-defined. The definition of a wave operator suggests how $W^{out}$ should be defined on a basis of $n$-particle states. Using Theorem~\ref{thm:fock}, it follows that this is well-defined on particle states with disjoint velocity support. The hard part is to show that such states form a dense subspace of the Fock space, so that $W^{out}$ can be extended by linearity, and that $W^{out}$ is an isometry. Indeed, a priori it is not obvious that $\Psi_1 \timeso \cdots \timeso \Psi_n$ is even non-zero in general. The proof of this is somewhat technical and will be omitted here. This is the part where it is necessary to make the regularity assumption from Assumptions~\ref{as:massshell}, to show density.

Although the above theorem tells us that, under the assumptions made, one can do scattering theory, this is still a long way from calculating the matrix $S$ explicitly in concrete examples. To this end, one first has to prove that the assumptions are satisfied. Unfortunately, in general it is very difficult to find the spectrum of a Hamiltonian, and this is even more so for the full energy-momentum spectrum. Indeed, even the question if it is gapped can be very difficult. The most promising approach appears to be considering perturbations of ``classical'' models, for example where the dynamics are generated by commuting terms in the Hamiltonian (much like the toric code). The unperturbed models will have a zero velocity Lieb-Robinson bound, and hence no scattering, but this is no longer true once we add perturbations. Using perturbation theory, it is sometimes possible to prove the existence of an isolated mass shell. This is true for the following example. The proof uses results from~\cite{Pokorny:1993wt,Yarotsky:2005dk}, which study the spectrum of the perturbed Hamiltonian.

\begin{theorem}
There is $\varepsilon_0 > 0$ such that for all $0 < \varepsilon < \varepsilon_0$, the transverse Ising model with the following local Hamiltonian
\[
	H_{\Lambda,\varepsilon} = -\frac{1}{2} \sum_{j \in \Lambda} (\sigma^z_j -1) - \varepsilon \sum_{(i,j)} \sigma^{x}_i \sigma^{x}_j,
\]
where $(i,j)$ denotes a pair of nearest-neighbours, satisfies Assumption~\ref{as:qspin} as well as Assumption~\ref{as:massshell}.
\end{theorem}

Actually finding the matrix $S$ explicitly seems to be out of reach with current techniques. However, for many interesting quantum spin models there do exist numerical simulations. Also in algebraic quantum field theory, in some cases where the scattering theory is relatively simple, explicit models can be constructed~\cite{Lechner2008}.

\section{Gapped phases and local perturbations}
The final topic we discuss is that of gapped phases. In recent there has been much interest in topologically ordered phases of ground states, of which the toric code model is an example. An important question is how to classify such phases. For any meaningful classification, it is necessary to say when we regard two phases to be the same. Coloquially, two gapped ground states $\omega_0$ and $\omega_1$ are said to be in the same phase if there is a continuous path $H(s)$ of gapped, local Hamiltonians such that $\omega_0$ is a ground state of $H(0)$ and $\omega_1$ is a ground state of $H(1)$~\cite{PhysRevB.82.155138}. This can be made precise in the thermodynamic limit, as is done in Ref.~\cite{2011arXiv1102.0842B}, whose exposition we will largely follow here.

In particular, it can be shown that weak-$*$ limits of (mixtures of) ground states states are related by an automorphism $\alpha_s$ to such states of the \emph{unperturbed} dynamics. This automorphism $\alpha_s$ is very much like a time evolution, in the sense that it also satisfies a Lieb-Robinson bound. The construction of this family of automorphisms uses the spectral flow (also called quasi-adiabatic continuation) technique. As a by-product, it is possible to show that if we add a strictly \emph{local perturbation}\index{local!perturbation} (not closing the gap) to some local Hamiltonians, the new ground state is related to the old, unperturbed, ground state by a \emph{local} unitary, up to a small error. This is outlined in the next subsection. In the final subsection, the spectral flow is applied to obtain the $\alpha_s$.

\subsection{Quasi-adiabatic continuation or the spectral flow}
We first want to study how the spectrum changes as $s$ runs in the interval $[0,1]$. This can be done using \idx{quasi-adiabatic continuation}. This technique was discussed by Hastings~\cite{PhysRevB.69.104431,PhysRevB.72.045141}, and was later studied in the thermodynamic limit in~\cite{2011arXiv1102.0842B}. Here we follow~\cite{2011arXiv1102.0842B} and call it the \idx{spectral flow}, since it deals with the spectrum.

The precise setting is as follows. Suppose that we have defined some dynamics $\alpha_t$. Consider a ground state for these dynamics. Then there is some Hamiltonian $H(0)$ implementing the dynamics $\alpha_t$ in the GNS representation. We now assume that the perturbed dynamics are of the form $H(s) = H(0) + \Phi(s)$, where $s \in [0,1]$ and $\Phi(s) = \Phi(s)^*$. The dynamics should not change too much as a function of $s$, in the sense that there is some $M > 0$ such that $\|H'(s)\| \leq M$ for all $s \in [0,1]$. We will also assume that there is a gap $\gamma > 0$ in the spectrum for all $s$, where $\gamma$ does not depend on $s$. In particular, we assume that
\[
	\operatorname{spec} H(s) = \Sigma_1(s) + \Sigma_2(s)
\]
for some sets $\Sigma_1(s)$ and $\Sigma_2(s)$ such that $\inf_{s \in [0,1]} d(\Sigma_1(s), \Sigma_2(s)) \geq \gamma$. There is an additional (technical) smoothness condition that we need to demand. For its precise formulation, see~\cite[Assumption 2.1]{2011arXiv1102.0842B}.

A key step in the result is to connect the ground state spaces of $H(s)$ to each other. It turns out that this can be done by unitary maps $U(s)$, and more importantly, it is possible to give a generator for the ``flow'' $s \mapsto U(s)$. This is the spectral flow or quasi-adiabatic continuation.
\begin{lemma}\label{lem:flow}
	There is a norm-continuous path of unitaries $U(s)$ such that for all $s \in [0,1]$,
	\begin{equation}
		\label{eq:gsproject}
		P(s) = U(s) P(0) U(s)^*,
	\end{equation}
where $P(s)$ is the projection on the eigenspace of the eigenvalues in $\Sigma_1(s)$. The $U(s)$ satisfy the differential equation
\[
	\frac{d}{ds} U(s) = i D(s) U(s),
\]
with boundary condition $U(0) = 1$. The generator $D(s)$ can be obtained as
\[
	D(s) = \int_{-\infty}^\infty W_\gamma(t) e^{i t H(s)} H'(s) e^{-i t H(s)} dt,
\]
where $W_\gamma(t)$ is a $L^1$-function which can be defined explicitly.
\end{lemma}
Note that $D(s)$ is in fact obtained by smearing $H'(s)$ with the function $W_\gamma$ using the dynamics generated by $H(s)$. Although $W_\gamma(t)$ can be given explicitly, its definition is rather complicated and its explicit form does not provide much insight for our purposes. However, this explicit form allows the authors of~\cite{2011arXiv1102.0842B} to give explicit bounds on various estimates, which prove to be crucial in the proof.

This lemma can be applied to show that ``local perturbations perturb locally'', in the sense that if we have some Hamiltonian $H(0) + \Phi(s)$, with $\Phi'(s)$ a local operator for every $s$, then the ground states of $H(s)$ are related to those of $H(0)$ by a strictly local unitary (at least, up to arbitrarily small error). More precisely, we assume that there is some finite subset $\Lambda \subset \Gamma$ and a constant $C > 0$ such that $\Phi'(s) \in \alg{A}(\Lambda)$ and $\|\Phi'(s) \| < C$ for all $0 \leq s \leq 1$. Note that this means that $\Phi(s) \in \alg{A}(\Lambda)$ up to a constant operator. Finally, we assume that there is a Lieb-Robinson bound for the perturbed dynamics.\index{Lieb-Robinson!bound} In particular, there should exist constants $\mu > 0, C>0$ and $v > 0$ such that for local operators $A$ and $B$ we have
\[
	\begin{split}
\Big\| [\tau_t^{H(s)}&(A), B] \Big\| \leq \\ &C \|A\| \|B\| \min(|\supp(A)|, |\supp(B)|) e^{-\mu(d(\supp(A), \supp(B))- v|t|)}
\end{split}
\]
for all $s \in [0,1]$. Here $\tau_t^{H(s)}(A) = e^{i t H(s)} A e^{-i t H(s)}$, the dynamics generated by $H(s)$. 

Under these assumptions one can show that the unitaries $U(s)$ can be approximated by local unitaries. The strategy is to use the Lieb-Robinson bounds. Note that in Lemma~\ref{lem:flow} the generator $D(s)$ can be obtained by smearing the \emph{local} operator $H'(s)$ (which is $\Phi'(s)$ here) using an automorphism that satisfies a Lieb-Robinson bound. We can then employ a similar technique as outlined in the text following Theorem~\ref{thm:localsmear} to approximate $D(s)$ by a strictly local operator. Since explicit bounds on $W_\gamma(t)$ can be obtained, it is in fact possible to give a precise estimate. The result is that the unitaries $U(s)$ can be approximated by strictly local unitaries $V(s)$. These local unitaries are supported on sets containing $\Lambda$, and as expected, the bigger this set is, the better the approximation will be. In particular, define the following sets for $R > 0$:
\[
	\Lambda_R = \{ y \in \Gamma : \exists x \in \Lambda \textrm{ such that } d(x,y) < R \}.
\]
Once we approximate $D(s)$ by a local operator $D_R(s)$ supported on $\Lambda_R(s)$, one can look at the corresponding unitaries $V_R(s)$. By integrating the differential equation in~\ref{lem:flow}, the estimated for $\| D(s) - D_R(s) \|$ yields an estimate for $\| U(s) - V_R(s) \|$. With this notation, the argument above leads to the following result~\cite[Thm. 3.4]{2011arXiv1102.0842B}:
\begin{theorem}
There exists a subexponential function $G(R)$ and a constant $C$ such that for any $R > 0$ and for all $s \in [0,1]$, there exists a unitary $V_R(s)$ with support in $\Lambda_R$ such that
\[
	\| U(s) - V_R(s) \| \leq C G\left(\frac{\gamma R}{2 v}\right),
\]
where $\gamma$ and $v$ are as above.
\end{theorem}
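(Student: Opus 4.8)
The plan is to realise $U(s)$ as the \emph{spectral flow} (quasi-adiabatic continuation) generated by a self-adjoint operator $D(s)$, to show that $D(s)$ is quasi-local with a subexponential tail profile controlled by the Lieb--Robinson bound, and then to replace $D(s)$ by its truncation to $\alg{A}(\Lambda_R)$ and compare the two flows by a Duhamel estimate. First I would recall the explicit form of the path $U(s)$ behind equation~\eqref{eq:gsproject}. Using the uniform gap $d(\Sigma_1(s),\Sigma_2(s))\geq\gamma$, one fixes an odd function $w_\gamma\in L^1(\mathbb{R})$ whose Fourier transform satisfies $\widehat{w_\gamma}(\omega)=-\tfrac{1}{\omega}$ for $|\omega|\geq\gamma$ and which decays \emph{subexponentially} in $|t|$ — faster than any polynomial, but necessarily not exponentially, since the prescribed symbol is not analytic at $\omega=0$. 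Setting
\[
D(s)=\int_{-\infty}^{\infty} w_\gamma(t)\,\alpha_t^{H(s)}\!\big(H'(s)\big)\,dt ,
\]
one checks, using the spectral decomposition attached to $P(s)$ and the defining property of $\widehat{w_\gamma}$, that $D(s)=D(s)^*$ and that the solution of $\tfrac{d}{ds}U(s)=iD(s)U(s)$, $U(0)=I$, obeys $P(s)=U(s)P(0)U(s)^*$; existence and uniqueness is the standard theory of linear ODEs in a Banach space, and unitarity of $U(s)$ follows from self-adjointness of $D(s)$. (This identifies the $U(s)$ of the statement with the path from~\cite[Prop. 2.4]{2011arXiv1102.0842B}.)

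Next I would localise $D(s)$. Because $H'(s)\in\alg{A}(\Lambda)$, the Lieb--Robinson bound for $\alpha_t^{H(s)}$ gives $\|[\alpha_t^{H(s)}(H'(s)),B]\|\leq C\|H'(s)\|\,|\Lambda|\,e^{-\mu(R-v|t|)}$ for every $B\in\alg{A}(\Lambda_R^c)$ with $\|B\|=1$, whenever $R>v|t|$. Applying the localisation map underlying Lemma~\ref{lem:loccom} — the (norm-one, self-adjointness-preserving) conditional expectation $\mathbb{E}_{\Lambda_R}\colon\alg{A}\to\alg{A}(\Lambda_R)$ — and commuting it through the (norm-convergent) integral defining $D(s)$, one obtains $D_R(s):=\mathbb{E}_{\Lambda_R}(D(s))=D_R(s)^*\in\alg{A}(\Lambda_R)$ with
\[
\|D(s)-D_R(s)\|\leq C'\int_{-\infty}^{\infty}|w_\gamma(t)|\,\min\!\big(1,\,e^{-\mu(R-v|t|)}\big)\,dt ,
\]
where $C'$ depends on $M$, $|\Lambda|$ and the Lieb--Robinson constants, but not on $s$. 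Splitting the integral at $|t|=R/(2v)$: for $|t|>R/(2v)$ one uses the subexponential tail bound $\int_{|t|>T}|w_\gamma(t)|\,dt\leq G_0(\gamma T)$ (with $G_0$ subexponential, by the scaling $w_\gamma(t)=\gamma\, w_1(\gamma t)$), while for $|t|\leq R/(2v)$ one has $R-v|t|\geq R/2$ so the Lieb--Robinson factor is $\leq e^{-\mu R/2}$, which is dominated by the subexponential term. Optimising the split yields $\|D(s)-D_R(s)\|\leq C\,G\!\left(\tfrac{\gamma R}{2v}\right)$ for a subexponential function $G$, uniformly in $s\in[0,1]$.

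Finally I would compare the flows. Let $V_R(s)$ solve $\tfrac{d}{ds}V_R(s)=iD_R(s)V_R(s)$, $V_R(0)=I$; since $D_R(s)\in\alg{A}(\Lambda_R)$ is self-adjoint for all $s$, $V_R(s)$ is a unitary with support in $\Lambda_R$. Differentiating $s\mapsto U(s)^*V_R(s)$ and integrating gives
\[
U(s)-V_R(s)=i\int_0^s U(s)U(u)^*\big(D(u)-D_R(u)\big)V_R(u)\,du ,
\]
whence, using $\|U(s)U(u)^*\|=\|V_R(u)\|=1$ and $s\leq1$,
\[
\|U(s)-V_R(s)\|\leq\int_0^1\|D(u)-D_R(u)\|\,du\leq C\,G\!\left(\tfrac{\gamma R}{2v}\right),
\]
which is the assertion. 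The main obstacle is the second paragraph: constructing $w_\gamma$ with both the correct Fourier-side constraint (so that $D(s)$ genuinely generates the spectral flow) \emph{and} an optimal subexponential decay, and then pushing this through the Lieb--Robinson light cone so that the trade-off in the split produces a truly subexponential — rather than merely inverse-polynomial — rate $G$; once that quantitative input is secured, the truncation by $\mathbb{E}_{\Lambda_R}$ and the Gronwall/Duhamel comparison are routine.
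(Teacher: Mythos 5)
Your proposal is correct and reproduces the strategy of the cited reference~\cite{2011arXiv1102.0842B}: represent $U(s)$ as the spectral flow generated by a quasi-local $D(s)=\int w_\gamma(t)\,\alpha_t^{H(s)}(H'(s))\,dt$ with $w_\gamma$ subexponentially decaying, localise $D(s)$ to $\Lambda_R$ via a conditional expectation together with the Lieb--Robinson bound, and compare the two unitary flows by a Duhamel estimate. One small bookkeeping point: since you (correctly) take $w_\gamma$ real and odd so that $D(s)=D(s)^*$, its Fourier transform is necessarily purely imaginary and odd, so the constraint should read $\widehat{w_\gamma}(\omega)=-i/\omega$ (in the convention $\widehat f(\omega)=\int f(t)e^{-i\omega t}\,dt$) for $|\omega|\ge\gamma$, rather than the real-valued $-1/\omega$ as written; this factor of $i$ is absorbed in the generator equation $\tfrac{d}{ds}U(s)=iD(s)U(s)$ and does not affect the structure of the argument or the final estimate.
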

One can show that for large enough $R$ the function $G$ is proportional to $\exp\left(-2/7 \frac{R}{\log^2(R)}\right)$. This is where the explicit form for $W_\gamma(t)$ comes in.

As an example we suppose that for each $s$ we have a non-degenerate ground state, and hence it can be represented as a vector $\Omega(s)$ in Hilbert space. The ground state is separated from the rest of the spectrum by a gap of at least $\gamma$ for all $s$. Under this conditions we can show that far away from the local perturbation $\Phi(s)$ the ground states look alike. In particular, consider $R > 0$ and $A \in \alg{A}(\Lambda_R^c)$. Then $[A,V_R(s)] = 0$ for all $s$ by locality. Moreover, note that each $P(s)$ is a one-dimensional projection because of the non-degeneracy of the ground state. This implies that $\Omega(s) = U(s) \Omega(0)$, with the help of equation~\eqref{eq:gsproject}. Hence we have
\[
\begin{split}
	|\langle \Omega(s), &\,A \Omega(s) \rangle -  \langle \Omega(0), A \Omega(0) \rangle| = | \langle \Omega(0), U(s)^* [A, U(s)] \Omega(0)\rangle|  \\
		& = |\langle \Omega(0), U(s)^* [A, V_R(s)] \Omega(0) \rangle + \langle \Omega(0), U(s)^* [A, U(s) - V_R(s)]  \Omega(0) \rangle | \\
		& \leq 2 \|A\| \| U(s) - V_R(s) \| \leq 2 \|A\| C G\left(\frac{\gamma R}{2 v} \right).
\end{split}
\]
Since the right hand side goes to zero as $R$ grows, we see that the states look the same far away from the perturbation.

\subsection{Automorphic equivalence of ground states}
So far we have discussed perturbing the system with respect to a \emph{single} local perturbation. Although useful, this is not general enough for many applications. Rather, it would be useful to allow for sums of perturbations, just like the dynamics were generated by local interactions consisting of sums of local operators. More precisely, we assume that the local dynamics are generated by
\[
	H_\Lambda(s) = H_\Lambda(0) + \sum_{X \subset \Lambda} \Phi(X, s),
\]
where $\Phi(X,s) \in \alg{A}(X)$ is self-adjoint and $H_\Lambda(0)$ in turn is given by local interactions that give a Lieb-Robinson bound (for example, the uniformly bounded finite range interactions discussed before).

We are now in a position to address the question of equivalences of gapped phases. In the thermodynamic limit\index{thermodynamic limit} this situation can be described as follows. First consider for a finite $\Lambda \subset \Gamma$ the set $\mc{S}_\Lambda(s)$ of all (mixtures of) states with energy in $I(s)$, where $I(s)$ is some interval containing $\Sigma_1(s)$ (and disjoint from $\Sigma_2(s)$). We can then consider an increasing sequence $\Lambda_n$ of subsets, and look at all the weak-$*$ limit points as $n \to \infty$ of states $\mc{S}_{\Lambda_n}(s)$.\index{_S(s)@$\mathcal{S}(s)$} Note that this is the same procedure as we used before in the construction of KMS states from finite volume Gibbs states in Section~\ref{sec:kmsstates}. 

We have seen that vector states in $\mc{S}_{\Lambda_n}(s)$ are related to those in $\mc{S}_{\Lambda_n}(0)$ by a unitary operator $U_{\Lambda_n}(s)$. The question is what can be said about the weak-$*$ limits. Since for the \emph{local} sets of states there are automorphisms $\alpha^{\Lambda_n}_s$ (obtained by conjugating with $U_{\Lambda_n}(s)$) such that $\mc{S}_{\Lambda_n}(s) = \mc{S}_{\Lambda_n}(0) \circ \alpha_s^{\Lambda_n}$, we can ask the question if $\alpha^{\Lambda_n}_s$ converges to an automorphism $\alpha_s$.\footnote{In general this automorphism would not be given by conjugating with a unitary any more, however. We have already seen an example of this with the automorphisms describing charges in the toric code model.} Note that this is similar to how we obtained the global dynamics from local dynamics in Theorem~\ref{thm:dynconv}, or using Lieb-Robinson bounds as in Theorem~\ref{thm:dynconvlr}.

Indeed, the automorphisms $\alpha_s$ can be obtained in a similar way by using Lieb-Robinson bounds. For this it is necessary to assume that $\tau_t^{H_{\Lambda}(s)} := e^{i t H_\Lambda(s)} \cdot e^{-i t H_\Lambda(s)}$ satisfies a Lieb-Robinson bound (in $t$), where the Lieb-Robinson velocity (and other constants in the bound) are uniform in $s$ and $\Lambda$. This can then be used to derive a Lieb-Robinson bound for $\alpha_s^{\Lambda_n}$. Note that the situation is more complicated here than in the local dynamics case discussed before, since essentially we are dealing with a time-\emph{dependent} interaction. The proof essentially boils down to the observation that we can interpret the generator $D_\Lambda(s)$ of the spectral flow as being given by local (time-dependent) dynamics. Once this Lieb-Robinson bound has been obtained, the following theorem can be obtained using standard methods: 

\begin{theorem}
There is an automorphism $\alpha_s$ such that $\mc{S}(s) = \mc{S}(0) \circ \alpha_s$. The automorphism $\alpha_s$ can be obtained from an $s$-dependent quasi-local interaction, and satisfies a Lieb-Robinson bound.
\end{theorem}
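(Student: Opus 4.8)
The plan is to construct $\alpha_s$ as the thermodynamic limit of the finite-volume automorphisms $\alpha_s^{\Lambda_n}$ and to control this limit by exhibiting each $\alpha_s^{\Lambda_n}$ as the dynamics generated by a quasi-local interaction with decay estimates that are uniform in $n$. Recall from the spectral flow construction behind equation~\eqref{eq:gsproject} that for each finite $\Lambda_n$ there is a norm-continuous family of unitaries $U_{\Lambda_n}(s)$ with $P_{\Lambda_n}(s) = U_{\Lambda_n}(s) P_{\Lambda_n}(0) U_{\Lambda_n}(s)^*$, obtained as the unique solution of $\frac{d}{ds} U_{\Lambda_n}(s) = i D_{\Lambda_n}(s) U_{\Lambda_n}(s)$ with $U_{\Lambda_n}(0) = I$, where the generator $D_{\Lambda_n}(s)$ is built from $H'(s)$ by smearing the Heisenberg evolution $\alpha_t^{H_{\Lambda_n}(s)}(H'(s))$ against a fixed weight function $W_\gamma$ whose decay in $|t|$ is subexponential and whose Fourier transform is chosen so as to act as a ``spectral inverse gap''. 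Setting $\alpha_s^{\Lambda_n}(A) := U_{\Lambda_n}(s)^* A\, U_{\Lambda_n}(s)$ gives the finite-volume automorphisms appearing in the statement, and one has $\mc{S}_{\Lambda_n}(s) = \mc{S}_{\Lambda_n}(0) \circ \alpha_s^{\Lambda_n}$.

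First I would decompose $D_{\Lambda_n}(s)$ into local terms. Since $\Phi'(s)$ is supported in the fixed finite set $\Lambda$, the operator $\alpha_t^{H_{\Lambda_n}(s)}(\Phi'(s))$ is, by the Lieb-Robinson bound assumed for the perturbed dynamics, approximable by operators supported on $\Lambda_{v|t|+c}$ up to an error that decays in the ``extra'' distance. Smearing against $W_\gamma(t)$ and collecting, for each annulus $\Lambda_R \setminus \Lambda_{R-1}$, the part of $D_{\Lambda_n}(s)$ genuinely felt at distance $R$, one obtains a decomposition $D_{\Lambda_n}(s) = \sum_X \Psi_s^{\Lambda_n}(X)$ into self-adjoint local terms satisfying $\sum_{X \ni x} \|\Psi_s^{\Lambda_n}(X)\|\, F(\operatorname{diam} X) \le \mathrm{const}$, uniformly in $n$ and in $s \in [0,1]$, for a suitable subexponential weight $F$, with a constant depending only on $M$, $\gamma$, $v$ and the lattice. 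The analytic heart of this step is precisely the interplay between the subexponential tail of $W_\gamma$ and the Lieb-Robinson velocity $v$; this is the quasi-adiabatic continuation estimate, and I would import it rather than reprove it.

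Given these uniform decay estimates, the next step is convergence. For each fixed $X$ the terms $\Psi_s^{\Lambda_n}(X)$ converge as $n \to \infty$, because the finite-volume Heisenberg dynamics converge on local operators by the Lieb-Robinson estimates; the uniform summable-decay bound then lets one pass to the limit in interaction norm, producing an $s$-dependent quasi-local interaction $\Phi_s$ with the same decay. Arguing as in Theorem~\ref{thm:dynconv} — but with the flow parameter $s$ in the role of time and with a genuinely $s$-dependent generator, so that one solves a non-autonomous equation in the spirit of the perturbed dynamics discussed earlier in this chapter — the cocycle of automorphisms generated by $\Phi_s$ exists on all of $\alg{A}$, equals the norm limit $\lim_n \alpha_s^{\Lambda_n}(A)$ for every $A \in \alg{A}$, uniformly for $s$ in compact sets, and inherits a Lieb-Robinson bound from the uniform bounds on the $\Psi_s^{\Lambda_n}$. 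Call this limit $\alpha_s$.

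Finally I would identify the state spaces. For $A \in \alg{A}(\Lambda_0)$ with $\Lambda_0$ fixed and finite, and any $\omega \in \mc{S}(s)$ realized as a weak-$*$ limit point of $\omega_n \in \mc{S}_{\Lambda_n}(s)$, write $\omega_n = \omega_n' \circ \alpha_s^{\Lambda_n}$ with $\omega_n' \in \mc{S}_{\Lambda_n}(0)$; since $\alpha_s^{\Lambda_n}(A) \to \alpha_s(A)$ in norm and, after passing to a subnet, $\omega_n' \to \omega'$ weak-$*$ with $\omega' \in \mc{S}(0)$, one gets $\omega(A) = \omega'(\alpha_s(A))$, hence $\omega = \omega' \circ \alpha_s$ by density and continuity. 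The reverse inclusion follows symmetrically, since $\alpha_s$ is invertible with inverse generated by the reversed flow. Thus $\mc{S}(s) = \mc{S}(0) \circ \alpha_s$, and $\alpha_s$ is generated by the quasi-local interaction $\Phi_s$ and obeys a Lieb-Robinson bound, as required. The main obstacle is the second step: producing the $n$-uniform summable-decay decomposition of the spectral-flow generator, where the gap $\gamma$, the bound $\|H'(s)\| \le M$, and the Lieb-Robinson bound for the perturbed dynamics must be combined through the carefully tuned weight $W_\gamma$; everything downstream is a limiting argument of exactly the kind already carried out for $C^*$-dynamical systems in this chapter.
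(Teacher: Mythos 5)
Your proposal follows the same route as the paper, which itself only sketches this argument and defers to Bachmann--Michalakis--Nachtergaele--Sims: build the finite-volume spectral-flow automorphisms $\alpha_s^{\Lambda_n}$ from the quasi-adiabatic generator, establish uniform quasi-local decay for that generator so the $\alpha_s^{\Lambda_n}$ obey an $n$-uniform Lieb-Robinson bound, pass to the thermodynamic limit as in Theorem~\ref{thm:dynconv}, and then match up $\mc{S}(s)$ and $\mc{S}(0)\circ\alpha_s$ via weak-$*$ limit points. Your write-up supplies more of the mechanism (the weight $W_\gamma$, the annular decomposition of $D_{\Lambda_n}(s)$) than the notes do, but it is the identical strategy.
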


In typical applications $\Sigma_1(s)$ will correspond to the lowest-lying energy levels, i.e.\ ground states with possibly allowing some small splitting. It should be stressed, however, that the states in $\mathcal{S}_n$ are (mixtures of) eigenvalues of $H_{\Lambda_n}$. That is, no boundary terms for the local Hamiltonians are allowed. In general not all ground states of some dynamics $\alpha_t$ can be obtained in this way, see for example Remark~\ref{rem:surface}. This is also true in the toric code: except for the translation invariant ground states, all other ground states are obtained as weak-$*$ limits of ground states of the local dynamics \emph{with boundary term}~\cite{toricgs}.

Even for the unique frustration free ground state of the toric code, the theorem above is not the end of the story. Although it does relate the ground states of the perturbed model to the unperturbed one with an automorphism of $\alg{A}$, it is not at all clear if the same follows for all the other interesting physical properties. For example, it would be interesting to know if the whole superselection structure is preserved. This is what one would expect for a meaningful definition of equivalence of phases. Answering this question is presently an active area of research, and not many results in this direction are known.

\bibliographystyle{abbrv}
\bibliography{../refs/refs}

\printindex
\end{document}